\pgfplotsset{compat=1.18}
\newtheorem{definition}{Definition}
\newtheorem{lemma}{Lemma}
\newtheorem{proposition}{Proposition}
\newtheorem{theorem}{Theorem}
\newtheorem{remark}{Remark}
\newcommand{\LC}{\mathcal{L}}
\newcommand{\QC}{\mathcal{Q}}
\newcommand{\NSC}{\mathcal{NS}}
\newcommand{\blk}{\color{black}}
\newcounter{procedurecount}
\newcommand{\protocoltitle}{}
\newenvironment{procedurelist}[1]{%
	\renewcommand{\protocoltitle}{#1}%
	\begin{center}%
		\setlength{\parskip}{0pt}%
		\hrule\vspace{2pt}\hrule\vspace{4pt}%
		\emph{\protocoltitle} \\[1ex] 
		\hrule
		\begin{list}{\arabic{procedurecount}.}{%
				\usecounter{procedurecount}%
				\setlength{\leftmargin}{1em}%
				\setlength{\itemindent}{0pt}%
				\setlength{\labelwidth}{0pt}%
				\setlength{\labelsep}{0.5em}%
				\setlength{\itemsep}{0pt}%
				\setlength{\parsep}{0pt}%
			}%
		}{%
		\end{list}%
		\vspace{4pt}\hrule\vspace{2pt}\hrule%
	\end{center}%
}
\journal{Physics Reports}
\begin{document}

\begin{frontmatter}
\title{The future of secure communications: device independence in quantum key distribution}

\author[bratislava,brno]{Seyed Arash Ghoreishi}
\ead{arash.ghoreishi@savba.sk}
\author[bari1,bari2,gdansk]{Giovanni Scala}
\ead{giovanni.scala@poliba.it}
\author[zurich]{Renato Renner}
\ead{renner@ethz.ch}
\author[chile1,chile2]{Letícia Lira Tacca}
\ead{letacca@udec.cl}
\author[brno]{Jan Bouda}
\ead{bouda@fi.muni.cz}
\author[chile1,chile2]{Stephen Patrick Walborn}
\ead{swalborn@udec.cl}
\author[gdansk]{Marcin Paw\l owski}
\ead{marcin.pawlowski@ug.edu.pl}
\address[bratislava]{
RCQI, Institute of Physics, Slovak Academy of Sciences, D\'ubravsk\'a cesta 9, 84511 Bratislava, Slovakia}
\address[brno]{Faculty of Informatics, Masaryk University, Botanická 68a, 602 00 Brno, Czech Republic}
\address[bari1]{Dipartimento Interateneo di Fisica, Politecnico di Bari, 70126 Bari, Italy}
\address[bari2]{INFN, Sezione di Bari, 70126 Bari, Italy}
\address[gdansk]{International Centre for Theory of Quantum Technologies, University
of Gda\'nsk, Jana Ba\.zy\'nskiego 1A, Gda\'nsk, 80-309, Poland}
\address[zurich]{Institute for Theoretical Physics, ETH Zürich, 8093 Zürich, Switzerland}
\address[chile1]{Departamento de Fisica, Universidad de Concepci\'{o}n, Concepci\'{o}n, B\'{\i}o-B\'{\i}o, Chile}
\address[chile2]{Millennium Institute for Research in Optics, Universidad de Concepci\'{o}n, Concepci\'{o}n, B\'{\i}o-B\'{\i}o, Chile}

\begin{abstract}
In the ever-evolving landscape of quantum cryptography, Device-independent Quantum Key Distribution (DI-QKD) stands out for its unique approach to ensuring security based not on the trustworthiness of the devices but on nonlocal correlations. Beginning with a contextual understanding of modern cryptographic security and the limitations of standard quantum key distribution methods, this review explores the pivotal role of nonclassicality and the challenges posed by various experimental loopholes for DI-QKD. Various protocols, security against individual, collective and coherent attacks, and the concept of self-testing are also examined, as well as the entropy accumulation theorem, and additional mathematical methods in formulating advanced security proofs. In addition, the burgeoning field of semi-device-independent models (measurement DI--QKD, Receiver DI--QKD, and One--sided DI--QKD) is also analyzed. 
The practical aspects are discussed 
through a detailed overview of experimental progress and the open challenges towards the commercial deployment 
in the future of secure communications. 
\end{abstract}
\begin{keyword}
device-independent quantum key distribution \sep quantum key distribution \sep quantum communications
\end{keyword}

\end{frontmatter}

{\footnotesize \tableofcontents}
\section{Introduction}\label{chap1}
\subsection{Overview of modern cryptography}
\textit{Facta lex inventa fraus} --- the principle that every established law is followed by the invention of a way to circumvent it --- does not hold, \textit{in theory}, for modern cryptography. With the advent of Quantum Cryptography \cite{Gisin2002, Pirandola2020, Zapatero2023}, the security of communication protocols has shifted from complex, yet vulnerable algorithms, to fundamental quantum principles (uncertainty, entanglement, complementarity, no-cloning, non-locality, etc), providing a mechanism for inherently secure communication channels. 
The pioneering application of these revolutionary techniques has led to Quantum Key Distribution (QKD), (and beyond~\cite{Bozzio2024}), representing a significant leap forward in security compared to traditional public-key standards such as Diffie-Hellman \cite{Diffie1976} and RSA (Rivest--Shamir--Adleman)\cite{Rivest1978}. 
\textit{Facta lex inventa fraus} --- the principle that every established law is followed by the invention of a way to circumvent it --- does not hold, \textit{in theory}, for modern cryptography.  
While traditional public-key cryptography employs the concept of one-way functions to encrypt/decrypt information, QKD detects the potential intrusion of an eavesdropper, Eve, due to the principle of no-cloning or, equivalently, through the uncertainty principle.  Fig. \ref{fig:traditionalvsQKD} visually compares RSA \footnote{In Fig. \ref{fig:AES}, one might think spectroscopy could reveal each purple's components, but the process resembles a one-way cryptographic function: easy to mix, hard to reverse. Like password hashing, the color mixture conceals the original inputs, preventing unauthorized decryption.} with the \textit{prepare-and-measure} scenario of BB84
 \cite{Bennett2014} (emblematic of QKD).
The players' preparation (in Alice's lab $\mathcal{A}$) and measurements (in Bob's lab $\mathcal{B}$) can be the red or blue buttons. While in the classical case, Eve has no button (she can make perfect copies of the encoded messages sent and manipulate the copies however she chooses), in QKD Eve must choose and perform a measurement to obtain information.  If Alice selects red and Eve selects blue, then the effect of Eve's disturbance appears in Bob's measurements as a purple ball (a mix of red and blue). 
\begin{figure}
    \centering
    \begin{subfigure}{.49\textwidth}
        \centering
        \includegraphics[width=\linewidth,trim=0in 6.5in 0.in 1.in,clip]{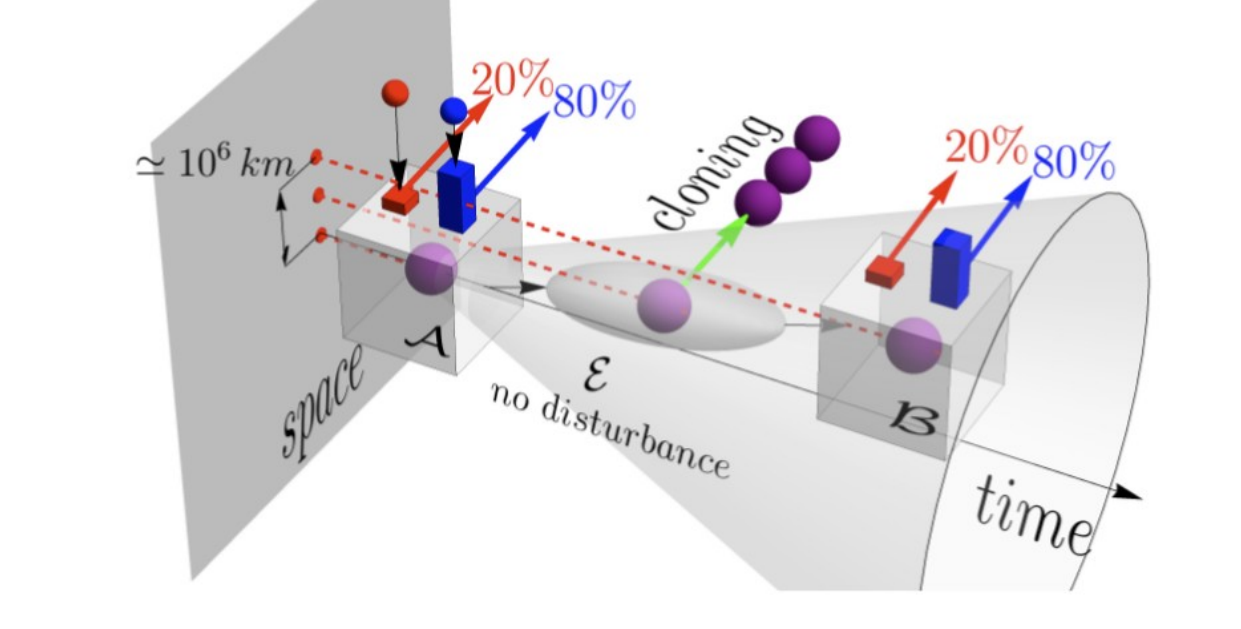}
        \caption{\textit{One--way function in classical public-key encryption}}
        \label{fig:AES}
    \end{subfigure}
    \hfill
    \begin{subfigure}{.49\textwidth}
        \centering
        \includegraphics[width=\linewidth,trim=0in 6.5in 0.in 1.in,clip]{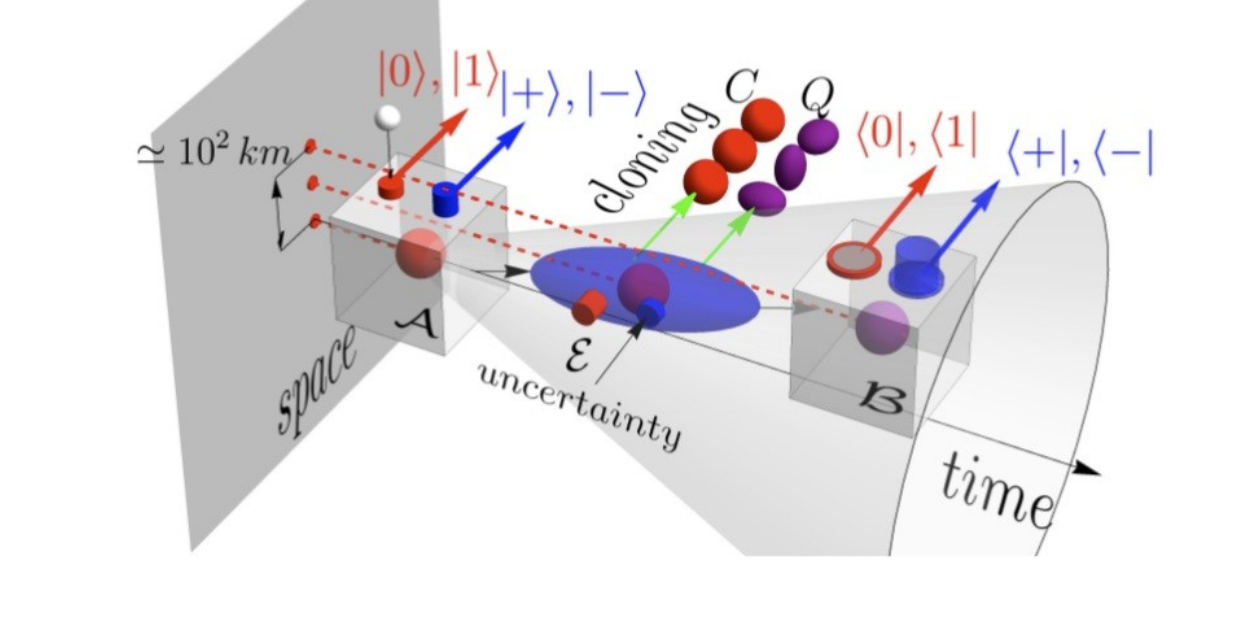}
        \caption{\textit{Quantum key distribution (BB84).}}
        \label{fig:QKDBB84}
    \end{subfigure}
    \caption{\textit{Illustration on the fundamental physical principles behind the need of quantum cryptography} -- 
    In Fig. (\ref{fig:AES}) a color-mixing analogy represents the encoding in public-key cryptography 
    as a purple sphere, symbolizing an encrypted message open to all. Yet, only holders of the private key can accurately decrypt it. Alice creates the purple sphere with a specific combination of colors (20\% red, 80\% blue) mimicking a one way function (Eve cannot perfectly decompose the purple shade into component colors).  Bob, having some information about the precise mix (the private key), can decrypt it.  In Fig. (\ref{fig:QKDBB84}) quantum cryptography. Colors represent the basis (red $\{\ket{0}, \ket{1}\}$, blue $\{\ket{+}, \ket{-}\}$). Due to no-cloning, Eve's interference changes the color and shape of the ball. If Alice uses the red button and Eve guesses the blue button, the result in Bob's box is purple. Contrary to classical cryptography, in the quantum case, Eve's intrusion affects the outcome at Bob's station. Bob detecting purple with a red button, signals Eve's presence.
    Traditional cryptography and QKD protocols are realized in the same causal cone at today's distances, which are on the order of  $\sim10^6\mathrm{km}$ for classical techniques, while QKD can reach $\sim10^2\mathrm{km}$ for fiber-based schemes without quantum repeaters.  For satellite QKD, much longer distances on the order of $\sim10^3-10^4\mathrm{km}$ have been achieved \cite{Liao2017Sat, Li2025Sat}    
}
    \label{fig:traditionalvsQKD}
\end{figure}
Not only has the security of many public-key ciphers never been formally proven, but it is also well known that many of those currently in use are vulnerable to quantum computers~\cite{Gidney2021}. 
QKD, on the other hand, creates robust encryption methods based on Kerckhoffs's principle \cite{Petitcolas2011}, which posits that a cryptosystem's security should be maintained even if everything about it is public knowledge, except the secret key. However, theoretical security and practical security are different issues. Experiments have demonstrated vulnerabilities in QKD systems \cite{Qi2005,Lamas-Linares:07,zhao08,lydersen10,wiechers11,qin18}, raising questions 
about whether the theory or the experiments of QKD need refinement. Claiming that the principles of quantum theory are fundamentally flawed would be an exaggeration. The real issues predominantly reside in the practical implementation of QKD. While the theoretical basis of QKD is robust, its real-world application involves new technology, such as single-photon detectors, and can be compromised by unavoidable imperfections in the devices.
\begin{quote}
    ``Theory and experiment are the same, in theory, but experimentally different.''
    
        \tiny{\textit{(The Yale Literary Magazine, Feb, 1882, B. Brewster)}}
    
\end{quote}
 
Therefore, to truly ensure security at the paranoid level under QKD necessitates addressing an additional layer of scrutiny: the independence from underlying devices \cite{Primaatmaja2023}. This prerequisite gave rise to the concept of Device-Independent Quantum Key Distribution (DI-QKD). 
DI-QKD, and its slightly more lenient version, Semi-Device-Independent (SDI) QKD, ensure security based solely on the principles of quantum mechanics, without dependence on the specifics of the hardware used. Thus, potential vulnerabilities or backdoors due to device malfunctions or imperfections are eliminated, providing a robust mechanism for secure communication.   

\subsection{From classical to quantum cryptography}
Classical cryptography is broadly categorized into two main types: secret (or symmetric) key cryptography and public-key (asymmetric) cryptography. In secret key cryptography, a single key is employed for both encrypting and decrypting messages, exemplified by the one-time pad (OTP) \cite{Singh2011,tutorial} or the Advanced Encryption Standard (AES) \cite{daemen1999aes}. The OTP can achieve perfect information-theoretic security against adversaries with unlimited computational power, as discussed in Ref. \cite{Gisin2002}. Moreover, the threat of quantum attacks on AES requires only doubling the size of the key to achieve equivalent levels of security \cite{Bernstein2017}\footnote{Known quantum attacks on AES use Grover's search algorithm, which provides quadratic speedup \cite{Grover1996}. Thus, to achieve ``quantum-safe" security equivalent to AES256 (256 key bits) under classical attacks requires upgrading to AES512.}. The primary challenge with symmetric cryptography lies in the secure distribution of the secret key prior to communication.   
\par
Public-key cryptography, such as the Diffie-Hellman \cite{Diffie1976} and RSA \cite{Rivest1978} protocols, circumvents this issue by employing a pair of keys for each participant: a public key, which can be shared openly, and a private key, which remains confidential. This enables Alice to encrypt a message using Bob's public key, ensuring that only Bob can decrypt it with his private key. This eliminates the need to exchange secret keys in advance.  Importantly, public-key cryptography also provides a mean for authentication: Bob can sign a message with his private key, and Alice (or anyone) can use the public key to confirm that it was indeed signed by Bob. From a practical standpoint, public-key systems  are slower in that they require larger keys and more communication between users, compared to symmetric encryption.  Thus, in current communication protocols such as Transport Layer Security (TLS), for example, a public-key method is used for authentication and key exchange in an initial handshaking session, while subsequent data encryption employs symmetric encryption \cite{TLS12,TLS13}. 
\par
Nowadays, the security of many public key ciphers is built on the computational difficulty of mathematical problems like \textit{integer factorization} or the \textit{discrete logarithm problem}, making it potentially vulnerable to advances in quantum computing. Notably, algorithms capable of solving these problems in polynomial time on a quantum computer have already been proposed~\cite{Shor1997,Bruss2007}.
It is thus through quantum mechanics that  \textit{Facta lex inventa fraus} is realized, through the emergence of quantum computing as a significant threat, (the \textit{fraus}) to the security of current public key cryptography. Still, quantum physics itself offers a new and robust set of laws (\textit{lex}) through QKD, capable of providing unconditionally secure key distribution in theory. 
\subsection{Standard Quantum Key Distribution}
\subsubsection{Theoretical security of QKD}\label{subsec:QKD}
To introduce standard prepare and measure QKD, we specifically elaborate the BB84 protocol \cite{tutorial} or conjugate coding \cite{Wiesner1983}, sketched in Fig. \ref{fig:QKDBB84}. A general protocol for prepare-and-measure (PM) QKD can be found in Box 1. 

\textit{Step 1} --(Data generation) 
Alice prepares eigenstates of 
$\sigma_z$ or $\sigma_x$ bases (red or blue of Fig. \ref{fig:traditionalvsQKD}) and attached them with a classical register. Then the classical-quantum preparation is
\begin{equation}    \psi_{C_{A_i}Q_i}=\frac14\sum_{a_i,x_i\in\{0,1\}}\ket{x_ia_i}\bra{x_ia_i}\otimes H^{x_i} \ket{a_i}\bra{a_i}H^{x_i},
\end{equation}
where the first system corresponds to her classical register, storing values of classical bits $x_i,a_i$ ( $x_i,a_i\in\{0,1\})$.  The second system is the quantum state $\psi_{Q_i}$, which she sends to Bob. Here $H$ is the Hadamard matrix, such that $H^{0}=\bm 1$ is identity.  
\par
Bob is unaware of Alice's input $x_i$, so he randomly selects a measurement basis $y_i$ and obtains result $b_i$ (Here $y_i,b_i\in\{0,1\})$. To each result he attaches a random classical bit $T_i$, so that with probability $p(T_i=1)=\gamma$, $b_i$ will be used for security check ($T_i=1$), else it will be used to generate final key ($T_i=0$). The classical-quantum state describing his measurement result is
\begin{equation}    
N_{b_i|y_i}^{T_i}=\ket{T_i}\bra{T_i}\otimes\frac12 H^{y_i}\ket{b_i}\bra{b_i}H^{y_i}.
\end{equation}
\textit{Step 2} -- (Public discussion and raw key) Alice and Bob must partially compare preparation and measurement results stored in classical registers $C_{A_i}=(x_i,a_i)\in \mathcal{C}_{A_i}$ and $C_{B_i}=(y_i,b_i,t_i)\in \mathcal{C}_{B_i}$, respectively. To do so, Bob publicly announces $(y_i,t_i)$  (but not $b_i$) so that Alice can inform Bob in which rounds $x_i = y_i$, so that Alice can define a raw key bit $\kappa_{A_i}=a_i$ and Bob $\kappa_{B_i}=b_i$.
When $x_i \neq y_i$, both parties discard their results, defining null bits $\kappa_{A_i}=\kappa_{B_i}=\mathrm{Null}$.    
Provided no errors occurred or no one manipulated the qubits sent, Bob has a string of bits identical to Alice's: $\kappa_B=\kappa_A=\{\kappa_{A_i}\neq \mathrm{Null}\}_i$.
\par
\textit{Step 3} -- (Error correction and Security Check) Both noise and/or intrusion by Eve will produce errors in Bob's bit string $\kappa_B$.  To correct them, Alice and Bob publicly communicate $k^\mathrm{EC}$ for error correction (cascade, LDPC, parity check) on their key bits ($T_i=0$). Let us say that is Bob to perform a security check on results with $T_i=1$. 
For each result, he defines errors using (see Box 1)
\begin{equation}
    c_i = 
    \begin{cases}
        \mathrm{Null} & \text{if } x_i \neq y_i \lor T_i = 0, \text{ no useful check}, \\
        1 & \text{if } x_i = y_i \land T_i = 1, \text{ check passed}, \\
        0 & \text{else, check failed}.
    \end{cases}
\end{equation}
Then, Alice and Bob can estimate $Q_i$'s, ($i\in\{Z,X\}$), the Quantum Bit Error Rate (QBER) for each basis,

\begin{align}
\label{eq:QBER}
Q_Z =
  \frac{\bigl|\{\,i \mid x_i = y_i = 0,\; c_i = 0\}\bigr|}
       {\bigl|\{\,i \mid x_i = y_i = 0,\; c_i \neq \text{Null}\}\bigr|},\quad
Q_X =
  \frac{\bigl|\{\,i \mid x_i = y_i = 1,\; c_i = 0\}\bigr|}
       {\bigl|\{\,i \mid x_i = y_i = 1,\; c_i \neq \text{Null}\}\bigr|}.
\end{align}
Whenever a protocol involves only a single key-generation basis, we denote its QBER by $Q$, omitting the subscripts $Z$ and $X$.
A QBER below a threshold indicates minimal interference or eavesdropping, so Alice and Bob can agree under a certain level of confidence that they final keys $\{k_{A_i}=k_{B_i}\}_{i|T_i=0}$ are correctly distributed and the technique for the step 3.4 discussed in the tutorial can be applied \cite{tutorial}. This refined key, now highly secure, is suitable for encrypting messages (see Sec. \ref{sec:chap4} to bound Eve's knowledge about the key). 

\begin{tcolorbox}[colframe=black, colback=gray!10, title= Box 1: General QKD prepare-and-measure protocol]
\noindent The most general QKD prepare-and-measure protocol can be defined as \cite{Metger2024}: \\
		\textit{-- Data generation:} for $i=1,\dots ,n$, where $n$ is the number of rounds Alice prepares $\psi_{C_A^nQ^n}=\psi_{C_AQ}^{\otimes n}$ and stores in a classical register $(x_i,a_i)\in\mathcal{C}_{A_i}$ her $i-$th preparation label by $a_i$ and setting $x_i$. She sequentially sends  $\psi_{Q_i}$ via a public channel to Bob; Bob chooses $y_i$ and measures $N_{y_i}=\{N_{b_i|y_i}\}_{b_i=1}^{d_B}$ storing  $(y_i,b_i) \in \mathcal{C}_{B_i}$ at each round, where $b_i$ labels one of the $d_B$ possible outcomes.\\
    \textit{-- Public discussion for the raw key generation:} Alice and Bob publicly exchange information, i.e. $\mathrm{PD}:\mathcal{C}_{A_i}\times \mathcal{C}_{B_i}\mapsto T_i$ with $\iota_i=\mathrm{PD}((x_i,a_i),(y_i,b_i))$ such that Alice can compute the raw key $\kappa_A=\{\kappa_{A_i}\}_i$, with $\kappa_{A_i}=\mathrm{RK}((x_i,a_i),\iota_i)\in\mathcal{S}_i$ via $\mathrm{RK}:\mathcal{C}_{A_i}\times T_{i}\mapsto \mathcal{S}_i$.\\
    \textit{-- Post-processing:} The players exchange a string $\kappa^\mathrm{EC}\in\{0,1\}^{\lambda_{EC}}$ to define the final key $k_A=k_B$ via
    \begin{enumerate}
        \item \textit{Error correction:} the players exchange $\kappa^\mathrm{EC}$ from $\mathcal{C}_{A_i},\mathcal{C}_{B_i}$ and $T_i$ so that Bob compute $\kappa_B=\{\kappa_{B_i}\}_i\in \mathcal{S}$ where $ \kappa_{B_i}(\kappa^\mathrm{EC}_i,(y_i,b_i),\iota_i)\in \mathcal{S}_i$.
        \item \textit{Raw key validation:} for $\varepsilon_{\mathrm{KV}}>0$ Alice chooses an universal hash function $\mathrm{HASH}:\mathcal{S}\mapsto\{0,1\}^{\lceil -\log\varepsilon_\mathrm{KV}\rceil}$ and publishes a description of it and the value $\mathrm{HASH}(\kappa_A)$. Bob computes $\mathrm{HASH}(\kappa_B)$ and if $\mathrm{HASH}(\kappa_B)\neq \mathrm{HASH}(\kappa_A)$ the protocol aborts.
        \item \textit{Statistical security check:} Bob sets $\mathrm{EV}:\mathcal{C}_{B_i}\times T_i\times\mathcal{S}_i\mapsto \mathcal{C}\ni q_{B_i}$. Bob then computes $q_B=\mathrm{CA}(\mathrm{freq}(q_B))$, where $\mathrm{CA}$ is an affine function corresponding to collective attack bound $q_{\mathrm{CA}}$. If the required amount of single-round entropy generation is $q_B<q_{\mathrm{CA}}$, he aborts the protocol.
        \item \textit{Privacy amplification:} Alice and Bob respectively have $\kappa_A,\kappa_B\in\{0,1\}^m$. Alice chooses a seed $\mu\in\{0,1\}^m$ uniformly at random and publishes her choice. Alice and Bob independently compute $\ell-$bit string $k_A=\mathrm{EXT}(\kappa_A,\mu)$ and $k_B=\mathrm{EXT}(\kappa_B,\mu)$ where $\mathrm{EXT}:\{0,1\}^m\times \{0,1\}^m \mapsto \{0,1\}^\ell$ is a quantum-proof strong extractor. 
    \end{enumerate}
    \end{tcolorbox}
Notably, in BB84 the no-cloning theorem~\cite{Wootters1982} prohibits the duplication of quantum states (we represent no-cloning in Fig. \ref{fig:traditionalvsQKD} as a not perfect copy process giving deformated spheres), ensuring that any attempt by Eve to intercept and replicate the qubits would introduce detectable errors. Additionally, if Eve measures a qubit without knowing the correct basis, (only Alice knows $x$), the original information encoded in the other basis is irreversibly lost due to the uncertainty principle. Consequently, any eavesdropping attempt increases the QBER, which can be measured by Alice and Bob. Specifically, after comparing $m=|\{c_{i}|c_i=1\lor c_i=0\}_{i}|$ bits, the probability that Eve can eavesdrop without being detected drops to $(3/4)^m$. Here the 3/4 probability of a single correct guess follows from the 1/4 error probability in an  intercept/resend attack (see below): there is a 50\% chance that Eve guesses the correct basis (same Alice's color in Fig. \ref{fig:traditionalvsQKD}), and thus a 50\% chance that Eve guesses wrong, and within those wrong cases, there is another 50\% chance that Bob's measurement will yield an incorrect result (Alice's color differs from Eve and Bob's color in Fig. \ref{fig:traditionalvsQKD}).
The theoretical security proofs depend on Eve's ability to perform \textit{(i) individual attacks}, measuring states separately; \textit{(ii) collective attacks}, measuring individually with joint classical post-processing; \textit{(iii) coherent attacks}, using joint quantum measurements on all states stored in a memory.
\subsubsection{Implementation issues and Quantum Hacking}\label{subsec:implementationissues}
As is traditionally advertised in regards to QKD, any attempt by Eve to uncover information of the key results in an increase in the QBER.  A simple and straightforward example are
{Intercept-and-Resend Attacks}, where Eve intercepts $\psi_{Q_i}$ sent by Alice, measures it in a chosen basis, prepares a new photon state based on her measurement result and sends it to Bob \cite{Barnett1993,Nguyen2004,EZZAHRAOUY2009}.  Since she chooses the wrong basis some of the time, her disturbance increases the QBER and is thus detectable. Evaluation of the QBER can give an upper bound for the amount of information Eve has about the key. Thus,  
QKD can be \textit{theoretically} secure.  However, even in a noise-free scenario, the difference between theory and practice can result in vulnerabilities.  Indeed, if $\psi_{Q_i}\in\mathcal{H}$ with uncontrolled $\dim \mathcal{H}$ and no Bell Inequality (BI) violation is measured, QKD is insecure, because the same BB84 correlations ( \( p(a_i=b_i|x_i=y_i)=1 \) and \( p(a_i \neq b_i|x_i \neq y_i)=1/2 \)) produced by $\ket{\psi}=(\ket{00}+\ket{11})/\sqrt{2}\in \mathbb{C}^2\otimes\mathbb{C}^2$ are also reproduced by a four-qubit separable state \cite{Acin2006b}, ( \cite{magniez2006self} in app. A), 
\[
\rho=\frac{1}{4}\left( \ket{00}\bra{00} + \ket{11}\bra{11}\right)\otimes\left( \ket{++}\bra{++} + \ket{--}\bra{--}\right),
\]
when Alice measures the first (third) qubit in the \(\sigma_z\) (\(\sigma_x\)) basis, and Bob measures the second (fourth) qubit in the  \(\sigma_z\) (\(\sigma_x\)) basis. 
As there is no quantum correlation, a secure key cannot be established.
\par
While this type of quantum state manipulation might seem to give too much power to Eve, it is indeed true that operational imperfections present considerable opportunities for hacking \cite{SCARANI2014,Okula2024}. For example, Eve can exploit the fact that weak coherent pulses (WPCs),  used in some QKD systems, can contain more than one photon to implement the {Photon Number Splitting} (PNS) attack.   By separating and storing one of the photons from a WCP, Eve can measure it later, once the measurement basis has been publicly announced. In this way, she can obtain full information without disturbing the state of the photons sent to Bob \cite{Huttner1995,Luetkenhaus2002,Wang2005}.  Other examples include  side-channel~ \cite{standaert2010introduction,braunstein2012side,Baliuka2023,Zhang2022b}, trojan horse~\cite{Gisin2006,Jain2014}, and device calibration~\cite{Fung2007,zhao08,Xu2010,Jouguet2013,Mao2020,Emde2024} attacks (for a full list see Ref. \cite{tutorial}). To effectively counter these vulnerabilities, the best approach is to use security proofs based on minimal principles and strategies that reduce or eliminate reliance on trusted components. Among these, DI-QKD stands out as the ultimate solution.

\subsection{Overview of Device-independent QKD}
The internal workings and security of the quantum devices involved in QKD protocols, as we analyzed, are often faulty and vulnerable to quantum hacking.
 DI-QKD represents a significant advance in that it aims to ensure the utmost security of QKD, irrespective of the reliability or trustworthiness of the devices used. 
This security is achieved through nonlocal correlations verified by BI violation, as depicted in Fig. \ref{fig:diqkd}. In general, the correlations, or \textit{behaviors} are indicated as points $\bm{p}=\{p(ab|xy)\}_{a,b,x,y}$ in the convex correlation space characterized by the regions $\mathcal{L}\subset\mathcal{Q}\subset\mathcal{NS}$ respectively for local and realistic, quantum, and no-signaling behaviors, respectively. A BI violation ($\bm{p}\not\in \mathcal{L}$), classified as ``strongly nonclassical'' \cite{Spekkens2016}, implies one of two possibilities, or both: (\textit{1}) $a$ and $b$ are determined only when observed; (\textit{2}) a nonlocal influence ensures that the key is established solely through interactions between the trusted parties. In either case,  Eve cannot access the information without being detected because any interference would deviate from the expected nonlocal correlations.
\begin{figure}[h]
    \centering
    \begin{subfigure}{.47\textwidth}
        \centering
        \includegraphics[width=\linewidth,trim=.2in 6.95in .8in 2.1in,clip]{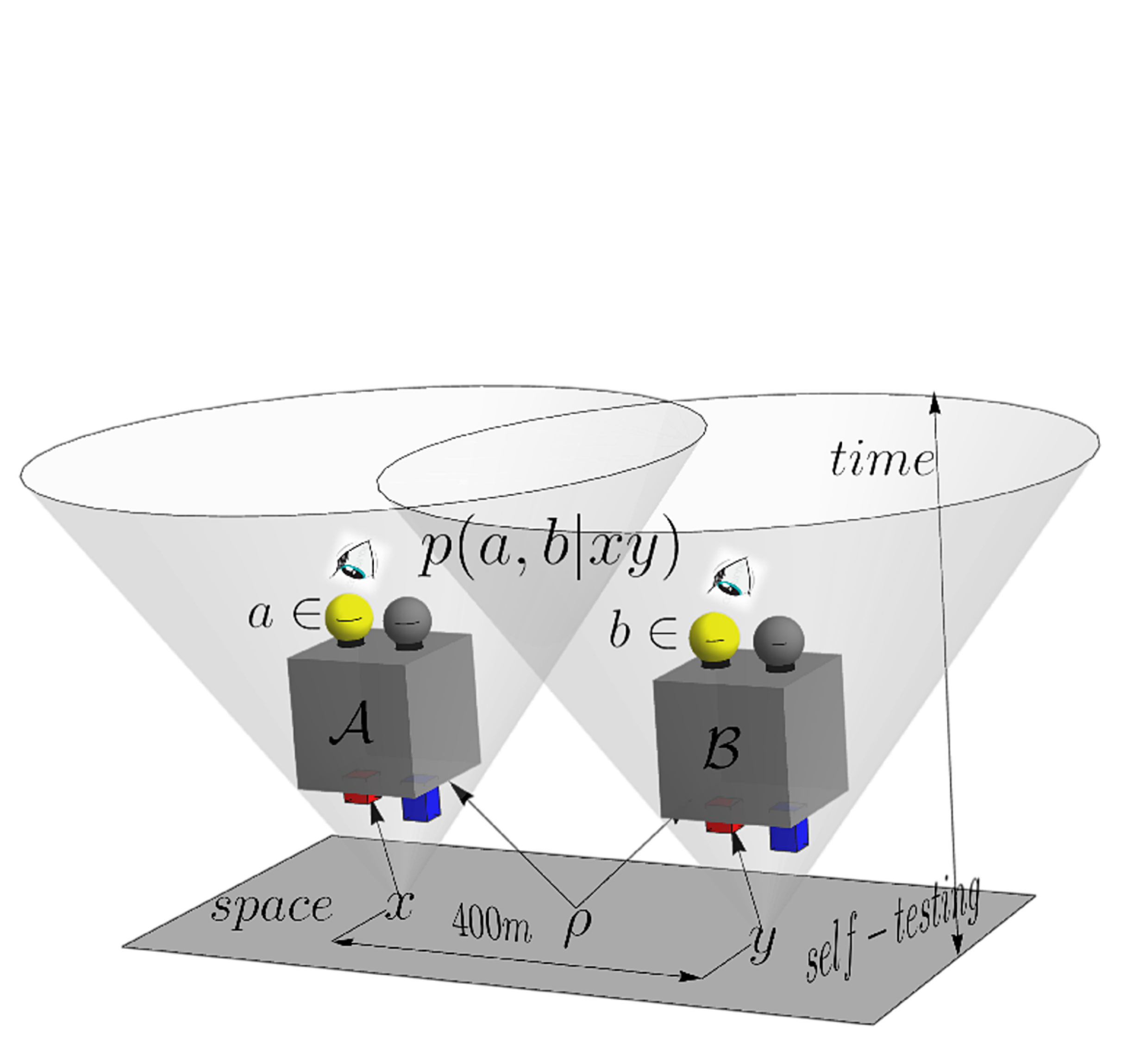}
        \caption{\textit{Non-local Bell-test}}
        \label{fig:nonlocaltestbell}
    \end{subfigure}
    \hfill
    \begin{subfigure}{.52\textwidth}
        \centering   
\begin{tikzpicture}

\def\fullwidth{4}

\def\pone{4/3}
\def\ptwo{2}
\def\pthree{2.83}
\def\inizio{.85}
\def\yshift{2}

\draw[thick] (\inizio,.1) -- (\inizio,-0.1) node[below] {0};
\draw[thick] (\inizio+\pone,.1) -- (\inizio+\pone,-0.1) node[below] {\(\frac13\)};
\draw[thick] (\inizio+\ptwo,.1) -- (\inizio+\ptwo,-0.1) node[below] {\(\frac12\)};
\draw[thick] (\inizio+\fullwidth,.1) -- (\inizio+\fullwidth,-0.1) node[below] {1};

\node at (-1.2, 0.2) {$\rho_W=\nu P_++\frac{1-\nu}{4}\bm{1},\,\,p:$};
\node at (1.51, 0.8) {Sep};
\node at (2.51, 0.8) {Ent};
\node at (3.85, 0.8) {Nonloc};

\draw[thick, fill=yellow!20] (\inizio,0.1) rectangle (\inizio+\pone,0.5);
\draw[thick, fill=blue!20] (\inizio+\pone,0.1) rectangle (\inizio+\fullwidth,0.5);
\draw[thick, fill=red!20] (\inizio+\ptwo,0.25) rectangle (\inizio+\fullwidth,0.5);

\draw[thick] (\inizio,\yshift+.1) -- (\inizio,\yshift-0.1) node[below] {0};
\draw[thick] (\inizio+\ptwo,\yshift+.1) -- (\inizio+\ptwo,\yshift-0.1) node[below] {\(\frac12\)};
\draw[thick] (\inizio+\pthree,\yshift+.1) -- (\inizio+\pthree,\yshift-0.1) node[below] {\(\frac{1}{\sqrt{2}}\)};
\draw[thick] (\inizio+\fullwidth,\yshift+.1) -- (\inizio+\fullwidth,\yshift-0.1) node[below] {1};

\node at (-1.2, \yshift+0.2) {$\tilde{\bm{p}}=\frac{v}{2}\delta_{a\oplus b,yx}+\frac{1-v}{4},\,v:$};
\node at (1.85, \yshift+0.8) {$\mathcal{L}$};
\node at (3.27, \yshift+0.8) {$\mathcal{Q}$};
\node at (4.26, \yshift+0.8) {$\mathcal{NS}$};

\draw[thick, fill=red!20] (\inizio,\yshift+0.1) rectangle (\inizio+\fullwidth,\yshift+0.5);
\draw[thick, fill=blue!20] (\inizio,\yshift+0.2) rectangle (\inizio+\pthree,\yshift+0.5);
\draw[thick, fill=yellow!20] (\inizio,\yshift+0.3) rectangle (\inizio+\ptwo,\yshift+0.5);

\end{tikzpicture}
        \caption{\textit{Werner state $\rho_W$ and specific behaviors $\bm{p}=\{p(ab|xy)\}$}}
        \label{fig:WernerCorr}
    \end{subfigure}
    \begin{subfigure}{.49\textwidth}
        \centering
        \includegraphics[width=\textwidth]{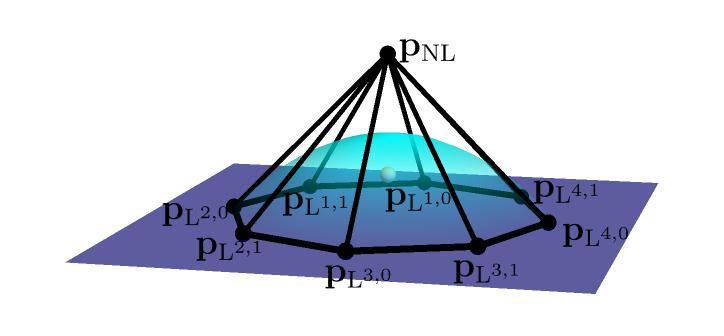}
        \caption{\textit{Eve's strategy in CHSH protocol of Eq.\eqref{eq:Eve_ombrella}}}
        \label{fig:Eve_ombrella}
    \end{subfigure}
    \begin{subfigure}{.49\textwidth}
        \centering

        \includegraphics[width=.65\linewidth]{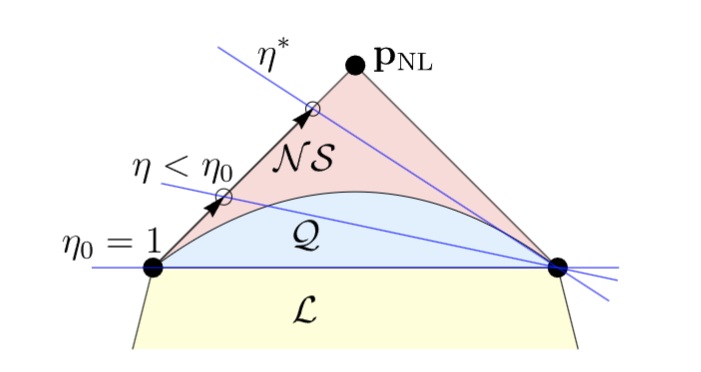}
        \caption{\textit{general Bell-test against Eve's strategy}}
        \label{fig:corr_hierarchy1}
    \end{subfigure}
   \caption{\textit{DI-QKD and Bell's Theorem} — only by the observed correlations $\bm p$ from two causal cones in Fig. \ref{fig:nonlocaltestbell}, the security of DI-QKD is tested by BI determining if $\bm{p} \not\in\mathcal{L}$ (see Sec. \ref{sec:chap2}). Self-testing may also be possible, a time retrodictive process that infers the inputs $x, y, \rho$ from $\bm{p}$ \cite{Mayers2004}. Figure \ref{fig:WernerCorr} shows the tolerance level $\nu$ in a Werner state $\rho_W$ required to witness nonlocality, along with the visibility in a specific $\tilde{\bm{p}}$ across the different regions $\mathcal{L}$, $\mathcal{Q}$, and $\mathcal{NS}$ in the space of correlations of Fig. \ref{fig:Eve_ombrella}-\ref{fig:corr_hierarchy1} (see \ref{sec:CHSHprotocol}). $\eta^*$ in Fig. \ref{fig:corr_hierarchy1} is the critical detection efficiency, if $\eta<\eta^*$ then $\not\exists$ BI to assert $\bm p \in \QC\setminus \LC$. In Sec. \ref{sec:chap2} we will introduce the behavior $\bm p_{\mathrm{NL}}$, a.k.a. PR box.}
    \label{fig:diqkd}
\end{figure}
Fig. \ref{fig:timeline} 
shows the evolution of DI-QKD, from BB84  and E91 protocol \cite{E91} up to the formalization of theoretical techniques and the first implementations in 2022 \cite{Nadlinger2022, Liu2022, Zhang2022a} (details in Fig. \ref{fig:qidkdplot}).
Remarkably, the first successful implementation of DI-QKD was reached after overcoming all the Bell test loopholes, highlighting the challenges in realizing DI-QKD and its connecting with BI experiments. Fig. \ref{fig:qidkdplot} compares the current experimental reach of DI-QKD - specifically, the distances of $2$ m, approximately $200$ m, and $400$ m — with those achieved using Measurement Device Independent-QKD (MDI-QKD), a related but distinct approach that is easier to implement.
\begin{figure}
    \centering
    \includegraphics[width=\linewidth]{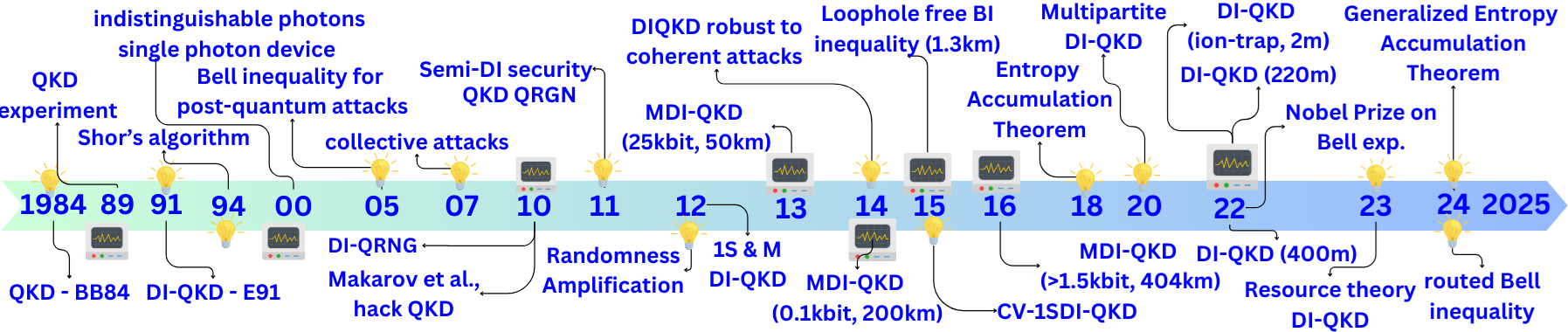}
    \caption{Timeline highlighting key events using a lamp and oscilloscope, distinguishes theoretical and experimental contributions (MDI - measurement device independent; 1S – One-sided; QRGN – Quantum random generator number; CV- continuous variable).}
    \label{fig:timeline}
\end{figure}
\begin{figure}
    \centering
    \begin{subfigure}{.5\textwidth}
        \centering 
        \includegraphics[width=.8\textwidth]{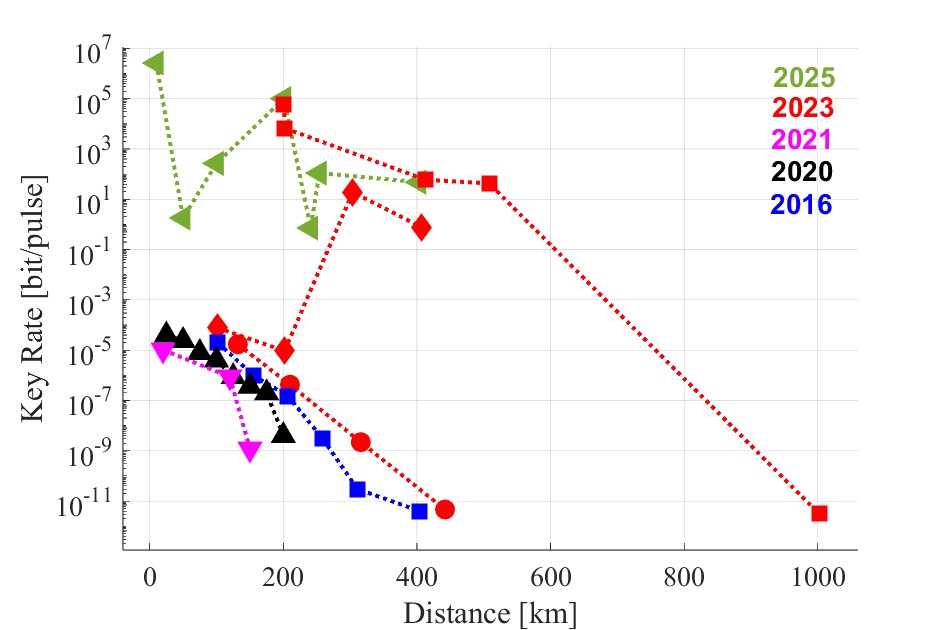}
        \caption{\textit{MDI--QKD}}
        \label{fig:mdiqkd}
    \end{subfigure}
    \hfill
    \begin{subfigure}{.425\textwidth}
    	\centering
        \includegraphics[width=\textwidth]{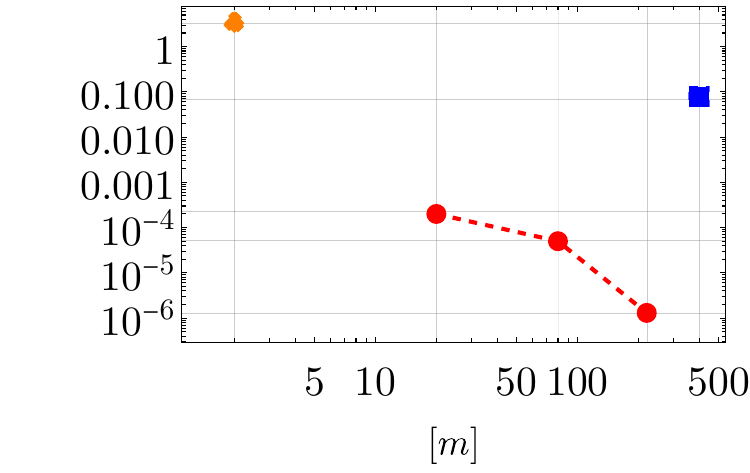}
        \caption{\textit{DI--QKD}}
        \label{fig:diqkd2}
    \end{subfigure}
    \caption{\textit{Comparative Analysis of DI-QKD and MDI-QKD Experiments} -- Fig. \ref{fig:mdiqkd} encapsulates the progress in quantum communication distances achieved through MDI-QKD implementations~\cite{yan2025measurement,shao2025high,li2023twin,zhou2023experimental,liu20231002,zhan2025experimental,zhang2025experimental,Hajomer_2025,liu2025hybrid,pittaluga2025long} (see. \ref{sec:DI-QKD}). In contrast fully-DI-QKD in Fig. \ref{fig:diqkd2} at distances: 2 m (yellow)\cite{Nadlinger2022}, 20 m, 100 m, and 200 m (red)\cite{Liu2022}, and 400 m (blue)\cite{Zhang2022a}. 
    } 
    \label{fig:qidkdplot} 
\end{figure}
Despite the challenges that lie ahead in terms of practical implementation and scalability, as the technology readiness level \cite{Heder2017} currently stands at 2-3 (concept formulated/experimental proof of concept), with expectations to advance to level 4-5 in the coming years (technology validated in lab/relevant environment), the pursuit of DI-QKD continues to push the boundaries of what is possible for secure communications negating the possibility of \textit{inventa fraus}.

\subsection{Focus of this Review}
 Here, we provide an in-depth presentation of DI-QKD, while also introducing Semi-device independent methods, including standard SDI-QKD, as well as MDI-QKD (see Fig.~\ref{fig:mdiqkd}), receiver DI-QKD (RDI-QKD), and one--sided DI-QKD (1SDI-QKD). As shown in Fig.~\ref{fig:timeline}, this review integrates DI-QKD theoretical proofs~\cite{
E91,
Acin2006b,
Acin2006,
acin2007device,
Barrett2005,
scarani2006secrecy,
pabgs09andSangouard.4,
GBHA10,
pironio2010random,
hanggi2010device,
masanes2011secure,
barrett2013memory,
lim2013device,
Slater2014,
vazirani2014fully,
vazirani2019fully,
Kaur2020,
tan2020advantage,
Brown2021,
Sekatski2021,
Woodhead2021,
Schwonnek2021,
Xu2022,
farkas2021bell,
Zhen2023a,
chaturvedi2024extending,
Lobo2024,
Tan2024,
LeRoyDeloison2025,
Sekatski2025,
Chaturvedi2025,
Farkas2024}, 
and experimental challenges~\cite{
Giustina2013,
Hensen2015a,
Lucamarini2018,
Yin2020,
Holz2020,
Valcarce2023} (with BI loopholes), resulting in fully DI-QKD experiments \cite{Nadlinger2022,Liu2022,Zhang2022a} (see Fig.~\ref{fig:diqkd2}).
We present simulations that bridge the gap with experiments in \cite{
Tan2022,
Melnikov2020,
Kolodynski2020}, 
as well as advanced mathematical methods, such as entropy accumulation and bounds in \citep{dfr16,
adfrv17,
Dupuis2020,
Grasselli2021,
Brown2021,
Metger2022a,Metger2024}.
DI randomness generation (theory\cite{colbeckrenner2012free,Acin2016,Li2023d}, experiments\cite{CZYM16,MVV17,Avesani2018}), and broader DI-QKD versions (theory~\cite{braunstein2012side,PB11,branciard2012one,Lo2012,Ioannou2022}, experiments ~\cite{liu13,tangpan14,panayi2014memory,pirandola2015high,gehring2015implementation,yin16}, are also discussed. 
A number of very nice review papers have covered theoretical, experimental and implementation aspects of QKD and DI-QKD ~\cite{Gisin2002,Pirandola2020,Zapatero2023,Primaatmaja2023,SCARANI2014,Alleaume2014,Scarani2009,Xu2020,Mehic2020Review,Cao2022Review}. As the DI framework relies on Bell nonlocality, we also refer the reader to reviews on this subject \cite{Brunner2014,Scarani2019,Tan2021a}.  In the present review, we have attempted to build on this previous work by including the most recent results, and providing alternative approaches when possible.  For example, the topic of Bell nonlocality in section 2 is presented using the modern approach of causal structures.  While touching upon mathematical and technological advancements, our review, starting with a pedagogical focus, remains concise, without claiming to cover all developments exhaustively, but providing references to relevant details in references, as well as a repository of simulations and tutorials, which we have made available online \cite{tutorial}.
\section{Nonclassicality in quantum cryptography}\label{sec:chap2}
\noindent Not all entangled states violate a BI (see Fig. \ref{fig:WernerCorr}, \cite{werner1989quantum}).  
Indeed, different types of non-classical behavior lead to distinct communication tasks. In this section, we introduce the ones related to DI cryptography.
\subsection{Bell nonlocality}\label{sec:BI}
In 1862, Boole laid out conditions for probabilities and logical constraints that any consistent probability theory should follow \cite{boole1862xii} a.k.a. \textit{Boole's conditions of ``possible experience''}, or \textit{causal instruments} \cite{Pearl2009}
. Boole’s work was essentially about the constraints on observable correlations, an early classical analogue to BI's constraints on local and realistic correlations \cite{bell1964einstein}. A Bell test, in its simplest form, involves two random measurement settings $X$\footnote{We refer with the capital letter to the random variable and its lower case the values that it can assume.} and $Y$ assuming values $x,y \in \{0,1\}$, with dichotomous outcomes $A$ and $B$ with values \(a,b \in \{0,1\}\) for Alice (\(\mathcal{A}\)) and Bob (\(\mathcal{B}\)), who are space-like separated, as illustrated in Fig. \ref{fig:diqkd}. Generally, the measurement process is denoted as $M_{A|X}$, a map depending on the specific $X=x$ and $A=a$. A Bell test serves as a causal instrument, represented by an inequality that must be satisfied to ensure the compatibility of certain causal structures, e.g. in Fig. \ref{fig:causal} with the statistics $\bm{p}$ in the 8 dimensional affine subspace of correlations \cite{tutorial}(e.g. replace $\bm{p}$ from Fig. \ref{fig:WernerCorr} in Eq. \eqref{eq:p_entries} gives Eq. \eqref{eq:pNL}).
\begin{equation}\label{eq:p_entries}
    \bm{p} = \{p(ab|xy)\} =
    \begin{array}{|c|c|c|c|c|}
    \hline
    ab \backslash xy & 00 & 01 & 10 & 11 \\ \hline
    00 &    p_{00|00} & p_{00|01} & p_{00|10} & p_{00|11} \\ \hline
    01 &    p_{01|00} & p_{01|01} & p_{01|10} & p_{01|11} \\ \hline
    10 &    p_{10|00} & p_{10|01} & p_{10|10} & p_{10|11} \\ \hline
    11 &    p_{11|00} & p_{11|01} & p_{11|10} & p_{11|11}\\  \hline
    \end{array}
    \in [0,1]^{16}.
\end{equation}
\begin{figure}[h!]
\centering
    \begin{subfigure}{0.24\textwidth}
    \resizebox{\textwidth}{!}{
    \begin{tikzpicture}
    \tikzset{circle node/.style={circle,draw=black,fill=blue!20,minimum size=1cm,inner sep=0pt}}
      \node[circle node] (S) at (0,1.7) {$\Lambda$};
      \node[circle node] (X) at (-1.2,1.7) {$X$};
      \node[circle node] (Y) at (1.2,1.7) {$Y$};
      \node[circle node] (A) at (-1.5,3.4) {$A$};
      \node[circle node] (B) at (1.5,3.4) {$B$};
      \draw[-latex] (S) -- (A);
      \draw[-latex] (S) -- (B);
      \draw[-latex] (X) -- (A);
      \draw[-latex] (Y) -- (B);
    \end{tikzpicture}
    }
    \caption{\textit{in $\mathcal{L}$}}
    \label{fig:L}
    \end{subfigure}
    \hfill
    \begin{subfigure}{0.24\textwidth}
    \resizebox{\textwidth}{!}{
    \begin{tikzpicture}
    \tikzset{circle node/.style={circle,draw=black,fill=blue!20,minimum size=1cm,inner sep=0pt}}
      \node[circle node] (S) at (0,1.7) {$\mathcal{\rho}$};
      \node[circle node] (X) at (-1.2,1.7) {$X$};
      \node[circle node] (Y) at (1.2,1.7) {$Y$};
      \node[circle node] (A) at (-1.5,3.4) {$A$};
      \node[circle node] (B) at (1.5,3.4) {$B$};
      \draw[-latex] (S) -- (A);
      \draw[-latex] (S) -- (B);
      \draw[-latex] (X) -- (A);
      \draw[-latex] (Y) -- (B);
    \end{tikzpicture}
    }
    \caption{\textit{in $\mathcal{L}\subset\mathcal{Q}$}}
    \label{fig:Q}
    \end{subfigure}
    \hfill
    \begin{subfigure}{0.24\textwidth}
    \resizebox{\textwidth}{!}{
    \begin{tikzpicture}
    \tikzset{circle node/.style={circle,draw=black,fill=blue!20,minimum size=1cm,inner sep=0pt}}
      \node[circle node] (S) at (0,1.7) {$\Lambda$};
      \node[circle node] (X) at (-1.2,1.7) {$X$};
      \node[circle node] (Y) at (1.2,1.7) {$Y$};
      \node[circle node] (A) at (-1.5,3.4) {$A$};
      \node[circle node] (B) at (1.5,3.4) {$B$};
      \draw[-latex] (S) -- (A);
      \draw[-latex] (S) -- (B);
      \draw[-latex] (X) -- (A);
      \draw[-latex] (Y) -- (B);
      \draw[-,decorate,decoration={snake,amplitude=0.5mm,segment length=2mm}] (A) -- (B);
    \end{tikzpicture}
    }
    \caption{\textit{in $\mathcal{L}\subset\mathcal{Q}\subset\mathcal{NS}$}}
    \label{fig:NL-NS}
    \end{subfigure}
    \hfill
    \begin{subfigure}{0.24\textwidth}
    \resizebox{\textwidth}{!}{
    \begin{tikzpicture}
    \tikzset{circle node/.style={circle,draw=black,fill=blue!20,minimum size=1cm,inner sep=0pt}}
      \node[circle node] (S) at (0,1.7) {$\Lambda$};
      \node[circle node] (X) at (-1.2,1.7) {$X$};
      \node[circle node] (Y) at (1.2,1.7) {$Y$};
      \node[circle node] (A) at (-1.5,3.4) {$A$};
      \node[circle node] (B) at (1.5,3.4) {$B$};
      \draw[-latex] (S) -- (A);
      \draw[-latex] (S) -- (B);
      \draw[-latex] (X) -- (A);
      \draw[-latex] (Y) -- (B);
      \draw[<->,decorate,decoration={snake,amplitude=0.5mm,segment length=2mm}] (A) -- (B);
      \draw[-latex] (X) -- (B);
    \end{tikzpicture}
    }
    \caption{\textit{out $\mathcal{NS}$}}
    \label{fig:out-NS}
    \end{subfigure}
    \caption{\textit{Bell-test causal structure} -- directed acyclic graphs (DAGs) with nodes for random variables and arrows for direct causal influence\cite{Pearl2009,Spirtes1993}
    . From Fig. \ref{fig:WernerCorr} the correlations with $0\le v\le 1/2$ are compatible with \ref{fig:L}; for $ v\le 1/\sqrt{2}$ with \ref{fig:Q} where nonlocal correlations arise from the entangled state; for $v\le 1$ the nonlocal correlations in \ref{fig:NL-NS} come from a post-quantum common cause (correlations stronger than quantum are represented as a wavy connection between $A$ and $B$, but satisfying no-signaling); for $v>1$ faster-than-light signals are allowed, e.g. $X$ directly influences $B$ or between $A$ and $B$ (the wavy connection can signalize).}
    \label{fig:causal}
\end{figure}
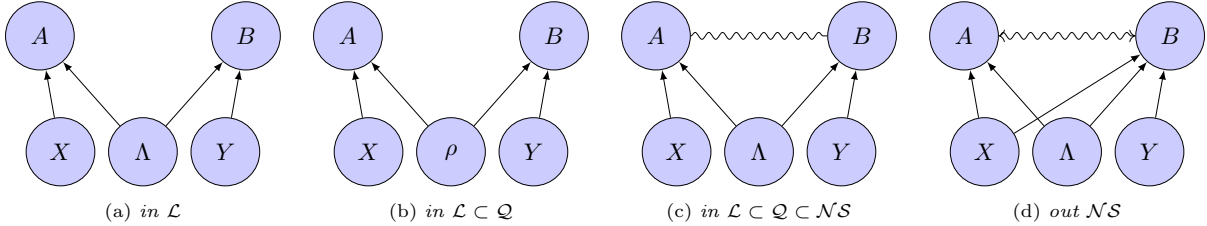
To explain how the causal structures in Fig. \ref{fig:causal} work, let us consider \ref{fig:L} a.k.a \textit{Local Hidden Variable (LHV) model}. The node $A$ ($B$) represents the output random variable and is influenced only by 
classical random variables $X$ and $\Lambda$ ($Y$ and $\Lambda$). Therefore \(p_{A|X\Lambda}\) and \(p_{B|Y\Lambda}\) are the probability distributions associated with variables \(A\) and \(B\), influenced respectively by \(X,\Lambda\) and \(Y,\Lambda\). The distributions \(P_X\), \(P_Y\), and \(P_\Lambda\) represent the probability distributions of \(X\), \(Y\), and \(\Lambda\), respectively. BI can be obtained from the causal structure in \ref{fig:L}. 
\textit{Locality} means that no arrow occurs between the two cones of Fig. \ref{fig:diqkd}, i.e. between $\{A,X\}$ and $\{B,Y\}$. Then:
\begin{equation}\label{eq:locality}
    p_{AB|XY\Lambda}=p_{A|BXY\Lambda}\, p_{B|XY\Lambda}
    \stackrel{\text{Loc}}{=}
p_{A|X\Lambda}\, p_{B|Y\Lambda}.
\end{equation}
Note that this condition also implies \textit{no-signaling} 
\begin{equation}\label{eq:no-signalling}
    p_{A|XY}\stackrel{\text{NS}}{=}p_{A|X},\qquad
    p_{B|XY}\stackrel{\text{NS}}{=}p_{B|Y}.
\end{equation}
Therefore the entries of the \textit{local} correlation $\bm{p}$ are the marginal of $p_{AB\Lambda|XY}=p_{AB|XY\Lambda}p_\Lambda$. From \eqref{eq:locality}
\begin{equation}\label{eq:pabxy}
  \bm{p}\equiv  p_{AB|XY}(ab|xy) = \sum_\lambda 
p_{A|X \Lambda}(a|x,\lambda) p_{B|Y \Lambda}(b|y,\lambda)\,p_{\Lambda}(\lambda).
\end{equation}
The common source is described by a joint probability distribution $P_{\Lambda}(\lambda)=P_{\Lambda_A\Lambda_B}(\lambda_A,\lambda_B)$.
Note that, local correlations $p_{AB|XY}$  can be reproduced by parties equipped only with shared randomness in $p_\Lambda$ so that Alice (Bob) samples from the distribution $P_{A|X\Lambda}$ ($P_{B|X\Lambda}$). \(\Lambda\) can be any system with arbitrary dimension but for \(A, B, X, Y \in \{0,1\}\), the cardinality of \(\Lambda\) is 16 (see \cite{pirsa_PIRSA:23030073} min. 48:30). 
\begin{definition}[Realism]\label{def:realism}
    The outcome of $A$ represents an element of reality, namely it satisfies the realism condition if it is pre-determined by a function \( f(X, \Lambda_A)\stackrel{\text{real}}{=} A\). This can be rewritten as $\lambda_A(X)=A$ once redefining \(f = \lambda_A\) as a pre-existing outcome of \(\Lambda_A\).  
\end{definition}
The role of the measurement process, once $X$ is chosen, is to select a specific function \(M_{A|X}=\lambda_A(X) = A\).
Given that \(X\) is dichotomous,  \(\lambda_A \in \{r_0, r_1, \mathrm{fp}, \mathrm{id}\}\) represents all possible deterministic functions (discard and replace $a\mapsto r_k(a)=k$, flip $a\mapsto\bar{a}$ and identity $a\mapsto a$) unveiling a pre-defined element of reality. Similarly, for \(N_{B|Y}=\lambda_B(Y)=B\), thus with the assumption of \textit{realism}, Eq. \eqref{eq:pabxy} becomes:
\begin{equation}\label{eq:realism}
p_{AB|XY}(ab|xy)\stackrel{\text{Loc,real}}{=}\sum_{\lambda_A,\lambda_B} 
\delta_{A,\lambda_A(X)}\delta_{B,\lambda_B (Y)}P_{\Lambda_A \Lambda_B}(\lambda_A,\lambda_B).
\end{equation} 
With \(\lambda_A,\lambda_B \in \{r_0, r_1, \mathrm{fp}, \mathrm{id}\}\), and denoting \(P_Z(z) = p_z\), we have:
\begin{align}
p_{00|00}&=q_{r_0,r_0}+q_{\mathrm{id},r_0}+q_{r_0,\mathrm{id}}+q_{\mathrm{id},\mathrm{id}}\nonumber\\
p_{00|01}&=q_{r_0,r_1}+q_{\mathrm{id},r_1}+q_{r_0,\mathrm{fp}}+q_{\mathrm{id},\mathrm{fp}}\nonumber\\
\vdots\nonumber\\
p_{11|11}&=q_{r_1,r_1}+q_{\mathrm{id},r_1}+q_{r_1,\mathrm{id}}+q_{\mathrm{id},\mathrm{id}}.
\end{align}
These 16 equations, along with the 16 constraints of probabilities of applying specific \(\lambda_A, \lambda_B\), \(0 \le q_{\lambda_A \lambda_B} \le 1\)  represent a \textit{linear quantifier elimination} on the 16  probabilities \(q\). It satisfies the \textit{no-signaling} relations of Eq. \eqref{eq:no-signalling}
that denies superluminal causal arrows (from node $X$ to $B$ in Fig. \ref{fig:NL-NS}), and characterize the polytope $\mathcal{L}$ by the following type of Clauser-Horne-Shimony-Holt (CHSH) inequalities \footnote{Specifically, there are eight CHSH inequalities correspond to the 8 frustrated 4-node networks \cite{pirsa_PIRSA:23030073}. By taking the absolute value, only 4 CHSH inequalities are relevant and they can be represented in a tetrahedron \cite{Brunner2014,Gigena2022}.}\cite{pirsa_PIRSA:23030073,Lee2017}:
\begin{equation}\label{eq:CHSH1}
    \hat\beta=\sum_{xy=0}^1(-1)^{xy}M_{A|x}N_{B|y}, \qquad\beta=\beta_\uparrow - \beta_\downarrow 
    =\sum_{x,y=0}^1 p(a\oplus b=xy|xy)-p(a\oplus b\neq xy|xy)\le 2
\end{equation}
where $\beta=\langle \hat\beta\rangle$ and $\beta_\uparrow=4-\beta_\downarrow$.
\begin{definition}[Optimal CHSH quantum strategy] \label{optimalchshstrategy}
Let Alice and Bob share the singlet state
\(
\ket{\psi^-}_{AB}=\tfrac{1}{\sqrt2}\!\left(\ket{01}-\ket{10}\right).
\)
Choose dichotomic observables $A_0=\sigma_z,\;A_1=\sigma_x$ for Alice and
$B_0=(\sigma_z+\sigma_x)/\sqrt2,\; B_1=(\sigma_z-\sigma_x)/\sqrt2$ for Bob, which results in the outcome projectors
$M_{a|x}=\frac{1}{2}\bigl(\bm{1}+(-1)^aA_x\bigr)$ and
$N_{b|y}=\frac{1}{2}\bigl(\bm{1}+(-1)^bB_y\bigr)$, therefore, 
$p(ab|xy)=\langle\psi^-|M_{a|x}\otimes N_{b|y}|\psi^-\rangle$ achieves the maximal quantum value
$\beta(p)=2\sqrt2$ for the CHSH expression in Eq.\eqref{eq:CHSH1}.\\
\end{definition}
Other Bell states also reach this maximum value, provided the measurement settings are adjusted accordingly. CHSH can be rewritten as
\begin{equation}\label{eq:CHSHSS}
     \beta_\uparrow=p_{a=b|00}+
     p_{a=b|01}+
     p_{a=b|10}+
     p_{a\neq b|11}
     \le 3\quad \lor \quad \beta_\downarrow\ge 1,
\end{equation}
Note that each CHSH inequality corresponds to a facet of the local polytope $\LC$. This motivates the following definition
\begin{definition}
\label{definition:tightBI}
    A Bell inequality is \emph{tight} (or \emph{facet-defining}) if the hyperplane it defines corresponds to a facet of the local polytope $\LC$. 
\end{definition}
The CHSH inequality can also be understood as a \textit{non-local game}: once Alice and Bob receives $x,y\in\{0,1\}$ they cannot communicate. They respond with bits $a,b\in\{\ 0,1\}$ and win if $a\oplus b=x\cdot y$.
The correlator expression $\langle M_{A|x} N_{B|y} \rangle = p_{a=b|xy} - p_{a \neq b|xy}$, where $p_{a=b|xy} = \sum_{a=b} p_{ab|xy}$, elucidates the relationship between $\beta$, $\beta_\uparrow$, and $\beta_\downarrow$. Classically, $P_\mathrm{win}=\sum_{x,y}p(x,y)\sum_{a,b}p(a\otimes b=xy|xy)=3/4$. Notably, from the behavior $\tilde{\bm{p}}$ in Fig. \ref{fig:WernerCorr}, we have $\beta = 4v$. For $v \leq 1/2$, $\tilde{\bm{p}} \in \mathcal{L}$, consistent with the causal structure in Fig. \ref{fig:L}. However, for $v > 1/2$, $\tilde{\bm{p}} \notin \mathcal{L}$ and cannot be derived from Eq. \eqref{eq:realism}, which holds under the assumptions of locality and realism.

In quantum theory, neither locality nor realism is assumed. Instead of Eq. \eqref{eq:realism}, the Born rule determines the entries of the conditional probability distribution $\bm{p}$. Given a quantum state $\rho \in \mathcal{D}(\mathcal{H}_A \otimes \mathcal{H}_B)$, where $\mathcal{H}_A \cong \mathbb{C}^{d_A}$ and $\mathcal{H}_B \cong \mathbb{C}^{d_B}$, and local measurements described by POVMs $M_{A|x} = \{ M_{a|x} \}_a \in \mathcal{B}(\mathcal{H}_A)$ and $N_{B|y} = \{ N_{b|y} \}_b \in \mathcal{B}(\mathcal{H}_B)$, the Born rule yields

\begin{equation}\label{eq:bornrule}
    \bm{p} \equiv \{p_{AB|XY}(ab|xy)\}_{abxy} = \{\mathrm{Tr}\left(M_{a|x} \otimes N_{b|y} \rho \right)\}_{abxy}\in \mathcal{Q}.
\end{equation}
These behaviors are consistent with the causal structure in Fig. \ref{fig:Q} and with $\tilde{\bm{p}}$ in Fig. \ref{fig:WernerCorr} for $v \leq 1/\sqrt{2}$. If we use Eq.\eqref{eq:bornrule} there exists $\rho$, $M_{A|X}$ and $N_{B|Y}$ such that $p_\mathrm{win}\approx0.85$.

The Hilbert space structure and the non-commutativity of observables imply $\mathcal{L}\subsetneq \mathcal{Q} \subsetneq \mathcal{NS}$ \footnote{In relativistic quantum field theory the set of quantum correlation is $\tilde{\mathcal{Q}}\supseteq \mathcal{Q}$. The question $\tilde{\mathcal{Q}}\equiv \mathcal{Q}?$ is named \textit{Tsirelson problem} and the answer is no, unless finite dimensional Hilbert spaces \cite{AliAhmad2022,Zhang2020,Khrennikov2007}.}. In fact, certain $\tilde{\bm{p}}$ for $1/\sqrt{2} < v \leq 1$ can satisfy the no-signaling constraints ($\bm{p} \in \mathcal{NS}$) while still not belonging to $\mathcal{Q}$ \footnote{These correlations can, in principle, violate the monogamy of entanglement, which asserts that if two parties ($\mathcal{A}$ and $\mathcal{B}$) are maximally entangled, neither can be maximally entangled with a third party ($\mathcal{C}$). Nonetheless, in quantum theory, non-local correlations must still respect monogamy of entanglement.}.
There is ongoing research into fundamental physical principles that could explain why $\mathcal{Q}\subsetneq\mathcal{NS}$. One example is the \textit{information causality principle}\cite{Pawlowski2009,Miklin2021}. It states that the amount of information that one party ($\mathcal{B}$) can gain about another party's ($\mathcal{A}$’s) data, even using shared correlations, cannot exceed the amount of classical communication exchanged between them. This principle is respected only for $\bm{p}\in\mathcal{Q}$, as it imposes limits compatible with Tsirelson's bound $v=\/\sqrt{2}$ (for details see \cite{tutorial} and the review on $\QC$ \cite{Le2023}).
In conclusion, behaviors outside the no-signaling polytope contradict special relativity as shown in Fig. \ref{fig:out-NS}.

In general, $\dim \LC=\dim \QC =\dim \NSC $ and the extremal points of $\LC$ is a finite subset of the set of infinite extremal points of $\QC$.
BI violation remains a necessary condition to detect Eve and ensures secure communication \cite{farkas2021bell}. It turns out that the shared state $\rho$ must necessarily be entangled. 
Unlike entanglement witnesses, which rely on assumptions about Hilbert space structure, BI violation is a stronger test for witnessing entanglement of $\rho$ by $\bm p\in \QC\setminus\LC$. It depends \textit{solely} on the observed statistical behaviors $\bm{p}$, making protocols device-independent. If the measurement outcomes of entangled particles violate BI, it guarantees that the correlations are genuinely quantum.
Eve cannot reproduce these correlations without being detected, ensuring the security of the key. Indeed, let $\beta_\mathcal{EA}$ be the CHSH value between Eve and Alice, and $\beta_{\mathcal{AB}}$ between Alice and Bob, then quantum theory predicts that $\beta_\mathcal{AB}^2+\beta_\mathcal{EA}^2\le 8$ \cite{Cieslinski2024}.
Therefore if $\beta_\mathcal{AB}>2\sqrt{2} \Longrightarrow \beta_\mathcal{EA}<2$. 

Next, we analyze the numerical and experimental tool to assert that $\bm{p}\in \QC\setminus \LC$.

\subsection{The Navascués-Pironio-Acin hierarchy}\label{subsec:NPA}
\noindent The Navascués-Pironio-Acín (NPA) hierarchy is a systematic approach to check if $\bm{p}=p(ab|xy)=\mathrm{Tr}(\rho M_{a|x}N_{b|y})\in \QC\setminus\LC$ \cite{Navascues2007,Navascues2008}. It provides a sequence of increasingly tighter outer approximations to the set of quantum correlations $\QC_1 \supseteq \cdots \QC_k \supseteq \cdots \supseteq  \QC$, where each $k$--th level in the hierarchy defines the following semidefinite programming (SDP) relaxation:
\begin{equation}
             \operatorname{maximize} \{\varphi|\,\;
             \operatorname{Tr}\Gamma^kJ^i=0,\;
              \operatorname{Tr}\Gamma^k\Phi^i=p_i,\;
              \Gamma^k-\varphi \bm{1}\ge0,\;\Gamma\succeq0\}
\end{equation}
where $J^i$ and $\Phi^i$ are linked to the moment matrix $\Gamma^k$, which encodes the constraints derived from quantum mechanics:
\textit{(i) Definition of the moment matrix $\Gamma^k$} --
for a given level \( k \) in the hierarchy,
\[
\Gamma_{i, j}^k = \operatorname{Tr}\rho (\tau^k_i)^\dagger \tau^k_j, \quad \tau^1=\{\bm{1},M_{a|x},N_{b|y}\}_{abxy},\quad \tau^{k+1}=\{\tau^k,\tau^k_i \tau^1_j\}_{ij}
\]
where \( \tau^k = \{ \tau_i^k \} \) is the set of monomials of measurement operators up to degree \( k \), e.g., consisting of products like \( M_{a|x} N_{b|y} \) or \( M_{a'|x} M_{a|x} \). The size of \(\Gamma\) grows with \(k\), encompassing higher-order correlations between measurement operators.

\noindent
\textit{(ii) Constraints} -- The condition $\operatorname{Tr}\Gamma^k J^i=0$ and $\operatorname{Tr}\Gamma^k\Phi^i=p_i$ with opportune $J$ and $\Phi$ suitably rewrite the constraints that $\bm{p}\in\QC_k$ only if $\exists \Gamma \succeq 0$ with $\bm{p}=\{p_i\}_i$ and (similarly for $N_{b|y}$)
\[
M_{a|x} M_{a'|x} = \delta_{a,a'}\bm{1}, \qquad
\sum_a M_{a|x} = \bm 1,\qquad
[M_{a|x},N_{b|y}]=0.
\]
These constraints are incorporated into the structure of \(\Gamma\), imposing relations between the matrix elements and reducing the affine subspace of the possible correlations.

\noindent
\textit{(iii) Feasibility} -- If a feasible \(\Gamma\) that solves the ST problem exists at level \(k\), then $\bm{p}\in\QC_k$ is ``k-quantum'', meaning it can be approximated by a quantum behavior up to $k$--th hierarchy level.
If $\bm{p}\in \QC_{k+1}\Longrightarrow \bm{p}\in \QC_{k}$ since $\Gamma^k$ is a sub-matrix of $\Gamma^{k+1}$. In the limit $\lim_{k\to\infty}\QC_k=\QC$, therefore if $\bm{p}\notin \QC\Longrightarrow \exists k$ s.t. the problem is unfeasible.

\noindent
\textit{(iv) Generalization} -- The NPA hierarchy can be seen as a specific instance of a more general framework for optimizing polynomial expressions over noncommuting variables. This generalized formulation was developed in \cite{Pironio2010}, where the moment matrix $\Gamma^k$ is understood as a noncommutative moment matrix, and the SDP relaxation is extended to arbitrary Hermitian polynomial constraints over operator variables. In this setting, one defines a sequence $y = \{y_w\}$ indexed by words $w$ in the measurement operators, where $y_w = \langle \phi, w(X) \phi \rangle$ for some Hilbert space $\mathcal{H}$, state $\phi \in \mathcal{H}$, and operator tuple $X = (X_1, \dots, X_n)$. The feasibility conditions $\Gamma^k \succeq 0$ and the constraints encoded by moment and localizing matrices ensure that the hierarchy of SDPs converges to the global optimum under an Archimedean condition on the constraint set.

This formulation subsumes the NPA hierarchy by treating quantum constraints (e.g., projection, completeness, commutation) as polynomial equalities or inequalities. Notably, it includes methods to detect when a given hierarchy level already yields the global optimum and to extract an explicit quantum realization $(\mathcal{H}, X, \phi)$ from the SDP solution, as shown in \cite[Theorem~2]{Pironio2010}.

In practice, the NPA hierarchy offers a tractable approximation of the quantum set via a sequence of SDPs, each solvable by efficient algorithms, though the computational cost increases with the level \(k\). For many applications, low levels (e.g., \(k = 2\) or \(3\)) already yield tight enough bounds. Intermediate levels, such as \(\tau^{1+AB} = \tau^1 \cup \{M_a^x N_b^y\}_{abxy}\), are also commonly used.
Replacing the objective function \(\varphi\) with any linear function of the elements in \(\tau^k\), the NPA hierarchy becomes a powerful computational tool—e.g., for estimating min-entropy in security proofs (see Sec.~\ref{sec:chap4}). Additional methods for DI applications are discussed in \cite{Navascues2014, Vertesi2014, Navascues2015} and in the SDP review \cite{Mironowicz2024}.

\subsection{Self-testing}\label{sec:self-testing}
\noindent
In particular cases, DI--protocols not only identify $\bm{p}\in \QC\setminus\LC$, but from the behaviors $\bm{p}$ can also infer the \textit{input state and measurements realization}  $R=(\ket{\tilde{\psi}}_{AB}, \tilde{M}_{A|X},\tilde{N}_{B|Y})$ adopted in the experiment up to some local invariance $\Phi$ (Fig. \ref{fig:nonlocaltestbell}). When this is possible, we say that the behaviors self-test the realization,  $\bm{p}\stackrel{\text{self-test}}{\mapsto} \Phi(R)$ \cite{Mayers98,Mayers2004}. 
\begin{definition}
Let identically and independent distributed (iid) $\bm{p}=\{p_{AB|XY}(ab|xy)\}\equiv p_{ab|xy}$
with locality and measurement-dependence loophole closed (sec. \ref{subsec:loopholes}), then $\forall \dim\mathcal{B}(\mathcal{H}_A),\dim\mathcal{B}(\mathcal{H}_A)$

\begin{equation}
    \Sigma:\bm{p}\stackrel{\text{self-test}}{\mapsto}(\tilde{\psi}_{AB},\tilde{M}_{A|X},\tilde{N}_{B|Y}) \Longrightarrow
    \exists | \Sigma^{-1}(\tilde{\psi}_{AB},\tilde{M}_{A|X},\tilde{N}_{B|Y})=
    \bra{\tilde{\psi}}\tilde{M}_{A|X}\otimes\tilde{N}_{B|Y}\ket{\tilde{\psi}}=p_{AB|XY}
\end{equation}
up to some \textit{gauge of freedom} characterized by the following local invariance $\Phi=\Phi_A \otimes \Phi_B$:
\begin{itemize}
    \item [\textit{(i)}] $\tilde{M}_{A|x}\mapsto U \tilde{M}_{A|x} U^\dagger$, $\tilde{N}_{B|y}\mapsto V \tilde{N}_{B|y} V^\dagger$, $\ket{\tilde{\psi}}\mapsto U\otimes V \ket{\tilde{\psi}}$
    \item [\textit{(ii)}] Given $\ket{\psi}_{ABE}\in \mathcal{H}_{ABE}$, 
    $\{M_{A|x}\}_x\in \mathcal{B}(\mathcal{H}_A)$, $\{N_{B|y}\}_y\in \mathcal{B}(\mathcal{H}_B)$, exists
    \begin{equation}\label{eq:self-state}
        \Phi:\mathcal{H}_{AB}\mapsto \mathcal{H}_{\tilde{A}\bar{A}\tilde{B}\bar{B}E} \mbox{ s.t. }
        \ket{\psi}_{AB}\mapsto \ket{\psi}_{\tilde{A}\tilde{B}}\ket{\mathrm{junk}}_{\bar{A}\bar{B}E}
    \end{equation}
    and
    \begin{align}\label{eq:self-msmt}
        \Phi_A:\mathcal{B}(\mathcal{H}_A)&\mapsto \mathcal{B}(\mathcal{H}_{\tilde{A}}\otimes\mathcal{H}_{\bar{A}})
        &
        \Phi_B:\mathcal{B}(\mathcal{H}_B)&\mapsto \mathcal{B}(\mathcal{H}_{\tilde{B}}\otimes\mathcal{H}_{\bar{B}}) \nonumber\\
        M_{a|x}&\stackrel{\Phi_A}{\mapsto}\tilde{M}_{a|x}\otimes \bm{1}_{\bar{A}}
        &
        N_{b|y}&\stackrel{\Phi_B}{\mapsto}\tilde{N}_{b|y}\otimes \bm{1}_{\bar{B}}
    \end{align}
    such that
    \begin{equation}\label{selftesting}
       (\Phi_A\otimes\Phi_B\otimes\mathrm{id}_E) (M_a^x\otimes N_b^y\otimes\bm{1}_E \ket{\psi}_{ABE})=
(\tilde{M}_{a|x}\otimes\tilde{N}_{b|y}\ket{\tilde{\psi}}_{\tilde{A}\tilde{B}})\otimes \ket{\text{junk}}_{\bar{A}\bar{B}E}.
    \end{equation}
\end{itemize}
\end{definition}
A simple case of self-testing is given by the maximal violation of CHSH of Eq. \eqref{eq:CHSH1}. It is easy to observe that any realization such that $\beta =2\sqrt{2}$ consists of anticommuting operators on the support of the state 
$\ket{\psi}$, 
$\{M_{A|0},M_{A|1}\}\ket{\psi}=\{N_{B|0},N_{B|1}\}\ket{\psi}$. Indeed, let 
$M_{A|\pm}=\frac{M_{A|0}\pm M_{A|1}}{\sqrt{2}}$, the sum-of-square (SOS) decomposition of the shifted CHSH operator assures that:
\begin{equation}
    2\sqrt{2}\bm{1}-\hat\beta=\frac{(M_{A|+}-N_{B|0})^2+(M_{A|-}-N_{B|1})^2}{\sqrt{2}}\succeq 0.
\end{equation}
Then the anticommutation comes from $\beta=2\sqrt{2} \Longrightarrow M_{A|+}=N_{B|0}\ket{\psi}$ and $ M_{A|-}\ket{\psi}=N_{B|1}\ket{\psi}$. The explicit isometry of Eq. \eqref{eq:self-state}  is given in the circuit  \ref{fig:swap-circuit}. Analogously for the isometries on the measurements of Eq. \eqref{eq:self-msmt}.
\begin{figure}
    \centering
    \begin{subfigure}{0.6\textwidth}
    \resizebox{\textwidth}{!}{
    \begin{quantikz}
    \lstick{\(\ket{0}\)$_{A'}$}    & \gate{H} & \ctrl{1}   & \gate{H} & \ctrl{1}   &\permute{1,3,4,2}&\qw\rstick[2]{$\ket{\psi}_{\tilde{A}\tilde{B}}$}\\
    \lstick[2]{$\ket{\psi_{AB}}$}  & \qw      & \gate{Z_A} & \qw      & \gate{X_A} & \qw & \qw \\
                                   & \qw      & \gate{Z_B} & \qw      & \gate{X_B} & \qw &\qw\rstick[2]{$\ket{\text{junk}}_{\bar{A}\bar{B}E}$} \\
    \lstick{\(\ket{0}\)$_{B'}$}    & \gate{H} & \ctrl{-1}  & \gate{H} & \ctrl{-1} & \qw &\qw\\
    \end{quantikz}
    }
    \caption{Isometry of Eq. \eqref{eq:self-state}}
    \label{fig:swap-circuit}
     \end{subfigure}
     \begin{subfigure}{0.32\textwidth}
    \resizebox{\textwidth}{!}{
    \begin{tikzpicture}[every node/.style={font=\small}, 
                    box/.style={draw, minimum width=1.cm, minimum height=1.cm, fill=black},
                    >=Latex]

    \node[box] (Alice) at (-1.3,0) {\textcolor{white}{$\mathcal{A}$}};
    \node[box] (Bob) at (1.3,0) {\textcolor{white}{$\mathcal{B}$}};
    \node[circle, draw, inner sep=1pt] (S) at (0,0) {$\psi$};
    
    \draw[->, thick] (S) -- (Alice);
    \draw[->, thick] (S) -- (Bob);
    
    \draw[->] (Bob.north) ++(0,0.3) node[right] {$y$} -- (Bob.north);
    
    \draw[->] (Bob.south) -- ++(0,-0.35) node[right] {$b$};
    
    \draw[->] (Alice.north) ++(0,0.3) node[right] {$x$} -- (Alice.north);
    
    \draw[->] (Alice.south) -- ++(0,-0.35) node[right] {$a$};
    
    \draw[thick, dashed] (-2,1) rectangle (2,-1);
    
    \draw[->, thick] (-1.4,-1) -- (-1.4,-1.5) node[below] {$\{p(ab|xy)\}_{a,b,x,y}$};
    \draw[-, thick] (-1.8,-2.2) -- (-1.8,-2.8); 
    \draw[->, thick] (-1.8,-2.8) -- (-.2,-2.8) node[ right] {$(\tilde\psi, \tilde M_{A|X}, \tilde N_{B|Y})$}; 
    \node (see) at (-1.,-2.55) {selftesting};

    \end{tikzpicture}
    }
        \caption{Selftesting as retrodictive task}
        \label{fig:selftesting}
    \end{subfigure}
    \caption{In \ref{fig:swap-circuit} an explicit implementation of the isometry (details in Ref. \cite{Supic2020}) that characterizes the realization $(\tilde\psi, \tilde M_{A|X}, \tilde N_{B|Y})$ inferred only from $\bm p=\{p_{AB|XY}(a,b,|x,y)\}_{a,b,x,y}$ treating the labs as black boxes \ref{fig:selftesting}-\ref{fig:nonlocaltestbell}.}
        \label{fig:selftesting-pic}
\end{figure}
Similar calculations hold when only one detector is inefficient \cite{AMP2012randomness} (see sec. \ref{subsec:detection-loophole} putting $\alpha_B=0$ in \eqref{eq:cq_alphabeta}) and the tilted Bell operator can be obtained, i.e. $\hat\beta+\alpha_A M_{A|0}$(see Eq. \eqref{eq:cq_alphabeta}) such that
\begin{equation}
    \sqrt{2+\alpha_A^2}\bm{1}-(\hat\beta+\alpha_A M_{A|0})=\sum_i P_i^\dagger P_i
\end{equation}
in terms of polynomials $P_i\in\{\bm{1}, M_{A|x}, N_{B|y}, M_{A|x}\otimes N_{B|y}\}$. SOS decomposition allows to prove that if maximal violation $\beta_\QC=\sqrt{2+\alpha_A^2}$ is obtained then the optimal realization $(\ket{\psi}_{AB},M_{A|\pm}, N_{B|y})$ is \textit{self-tested} 
\cite{tiltedasym, Supic2020}, with
\begin{equation}\label{eq:realization2}
    \ket{\psi}_{AB} = \cos{\theta}\ket{00} + \sin{\theta}\ket{11},\quad
    N_{B|0} = \sigma_z,\quad
    N_{B|1} = \sigma_x,\quad  
    M_{A|\pm} = \cos{\mu}\,\sigma_z \pm \sin{\mu}\,\sigma_x 
\end{equation}
where $\alpha_A = 2/\sqrt{1+2\tan^2{2\theta}}$, $\tan(\mu) = \sin(2\theta)$. If the polynomials $P_i$ are written in terms of the operators of \textit{any} optimal realization, then $\forall i$ $P_i\ket{\psi}=0$. These conditions implies the existence of operators $\{Z_A, X_A, Z_B, X_B \}$ satisfying
\begin{equation}\label{isometry_conditions}
    Z_B\ket{\psi}_{AB} = Z_B\ket{\psi}_{AB}, \qquad
    \sin\theta X_A(\bm{1}+Z_B)\ket{\psi}_{AB} = \cos\theta X_A(\bm{1}-Z_A)\ket{\psi}_{AB}. 
\end{equation}
In  turn, Eq.~\eqref{isometry_conditions} ensures the existence of local isometries $\Phi_A$ and $\Phi_B$ such that
\begin{equation}
    \begin{split}
        \Phi_A\otimes\Phi_B \ket{\psi}_{AB} &= \ket{\psi}_{\tilde{A}\tilde{B}}\otimes\ket{\mathrm{junk}}_{\bar{A}\bar{B}E} \\
        \Phi_A\otimes\Phi_B (M_{A|x}\otimes N_{B|y} \ket{\psi}_{AB}) &= M_{A|x}'\otimes N_{B|y}' \ket{\psi}_{\tilde{A}\tilde{B}}\otimes \ket{\mathrm{junk}}_{\bar{A}\bar{B}E}. \label{isometries}
    \end{split}
\end{equation}
Self-testing can be made \textit{robust} in the sense that in a neighborhood of the maximal quantum value $\mathcal{I}_{\beta_\QC} $, there exists a physical realization $R$ that is close—up to a local isometry—to the ideal realization $R_{\QC}$ (see numerical SWAP technique in \cite{bancal2015nswap, yang2014nswap}).
The most general case involving two inefficient detectors, the SOS decomposition is analyzed with NPA hierarchy (see Sec. \ref{subsec:NPA}) without finding a simple expression for the polynomial $P_i$, unless the inefficiency of the detectors is the same \cite{Gigena2024}. The solution in this case is obtained with Jordan's lemma \cite{Scarani2019} and Groebner basis. 
\begin{lemma} \label{Jordanlemma}
(Jordan's lemma)
In CHSH, $\{M_{a|x}\}_{a,x=0,1}$ and $\{N_{b|y}\}_{b,y=0,1}$ can be projective w.l.o.g., then there must exist a local unitary transformations that simultaneously block-diagonalize the observables $M_{A|x}, N_{B|y}$, with blocks of size $1$ or $2$. But, to compute $\langle M_{A|x}\rangle_\psi, \langle N_{B|y}\rangle_\psi$ we can always complete a one-dimensional block by adding to it a projector over a state in the null space of the corresponding reduced state $\rho_{A(B)}=\mathrm{Tr}_{B(A)}{\ket{\psi}\bra{\psi}}$. We can thus assume all blocks to be two-dimensional and write Alice's measurement operators as
\begin{equation} 
    M_{A|0}= \bigoplus_i M_{A|0}^{(i)}= \bigoplus_i {\sigma}_{Z} \label{qubit_param_A1}, \qquad
    M_{A|1}= \bigoplus_i M_{A|1}^{(i)}= \bigoplus_i(\cos{\theta_i^A}\,{\sigma} _{Z}+\sin{\theta_i^A}\,{\sigma}_{X}),
\end{equation}
where index $i$ iterates over the Jordan blocks. Similarly, for Bob's observables. 
\end{lemma}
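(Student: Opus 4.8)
The plan is to reduce the lemma to a normal form for a single pair of binary observables and then construct the Jordan blocks explicitly. First I would observe that only the single-party expectations $\langle M_{A|x}\rangle_\psi$ and the correlators $\langle M_{A|x}N_{B|y}\rangle_\psi$ enter the CHSH functional, so by a Naimark dilation each POVM $\{M_{0|x},M_{1|x}\}$ may be replaced, on a suitably enlarged Hilbert space, by the $\pm1$-valued \emph{projective} observable $M_{A|x}:=M_{0|x}-M_{1|x}$, with $M_{A|x}=M_{A|x}^{\dagger}$ and $M_{A|x}^{2}=\bm 1$; the same replacement is made for $N_{B|y}$, and everything below is carried out on $\mathcal H_A$ and repeated verbatim on $\mathcal H_B$.

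The engine of the argument is the anticommutator $S:=M_{A|0}M_{A|1}+M_{A|1}M_{A|0}$. Using $M_{A|x}^{2}=\bm 1$ one checks in one line that $M_{A|0}S=SM_{A|0}$ and $M_{A|1}S=SM_{A|1}$, so $S$ is self-adjoint, commutes with both observables, and satisfies $\|S\|\le 2$. I would then spectrally decompose $\mathcal H_A=\bigoplus_{s}\mathcal H_s$ into eigenspaces of $S$; each $\mathcal H_s$ is invariant under $M_{A|0}$ and $M_{A|1}$, and inside it one has the single scalar relation $\{M_{A|0},M_{A|1}\}=s\,\bm 1$ with $|s|\le 2$.

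Working inside a fixed $\mathcal H_s$, I would split it into the $\pm1$-eigenspaces $V_{\pm}$ of $M_{A|0}$ (legitimate since $M_{A|0}^{2}=\bm 1$) and write $M_{A|1}$ in the corresponding $2\times 2$ block form. The scalar relation forces its diagonal blocks to be $\tfrac{s}{2}\bm 1_{V_+}$ and $-\tfrac{s}{2}\bm 1_{V_-}$; self-adjointness forces the off-diagonal block to be a single operator $K\colon V_+\to V_-$ paired with $K^{\dagger}$; and $M_{A|1}^{2}=\bm 1$ then yields $K^{\dagger}K=(1-\tfrac{s^{2}}{4})\,\bm 1_{V_+}$ and $KK^{\dagger}=(1-\tfrac{s^{2}}{4})\,\bm 1_{V_-}$. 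For $|s|<2$ this makes $K/\sqrt{1-s^{2}/4}$ a unitary, so $\dim V_+=\dim V_-$; choosing an orthonormal basis $\{e_i\}$ of $V_+$ and setting $f_i:=Ke_i/\sqrt{1-s^{2}/4}$ produces an orthonormal basis of $V_-$, and each $\operatorname{span}\{e_i,f_i\}$ is invariant under both observables. On that block, identifying $e_i\mapsto\ket 0$ and $f_i\mapsto\ket 1$, one reads off $M_{A|0}^{(i)}=\sigma_Z$ and $M_{A|1}^{(i)}=\tfrac{s}{2}\,\sigma_Z+\sqrt{1-s^{2}/4}\,\sigma_X$ --- exactly the claimed parametrization with $\cos\theta_i^{A}=s/2$. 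For the degenerate value $|s|=2$ one gets $K=0$ and $M_{A|1}=\pm M_{A|0}$ on $\mathcal H_s$, i.e.\ commuting observables with one-dimensional common eigenblocks; each such block is completed to a two-dimensional one by adjoining a vector from $\ker\rho_A$ (where $\rho_A=\operatorname{Tr}_B\ket\psi\bra\psi$), which changes neither $\langle M_{A|x}\rangle_\psi$ nor any correlator, and on it we may take $M_{A|0}^{(i)}=M_{A|1}^{(i)}=\sigma_Z$ ($\theta_i^{A}=0$). Collecting the blocks over all $s$ and $i$ assembles the desired local unitary $U_A$, and repeating the construction on Bob's side completes the proof.

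The algebraic identities and the $2\times 2$ bookkeeping are routine; the step I expect to be the main obstacle is the spectral decomposition of $S$ when its spectrum is continuous, since then ``$\bigoplus_s$'' must be understood as a direct integral and the pairing $e_i\mapsto f_i$ has to be chosen measurably in $s$. For the finite-dimensional realizations that actually occur in CHSH this is immediate, and in general it follows from the spectral theorem for $S$ together with the polar decomposition of the off-diagonal block of $M_{A|1}$. A minor additional check is that the Naimark dilation can be performed while keeping $\ket\psi$ pure, so that the added dimensions lie outside $\operatorname{supp}\rho_A$; this is standard.
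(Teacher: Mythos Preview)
The paper does not actually prove Jordan's lemma; it states it and cites \cite{Scarani2019}. Your argument via the anticommutator $S=\{M_{A|0},M_{A|1}\}$ is correct and is one of the standard proofs: the fact that $S$ commutes with both observables, together with $M_{A|x}^{2}=\bm 1$, forces the off-diagonal block of $M_{A|1}$ (relative to the $\pm1$ eigenspaces of $M_{A|0}$) to be a multiple of a unitary, pairing the eigenvectors into two-dimensional invariant subspaces on which the parametrization $\cos\theta_i^{A}=s/2$ drops out exactly as you wrote. Two minor remarks: in the degenerate case $s=-2$ one has $M_{A|1}=-M_{A|0}$, so the completed block carries $\theta_i^{A}=\pi$ rather than $0$; and if $\rho_A$ has full rank there is no kernel to borrow padding vectors from, but one can always enlarge $\mathcal H_A$ by a trivial direct summand (already implicit in your Naimark step), which is the usual fix and leaves all expectations unchanged.
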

\noindent Using Jordan's lemma, one can decompose the Bell operator as \( \hat{\beta} = \bigoplus_i \hat{\beta}_i \), where each \( \hat{\beta}_i \) acts on a two-dimensional subspace. This decomposition implies that \textit{self-testing is independent of the local Hilbert space dimensions} \( \dim \mathcal{B}(\mathcal{H}_A) \) and \( \dim \mathcal{B}(\mathcal{H}_B) \), being invariant under local isometries \( \Phi_A \otimes \Phi_B \) that preserve physical predictions.
This leads to the so-called \textit{qubit reduction argument}, a key security feature of device-independent QKD: although the actual devices may operate in high-dimensional spaces, only two-dimensional subspaces contribute to the Bell inequality violation.
Suppose the state decomposes as \( \rho = \bigoplus_i p_i \rho_i \), then the observed value is
\(
    \beta = \mathrm{Tr}(\rho \hat{\beta}) = \sum_i p_i \langle \hat{\beta}_i \rangle.
\)
Each \( \hat{\beta}_i \) is bounded above by Tsirelson's bound \( 2\sqrt{2} \). If \( \rho \) is entirely supported on the block achieving this bound, the full violation is maximal. Otherwise, contributions from blocks with lower eigenvalues reduce the total violation. \textit{This dilution effect makes the security proof robust against dimension attacks}, where an adversary might try to hide extra information in higher-dimensional components (see Sec.~\ref{subsec:implementationissues}).
Consequently, the derivation of bounds on the adversary’s information—such as bounding the guessing probability from the observed Bell violation—becomes \textit{independent of the internal structure of the devices}, which can be treated as black boxes. The existence of a map \( \Sigma : \bm{p} \mapsto R \) (via Jordan’s lemma) is sufficient to certify private randomness extraction. Specifically, condition \eqref{selftesting} implies:
\[
    \sigma_{AE} = \sum_a \ket{a}\bra{a} \otimes \mathrm{Tr}_{AB} \left[(\tilde{M}_{a|x} \otimes \mathbb{I}_B \otimes \mathbb{I}_E) \ket{\psi}\bra{\psi} \right]
    = \left[ \sum_a p_A(a|x) \ket{a}_{\tilde{A}}\bra{a} \right] \otimes \sigma_E,
\]
where \( \sigma_E = \mathrm{Tr}_{\tilde{A}\tilde{B}} \ket{\text{junk}}\bra{\text{junk}} \in \mathcal{H}_{\bar{A} \bar{B} E} \), and \( p_A(a|x) = \sum_b p(a,b|x,y) \) is Alice’s marginal.
Thus, \textit{Alice’s outcomes are completely random from Eve’s perspective}~\cite{Farkas2024}, and for any correlation satisfying condition~\eqref{selftesting}, one may optimize over Bob’s measurements accordingly. A comprehensive review of self-testing is given in~\cite{Supic2020}, and a geometric characterization of self-testing via nonlocal extremal points \( \bm{p} \) is presented in~\cite{Goh2018,Le2023}.
In the next section, we examine how to account for \textit{experimental imperfections} in Bell tests. 

\subsection{Experimental Loopholes}\label{subsec:loopholes}
\noindent Experimental validation that $\bm{p}\in \mathcal{Q}\setminus\mathcal{L}$ requires careful treatment of imperfections in Bell tests. Such imperfections—typically due to transmission losses, detector inefficiencies, or other technical limitations—can open \textit{loopholes} that allow an LHV model to reproduce data that would otherwise appear nonlocal.
These loopholes undermine the assumption that the observed statistics $\bm p$ truly violate the causal constraints illustrated in Fig.\ref{fig:L}~\cite{Kaiser_2022,Larsson2014}. Although this may seem like a conspiratorial behavior of nature, in practice, an adversary could exploit such loopholes to forge fake BI violations using only classical resources~\cite{Gerhardt2011}, potentially compromising device-independent cryptographic protocols.
Below, we briefly summarize the main loopholes (see Refs.~\cite{Kaiser_2022,Larsson2014} for a detailed discussion).

\subsubsection{Detection Efficiency Loophole} \label{subsec:detection-loophole}

Consider the ideal scenario depicted in Fig.\ref{fig:nonlocaltestbell}, where the behavior $\bm{p}=\{p(ab|xy)\}$ satisfies $\beta(\bm{p})=2\sqrt{2}$ as in Eq. \eqref{eq:CHSH1}. Detection is illustrated as ``eyes'' observing which lamp turns on, but in reality, it involves two detectors that click with respective probabilities $\eta_A,\eta_B<1$. For simplicity, let us consider only Bob's detector, then he measures only a set $\mathcal{D}$ of detected particle from a set $\mathcal{E}$, of emitted particle, where  
$\eta=|\mathcal{D}|/|\mathcal{E}|$ \footnote{Here $\eta$ is the probability that a photon emitted by the source is indeed detected. A discussion on the experimental parameters that contribute to $\eta$ is provided in \ref{sec:losses}.}.
The most general way of accounting for no-click events is to consider an additional outcome, which enlarges the Bell scenario\footnote{This approach can be used to describe null events for analyzers with a number of outcomes that span the space associated to the degree of freedom of interest, such as two-outcome polarizing beam splitters, for example.} \cite{Eberhard1995,PhysRevA.83.032123}. As a results, characterizing the sets $\mathcal{L}$ and $\mathcal{Q}$ becomes considerably more complex~\cite{Jones2005,Barrett2005b,Barrett2005a}. To remain in the same Bell scenario, Bob fixes an outcome $b$ to assign at each no-click event with probability $q_B(b|y)$. Similarly, Alice assigns $a$ with probability $q_A(a|x)$
\footnote{This approach is best suited for experiments using ``pass/fail" measurement devices, such as a polarization filter, where one cannot distinguish a null event from a projection onto the state that does not pass through the filter.}. 
An affine map $\hat{\bm{p}}=\Omega_{\eta_A\eta_B}(\bm{p})=\{\hat{p}(ab|xy)\}$ describes the events of both detectors: both, only one, or none of them click with related probabilities $\eta_A\eta_B$, $\eta_A(1-\eta_B)$ or $(1-\eta_A)\eta_B$, and $(1-\eta_A)(1-\eta_B)$, such that
\begin{align} \label{eq:phat}
\hat{p}(ab|xy)=\eta_A\eta_B p(ab|xy) +&\eta_A(1-\eta_B)p_A(a|x)q_B(b|y)+(1-\eta_A)\eta_B q_A(a|x)p_B(b|y)\nonumber\\
+&(1-\eta_A)(1-\eta_B)q_A(a|x)q_B(b|y).
\end{align}
Taking into account also the ``no click'' events, the inequality \eqref{eq:CHSH1} turns out to be $\beta(\hat{\bm{p}})<2$. It is well known that optimal local assignment gives $\beta(p_A(a|x)q_B(b|y))=2\langle M_{A|0}\rangle$, $\beta(q_A(a|x)p_B(b|y))=2\langle N_{B|0}\rangle$, and $\beta(q_A(a|x)q_B(b|y))=2$. This yields
\begin{equation}\label{eq:Sphat}
\beta(\hat{\bm{p}})=\eta_A\eta_B \beta(\bm{p})+2\eta_A(1-\eta_B)\langle M_{A|0}\rangle + 2(1-\eta_A)\eta_B\langle N_{B|0}\rangle +2(1-\eta_A)(1-\eta_B)\le 2.
\end{equation}
This can be rewritten as ($\alpha_A=2(1-\eta_B)/\eta_B$ and $\alpha_B=2(1-\eta_A)/\eta_A$)
\begin{equation}\label{eq:cq_alphabeta}
\beta_{\mathcal{Q}}\equiv \beta(\bm{p})+\alpha_A\langle A_0\rangle +\alpha_B\langle B_0\rangle \le 2+\alpha_A +\alpha_B\equiv \beta_{\mathcal{L}}\le 4.
\end{equation}
The last inequality comes from $\mathcal{L}\subset \mathcal{NS}$ ($\beta(\bm{p}_\mathrm{NL})=4$). Therefore $\alpha_A+\alpha_B\le 2$ (or equivalently, $\eta_A^{-1}+\eta_B^{-1}>3)\Longrightarrow \QC\setminus\LC=\emptyset$.
There is no room for quantum violation as shown in Fig. \ref{fig:corr_hierarchy1} as the local vertex is moving up towards what is known as the \textit{critical detection efficiency} (CDE) (observe in Fig.\ref{fig:cQetas} that $\beta_\QC$ approaches the local bound).
Graphically, one can imagine that the plane of Fig. \ref{fig:Eve_ombrella} and  \ref{fig:corr_hierarchy1} with the local vertices approaches the no-signaling vertex $\bm{p}_{\mathrm{NL}}$ (more geometrical details are in Refs.~\cite{wilms08,branciard2011,czechlewski2018,sauer2020}).
\begin{figure}
\centering
\begin{subfigure}{.49\textwidth}
	\centering
	\includegraphics[width=\textwidth]{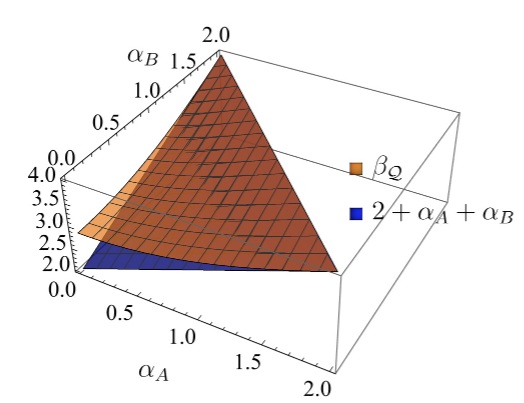}
	\caption{\textit{maximum quantum value}}
	\label{fig:cQ}
\end{subfigure}
\hfill
\begin{subfigure}{.49\textwidth}
	\centering
	\includegraphics[width=\textwidth]{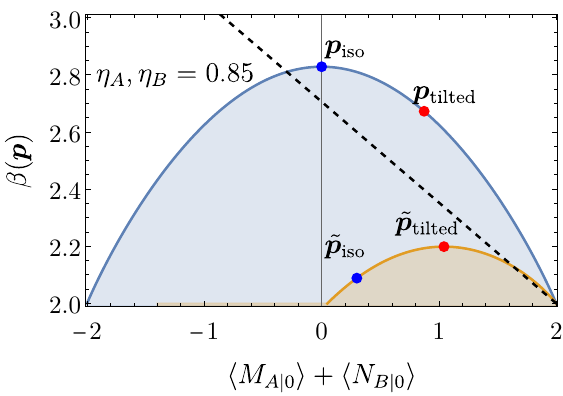}
	\caption{\textit{correlations with inefficient detectors
	}}
	\label{fig:cQetas}
\end{subfigure}
\caption{\textit{CHSH with inefficient detectors} -- Fig. \ref{fig:cQ} encapsulates the optimal value of $\beta_{\mathcal{Q}}$ at given value of $\alpha_A+\alpha_B\le 2$. The \ref{fig:cQetas} illustrates the impact of detector inefficiencies on nonlocal quantum correlations within the simplest Bell scenario. The blue region represents the set of quantum correlations $\bm{p}\in\QC$ in ideal conditions. With the detection efficiencies $\eta_A=\eta_B=0.85$, and the local assignment strategy $q_A(a|x)=\delta_{a,0},q_B(b|y)=\delta_{b,0}$, the effective quantum correlations $\tilde{\bm{p}}=\Omega_{\eta_A\eta_B}(\bm{p})$ are constrained to the smaller orange subset (see Fig. \ref{fig:corr_hierarchy1}). The blue dot on the blue curve corresponds to the isotropic behavior $\bm{p}_{iso}$ that maximally violates the CHSH inequality, $\beta(\bm{p}_{iso})=2\sqrt{2}$, in ideal conditions, while the corresponding effective behavior (blue dot on the orange curve) $\tilde{\bm{p}}_{iso}=\Omega_{\eta_A\eta_B}(\bm{p}_{iso})$ no longer attains the maximum violation of the CHSH inequality,  $\beta(\tilde{\bm{p}}_{iso})\approx 2.08854$. Instead, the red dot on the blue curve corresponds to the quantum behavior $\bm{p}_{tilted}$ which maximally violates the doubly-tilted CHSH inequality (dashed black line)\cite{Gigena2024,tutorial}.
}
\label{fig:detectionloophole}
\end{figure}
\begin{definition}
	The open detection loophole refers to the implication $\beta(\bm{p})>2 \Longrightarrow \bm{p}\in \mathcal{Q}\setminus\mathcal{L}$ mistakenly (assuming $\bm p = \hat{\bm{p}}$) ignoring the ``no click'' events.
\end{definition}
\noindent In many cases, local models can be constructed that are compatible with the experimental data \cite{Pearle1970}. It has been shown that manipulation of measurement devices cannot only lead to fake violations of BI \cite{Gerhardt2011} but also to violations of Tsirelson's bound $2 \sqrt{2}$ \cite{tasca09b}.
In DI-QKD, low values of $\eta_A$ and $\eta_B$ allow Eve to intercept and hide herself more effectively because many ``no-click'' events would already occur naturally due to losses from attenuation and imperfect detectors. On the contrary, high detection efficiency $\eta\lesssim 1$ helps to distinguish Eve's attacks from natural losses by maintaining a high value of $\beta_\QC-\beta_\mathcal{L}$, which translates to a reliable measure of nonlocality. 
\begin{definition}
	The detection loophole is closed on the test $\beta_Q=\beta(\hat{\bm{p}})>\beta_{\mathcal{L}}$ as genuinely implies (if all other loopholes are closed) $\hat{\bm{p}}\in \mathcal{Q}\setminus\mathcal{L}$. 
\end{definition}

 The problem of low detection efficiencies in Bell tests had been noted earlier~\cite{Pearle1970}, but it was Eberhard in 1993~\cite{Eberhard1993} who first derived a Bell-type inequality specifically optimized for overcoming this loophole. His approach used only the probabilities of conclusive outcomes and allowed for non-maximally entangled states to minimize the critical detection efficiency. Originally it was written as
\begin{equation}
\label{eq:Eberhard}
S_E = p(11|00) + p(11|01) + p(11|10)
      - p(11|11) - p(1|10) - p(1|01) \;\le 0,
\end{equation}

It is a limiting case of Eq.~\eqref{eq:cq_alphabeta}, obtained by setting \(\alpha_A = \alpha_B = 1\) and expressing the expectation values through raw detection probabilities.

\begin{proposition}
	A necessary condition to close the detection loophole in Bell experiments implies $\eta>\eta^*$, where $\eta^*$ is the CDE.
	(see a representation in Fig. \ref{fig:corr_hierarchy1}).
\end{proposition}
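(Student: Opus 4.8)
\emph{Proof idea.} The plan is to argue by contraposition: if $\eta\le\eta^*$ then the detection loophole cannot be closed. The first step is to translate ``closing the loophole'' into a statement about the effective quantum set. By the definition preceding the proposition, a Bell test closes the detection loophole only if the no-click-corrected data $\hat{\bm p}=\Omega_{\eta_A\eta_B}(\bm p)$ genuinely satisfy $\beta(\hat{\bm p})>\beta_\LC$, with $\beta_\LC=2+\alpha_A+\alpha_B$ the local bound of the enlarged scenario appearing in Eq.~\eqref{eq:cq_alphabeta}. Since any physically realizable $\bm p$ lies in $\QC$, such an outcome is possible \emph{at all} only if $\Omega_{\eta_A\eta_B}(\QC)\not\subseteq\LC$, equivalently only if the doubly-tilted quantum bound $\beta_\QC:=\sup_{\bm p\in\QC}\big(\beta(\bm p)+\alpha_A\langle A_0\rangle+\alpha_B\langle B_0\rangle\big)$ strictly exceeds $\beta_\LC$. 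So it suffices to show that this strict inequality fails for every $\eta\le\eta^*$.

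The second step is to show that $\{(\eta_A,\eta_B):\beta_\QC>\beta_\LC\}$ is an ``up-set'', so that a sharp threshold exists. Being a supremum of functions affine in $(\alpha_A,\alpha_B)$, $\beta_\QC$ is convex, and at its optimum $\partial_{\alpha_A}\beta_\QC=\langle A_0\rangle^{*}\le 1=\partial_{\alpha_A}\beta_\LC$ (and likewise in $\alpha_B$); hence the gap $\beta_\QC-\beta_\LC$ is non-increasing in each of $\alpha_A,\alpha_B$. Since $\alpha_A=2(1-\eta_B)/\eta_B$ and $\alpha_B=2(1-\eta_A)/\eta_A$ are strictly decreasing in $\eta_B,\eta_A$, the gap is non-increasing as either detector efficiency drops: once non-positive, it stays non-positive. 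Consequently the certifiable regime is an interval $(\eta^*,1]$ for a uniquely determined $\eta^*$, which is by construction the critical detection efficiency.

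The third and hardest step is to pin down that threshold by proving $\beta_\QC\le\beta_\LC$ whenever $\alpha_A+\alpha_B\ge 2$ (equivalently $\eta_A^{-1}+\eta_B^{-1}\ge 3$, i.e.\ $\eta\le 2/3$ in the symmetric case) --- exactly the claim stated just after Eq.~\eqref{eq:cq_alphabeta}. Two routes are available. \emph{(a)} Via Lemma~\ref{Jordanlemma}: block-diagonalize $M_{A|x},N_{B|y}$ into qubit blocks, bound the doubly-tilted operator $\hat\beta+\alpha_A M_{A|0}+\alpha_B N_{B|0}$ on each block, and reduce $\beta_\QC$ to a finite trigonometric optimization over block angles and state weights, whose optimum equals $2+\alpha_A+\alpha_B$ precisely in this regime. \emph{(b)} Via a dual sum-of-squares certificate $\beta_\LC\bm 1-\big(\hat\beta+\alpha_A M_{A|0}+\alpha_B N_{B|0}\big)=\sum_i P_i^\dagger P_i\succeq 0$ valid on every quantum realization exactly when $\alpha_A+\alpha_B\ge 2$, paralleling the SOS decompositions of Sec.~\ref{sec:self-testing}. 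The delicate points are (i) the no-click assignment probabilities $q_A(a|x),q_B(b|y)$ are free, so $\beta_\LC$ must be the maximum over \emph{all} deterministic assignments --- which, as recalled below Eq.~\eqref{eq:Sphat}, is still $2+\alpha_A+\alpha_B$ --- and (ii) \emph{tightness}: for $\eta$ just above $2/3$ one must also exhibit a genuine quantum strategy beating $\beta_\LC$ (Eberhard's nonmaximally entangled states with the doubly-tilted CHSH inequality of Fig.~\ref{fig:detectionloophole}~\cite{Eberhard1995,Gigena2024}), so that $\eta^*$ really equals the CDE and the necessary condition is not vacuous.

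Assembling the steps: for $\eta\le\eta^*$ one has $\Omega_\eta(\QC)\subseteq\LC$, so $\beta(\hat{\bm p})\le\beta_\LC$ for every attainable $\hat{\bm p}$ and no test of the form $\beta(\hat{\bm p})>\beta_\LC$ can ever succeed --- the detection loophole remains open. The contrapositive is the proposition. Geometrically this is the picture of Fig.~\ref{fig:corr_hierarchy1}: as $\eta$ decreases the local vertices of the affinely deformed polytope $\Omega_\eta(\cdot)$ climb toward the PR box $\bm p_{\mathrm{NL}}$, until at $\eta=\eta^*$ the effective quantum body is entirely swallowed by $\LC$. I expect step three --- establishing the quantum upper bound uniformly over all Hilbert-space dimensions and over all local assignment strategies --- to be the main obstacle.
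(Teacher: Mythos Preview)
Your argument is correct, but it is far more elaborate than what the proposition calls for, and you have misidentified where the difficulty lies. In the paper the proposition is essentially a definitional consequence: the sentence immediately following it characterizes the CDE $\eta^*$ as the value ``below which $\QC\setminus\LC=\emptyset$'' for the effective (no-click-corrected) correlations. Combined with the definition of ``closing the detection loophole'' just before the proposition --- namely that the test $\beta(\hat{\bm p})>\beta_\LC$ genuinely certifies $\hat{\bm p}\in\QC\setminus\LC$ --- the contrapositive is one line: if $\eta\le\eta^*$ then $\Omega_\eta(\QC)\subseteq\LC$, so no such test can succeed. Your Step~1 already captures this, and Step~2 (the monotonicity of the gap $\beta_\QC-\beta_\LC$ in $(\alpha_A,\alpha_B)$ via the envelope argument) is a nice rigorous supplement that makes the threshold well-defined rather than merely postulated.

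Step~3, however, is not part of the proposition at all. You are attempting to \emph{compute} $\eta^*$ --- specifically to prove the Eberhard bound $\eta^*=2/3$ via Jordan's lemma or an SOS certificate --- but the proposition asserts only the necessity of $\eta>\eta^*$, treating $\eta^*$ as a given characteristic of the chosen nonlocal correlation. Indeed, immediately after the proposition the paper lists several distinct CDE values ($2\sqrt{2}-2\simeq0.82$ for the maximally entangled state with standard CHSH, $2/3$ for Eberhard's non-maximally entangled states, lower still for $I_{3322}$ or higher-dimensional inequalities), making clear that $\eta^*$ is scenario-dependent and not a single number to be derived. What you flag as ``the main obstacle'' is therefore orthogonal to the statement you are asked to prove; it belongs to the subsequent discussion of specific thresholds, not to the proposition itself.
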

\noindent Indeed, $\eta^*$ is a characteristic of an ideal nonlocal correlation, below which $\QC\setminus\LC=\emptyset$, and limits the distance across which nonlocality can be operationally (quantum) certified. 
In the simplest Bell scenario of Fig. \ref{fig:diqkd}, the quantum strategy maximally violating Eq. \eqref{eq:CHSH1} (in ideal conditions) ceases to yield $\bm{p}\in\QC\setminus\LC$ for CDE below $\eta^*=2\sqrt{2}-2\simeq 0.82$ \cite{Garg1987}.
This comes from Eq. \eqref{eq:Sphat} for $\eta_A=\eta_B$, independent on the measurement nor on each other. Then $\langle M_{A|0} \rangle =0 =\langle N_{B|0}\rangle$ because the results will be uncorrelated (the detected particle is in a maximally mixed state). 
It follows a list of less recent achievements:
\textit{(i)}
The CDE is lowered to $2/3\simeq 0.66$ \cite{Eberhard1993}, which comes at the cost of very low robustness to background noise as the state is almost separable (see fig. \ref{fig:cQetas}). 
\textit{(ii)}
For more general scenarios, involving more measurements, the extra-outcome approach presents lower CDEs \cite{wilms08}.

\textit{(iii)}
Overall, if $\rho\in\mathcal{B}(\bigotimes_{i=1}^n\mathbb{C}^d)$ is used, then higher $d$ and/or $n$ implies lower (exponentially) CDE \cite{m02}, but at the costs of more experimental complexity. For example, an improvement for CHSH is only for $d\gtrsim 1600$. In asymmetric (symmetric) Bell tests $\eta^*\sim\frac{1}{d}$ ($\eta^*\sim61.8\%$) \cite{VPB10}.
\textit{(iv)}
For the BI \( I_{3322} \), with one ideal detection efficiency (\( \eta_A = 1 \)), a CDE is \( \eta_B = 43\% \) (or \( \eta_B = 66.7\% \)) for non-maximally (or maximally) entangled states~\cite{CG04};  
\textit{(v)} an LHV model cannot describe $\bm{p}\in\QC$ when the number of measurement settings at each site $m_A$ and $m_B$ satisfy~\cite{massar2003violation,massar2002bell}  
\begin{equation}
\eta\geq\frac{m_A+m_B-2}{m_Am_B-1}.
\end{equation}
Below more recent achievements: 
\textit{(i)}
 a family of $n$-party BI with binary outcomes and $m>2$ measurement settings per party can obtain BI violation with lower CDE \cite{Pal2015}; 
\textit{(ii)}
 BI using multiple copies of the two-qubit maximally entangled state and Pauli measurements, defining a Bell setup with $m = 2^n$ settings and $2^n$ outcomes reduces the CDE below $0.8214$ for $n \geq 2$ \cite{Marton2023a}; 
\textit{(iii)} 
An exponential reduction of CDE was demonstrated in \cite{miklin2022exponentially} by violating $N$ BI in parallel using $N$ entangled states shared by a single particle pair. 
\textit{(iv)} 
 the BI $I_{4422}^4$ is experimentally violated using four-dimensional entangled photons closing the detection loophole with  $\eta\sim71.7\%$ \cite{hu2022high}\footnote{In an atom-photon system for example, the atomic system can have $\eta_A$, $\eta_B$ near the unity (see Sec. \ref{sec:experiments}).
 }. 

\textit{(v)} 
to experimentally increase detection efficiency (qubits constructed in trapped ions, atoms, or nitrogen-vacancy (NV) centers in diamond) is also used an ``event-ready" setup, in which the presence of particles at the measurement stations is heralded by an additional event-ready protocol ~\cite{bell80,zukowski93,simon03} (more detail in Sec. \ref{sec:eventready}). In point-to-point photonic experiments, both link losses and detector losses are more difficult to overcome.  Superconducting single-photon detectors (SNSPDs), achieving efficiencies of over 90\%, have been instrumental in recent loophole-free experiments \cite{Marsili2013,US15,Vienna15and32_1}.

\paragraph{Fair sampling loophole} -- The losses that naturally appear (e.g., in optical fibers) and affect the particles independently of the measurement settings, are solely responsible for $|\mathcal{D}|\ll|\mathcal
{E}|$. The \textit{fair sampling assumption} (FSA) is often invoked to justify ignoring the detection loophole.
\begin{definition}
	A Bell tests in which $|\mathcal{D}|\ll|\mathcal{E}$ invoked FSA when it imposes that $p(\mathcal{D})\simeq p(\mathcal{E})$.
\end{definition} 
\noindent Eve can exploit the fair sampling loophole by applying a biased coarse-graining \(\mu_{DL}\) to the distribution \(p(\mathcal{E})\), resulting in \(p(\mathcal{D}) = \mu_{DL}[p(\mathcal{E})] \neq p(\mathcal{E})\). This manipulation skews the observed data \(\bm{p}\), making it falsely appear that \(\bm{p} \in \mathcal{Q} \setminus \mathcal{L}\), as if BI were violated. She can achieve this by influencing detection efficiency, introducing selective transmission losses, or tampering with data processing. For instance, Eve may ensure that only particles with hidden variables producing strong correlations are detected, while others are discarded.
\begin{proposition}
   Although high-efficiency detectors, with \(|\mathcal{D}|/|\mathcal{E}| \simeq 1\), limit Eve’s manipulation, they do not guarantee \(p(\mathcal{D}) \simeq p(\mathcal{E})\). 
\end{proposition}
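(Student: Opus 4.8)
The plan is to prove the proposition constructively, by exhibiting, for every detection efficiency $\eta<1$ however close to unity, a legitimate local hidden-variable detection model whose detected fraction equals $\eta$ but whose conditional output statistics $p(\mathcal D)$ are \emph{not} those of the emitted ensemble $p(\mathcal E)$. The point is to separate the \emph{scalar} quantity $\eta=|\mathcal D|/|\mathcal E|$ from the \emph{structural} fair-sampling requirement that detection be statistically independent of the shared variable $\lambda$ (and of the settings $x,y$): the former bounds, but never certifies, the latter.

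First I would let the source emit a variable $\lambda$ with prior $p_\Lambda$ and let Eve pick a detection profile $\eta(\lambda)\in[0,1]$, so that the post-selected ensemble carries the posterior $p_\Lambda(\lambda\mid\mathcal D)=\eta(\lambda)\,p_\Lambda(\lambda)/\bar\eta$ with $\bar\eta=\sum_\lambda \eta(\lambda)p_\Lambda(\lambda)=|\mathcal D|/|\mathcal E|$. Taking, for instance, $\lambda\in\{0,1\}$ uniform with $\eta(0)=1$ and $\eta(1)=1-2\delta$ gives $\bar\eta=1-\delta\to 1$ as $\delta\to 0$, yet $p_\Lambda(0\mid\mathcal D)=\tfrac{1}{2(1-\delta)}\neq\tfrac12$, so $p(\mathcal D)\neq p(\mathcal E)$ for every $\delta>0$. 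Since \emph{any} non-constant $\eta(\cdot)$ with the prescribed average has the same effect, the failure of $p(\mathcal D)\simeq p(\mathcal E)$ is generic rather than exceptional. I would then make this operationally sharp by allowing the profile to depend on the local setting, $\eta(\lambda,x)$ and $\eta(\lambda,y)$, with site-wise averages still $\simeq 1$; feeding the post-selected local responses into the affine map $\Omega_{\eta_A\eta_B}$ of Eq.~\eqref{eq:phat} and the optimal local assignments underlying Eq.~\eqref{eq:cq_alphabeta}, the observed $\hat{\bm p}$ inherits effective tilts $\alpha_A,\alpha_B>0$, so $\beta(\hat{\bm p})$ can exceed $2$ while never exceeding $\beta_{\mathcal L}=2+\alpha_A+\alpha_B$ — exactly the ``open detection loophole'' pitfall, now reached with near-unit efficiency.

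The one delicate point — and the main obstacle — is to make precise the sense in which ``$\simeq$'' fails: one should record that $\|p(\mathcal D)-p(\mathcal E)\|_{\mathrm{TV}}=O(1-\bar\eta)$, so the bias is controlled by the loss and vanishes only in the idealized limit $\bar\eta=1$. The substantive claim is that for any fixed $\bar\eta<1$ this residual $O(1-\bar\eta)$ discrepancy is precisely what an adversary in the device-independent setting can weaponize and is \emph{not} absorbed by the naive fair-sampling error budget, which silently assumes equality. Making that last step rigorous means tying the residual bias to the CHSH/entropy accounting of Sec.~\ref{sec:chap4} and checking that the setting-dependent detection model remains a bona fide no-signalling LHV strategy — both routine given the affine analysis already at hand — after which the proposition follows: high $|\mathcal D|/|\mathcal E|$ limits Eve's leeway but cannot guarantee $p(\mathcal D)\simeq p(\mathcal E)$.
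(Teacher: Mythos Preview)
Your core construction --- the two-valued hidden variable with $\eta(0)=1$, $\eta(1)=1-2\delta$, giving $\bar\eta=1-\delta$ yet $p_\Lambda(\cdot\mid\mathcal D)\neq p_\Lambda(\cdot)$ for every $\delta>0$ --- is correct and already establishes the proposition. It is in fact more explicit than what the paper provides: the paper gives no formal proof at all, only a short qualitative paragraph remarking that hidden mechanisms (emission angle, polarization) correlated with settings can bias the detected subsample even when $\eta\sim 1$. Your Bayes-posterior argument makes that remark precise, and you could stop after the first paragraph.

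The second half of your proposal overreaches and contains a technical misstep. The affine map $\Omega_{\eta_A\eta_B}$ of Eq.~\eqref{eq:phat} and the tilted bound of Eq.~\eqref{eq:cq_alphabeta} describe a \emph{different} loss model: constant, $\lambda$- and setting-independent inefficiency with no-click events \emph{binned} into a fixed outcome. The fair-sampling loophole is the opposite regime --- post-selection on coincidences with $\lambda$- (and possibly setting-) \emph{dependent} detection --- and does not factor through $\Omega_{\eta_A\eta_B}$; invoking the tilts $\alpha_A,\alpha_B$ here conflates the two treatments. The link to the entropy accounting of Sec.~\ref{sec:chap4} is likewise unnecessary for a statement about sampling. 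Finally, there is a tension you half-acknowledge but should state plainly: your own bound $\|p(\mathcal D)-p(\mathcal E)\|_{\mathrm{TV}}=O(1-\bar\eta)$ shows that quantitatively high efficiency \emph{does} force $p(\mathcal D)\to p(\mathcal E)$. The proposition is therefore a conceptual claim --- near-unit $|\mathcal D|/|\mathcal E|$ is not by itself a \emph{certificate} of fair sampling, and the residual $O(1-\bar\eta)$ bias can be adversarially structured --- not a claim that the bias can be made large at high efficiency. The paper's informal phrasing, and your first paragraph, both match that reading.
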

\noindent Even in experiments with $\eta\sim 1$, hidden mechanisms can bias which particles are detected based on certain hidden variables (e.g., emission angle or polarization). These variables could correlate with measurement settings in a way that skews the detected sample to favor results violating BI. Thus, while nearly all particles are detected (addressing the detection loophole), the sample may still not represent the full emitted set (leaving the fair sampling loophole open).  To avoid such bias, careful calibration is crucial like using space-like separation and random detector calibration \cite{weihs1998violation}.
However, $\eta\sim 1$~\cite{rowe2000} makes it easier to verify that the detected pairs are a fair representation of the emitted set, helping to close the fair sampling loophole, e.g. in Ref.~\cite{rowe2000}\footnote{Despite the detection loophole in Ref.~\cite{rowe2000} is closed, but the separation distance was not sufficient to close the locality loophole.}.

\subsubsection{Locality and Measurement-dependence Loophole}
\label{sec:locloop}
\begin{definition}[space-like separation] 
    \label{def:space-like}
	For two events in Alice and Bob's lab respectively with coordinates $(t_A,x_A)$ and $(t_B,x_B)$ in Minkowski spacetime are \textit{causally space-like separated} iff the invariant spacetime interval $\Delta s^2=c^2(t_A-t_B)^2-|x_A-x_B|^2<0$, or equivalently Eq. \eqref{eq:locality} holds, which implies that the spatial distance between the events is greater than the distance light could travel in the time interval separating them.
\end{definition}
Because no causal influence (which is limited by the speed of light) can bridge a space-like interval, there is no possible way for one event to affect the other. Fig.\ref{fig:nonlocaltestbell} shows two black boxes $\mathcal{A}$ and $\mathcal{B}$ representing the Alice and Bob's laboratories \textit{causally space-like separated} in causal cones to prevent any influence the other detector's measurement from the other lab.
\begin{definition}[Locality loophole] 
	The locality loophole is open if Eq. \eqref{eq:locality} is not certified.
\end{definition}
\noindent To close the locality loophole the Bell experiment must be realized such that the entire measurement process, consisting of the random choice of basis, the adjusting the analyzer, and the detection of the particle satisfied the space-like separation condition~\cite{aspect1982experimental}. Locality loophole was certified in the late ``90s using \textit{(i)} entangled photons from SPDC sources, \textit{(ii)} increasing the space-like separation between the analyzers to tens of km~\cite{tittel98,marcikic04}, \textit{(iii)} employing fast, unpredictable and random switching of measurement settings to further eliminate the possibility of communication between the detectors \textit{(iv)} using fast electronics and quantum random number generators (QRNG) to choose the settings of the analyzers~\cite{weihs1998violation}.   

The first Bell test to close both the detection and locality loopholes was reported in 2015~\cite{Hensen2015a}. It used electron spins that were entangled using an event-ready protocol~\cite{bell80,zukowski93,simon03}. The experiment demonstrated the first statistically significant BI violation without relying on additional assumptions such as fair sampling.
\begin{definition}[Measurement-dependence loophole]
The measurement-dependence loophole, also known as the \textit{freedom-of-choice} or \textit{the free-will loophole}, questions whether the choices of measurement settings could be influenced by hidden variables, i.e. $P_{X}= P_{X|\Lambda}$?, $P_{Y}= P_{Y|\Lambda}$?
\end{definition} 
\noindent This arises from the observation that the local and realistic causal structure in Fig~\eqref{fig:L} implicitly assumes $P_{X}= P_{X|\Lambda}$ and $P_{Y}= P_{Y|\Lambda}$ that there is no common cause between the local settings $X$ and $Y$ and the source $\Lambda$.
A small amount of correlation is required to produce a false BI violation, therefore, a Bell test must use QRNGs to randomly determine the measurement settings in real-time, ensuring that no prior knowledge could influence the results, hence closing the measurement dependence loophole~\cite{Gallicchio2014}. In 2017, a groundbreaking experiment, known as the ``Cosmic Bell Test'' the light from distant stars was used to choose measurement settings, arguing that the light had traveled for hundreds of years and thus could not be influenced by hidden variables~\cite{handsteiner2017}.

\paragraph{Other loopholes}
\textit{Coincidence-Time Loophole} -- Coincidence windows can create spurious correlations if the time window for considering detection events as part of the same pair is too wide. Then,  nanosecond-level timing precision are used for tight synchronization and narrow coincidence time windows~\cite{tittel1998}.
Future quantum networks employing repeater stations and tight coincidence timing windows will further ensure the proper pairing of entangled photons~\cite{Sangouard2011}.

\textit{Memory Loophole} -- The memory loophole arises if detectors have some form of memory from previous trials, which could influence future results. Experiments must randomized trials and reset the system after each trial to avoid memory effects~\cite{rowe2000} (see Sec.\ref{sec:memoryloopholeDIQKD}).
 
Finally, \textit{super-determinism} is a theoretical loophole that challenges the assumption of \textit{free will} in choosing measurement settings \cite{Hance2022}. Although superdeterminism is not directly testable in the traditional sense, the scientific consensus generally assumes that randomness and independence in quantum processes are valid. However, for DI-QKD its significance is largely philosophical—if all variables were predetermined, any “key” would already be known—. In contrast, for DI quantum random number generation, whose aim is to certify fresh randomness, the loophole poses a more substantive concern. A concise review of this philosophical loophole can be found in Ref.,\cite{brans1988}.

\subsubsection{Experimental Breakthroughs}

\noindent The timeline in Fig.\ref{fig:timeline} refers to the first definitive closure of the detection and locality loopholes, simultaneously referred to as ``loophole-free'' Bell tests.  In 2015, the first experiment was reported, using entangled photons and electrons in NV centers located at a distance of about 1.3 kilometers \cite{Hensen2015a}. All photonic experiments were also reported that same year:  Ref. \cite{US15} used high-efficiency photon detectors and random measurement settings, and similarly, the experiment in Ref. \cite{Vienna15and32_1} used highly efficient detectors and a large spatial separation between detectors.
Compared to previous Bell tests using entangled photons, the critical component here was high-efficiency superconducting photo-detectors, which permitted the realization of experiments above the CDE. 

A loophole-free Bell test using an event-ready setup with entangled neutral atoms in ~\cite{Munich17}, where atom-photon entanglement and entanglement swapping to prepare entangled spin states of two atoms separated by 398 m;
In Ref. \cite{li18}, all three major loopholes were addressed using randomness from photons emitted by cosmic sources to determine the measurement settings. This approach effectively closes the locality loophole by ensuring that the measurement settings are not influenced by any local hidden variable by using events that occurred 11 years prior (see sec. \ref{sec:experiments} for experimental details).

\subsection{Other notions of nonclassicality that can power Cryptography}
\paragraph{Quantum Steering}
Similarly to the LHV framework introduced, we can define a Local Hidden State (LHS) model. The operational diagrams are \cite{pcavalcanti2024}
\begin{equation}
\label{operationaldiagram}
p(ab|xy) = \vcenter{\hbox{
	\begin{tikzpicture}
		\node  (0) at (0, -1.2) {};
		\node  (1) at (-0.75, -0.45) {};
		\node  (2) at (0.75, -0.45) {};
		\node  (3) at (0, -0.7) {};
		\node  (4) at (0, -0.75) {$\Lambda$};
		\node  (5) at (-1.25, 0.3) {};
		\node  (6) at (-1.25, -0.2) {};
		\node  (7) at (-0.25, -0.2) {};
		\node  (8) at (-0.25, 0.3) {};
		\node  (9) at (-1, -0.45) {};
		\node  (10) at (-0.75, 0.3) {};
		\node  (11) at (-0.5, -0.2) {};
		\node  (12) at (-1, -0.2) {};
		\node  (13) at (-0.75, 0.05) {$M_{A|x}$};
		\node  (14) at (-0.25, -0.45) {};
		\node  (15) at (-0.75, 0.5) {};
		\node  (16) at (-1.25, 0.5) {};
		\node  (17) at (-0.25, 0.5) {};
		\node  (18) at (-0.75, 1) {};
		\node  (19) at (-0.75, 0.75) {};
		\node  (20) at (-0.75, 0.75) {$a$};
		\node  (21) at (-1, -0.75) {$x$};
		\node  (22) at (1.25, 0.3) {};
		\node  (23) at (1.25, -0.2) {};
		\node  (24) at (0.25, -0.2) {};
		\node  (25) at (0.25, 0.3) {};
		\node  (26) at (1, -0.45) {};
		\node  (27) at (0.75, 0.3) {};
		\node  (28) at (0.5, -0.2) {};
		\node  (29) at (1, -0.2) {};
		\node  (30) at (0.75, 0.05) {$N_{B|y}$};
		\node  (31) at (0.25, -0.45) {};
		\node  (32) at (0.75, 0.5) {};
		\node  (33) at (1.25, 0.5) {};
		\node  (34) at (0.25, 0.5) {};
		\node  (35) at (0.75, 1) {};
		\node  (36) at (0.75, 0.75) {};
		\node  (37) at (0.75, 0.75) {$b$};
		\node  (38) at (1, -0.75) {$y$};
		\draw (0.center) to (1.center);
		\draw (0.center) to (2.center);
		\draw (2.center) to (1.center);
		\draw (5.center) to (6.center);
		\draw (5.center) to (8.center);
		\draw (8.center) to (7.center);
		\draw (7.center) to (6.center);
		\draw [bend right=45, looseness=0.50] (11.center) to (14.center);
		\draw (15.center) to (10.center);
		\draw (18.center) to (17.center);
		\draw (17.center) to (16.center);
		\draw (16.center) to (18.center);
		\draw (12.center) to (9.center);
		\draw (22.center) to (23.center);
		\draw (22.center) to (25.center);
		\draw (25.center) to (24.center);
		\draw (24.center) to (23.center);
		\draw [bend left=45, looseness=0.50] (28.center) to (31.center);
		\draw (32.center) to (27.center);
		\draw (35.center) to (34.center);
		\draw (34.center) to (33.center);
		\draw (33.center) to (35.center);
		\draw (29.center) to (26.center);
	\end{tikzpicture}
	}},\qquad
	\sigma_{a|x} = \vcenter{\hbox{
	\begin{tikzpicture}
		\node  (0) at (0, -1.2) {};
		\node  (1) at (-0.75, -0.45) {};
		\node  (2) at (0.75, -0.45) {};
		\node  (3) at (0, -0.7) {};
		\node  (4) at (0, -0.75) {$\Lambda$};
		\node  (5) at (-1.25, 0.3) {};
		\node  (6) at (-1.25, -0.2) {};
		\node  (7) at (-0.25, -0.2) {};
		\node  (8) at (-0.25, 0.3) {};
		\node  (9) at (-1, -0.45) {};
		\node  (10) at (-0.75, 0.3) {};
		\node  (11) at (-0.5, -0.2) {};
		\node  (12) at (-1, -0.2) {};
		\node  (13) at (-0.75, 0.05) {$M_{A|x}$};
		\node  (14) at (-0.25, -0.45) {};
		\node  (15) at (-0.75, 0.5) {};
		\node  (16) at (-1.25, 0.5) {};
		\node  (17) at (-0.25, 0.5) {};
		\node  (18) at (-0.75, 1) {};
		\node  (19) at (-0.75, 0.75) {};
		\node  (20) at (-0.75, 0.75) {$a$};
		\node  (22) at (-1, -0.75) {$x$};
		\node  (23) at (1, 0.3) {};
		\node  (24) at (1, -0.2) {};
		\node  (25) at (0.25, -0.2) {};
		\node  (26) at (0.25, 0.3) {};
		\node  (28) at (0.75, 0.3) {};
		\node  (29) at (0.5, -0.2) {};
		\node  (31) at (0.55, 0.05) {$\rho_{\Lambda}$};
		\node  (32) at (0.25, -0.45) {};
		\node  (33) at (0.75, 0.75) {};
		\draw (0.center) to (1.center);
		\draw (0.center) to (2.center);
		\draw (2.center) to (1.center);
		\draw (5.center) to (6.center);
		\draw (5.center) to (8.center);
		\draw (8.center) to (7.center);
		\draw (7.center) to (6.center);
		\draw [bend right=45, looseness=0.50] (11.center) to (14.center);
		\draw (15.center) to (10.center);
		\draw (18.center) to (17.center);
		\draw (17.center) to (16.center);
		\draw (16.center) to (18.center);
		\draw (23.center) to (24.center);
		\draw (23.center) to (26.center);
		\draw (26.center) to (25.center);
		\draw (25.center) to (24.center);
		\draw [bend left=45, looseness=0.50] (29.center) to (32.center);
		\draw (33.center) to (28.center);
		\draw (12.center) to (9.center);
	\end{tikzpicture}
}}
\end{equation}
The left-hand side of Eq.~\eqref{operationaldiagram} represents an LHV model for the Bell test. We note that such graphical equations follow the framework introduced in~\cite{Chiribella2011}.

\begin{definition}[Local Hidden State (LHS) model]
	Let us consider Alice's measurements with the POVM $M_{A|x}=\{M_{a|x}\}_a$ on $\rho_{AB}\in\mathcal{B}(\mathbb{C}^{\mathrm{d}_A}\otimes\mathbb{C}^{\mathrm{d}_B})$, such that the update conditional state on Bob's side is given by
	\begin{equation}
		\rho_{a|x}=\frac{\sigma_{a|x}}{p_{A|X}(a|x)},\qquad
		\sigma_{a|x}=\mathrm{Tr}_A[(M_{a|x}\otimes \bm{1}_B)\rho_{AB}], \qquad
		p_{A|X}(a|x)=\mathrm{Tr}\sigma_{a|x}>0.
	\end{equation} 
The collection $\{\sigma_{a|x}\}_{a,x}$, a.k.a. assemblages, is said to admit a LHS model if there exists: (i) a classical random variable \(\lambda\) with probability distribution \(p(\lambda)\),
	(ii) a set of conditional probability distributions \(p(a|x,\lambda)\), (iii) a collection of normalized quantum states \(\{\sigma_\lambda\}_\lambda\in\mathcal{B}(\mathcal{H}_B)\),
such that the following decomposition holds (discrete case):
\begin{equation}\label{eq:steering}
\sigma_{a|x} = \sum_\lambda p_\Lambda(\lambda)\, p_{A|X\Lambda}(a|x,\lambda) \, \sigma_\lambda, \quad
\land \quad
\rho_B = \operatorname{Tr}_A[\rho_{AB}] = \sum_a \sigma_{a|x} \quad \forall x,a.
\end{equation}
\end{definition}
\noindent Bob performs full tomography of the quantum state $\rho_{a|x}$ that is effectively prepared in his lab after Alice's action. Then the LHS correlations are:
\begin{equation}
	p(a,b|x,y) = p_{B|Y}(b|y,\sigma_{a|x})=p_{A|X}(a|x) p_{B|Y}(b|y,\rho_{a|x})
	= \sum_\lambda p_{A|X\Lambda}(a|,\lambda) p_{B|Y}(b|y\rho_{a|x}\lambda)  p_{\Lambda}(\lambda),
\end{equation}
where Bob’s conditional probability is $p_{B|Y,\rho}(b|y,\rho)=\text{tr}[N_{b|y} \rho]$, for $\rho\in \{\sigma_{a|x}\}_{a,x}$. 
\begin{definition}[Quantum Steering]
A bipartite quantum state \(\rho_{AB}\) is said to be unsteerable (from Alice to Bob) if Eq. \eqref{eq:steering} holds, otherwise is said to be steerable (from Alice to Bob), $\rho_{AB}\in\mathcal{S}_{A\to B}$.
\end{definition}
In other words, quantum steering is exhibited when the correlations between Alice’s outcomes and Bob’s conditional states cannot be explained by a classical mixture of preexisting states on Bob’s side (in the diagram rather than $\Lambda$ is required $\rho_{AB}$).
Notice that steering is directional ($\rho_{AB}\in \mathcal{S}_{A\to B} \land \rho_{AB}\notin \mathcal{S}_{B\to A}$).  Whether in DI-QKD the nonlocal correlation $\bm{p}\in\QC\setminus\LC \Longrightarrow$ Alice and Bob are untrusted (their measurement devices are ``black boxes'' -- unknown to the experimenter), a \textit{certification} of a steering state (Steering Inequality SI violation) allows \textit{one-sided DI-QKD (1SDI-QKD)}: only Alice can be trusted (her measurement devices are well-characterized), while Bob’s devices remain untrusted \cite{Cavalcanti2022} (see section \ref{sec:1SDIQKD}). Specifically, $\bm{p}(\rho_{AB})\in\QC\setminus\LC\Longrightarrow \rho_{AB}\in\mathcal{S}_{A\to B}\Longrightarrow \rho_{AB}$ entangled, but the \textit{only if} does not hold.
Follows a series of interesting facts: 
\textit{(i)}
 SI violation requires a lower CDE than its analogous BI violation. Indeed for loophole-free steering with qubits and $N$ measurement settings $\eta^* \propto 1/N$ \cite{branciard2012one,Cavalcanti15,mattar17}. 
\textit{(ii)}
 SI are easier to test than BI using continuous-variable (CV) systems, due to the fact that  high-efficiency Gaussian measurements (such as homodyne) are not sufficient to violate BIs in Gaussian states \cite{revzen05gauss}.  This is not an issue for demonstration of quantum steering, and indeed the first demonstration was realized in 1992 using homodyne measurements on entangled optical fields \cite{ou1992realization}. Since then, a series of experiments have demonstrated one-way Gaussian steering \cite{handchen12oneway} and non-Gaussian steering \cite{walborn11a}. 
\textit{(iii)}  
Loophole-free SI violations in discrete-variable (DV) systems were demonstrated in 2012 using polarization-entangled photon pairs and superconducting detectors for high detection efficiency~\cite{Smith12,Wittmann2012,Bennet12}. Ref.~\cite{Smith12} closed the detection loophole with superconducting detectors; Ref.~\cite{Bennet12} achieved steering over 1\,km of optical fiber despite lower efficiencies. Ref.~\cite{Wittmann2012} simultaneously closed detection, locality, and measurement-dependence loopholes using measurements 48\,m apart. See also~\cite{Cavalcanti_2017,Uola20_RMP} and Sec.~\ref{sec:experiments}.

\paragraph{Contextuality}
\noindent Rooted in the Kochen-Specker (KS) paradox, contextuality, another nonclassicality notion, reveals the impossibility of \textit{realism} of \ref{def:realism}, i.e. assigning pre-existing values to quantum observables independently of measurement context. A contextuality-based DI-QKD scheme, exemplified by the Peres-Mermin square \cite{Peres1990,Mermin1990}, uses a bipartite system satisfying KS paradox conditions locally while exhibiting perfect distant correlations \cite{Horodecki2010}. This ensures secure key extraction, as any eavesdropping attempt by Eve introduces detectable errors. 
Unlike Bell-based methods, contextuality relies on the trade-off between information gain and disturbance, tied to quantum uncertainty \cite{Pusey2014,Catani2022} and wave-particle duality \cite{Catani2023,Wagner2024}. Variants like \textit{generalized contextuality} \cite{Spekkens2016}, hyperbits \cite{Scala2024}, Kirkwood-Dirac distribution \cite{ArvidssonShukur2024}, witwords \cite{Cavalcanti2022}, and overall Generalized Probabilistic Theories \cite{Janotta2014,Mazurek2021} highlight quantum advantages for DI cryptography.
\section{Fully Device Independent Quantum Key Distribution (DI-QKD)}\label{sec:DI-QKD}
\noindent We discussed that device-dependent cryptography permits \textit{inventa fraus} \cite{lydersen10}; \textit{facta lexia}, DI-QKD eliminates them via BI violation or other nonclassical notions. In this section, we are going to introduce the DI protocols that are powered by \textit{quantum theory} to enhance security.
\subsection{Equivalence principle for simulation-based security}
DI-QKD protocols are analyzed in \textit{simulation-based security}, where the real execution of a protocol is compared to an ideal one: if no adversary can distinguish the two, then the real protocol is considered secure.
\begin{definition}[Indistinguishable protocols]
Let \(\Pi\) A protocol which take inputs and produce outputs. Let $\Pi_1=\mathcal{Z}_1(\Pi) $ and $\Pi_2=\mathcal{Z}_2(\Pi')$  the protocols affected by the presence of an external environments. We define an equivalence relation \(\Pi_1\sim_{\epsilon(n)} \Pi_1\) if for all probabilistic polynomial-time (PPT) environment \(\mathcal{Z}_1,\mathcal{Z}_2\), there exists a negligible function $\epsilon(n)$ (in the security parameter \(n\)) such that 
\begin{equation}
   \operatorname{Adv}(\Pi_1, \Pi_2) \leq \epsilon(n),
   \qquad
   \operatorname{Adv}(\Pi, \Pi') \coloneqq \Bigl|\Pr\big[\Pi_1=1\big] - \Pr\big[\Pi_2=1\big]\Bigr|.
\end{equation}
where $\operatorname{Adv}$ is the distinguishing advantage function.
\end{definition}
\noindent In other words, no efficient environment can tell $\Pi$ apart from $\Pi'$ with more than the negligible advantage $\epsilon(n)$.
\begin{definition}[Simulation Security]
We denote a real protocol executed among parties which may be corrupted by a real adversary $\mathcal{E}_R$ as a function \(\mathcal{E}_R(\Pi^R)\) and an ideal protocol \(\Pi^I\) that receives inputs from the parties and returns outputs that are guaranteed to satisfy the security properties in the presence of a simulated adversary (simulator) $\mathcal{E}_S$ also as a function \(\mathcal{E}_S(\Pi^I)\). The protocol \(\mathcal{E}_R(\Pi^R)\) is said to securely realize the ideal functionality \(\Pi^I\) if for every PPT adversary \(\mathcal{E}_R\) there exists a PPT simulator \(\mathcal{E}_S\) such that $\mathcal{E}_S(\Pi^I)\sim_{\epsilon(n)}\mathcal{E}_R(\Pi^R)$.
\end{definition}
\noindent This definition captures the intuition that any attack on \(\Pi^R\) in the real world can be simulated in the ideal world, so that no environment can distinguish the two executions except with negligible probability.
In practice, protocols are rarely executed in isolation. The following UC framework requires that security be preserved even when the protocol is composed with an arbitrary set of other protocols.
\begin{definition}[Universal Composable (UC) Security \cite{Canetti2000,Canetti2001,BenOr2004,BenOr2005,Renner2005a,renner2008security}]
A protocol \(\Pi^R\) UC-realizes an ideal functionality \(\Pi^I\) if for every PPT adversary \(\mathcal{E}_R\) even in the presence of arbitrary concurrent protocol executions $\{\Pi_i\}_i$) there exists a PPT simulator \(\mathcal{E}_S\) such that for every PPT environment \(\mathcal{Z}\), $\mathcal{Z\circ\mathcal{E}_R}(\Pi^R,\{\Pi\}_i)\sim_{\epsilon(n)}\mathcal{Z}\circ\mathcal{E}_S(\Pi^I,\{\Pi\}_i)$, that is
\begin{equation}\label{sim_real}
  \Bigl|\Pr\big[\mathcal{Z}\circ\mathcal{E}_R\left(\Pi^R,\{\Pi\}_i\right)=1\big] - \Pr\big[\mathcal{Z}\circ \mathcal{E}_S\left(\Pi^I,\{\Pi\}_i\right)=1\big]\Bigr| \leq \epsilon(n).  
\end{equation}
\end{definition}
Here, the environment \(\mathcal{Z}\) is allowed arbitrary interactions with all components (including \(\Pi\) as a subroutine, and any other concurrently running protocols $\{\Pi_i\}_i$), and the security guarantee must hold regardless of the surrounding context and any efficient environment.
In the context of DI-QKD, this indistinguishability principle is operationalized via the following question: \textit{Can Eve distinguish whether she is interacting with the real protocol executed with untrusted, potentially malicious quantum devices, or with an ideal protocol that outputs a uniformly random secret key independent of her side information?} If no efficient test exists that can tell the difference, then the protocol is deemed secure. The security of the protocol is ensured by the fact that Eve’s intervention breaks the expected equivalence, which is detected in the security check step of Box 1 (Sec.~\ref{subsec:QKD}) through the \textit{lack} of BI violation.

\subsection{Bell inequalities bound eavesdropper's knowledge}
\noindent 
The \textit{statistical security check step} notifies Eve's presence. 
Indeed, her action with the protocol is modeled by a non-signaling distribution $p=\prod_{i=1}^np_{A^1B^1E^1|X^1Y^1Z^1}$$(a_i,b_i,e_i|x_i,y_i,z_i)$, or among $n$ rounds with $p=p_{A^nB^nE^nT|X^nY^nZ^n}(\{a_i,b_i,e_i\}_i,t|\{x_i,y_i,z_i\}_i)$ including $T=\{(a_i,b_i,x_i,y_i),\text{hash}\}_{i}$ and $\mathrm{hash}$ is the collection of post-processing functions for error correction and privacy amplification.
The simulator's job (security proof) is then to construct a hypothetical ideal non-signaling distribution $p^I$ that matches Eve’s observations, but where the secret key is perfectly secure. It must hold for any possible initial non-signaling distribution $p$ where only the marginal $\bm{p}=p_{AB|XY}$ is used by Alice and Bob for the \textit{security check} that estimate Eve's knowledge by a function of the Bell value (e.g. \eqref{eq:CHSHSS}) $f[\beta_\QC(\bm p)]$. Given that, a DI-QKD protocol can be realized as a transformation of the form
\begin{equation}\label{p2p}
    p\mapsto \Pi_{K_A^n,K_B^n,f,T,E^n|X^nY^nZ^n}
\end{equation} 
and it is sub-routine that satisfy UC-security~\cite{Masanes2014} \eqref{sim_real} when warrantees that the composed scheme that uses QKD is secure as if an ideal secret key were used instead
\begin{equation}
    \sum_{K_A,K_B,t} \max_z \sum_e|p^{R}_{K_A,K_B,T,E|Z}-p^{I}_{K_A,K_B,T,E|Z}|=\mathcal{O}(1/N), \qquad p^{I}_{K_AK_BTE|Z}=\frac{1}{2^{N_S}}\delta_{k_A,k_B}p^{R}_{TE|Z}.
\end{equation}
Th transformation \eqref{p2p} captures the operational steps of the protocol in Box 1 of Sec.\ref{subsec:QKD}.
Importantly, this modular perspective separates the description of the protocol from the security proof: the protocol defines how the classical and quantum data are processed, while the security proof (the simulator) analyzes the transformation's outcome under the appropriate assumptions. This modularity highlights that the protocol structure can be independently specified without requiring the proof to be embedded in its definition. 
However, each protocol is secure by simulating an adversary constrained to perform specific attacks \cite{Metger2024}.  
\begin{definition}[Simulation attacks]
\label{def:attacks}
Given a global probability distribution $\bm p$ during the data generation step, an attack is simulated by computing  Eve's knowledge about the outcome, let us say, $b$  if Bob measures $Y=y_1$, by the optimal guessing probability
\begin{equation}
p_{\mathrm{g}}(b|E)=\max_z\sum_e \bm p,\qquad
\bm p=
\begin{cases}
        p_{B^1E^1|Y^1Z^1}(b,e|y=y_1,z), &\text{individual}\\
    \prod_{i=1}^n p_{A^1B^1E^1T|X^1Y^1Z^1}(a_i,b_i,e_i,t|x_i,y_i,z_i),& \text{collective}\\
    p_{A^nB^nE^nT|X^nY^nZ^n}(\{a_i,b_i,e_i\}_i,t|\{x_i,y_i,z_i\}_i),& \text{coherent}
\end{cases}
\end{equation}

- \textit{Individual attack} -- before the public discussion (PD), Eve tries to guess at each round performing an identical operation on each qubit on the quantum channel and keeps only classical information $Z$. Shannon entropy measures $p_g$ ;

- \textit{Collective attack} -- after the PD the protocol $\Pi^R$  has already generated the keys $k_A,k_B$ and Eve's can read all the public discussion in $T$. Memorylessness via space-like separation (Def.~~\ref{def:space-like}) of devices \textit{at each run} implies the factorization. Eve keeps quantum information but still performs the same operation on every round. Von Neumann entropy measures $p_g$
 
- \textit{Coherent attack} -- Eve has a (classical or quantum) memory to correlate inter-round communication. To measure $p_g$ see Sec. \ref{sec:chap4}.

- \textit{Non-signaling attack} (a.k.a known as \textit{post quantum attack}) -- Eve can prepare $p^\lambda$, where $\lambda$ is a post-quantum state allowing correlations stronger than quantum, as shown in Fig. \ref{fig:NL-NS}.
\end{definition}
Though memorylessness is impractical, in reality, Alice and Bob reuse single devices with a refresh routine to cancel memory effects. However, confined Eve to quantum collective attacks is already far weaker than (non-DI) QKD assumptions, because all these attacks are bounded by a functional $f$ of the Bell value $
P_{\mathrm{g}}(b|E) < f[\beta_\QC(\bm{p})]$ regardless of how $\bm{p}=p_{AB|XY}$ is produced. 
\begin{proposition}
    Given a specific power to Eve that fix the simulation, a real and an ideal protocol $\Pi^R$ $\Pi^I$ satisfy a given simulation security iff $p_{\mathrm{g}}(b|E) < f[\beta_\QC(\bm{p})]$ with $\bm p\in\NSC\setminus \LC$.
\end{proposition}
Apart from protocol in Ref.~\cite{Barrett2005} that shows a security proof in a post-quantum theory, let us relate $p_g$ with theoretical information quantities within quantum theory.
\begin{definition}\label{def:c-min-entropy}
Let $\rho_{BE}=\sum_b p_b \ket{b}\bra{b} \otimes \rho_{b|E} \in \mathcal{B}(\mathcal{H}_B \otimes \mathcal{H}_E)$ a cq-state the quantum,
\begin{itemize}
    \item conditional min-entropy of $B$ given $E$ is (last equality for binary outcomes):
\begin{equation}\label{H=-logP}
    H_{\min}(B|E)_\rho:=\sup\{\lambda>0:2^{-\lambda}\bm 1_B\otimes \sigma_E-\rho_{BE}\succeq0,\,\sigma_E\in\mathcal{B}(\mathcal{H}_E)\} = -\log_2 p_{\mathrm{g}}(b|E)   
\end{equation}
\item Conditional $\epsilon$-smoothed min-entropy \cite{Dupuis2020}, $\varepsilon\in[0,1]$ is

    \begin{align}\label{eq:minHep}
	H_{\min}^\varepsilon(B|E)_\rho :=& \sup_{\tilde{\rho}_{BE} \in \mathcal{B}^\varepsilon(\rho_{BE})} H_{\min}(B|E)_{\tilde{\rho}}= -\log \inf_{\tilde{\rho}\in \mathcal{B}_\varepsilon(\rho_{BE})} \inf_{\sigma_E\in \mathcal{B}(\mathcal{H}_E)} \parallel \tilde{\rho}_{BE}^{\frac{1}{2}}\sigma_{E}^{-\frac{1}{2}}\parallel_{\infty}^2, \\
    \mathcal{B}_\varepsilon(\rho_{E})=&\{\tilde{\rho}|\tilde{\rho}\succ 0,\mathrm{Tr}\tilde{\rho}<1,\,\sqrt{1-||\sqrt{\rho_{BE}}\sqrt{\tilde{\rho}}||_1^2}\le \varepsilon\}.
\end{align}
\end{itemize}
\end{definition}

$H_{\min}(B|E)$ is a worst-case entropy measure used for UC-security proofs. It quantifies Eve’s maximum ability to guess Bob’s value \( b \) given access to the quantum system \( E \), by bounding \( \rho_{BE} \) above by a scaled product state \( \mathbb{I}_B \otimes \sigma_E \). The smoothed version \( H_{\min}^{\varepsilon}(B|E)_\rho \) accounts for finite-size effects by optimizing over subnormalized states \( \tilde{\rho} \) within purified distance \( \varepsilon \) of \( \rho_{BE} \).
Operationally, \( H_{\min}^{\varepsilon}(K|E)_\rho \) determines the maximum number of secret bits extractable from raw data \( K \), secure against an adversary holding \( E \), up to error \( \varepsilon \). Unlike von Neumann entropy, which captures average uncertainty, smoothed min-entropy provides tight bounds reflecting worst-case adversarial knowledge.

\begin{proposition}
In the iid asymptotic limit, this entropy converges (after smoothing) to the conditional von Neumann entropy:
  $$
  \lim_{\varepsilon \to 0} \lim_{n \to \infty} \frac{1}{n} H_{\min}^\varepsilon(B^n|E^n) = H(B|E),
  \qquad
  H(B|E)=H(\rho_{BE})-H(\rho_E),
  \quad
  H(\rho)=-\mathrm{Tr}\rho\log\rho,
  $$
  where the von Neumann entropy $H(\rho)$ generalized the Shannon entropy $H(X)=-\sum_ip_X(x_i)\log_2p_X(x_i)$ of a classical variable $X$ with probability distribution $p_X$.
\end{proposition} 
\( H_{\min}^{\varepsilon}(K|E)_\rho \) is related to the achievable key length, i.e. the number of uniform secret bits that can be extracted from the raw key \( K \), given the adversary’s quantum side information \( E \), up to \( \varepsilon \). In the finite-key regime, the secret key rate $r$ and the extractable key length \( \ell \) are bounded as
\begin{equation}\label{eq:betabound}
    r=\eta\frac{\ell}{n},\qquad\ell \leq H_{\min}^{\varepsilon}(K|E)_\rho - \mathrm{leak}_{\mathrm{EC}} - \mathcal{O}(\log \tfrac{1}{\varepsilon})=
    \begin{cases}
        n(1- h(\mathrm{QBER}))- \mathcal{O}(\log \tfrac{1}{\varepsilon}) & \text{QKD}\\
        n(1- h(\mathrm{QBER})-f(\beta_\QC))- \mathcal{O}(\log \tfrac{1}{\varepsilon}) & \text{DI-QKD}
    \end{cases}
\end{equation}
where $h$ is the binary entropy, \( \mathrm{leak}_{\mathrm{EC}} \) accounts for information revealed during error correction, $\eta=\eta_{\mathrm{link}}+\eta_d$ modifies the effective number of rounds contributing to key generation where \( \eta_l = 10^{-a d} \) represents line losses (transmittivity) over a distance \( d \) with attenuation coefficient \( a \) and \( \eta_d \) is the efficiency of the detectors (see Sec.\ref{sec:experiments}). The last equality holds for one-way error correction and ideal privacy amplification\footnote{In one-way postprocessing ``error correction and privacy amplification'' $T$ flows only from Bob to Alice or vice-versa, efficient in terms of secret key rate;
in two-way postprocessing ``advantage distillation'' $T$ flows from both Alice and Bob. This is inefficient for small errors but tolerating larger errors (see \ref{sec:CHSHprotocol}).}
. $f(\beta_\QC)$ upper bounds Eve’s information as a function of $\beta_\QC(\bm p)$ such that $\bm p \in \QC\setminus\LC$. The QBER may still appear due to experimental imperfections, or alternatively the bound may depend only by $\beta_\QC$.
The \emph{secret key rate} \( r \) is defined as the number of secret bits extracted per use of the quantum channel. 
In the asymptotic limit, it reduces to the Devetak–Winter bound 
\begin{equation}\label{devatekwinterformula}
    r \geq \sup_{T}  I(A:B)-I(A:E)=\sup_{T} H(K|E) - H(K|B),
\end{equation}
showing consistency between one-shot and large-scale QKD analyses.
$I$ is the mutual information and $T$ represents \emph{preprocessing}  \footnote{In QKD, preprocessing denotes any local operation applied to the raw key \emph{before} information reconciliation and privacy amplification, with the aim of improving security or noise tolerance. For the specific noisy–preprocessing scheme used in this work, see Sec.~\ref{subsec:CHSHT}.}

In the next, we discuss the developments and the ideas in the variants of protocol in Box 1 listed in Tab.~\ref{tab:my_label}. We are interested on the \textit{data generation} and the \textit{statistical security check}  to focus only on the term $f(\beta_\QC)$ in Eq.~\eqref{eq:betabound} since the \textit{raw key validation}, that involves perfect correlated outcomes, is the same of all QKD protocols. Recall that in DI-QKD Eve controls the source and the fabrication of Alice and Bob's devices, therefore for the users the preparations and measurements become merely input and output classical labels.

\begin{table}[ht]
    \centering
    \begin{tabular}{|p{2.1cm}|p{13cm}|}
    \hline
    \textbf{Protocol} & \textbf{Description}\\
    \hline
        E91 \ref{sec:E91}~\cite{E91}& Within QT, E91 uses CHSH for secrecy without quantifying $p_g$ \\  \hline
        BHK ``05 
        \ref{sec:nosignalQKD}~\cite{Barrett2005}&$p_g$ via post-quantum individual attacks is bounded by chain inequality, but no rate is computed.\\ \hline
        CHSH ``06 \ref{sec:CHSHprotocol}~\cite{Acin2006b,scarani2006secrecy}&First key rate $r$ is computed bounding $p_g$ via post-quantum individual attacks using CHSH and CGLMP in 1w and 2w PD.
        \\ \hline
        Chain ``06 \ref{sec:CHAIN-protocol} \cite{Acin2006}
        &Improvement of $r$ bounding $p_g$ via post-quantum individual attacks enlarging Bell scenario in 1w PD via chain inequality (UC-security~\cite{Masanes2014}, CV version~\cite{Marshall2014} in ``14).\\ \hline
        CHSH\textsubscript{$\chi$} ``07 \ref{DIQKD-collective} \cite{acin2007device} & Improvement of $r$ bounding $p_g$ obtained with post-quantum collective attacks via Holevo quantity $\chi$ in 1-w PD (rigorous proof in ``09~\cite{pabgs09andSangouard.4}, UC-secure in ``09~\cite{Masanes2009}).\\ \hline
        CHSH$_{p_g}$ ``10 \ref{CHSHp}~\cite{masanes2011secure,hanggi2010device} & $r$ improved bounding $p_g$ in 1-w PD from coherent attacks (only independent devices assumed -- memory attacks ``13~\cite{barrett2013memory}) via $H_{\min}$ achieving $Q=7.1\,\%$ ($\beta_\QC=2.423$). 2-w PD secure in ``20~\cite{tan2020advantage}, $\eta^*=0.891$.\\ \hline
        CHSH$_T$ ``20 \ref{subsec:CHSHT}~\cite{Ho2020}& Implement Entropy Accumulation theorem (EAT) (see sec.~\ref{sec:chap4}) for the bound on $r$ robust under coherent attacks reducing $\eta^*=0.832$. \\ \hline
        CHSH\textsubscript{$\ell$} ``13 \ref{subsec:CHSHl}~\cite{lim2013device}& Detection loophole is closed by performing a CHSH test entirely within Alice's lab using causally independent devices.\\ \hline
        CH-SH ``21 \ref{ch-sh}~\cite{Sekatski2021,Woodhead2021}& $p_g$ bounded from refined CHSH data using either asymmetric expressions or full ``$(X=CH,Y=SH)$'' statistics; both improve $r$ and match $\eta^*=0.826$. \\ \hline
        
         CHSH$_{\text{IMD}}$``21 \ref{CHSHIMD}~\cite{Brown2021} & Rényi‑like quantities—the iterated‑mean divergences—to with tighter entropy bounds than previous numerical techniques. \\ \hline
        CHSH\textsubscript{$2e$} ``21 \ref{subsubsec:randomkeybais}~\cite{Schwonnek2021}& $r$ improved by doubling the events for the raw key in a single round under quantum collective attacks in 1-w PD achieving $Q=8.2\,\%$ ($\beta_\QC=2.362$). \\ \hline 
        \hline
        CHSH$_{p_{\mathcal{V}_p}}$ ``22 \ref{subsub:randompostselection}~\cite{Xu2022}& $r$ improved under quantum collective attacks using random postselection of raw key bit in 1-w PD, reducing the effect of no-detection events. $\eta^* = 0.685$ \ref{tab:diqkd_threshold}.\\ \hline
       MPG ``23 \ref{msg-di-qkd}~\cite{Zhen2023a}& A protocol based on the Mermin-Peres magic square game, surpassing CHSH-based approaches in regimes of high state visibility and detection efficiency.\\ \hline
       rDI-QKD ``24 \ref{sec:routedBI} \cite{chaturvedi2024extending,Lobo2024,Tan2024,LeRoyDeloison2025,Sekatski2025,Chaturvedi2025}& Introduces routing-based Bell tests to enable long-range DI-QKD; security relies on witnessing long-range nonlocality. Routed Bell inequalities certified even under loss or weak detection, achieving critical efficiency $\eta_{A_1}>1/2$, with exponential robustness in parallel repetition. \\
       \hline
    \end{tabular}
    \caption{Summary of DI-QKD protocols. QT: quantum theory; $p_g$: Eve's guessing probability of $k_i$, the sifted bit; 1-w: one-way public discussion (PD); UC: universal composite security; r: secure key rate; CV: continuous variables; $\eta^*=$ critical efficiency. We discuss the advanced security proof started from 2014 in Sec.~\ref{sec:chap4}.}
    \label{tab:my_label}
\end{table}

\subsection{Ekert (1991) -- Bell inequality for secrecy without bound}\label{sec:E91}

E91 is the first quantum key distribution (QKD) protocol leveraging Bell inequality violation for security~\cite{E91}, but with no bound on Eve's knowledge.

\textbf{Protocol} — \textit{Data generation}. Alice and Bob select $x,y\in\{0,1,2\}$ obtaining $a,b\in\{0,1\}$. In QT $M_{A|x}=\{\Pi_{0|x},\Pi_{1|x}\}$ and $N_{B|y}=\{\Pi_{0|y},\Pi_{1|y}\}$: 
\begin{equation}
    \Pi_{a|x}=U_{\theta_x}\ket{a}\bra{a}U_{\theta_x}^\dagger,\quad
    U_{\theta}=\mathrm{e}^{-\mathrm{i}\frac{\theta}{2}\sigma_y},\quad
    \theta_{x=(0,1,2)}=\left(\frac{\pi}{2},\frac{\pi}{4},0\right),\quad
    \theta_{y=(0,1,2)}=\left(\frac{\pi}{4},0,-\frac{\pi}{4}\right).
\end{equation}

\textit{PD and raw key validation}.
They publicly announce $x,y$. For $(x,y) = (0,2)$ they evaluate $\beta_\QC(\bm{p})$ and abort if $\beta_\QC(\bm{p}) \le 2$. But when settings match: $(x,y) = (1,0) \lor (2,1)$, they use the outcomes to extract the key. If they share $\ket{\phi^+} = \frac{1}{\sqrt{2}}(\ket{00} + \ket{11})$ they exploit perfect outcome correlations.

\textbf{Security} — 
in the absence of eavesdropping, E91 predicts $\beta_\QC(\bm{p}) = 2\sqrt{2}$. If Eve attempts to eavesdrop, she must interact with the system, inevitably introducing disturbances that degrade the observed nonlocal correlations. Any such unnoticed tampering constrains the statistics $\bm{p}$ to admit a local hidden variable (LHV) model, for which $\beta_\QC(\bm{p}) \le 2$.
More precisely, all LHV strategies can be written as convex mixtures over deterministic outcomes defined by hidden variables $n_a, n_b$:
$$
\langle A_x B_y \rangle = \int \rho(n_a, n_b) (a_x \cdot n_a)(b_y \cdot n_b) \, \mathrm{d}n_a \mathrm{d}n_b.
$$
This class of models forms the local polytope $\LC$, and defines the boundary of what is attackable without to be noticed.
Hence, a violation $\beta_\QC(\bm{p}) > 2 \Longrightarrow\bm{p} \notin \LC$ guarantees no eavesdropper.

Although the E91 protocol does not address the detection loophole caused by no-click events (see Sec.~\ref{subsec:detection-loophole}), it pioneered the principle that a Bell inequality violation enables secrecy independently of device trust, by certifying entanglement without assuming a specific Hilbert space structure. This insight paved the way to later protocols that guarantee security even beyond quantum correlations, i.e., for behaviors $\bm{p} \in \NSC \setminus \QC$.

More recently, proofs based on the (Generalized) Entropy Accumulation Theorem have established a rigorous, finite-size, composable security proof of E91 against coherent attacks. For details of this analysis and explicit bounds, we refer the reader to Sec.~\ref{subsec:E91GEAT}.

\subsection{DI-QKD Secure Against No-signaling Adversaries}
We start with presenting the earliest DI-QKD protocols that achieve security against a general non-signaling adversary—that is, an eavesdropper constrained only by the no-signaling principle, rather than by the laws of quantum mechanics (see definition \ref{def:attacks}). These adversaries may exploit post-quantum correlations, including those stronger than any quantum strategy, provided they do not enable faster-than-light communication. Security proofs in this regime rely on monogamy properties of non-local correlations, typically quantified via Bell inequality violations. The protocols discussed here preceded the development of entropy-based techniques and thus typically achieve limited key rates, but they remain important foundational milestones in the theory of device-independent cryptography.
\subsubsection{Barrett-Hardy-Kent (2005) -- bound on Eve's knowledge without positive rate}\label{sec:nosignalQKD}
Once the correlations $\bm p\notin \LC$ are information-theoretic resource~\cite{Barrett2005} a protocol the first protocol is derived with a security proof (against collective no-signaling) bounding Eve's knowledge with $f[\beta_\QC(\bm p)]$ but without computing the secrete key rate~\cite{Barrett2005}\footnote{Ref.~\cite{Barrett2005}. Tested in QT, it uses anticorrelation from $\ket{\psi}=\frac{1}{\sqrt{2}}(\ket{01}+\ket{10})$, here we keep $\ket{\psi}=\frac{1}{\sqrt{2}}(\ket{00}+\ket{11})$ as in E91.}.

\textbf{Protocol}. \textit{Data generation} --
Alice and Bob select $x,y\in\{0,\dots,m-1\}$ obtaining $a,b\in\{0,1\}$. In QT $M_{A|x}=\{\Pi_{0|x},\Pi_{1|x}\}$ and $N_{B|y}=\{\Pi_{0|y},\Pi_{1|y}\}$:
\[
\Pi_{a|x} = U_x \ket{a}\bra{a} U_x^\dagger, \qquad U_x = e^{-i \frac{\pi x}{2m} \sigma_y}.
\]
 \textit{PD and raw key validation} --
given $n=Mm^2$ with $M\in\mathbb{N}$, the protocol continues if 
\begin{itemize}
    \item [(i)] \begin{equation}
     2 M m \le \sum_{x=0}^{m-1} \sum_{c = -1}^1 \mathcal{M}_{x,c}, \qquad \mathcal{M}_{x,c} := \left| \left\{ j : X_j = x, Y_j = x + c \right\} \right|.
\end{equation}
    \item [(ii)] $\forall j = 1, \dots, s-1, s+1, \dots, \mathcal{M}_{x,c} $ rounded that reveals the outcomes $(X_j, Y_j) = (x, x+c)$
    \item [(iii)] Let $X_j\in\{0,\dots,m-1\}$ Alice's input at round $j$ and Bob's input $Y_j\in\{x-1,x,x+1\}$ for a given $X_j=x$. For all $j$ the following BI (a variant of the chained Bell inequality~\cite{Braunstein1990,Pearle1970}) must be violated:
    \begin{equation}\label{chainBI0}
    \beta_j(\bm{p}^\lambda) = \frac{1}{3m} \sum_{c = -1}^1 \sum_{x = 0}^{m-1} p^\lambda(a_j = b_j \mid X_j = x, Y_j = x + c)
        \le 1 - \frac{2}{3m}
    \end{equation}
\end{itemize}

The outcomes are kept private only for a designated setting pair $(X_s, Y_s) = (x, x + c)$, with $c \in \{-1, 0, 1\}$. The secret bit is then derived from the outcomes of this unrevealed round, defined as $a_s = b_s$.

\textbf{Security Proof}.
For her collective attack Eve prepares a post-quantum state \( \lambda \) so that she keeps a subsystem for her and 2n subsystems where at each run Alice and Bob believe to share a state as in E91. Thus, \( \lambda \) defines joint probability distributions \( p^\lambda_{A^nB^nEX^nY^nZ} \), where \( X^n = (X_1, \dots, X_n) \), \( Y = (Y_1, \dots, Y_n) \) are the players' measurement choices, and \( E = \{E_1\} \) is Eve’s \emph{time-independent} measurement. This means that Eve's choice and statistics do not depend on the players’ measurement rounds, preventing adaptive attacks. Otherwise, the theory would be pathological, even if it respects the no-signaling principle~\cite{Kent2005}.

Then, for any partition \( X = X^1 \cup X^2 \), \( Y = Y^1 \cup Y^2 \), and \( E = E^1 \cup E^2 \), the no-signaling condition (see Eq.~\eqref{eq:no-signalling}) imposes:
\[
p^{\lambda}_{X^1 Y^1 E^1} = p^{\lambda}_{X^1 Y^1 E^1 | X^2 Y^2 E^2}.
\]
This expresses that no local measurement on one subsystem can transmit information to a distinct subsystem.

Let \( (X_j, Y_j) = (x, y) \) be a random measurement pair chosen with uniform probability \( 1/m^2 \), yielding outcomes \( (a_j, b_j) \in \{0,1\}^2 \). From \eqref{chainBI0}:
\begin{equation}\label{chainBI}
        \beta_j(\bm{p}^\lambda)\le 1 - \frac{2}{3m}\equiv\beta_\LC(m) \Longrightarrow \bm{p}^\lambda \in \mathcal{L}, \qquad
        \beta_j(\bm{p}^\lambda)= 1 - \mathcal{O}(1/m^2)\equiv \beta_{\NSC\setminus\LC}(m) \Longrightarrow \bm{p} \notin \mathcal{L}.
\end{equation}
For \( m \gg 1 \), the gap \( \beta_{\NSC\setminus\LC}(m) > \beta_\LC(m) \) is detectable, computed either:\textit{(i)} using the Born rule in QT:
    \(
    p^\lambda(a, b \mid x, y) = \mathrm{Tr}\left[ \rho_\lambda \left( \Pi_{a|x} \otimes \Pi_{b|y} \right) \right],
    \)
     or in any no-signaling probability model (Hahn-Banach theorem\footnote{Exists always, a hyperplane that separate a point from a convex set such as $\LC$ and $\QC$ (both contained in $\NSC$).}):
    \(
    p^\lambda(a, b \mid x, y) \in \NSC\setminus\QC.
    \)

This gap is used to infer limits on Eve's knowledge. Specifically, she cannot simultaneously satisfy:
(a) Passing the Bell test (i.e., the protocol does not abort),
(b) Possessing high confidence in the value of the secret bit $a_s = b_s$.

\begin{lemma}\label{lemma02}
    If the protocol passes with probability $p^\lambda(\mathrm{pass}) > \epsilon$, then the correlation of the secret bit is bounded:
\begin{equation}\label{p-pass}
1 - \frac{1}{2 M m \epsilon} < p^\lambda(a_s = b_s \mid \mathrm{pass}).
\end{equation}
From \eqref{p-pass}, the no-signaling constraint and chain rule:
\begin{equation}\label{p-pass2}
 1 - \frac{1}{2 M m \epsilon}< \beta_s(\bm{p}^\lambda).
\end{equation}
\end{lemma}
The maximum probability that Eve guesses correctly that $a_s = b_s$ (the secret bit), conditioned on her outcome $e$, for a fixed input setting pair $(x, x+c)$ $p_{\mathrm{g}}^\lambda$, defined as:

$$
p_{\mathrm{g}}^\lambda(b|E):= \max_e p^\lambda(a_s = b_s \mid X_s = x, Y_s = x + c, e),
$$

Assume that Eve has partial information about the secret bit. That is, there exists an outcome $e_0$ occurring with probability $\delta > 0$ (i.e. how likely it is for Eve to land on a "lucky" outcome — one that gives her some knowledge), such that her guessing probability is strictly better than random:
$$
 p_{\mathrm{g}}^\lambda(b|E=e_0)= p^\lambda(a_s = b_s \mid X_s = x, Y_s = x + c, e_0) > \frac{1 + \delta'}{2}, \qquad \delta' > 0.
$$
 $\delta'$ is  the bias in Eve's knowledge, i.e., how much better than random she is at guessing the secret bit when she obtains outcome $e_0$\footnote{By considering both $\delta$ and $\delta'$, the bound becomes robust and general. Even if Eve’s knowledge is good but rare (large $\delta'$, small $\delta$), or frequent but weak (small $\delta'$, large $\delta$), the product $\delta \delta'$ controls her effective information}.
This implies that the conditional distribution $p^\lambda(a_s, b_s \mid X_s = x, Y_s = x + c, e_0)$ is biased towards equal outcomes, allowing Eve to guess $a_s = b_s$ with success probability greater than $1/2$.
\begin{proof} 
From the derivation in~\cite{Barrett2005}, one can use the triangle inequality and the no-signaling property to propagate this bias across the chained Bell expression. The result is a \textit{reduction} in the observed Bell value:

$$
\beta_s^\lambda(\bm{p}) \le 1 - \frac{\delta \delta'}{3m},
$$

where the factor $\delta \delta'/3m$ quantifies the average amount of correlation that is lost due to Eve’s biased knowledge.
On the other hand, if the protocol passes (i.e., the observed Bell value is high), we have a previously established lower bound \eqref{p-pass2} from Lemma \ref{lemma02}, which assumes that the test was passed with probability at least $\epsilon$.
We now reach a contradiction if both bounds are applied at once. Specifically, if:

$$
\frac{1}{2 M m \epsilon} < \frac{\delta \delta'}{3m} \quad \Longrightarrow \quad \text{Contradiction:  }\quad p_{\mathrm{g}}^\lambda > \frac{1 + \delta'}{2} \quad \text{is incompatible with} \quad \beta_s^\lambda(\bm{p}) > 1 - \frac{1}{2 M m \epsilon}.
$$

In short, if the observed value of the chained Bell parameter is high enough, Eve’s guessing probability must remain close to $1/2$, meaning she gains vanishing information about the secret bit. Thus, we can write the final conclusion as:

$$
p_{\mathrm{g}}^\lambda \le f(\beta_s^\lambda) := \frac{1}{2} + \mathcal{O}\left(1 - \beta_s^\lambda\right),
$$

where $f$ is a function that decreases as the BI violation increases, and vanishes as $m \to \infty$~\cite{Barrett2005a,Barrett2006}.
\end{proof}
However, this protocol has zero key rate (defined in Sec. \ref{sec:CHSHprotocol}) ascribing correlations with noiseless states. In the following protocol, we describe a more practical scheme, but without tackling the most powerful adversary. It was proposed in \cite{Acin2006b} (extended version \cite{scarani2006secrecy}) and improved with higher noise tolerance and key rate in \cite{Acin2006}.
\subsubsection{CHSH Protocol (2006) -- from polytopes to positive rates}\label{sec:CHSHprotocol} 
The idea is to generate a protocol from the geometry of $\LC$, $\QC$, $\NSC$ and the specific Eve's attacks. In fact, suppose that Eve prepare $\lambda = \sum_i c_i \lambda_i$ such that $\bm{p}^\lambda \in \NSC$~\cite{Barrett2005,Acin2006b}. 
If Alice and Bob observe a Bell violation (e.g., $\beta > \beta_{\LC}$), then:
$
\bm{p}^\lambda = \sum_i c_i \bm{p}^{\lambda_i} \notin \LC.
$
So at least one $\lambda_i$ must generate a distribution $\bm{p}^{\lambda_i} \notin \LC$. But this alone does not guarantee security: such $\bm{p}^{\lambda_i}$ could still be \textit{deterministic} or even \textit{signaling}.
This leads to three possibilities:
\begin{itemize}
    \item $\bm{p}^{\lambda_i} \in \LC \Longrightarrow$  contradiction as no BI violation is observed;
    \item $\bm{p}^{\lambda_i} \in \NSC \setminus \LC$. Eve’s hidden variable $\lambda_i$ is no-signaling but nonlocal. Such correlations are monogamous: Nonlocality cannot be shared \textit{freely} with a third party without violating no-signaling~\cite{Barrett2005,Barrett2006}. Therefore, Eve cannot simultaneously reproduce the nonlocal statistics between Alice and Bob and retain side-information allowing her to predict outcomes. Otherwise, her model would allow signaling. Let us explain what \textit{freely} means:
    \begin{enumerate}
        \item Given $\ket{\mathrm{EPR}}_\mathrm{AB}$,$\ket{\mathrm{EPR}}_\mathrm{AC}$,$\ket{\mathrm{EPR}}_\mathrm{BC}$, the state of the 3 subsystems we cannot really talk about tripartite entanglement because the correlations are always between any two of the three parties. When Alice and Bob share $\rho_{\mathrm{AB}}=\ket{\mathrm{EPR}}\bra{\mathrm{EPR}}_\mathrm{AB}$, being a pure maximally entangled state, it has eigenvalues $\lambda_1=1$ and $\lambda_k=0$ ($k>1$). For Uhlmann theorem any $\rho_\mathrm{ABC}=\rho_{\mathrm{AB}}\otimes\rho_\mathrm{C}$ still has eigenvalues $\lambda_1=1$ and $\lambda_k=0$ ($k>1$), but the Schmidt rank across the partition $\mathrm{AB}:\mathrm{C}$ is 1, thus $\rho_{\mathrm{AB|C}}$ is not entangled, hence cannot share any correlation with $\ket{\mathrm{EPR}}_{\mathrm{AB}}$, not only that, the subsystem $\mathrm{C}$ does not know nothing about $\mathrm{A}$ since $\rho_\mathrm{AC}=\frac{\bm 1}{2}\otimes\rho_{\mathrm{C}}$. 
        \item Given $\ket{\mathrm{GHZ}}_\mathrm{ABC}=\frac12(\ket{000}+\ket{111})$ this has genuine entanglement in all the partition $\mathrm{AB:C}$, $\mathrm{AC:B}$, $\mathrm{BC:A}$ but the entanglement is not maximal if one qubit is dropped, e.g. $\rho_{\mathrm{AB}}$ is separable. 
    \end{enumerate}
    Consider a \textit{3-players CHSH game}. The referee selects 2 players to perform the \textit{CHSH game} of Sec. \ref{sec:BI}. If Alice is selected, she does not know who will be her partner. Therefore, she does not know what to choose $\ket{\mathrm{EPR}}_\mathrm{AB}$ or $\ket{\mathrm{EPR}}_\mathrm{AC}$. Or maybe $\ket{\mathrm{GHZ}}_\mathrm{ABC}$? But its marginals $\rho_{\mathrm{AC}}$, $\rho_{\mathrm{AB}}$ are separable. Therefore the classical strategy is also the quantum one.
    \item $\bm{p}^{\lambda_i} \notin \NSC$. If Eve uses such a $\lambda_i$, the associated strategy enables faster-than-light signaling. This is physically excluded: such distributions violate relativistic causality~\cite{Valentini2002} (see Fig.~\ref{fig:Eve_ombrella} and~\ref{fig:out-NS}).
\end{itemize}
Let us study the simplest scenario with $x,y\in\{0,1\}$, $a,b\in\{0,1\}$, and individual attacks
\begin{equation}\label{eq:individual_attack}
     p_{ABE|XYZ}= p_{E|XYZ} p_{AB|XYZE}= p_{E|Z} p_{AB|XYZE}
\end{equation}
such that 
\begin{equation}\label{eq:AB-behavior}
p(ab|xy) = \sum_e p(abe|xyz) \sum_e p(ab|xyze) p(e|z).    
\end{equation}
The no-signaling condition in Eq. \eqref{eq:no-signalling} is applied in \( p_{E|XYZ} = p_{E|Z} \).
Being \( \mathcal{NS} \) a convex polytope, $p_{AB|XYZE}$ can be expressed as a convex combination of extremal points, but in Eq. \eqref{eq:individual_attack}, $p_{AB|XYZE}$  is taken extremal because individual attacks satisfy two key properties: 
\begin{itemize}
    \item [\textit{(i)}] the interconvertibility property \cite{Barrett2005b, Jones2005} ensures that extremal points \( p(ab|xyze) \) describe the most general individual attack; 
    \item[\textit{(ii)}] local operations and public communication between Alice and Bob do not enhance their security. 
\end{itemize}
For binary $a,b$ the extreme points $\{p_{AB|XYZE}\}=\bm{p}_{\mathrm{ext}}\in\mathcal{NS}$ are fully characterized (binary input \cite{Barrett2005a,Barrett2005b}, arbitrary input \cite{Jones2005}). For binary input and output Eve's strategy is sketched in Fig. \ref{fig:Eve_ombrella}. Let us analyze it in detail.
\begin{proposition}
For $\bm{p}\in\mathcal{NS}$ with binary input and output the no-signaling conditions requires that $\sum_a p(ab|xy)=p(b|y)$ and $\sum_b p(ab|xy)=p(a|x)$ so that
\begin{itemize}
\item [\textit{(i)}] $\exists\, \bm{p}_{\mathrm{NL}}=\frac12 \delta_{a\oplus b,yx}$ isotropic correlation (as in Fig. \eqref{fig:WernerCorr} with $v=1$) know as Popescu-Rohrlich-Tsirelson box \cite{Popescu1994,Barrett2005a,tsirelson1993some}, or nonlocal machine \cite{Cerf2005}, or unit of nonlocality \cite{Jones2005,Barrett2005b} and it is the vertex on the top in Fig. \eqref{fig:Eve_ombrella}
\begin{equation}\label{eq:pNL}
  \bm{p}_{\mathrm{NL}}= 
    \begin{array}{|c|c|c|c|c|}
    \hline
    ab \backslash xy & 00 & 01 & 10 & 11 \\ \hline
    00& 1/2 & 1/2 & 1/2 & 0  \\ \hline
    01& 0 & 0 & 0 & 1/2  \\ \hline
    10& 0 & 0 & 0 & 1/2  \\ \hline
    11& 1/2 & 1/2 & 1/2 & 0  \\ \hline
    \end{array}
\end{equation}  
\item [\textit{(ii)}] As in Fig. \eqref{fig:Eve_ombrella}, $\exists$ 8 extreme points $\bm{p}_{\mathrm{L}}^{j,r}$ with entries $p_{\mathrm{ext}}(ab|xy)=\delta_{a,\lambda_A(x)}\delta_{b,\lambda_B(y)}$ (satisfying \ref{def:realism}) s.t. $\beta_\uparrow=3$ defined as
\begin{equation}\label{eq:l-strategies}
    \bm{p}_{\mathrm{L}}^{j,r}: (x,y)\in\{0,1\}^2\mapsto\{0,1\}^2\ni \bm{p}_{\mathrm{L}}^{j,r}(x,y)=(a(x),b(y))
\end{equation}
where at each $j\in\{1,\dots,4\}$, and $r=0,1$ the output $(a(x),b(y))$ are\begin{equation}
    \begin{array}{|c|c|c|c|c|c|c|c|}
    \hline
       \bm{p}_{\mathrm{L}}^{1,0}  & \bm{p}_{\mathrm{L}}^{1,1} & \bm{p}_{\mathrm{L}}^{2,0} & \bm{p}_{\mathrm{L}}^{2,1} & \bm{p}_{\mathrm{L}}^{3,0} & \bm{p}_{\mathrm{L}}^{3,1} & \bm{p}_{\mathrm{L}}^{4,0} & \bm{p}_{\mathrm{L}}^{4,1}  \\
       (0,0)  & (1,1)  & (x,0)  & (x+1,1)  & (0,y)  & (1,y+1)  & (x,y+1)  & (x+1,y)  \\
       \hline
    \end{array}
\end{equation}
for example
\begin{equation}
\bm{p}_{\mathrm{L}}^{1,0}=
\begin{array}{|c|c|c|c|c|}
    \hline
    ab \backslash xy & 00 & 01 & 10 & 11 \\ \hline
    00& 1 & 1 & 1 & 1  \\ \hline
    01& 0 & 0 & 0 & 0  \\ \hline
    10& 0 & 0 & 0 & 0  \\ \hline
    11& 0 & 0 & 0 & 0  \\ \hline
    \end{array},
     \,\quad \bm{p}_{\mathrm{L}}^{4,1}=
\begin{array}{|c|c|c|c|c|}
    \hline
    ab \backslash xy & 00 & 01 & 10 & 11 \\ \hline
    00& 0 & 0 & 1 & 0  \\ \hline
    01& 0 & 0 & 0 & 1  \\ \hline
    10& 1 & 0 & 0 & 0  \\ \hline
    11& 0 & 1 & 0 & 0  \\ \hline
    \end{array},\mbox{ etc.}
\end{equation}
    \item [\textit{(iii)}] If $\bm{p}\in \mathcal{L} \Longrightarrow$ Eve knows $a_0,a_1,b_0,b_1$
    \item [\textit{(iv)}] If Alice and Bob would observe $\bm{p}=\bm{p}_\mathrm{NL}$ $\Longrightarrow$ Eve cannot be correlated (perfect monogamy \cite{Barrett2005a})
    \end{itemize}
\end{proposition}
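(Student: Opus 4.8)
The four claims all pivot on one structural fact, which I would establish first: that $\bm p_{\mathrm{NL}}$ of \eqref{eq:pNL} is an \emph{extreme point} of the no-signalling polytope $\NSC\subset[0,1]^{16}$. Recall $\NSC$ is cut out of $[0,1]^{16}$ by the $4$ normalisation equalities and the no-signalling equalities \eqref{eq:no-signalling}, a total of $12$ affinely independent linear constraints, so $\dim\NSC=8$. For \textit{(i)}: I would check directly that $\bm p_{\mathrm{NL}}$ satisfies \eqref{eq:no-signalling} — its single-party marginals are $p(a|x)=p(b|y)=\tfrac12$ regardless of the other input — so $\bm p_{\mathrm{NL}}\in\NSC$, and that its entries equal $\tfrac12\delta_{a\oplus b,xy}$. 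To see it is a vertex, note it makes $8$ of the positivity facets $p(ab|xy)\ge 0$ tight; the $8$ corresponding facet normals together with the $12$ equality constraints have full rank in $\mathbb R^{16}$, so $\bm p_{\mathrm{NL}}$ is the unique point of $\NSC$ meeting all of them, hence extremal. (Equivalently one may quote the standard vertex enumeration of the CHSH no-signalling polytope: $16$ local deterministic boxes and $8$ PR-type boxes, of which $\bm p_{\mathrm{NL}}$ is one.)

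For \textit{(ii)}, the $\bm\ell_j^r$ are by construction deterministic local behaviours $p(ab|xy)=\delta_{a,\lambda_A(x)}\delta_{b,\lambda_B(y)}$ (Def.~\ref{def:realism}), hence lie in $\LC\subset\NSC$ and are extreme points of $\NSC$ (being $0/1$-vertices of the normalised cube that lie in $\NSC$). I would then evaluate $S_*$, the count of satisfied CHSH constraints (i.e.\ $\beta_\uparrow$ of \eqref{eq:CHSHSS}), on a deterministic assignment $(\lambda_A(0),\lambda_A(1),\lambda_B(0),\lambda_B(1))$: because the XOR of the left-hand sides of the four CHSH constraints vanishes identically while the XOR of their right-hand sides is $1$, no assignment satisfies all four, so $S_*\le 3$; a short exhaustive check over the $16$ assignments shows equality for exactly $8$ of them, and these are precisely the $\bm\ell_j^r$ of \eqref{eq:l-strategies} (the four choices of which constraint is sacrificed, times the global relabelling $r$). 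This gives $S_*=3$ for every $\bm\ell_j^r$ and completeness of the list.

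For \textit{(iii)}: if $\bm p\in\LC$ then by \eqref{eq:pabxy} it decomposes as $\bm p=\sum_\lambda p_\Lambda(\lambda)\,D_\lambda$ with each $D_\lambda$ deterministic, i.e.\ $D_\lambda$ fixes $a$ for each $x$ and $b$ for each $y$. Eve's optimal strategy is a no-signalling extension $p_{ABE|XYZ}$ whose register $E$ carries $\lambda$, with $p(e{=}\lambda|z)=p_\Lambda(\lambda)$ for every $z$; its $AB$-marginal is $\bm p$, and learning $e=\lambda$ tells Eve $a_0,a_1,b_0,b_1$ with certainty, so $p_{\mathrm g}=1$. For \textit{(iv)}: let $p_{ABE|XYZ}$ be any no-signalling extension with $p_{AB|XY}=\bm p_{\mathrm{NL}}$. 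Fix $z$; for each $e$ with $p_{E|Z}(e|z)>0$ the conditional $p_{AB|XYze}(ab|xy):=p_{ABE|XYZ}(abe|xyz)/p_{E|Z}(e|z)$ is again a valid $(2,2,2)$ no-signalling box — normalisation follows from $\sum_{ab}p_{ABE}(abe|xyz)=p_{E|Z}(e|z)$, and its $X\!\to\!B$, $Y\!\to\!A$ no-signalling from the fact that $\sum_a p_{ABE}(abe|xyz)=p_{BE|YZ}(be|yz)$ is $x$-independent (and symmetrically) — and $\bm p_{\mathrm{NL}}=\sum_e p_{E|Z}(e|z)\,p_{AB|XYze}$. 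Since $\bm p_{\mathrm{NL}}$ is an extreme point of $\NSC$, this convex decomposition is trivial, so $p_{AB|XYze}=\bm p_{\mathrm{NL}}$ for all such $e$; hence $p_{ABE|XYZ}=\bm p_{\mathrm{NL}}\cdot p_{E|Z}$ and Eve is uncorrelated with $(A,B)$ — perfect monogamy.

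The load-bearing step is the extremality of $\bm p_{\mathrm{NL}}$, used in both \textit{(i)} and \textit{(iv)}; the cleanest self-contained argument is that the ``value-$4$'' face of $\NSC$ (all four CHSH constraints satisfied with probability $1$) is forced by no-signalling and normalisation to be the single point $\bm p_{\mathrm{NL}}$, so any convex decomposition of $\bm p_{\mathrm{NL}}$ over $\NSC$ has every component also saturating all four constraints, hence equal to $\bm p_{\mathrm{NL}}$. Everything else is bookkeeping: marginal checks in \textit{(i)}, a finite enumeration in \textit{(ii)}, and the ``conditioning on Eve's outcome preserves no-signalling'' lemma in \textit{(iv)}, which is the point most easily glossed over and worth stating carefully.
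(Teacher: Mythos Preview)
Your proof is correct and self-contained. The paper itself does not supply a proof of this proposition: it states items \textit{(i)}--\textit{(iv)} as known structural facts about the $(2,2,2)$ no-signalling polytope, delegating to the cited references (Popescu--Rohrlich, Barrett et al., Jones--Masanes) for the vertex classification and the monogamy statement, and then immediately proceeds to \emph{use} the proposition to build Eve's individual-attack strategy in Eq.~\eqref{eq:Eve_ombrella}. So there is no ``paper's own proof'' to compare against beyond the citations.

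That said, the argument you give is precisely the one underlying those references. Your treatment of \textit{(iv)} is the standard Barrett--Linden--Massar--Pironio--Popescu--Roberts monogamy proof: conditioning on Eve's outcome produces a convex decomposition of $\bm p_{\mathrm{NL}}$ inside $\NSC$, and extremality forces every conditional box to equal $\bm p_{\mathrm{NL}}$. You are right to flag the ``conditioning preserves no-signalling'' step as the one most often glossed over; your justification via the tripartite no-signalling constraints $p_{BE|XYZ}=p_{BE|YZ}$ and $p_{AE|XYZ}=p_{AE|XZ}$ (together with $p_{E|XYZ}=p_{E|Z}$ to make the denominator well-defined) is exactly what is needed. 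Your two alternative proofs of extremality --- the facet-rank count and the ``value-$4$ face is a singleton'' argument --- are both clean, the second being slightly more portable since it recycles directly in \textit{(iv)}. For \textit{(ii)} the parity observation (XOR of the four CHSH predicates is identically $1$) is the efficient way to get $S_*\le 3$ without enumerating; the paper simply asserts $S_*=3$ and the list of eight strategies without further comment.
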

To mimic the marginal $\bm{p}=p_{AB|XY}$ of Eq. \eqref{eq:AB-behavior} observed by Alice and Bob, Eve's optimal attack $\bm{p}_\mathcal{E}$ with entries $\{p(abe|xyz)\}$ from Eq. \eqref{eq:individual_attack}  then consists of the combination of the 9 extreme points with the minimal $p_{\mathrm{NL}}=1-p_L=2v-1$ (see Fig. \ref{fig:Eve_ombrella}):
\begin{equation}\label{eq:Eve_ombrella}
    \bm{p}_\mathcal{E}=\sum_{j=1}^4\sum_{r=0}^1 p_j^r \bm{p}_{\mathrm{L}}^{j,r} +p_{NL}\bm{p}_{NL}, 
    \mbox{ with } 
    \sum_{j=1}^4\sum_{r=0}^1 p_j^r=p_L.
\end{equation}
We label the Eve input $z\in\{1,\dots,9\}\mapsto \{\bm{v}_i\}$ with $\{\bm{v}_i\}\in\{\bm{p}_{\mathrm{L}}^{j,r}\}\cup\{\bm{p}_{\mathrm{NL}}\}$ which provides the following knowledge $e\in\{(a,b),(a,?),(?,?)\}$ at given $x$ and $y$.
Then, resulting marginal probability distribution $p(ab|xy)=\sum_e p(abe|xyz)$ reads as
\begin{equation}\label{eq:tab_pabxyCHSH}
\begin{array}{|c|c|c|c|c|}
\hline
ab \backslash xy & 00 & 01 & 10 & 11 \\ \hline
00 & \frac{p_{\mathrm{NL}}}{2} + \sum_{j\neq 4}p_j^0 & \frac{p_{\mathrm{NL}}}{2}  + \sum_{j\neq 3}p_j^0  & \frac{p_{\mathrm{NL}}}{2}  + p_1^0 + p_3^0 + p_4^1 & p_1^0 \\ \hline
01 & p_4^0 & p_3^0 & p_2^1 & \frac{p_{\mathrm{NL}}}{2}  + p_2^1 + p_3^0 + p_4^1 \\ \hline
10 & p_4^1 & p_3^1 & p_2^0 & \frac{p_{\mathrm{NL}}}{2}  + p_2^0 + p_3^1 + p_4^0 \\ \hline
11 & \frac{p_{\mathrm{NL}}}{2}  + \sum_{j\neq 4}p_j^1  & \frac{p_{\mathrm{NL}}}{2}  + \sum_{j\neq 3}p_j^1  & \frac{p_{\mathrm{NL}}}{2}  + p_1^1 + p_3^1 + p_4^0 & p_1^1 \\ \hline
\end{array}
\end{equation}
From the bipartite distribution $p(ab|xy)$ in \eqref{eq:tab_pabxyCHSH} the best procedure to extract the secret key is unknown.  The CHSH protocol \cite{Acin2006b,scarani2006secrecy} is a good candidate because it provides high correlations between Alice and Bob and reduces Eve's information. From \eqref{eq:tab_pabxyCHSH} we see that Alice and Bob are highly anticorrelated only for $x=y=1$. It is therefore natural to devise the following procedure that transforms these anticorrelations into correlations (see tutorial \cite{tutorial}), this is done in the \textit{Pseudosifting} step, which is the only variant from Box 1 to discuss.

\textbf{CHSH protocol (\textit{Pseudosifting})} -- Alice reveals $x=0$ or $x=1$ and Bob without announcing the value of $y$, if $(x,y)=(1,1)\Longrightarrow b\mapsto \bar{b}$. The anticorrelation becomes correlations and the distribution $p(ab|xy)$ in Eq. \eqref{eq:tab_pabxyCHSH} is updated to $p(ab|x=0,y=k)$ and $p(ab|x=1,y=k)$ conditioned to the knowledge of $x$ and outcome probability $p(y=k)=\xi_k$. 

\begin{equation}\label{eq:Pabecond}
\begin{array}{|c|c|}
\hline
ab &   x=0,\,p(y=k)=\xi_k \\ \hline
00 & \frac{p_{\mathrm{NL}}}{2} +p_1^0+p_2^0+\xi_0p_3^0+\xi_1p_4^0 \\ \hline
01 & \xi_1p_3^0+\xi_0p_4^0 \\ \hline
10 & \xi_1p_3^1+\xi_0p_4^1 \\ \hline
11 & \frac{p_{\mathrm{NL}}}{2}  + p_1^1 + p_2^1 +\xi_0p_3^1+\xi_1p_4^1 \\ \hline
\end{array}\qquad
\begin{array}{|c|c|}
\hline
ab &   x=1,\,p(y=k)=\xi_k \\ \hline
00 & \frac{p_{\mathrm{NL}}}{2} +\xi_0p_1^0+\xi_1p_2^1+p_3^0+p_4^1 \\ \hline
01 & \xi_1p_1^0+\xi_0p_2^1 \\ \hline
10 & \xi_1p_1^1+\xi_0p_2^0 \\ \hline
11 & \frac{p_{\mathrm{NL}}}{2}  + \xi_0p_1^1 +\xi_1 p_2^0 +p_3^1+p_4^0 \\ \hline
\end{array}
\end{equation}
To maximize Eve's uncertainty $\xi_k=1/2$. An interesting property for the pseudosifting about all the eight local points is that Alice's outcome $a$ from Eq. \eqref{eq:l-strategies} is always known to Eve because $x$ is publicly announced, and if one $\bm{p}_{\mathrm{L}}$ provides the knowledge of $b$ to Eve for $x=0$, the same point leaves Eve ignorant for $x=1$ and vice-versa. For example, given Eve's strategy in \eqref{eq:Eve_ombrella} she knows $(a,b)$ if she sent out $\bm{p}_{\mathrm{L}}^{1,0}$ (with probability $p_1^0$) and Alice announces $x=0$,  then $\forall y\in\{0,1\} \Longrightarrow b=0$. In this case, Eve's uncertainty on Bob's symbol is null, we write it as $H(b|E=\bm{p}_{\mathrm{L}}^{1,0},x=0)=0$. But if Eve sent out $\bm{p}_{\mathrm{L}}^{1,0}$ and Alice announces $x=1$,  then $y=0 \Longrightarrow b=0$ and $y=1 \Longrightarrow \bar{b}=1$. Since Eve does not know $y$ the uncertainty is maximum $H(b|E=\bm{p}_{\mathrm{L}}^{1,0},x=1)=1$. Because of pseudosifting, she does not know if Bob's outcome is $b=0$ or $\bar{b}=1$, and at the same time, the outcomes are correlated $a=b$ when $x=y=1$.
Given the above distributions $p(ab|x=0,y=k)$ and $p(ab|x=1,y=k)$ in Eq.\eqref{eq:Pabecond}, then $p(a\neq b|0)=\frac12(p_3^0+p_3^1+p_4^0+p_4^1)\equiv e_{\mathrm{AB|0}}$ and $p(a\neq b|1)=\frac12(p_1^0+p_1^1+p_2^0+p_2^1)\equiv e_{\mathrm{AB|1}}$ (with $\xi_k=1/2$). Then Eve's uncertainty on $b$ is the conditional Shannon entropy: 
\begin{equation}\label{eq:nosignaling_uncertainty}
    H(B|E,X)=\sum_{e} P(E=e,X=x)H(b|E=e,X=x)
    =1-2e_{\mathrm{AB|x+1}},
\end{equation}
with fixed $x\in\{0,1\}$ and $e$ determined by the values of $z$ that chooses the strategies $\{\bm{v_i}\}=\{\bm{p}_{\mathrm{L}}^{j,r},\bm{p}_{\mathrm{NL}}\}$. This is the first evidence of an analogue of quantum mechanical uncertainty relations in a generic no-signaling theory. 
The pseudosifting process is optimized to extract correlations from the nonlocal strategy $\bm{p}_{\mathrm{NL}}$, but it also affects deterministic strategies differently. In particular, for $\bm{p}_{\mathrm{L}}^{1,r}$ and $\bm{p}_{\mathrm{L}}^{2,r}$, pseudosifting yields no errors when $x = 0$, and Eve fully learns Bob’s outcome $b$, since $b(y)|_{y=0} = b(y)|_{y=1} = a$. However, when $x = 1$, errors occur in half of the cases, and Eve gains no information about $b$, as $b(y)|_{y=0} \ne b(y)|_{y=1}$. The reverse holds for $\bm{p}_{\mathrm{L}}^{3,r}$ and $\bm{p}_{\mathrm{L}}^{4,r}$~\cite{scarani2006secrecy}.

\textbf{Quantitative security}. To compute the key rate, the so-called \textit{depolarization procedure} (individual attacks) transform w.l.o.g. Eve's strategy of Eq.  \eqref{eq:Eve_ombrella} into the isotropic distribution (cf. Appendix C of \cite{scarani2006secrecy})
$$
\bm{p}_{\mathcal{E}} = p_L \bm{1} + p_{\mathrm{NL}} \bm{p}_{\mathrm{NL}}, \quad \text{with } p_j^r = \frac{p_L}{8}.
$$
This leads to the probability distributions in Eq. \eqref{eq:Pabecond} $$p(ab|x=0,y=k)=p(ab|x=1,y=k)=
\sum_e p(abe|X=x,Y=k,z=\bm{p}_\mathcal{E}),$$ and the tripartite distribution:
\begin{equation}\label{eq:Pabecond-iso}
p(abe|0,k,\bm{p}_\mathcal{E})=\,\begin{array}{|c|c|c|c|}
\hline
ab\backslash e &  (?,?) &  (a,?) &  (a,b) \\ \hline
00 & \frac{p_{\mathrm{NL}}}{2} &\frac{p_{\mathrm{L}}}{8} & \frac{p_{\mathrm{L}}}{4} \\ \hline
01 & 0 & \frac{p_{\mathrm{L}}}{8} & 0\\ \hline
10 & 0 & \frac{p_{\mathrm{L}}}{8} & 0 \\ \hline
11 & \frac{p_{\mathrm{NL}}}{2} &\frac{p_{\mathrm{L}}}{8} & \frac{p_{\mathrm{L}}}{4} \\ \hline
\end{array}
\end{equation}
\paragraph{One-way classical postprocessing} --
The tapescript $T(q)$ flow goes from Alice to Bob and from the distribution in \eqref{eq:Pabecond-iso} Bob's error probability is $\epsilon_\mathrm{AB}=p_L/4$, after preprocessing, the quantity to be corrected in error correction is $e_{\mathrm{AB}}^\prime=(1-q)e_{\mathrm{AB}}+q(1-e_{\mathrm{B}})$ while Eve's information is $I(A:E)=p_L/2(1-h(q))$. Therefore Eq. \eqref{devatekwinterformula} yields~\cite{tutorial}

\begin{equation}\label{eq:rCHSH}
    r(D)=\max_{q\in[0,1/2]}\left(1-h(e_{\mathrm{AB}}^\prime)-\frac{p_L}{2}(1-h(q))\right), \mbox{   with }p_\mathrm{NL}=\sqrt{2}(1-2D)-1.
\end{equation} 
The critical disturbance $D$ characterizes the properties of the channel linking Alice and Bob.
$r(p_\mathrm{NL})>0$ is obtained with optimal preprocessing at $p^\prime_\mathrm{NL}\gtrsim 0.236$ ($D\lesssim 6.3\%$) and without preprocessing at $p^{\prime\prime}_\mathrm{NL}\gtrsim 0.318$.

Since $\bm{p}_\mathcal{E}(p_\mathrm{NL})\in \QC\Longleftrightarrow  p_\mathrm{NL}\leq \sqrt{2}-1\simeq0.414$, both $p^\prime_\mathrm{NL},p^{\prime\prime}_\mathrm{NL}\in \QC$ (see Fig.~\ref{fig:rCHSH}).
\begin{definition}\label{def:Bell_limit}[Bell limit]
A family of distributions $ \bm{p}=p(ab|xy)$ reach the Bell limit if leads to a secret key $r>0$ for any amount of nonlocality. 
\end{definition}
\begin{remark}
    In the case of $p_1^0=p_2^0$, $p_1^1=p_2^1$, and $p_3^r=p_4^r=0$, even neglecting preprocessing
    $
        p_\mathrm{NL}>0 \Longrightarrow r=1-\frac{h(p_L/2)}{2}-\frac{p_L}{2}>0.
    $
    Notice that, $\exists \bm{p}$ reaching the Bell limit, despite the fact $\bm{p}\notin \mathcal{Q}$ hence it cannot be broadcasted using quantum preparations (solid orange line in Fig.\ref{fig:rCHSH}). Indeed $\mathcal{L}\subsetneq\mathcal{Q}\subsetneq\mathcal{NS}$ and $\sum_{j,r}p_j^r\bm{p}_\mathrm{L}^{j,r}=0 \Longleftrightarrow \forall p_j^r=0$ (see Ref. \cite{scarani2006secrecy} sec.III.E.3).
    
\end{remark}
    Only in 2024, Refs.~\cite{Farkas2024} show a protocol that for $\bm{p}\in \QC$ a quantum strategy reaches the Bell limit extending the BI scenario beyond CHSH. Exploiting a family of $d$-outcome Bell inequalities and the additivity of the von Neumann entropy, they obtained the analytic relation
$H(A|E)=H(A)$, which gives the key-rate bound
$r \;\ge\; \log d$,
so that $\log d$ bits of secret key can be extracted for every integer $d\ge 3$.  
In other words, \textit{unbounded DI key rates can be certified from correlations exhibiting arbitrarily small non-locality}.  
Wooltorton et al. \cite{Wooltorton2024} obtained the same result by examining a family of three-parameter Bell inequalities \cite{Le2023}, whose maximal quantum violations self-test a unique state and measurement settings. Within this family, there exist nonlocal correlations—across a range of CHSH violations, including those arbitrarily close to the classical bound—that exhibit the remarkable property of producing outputs arbitrarily close to a perfect key for a specific pair of inputs. Additionally, they extended this result by demonstrating the existence of quantum correlations capable of generating both perfect key and perfect randomness simultaneously, while exhibiting arbitrarily small CHSH violations. 

\paragraph{Two-way classical postprocessing: repetition code advantage distillation}\label{par:2ways-c} --
In two-way postprocessing, no optimal procedure or tight bound analogous to Eq. \eqref{devatekwinterformula} is known. The most common method, \textit{Advantage Distillation} (AD) \cite{chau2002,gottesman2003,renner2008security}, especially in the form of repetition-code AD, say that
\begin{equation}\label{eq:AD}
    \exists \tilde{B},\tilde{E} \mbox{ s.t. } I(A:B) < I(B:E) \stackrel{AD}{\Longrightarrow} I(\tilde{A}:\tilde{B}) > I(\tilde{B}:\tilde{E})
\end{equation}
enabling one-way postprocessing on $\tilde{B},\tilde{E}$.
While repetition-code AD is often introduced as a classical two-way step, the so-called B-step from entanglement purification is typically described in terms of bilateral CNOT operations. However, as explained in Sec.~6.1 of~\cite{Ma2008}, these quantum operations can be commuted with the final measurements and reduce to classical parity (XOR) operations on the raw data ~\cite{gottesman2003}. Thus, repetition-code AD and the B-step are operationally equivalent at the level of classical data processing, even though their derivations originate from different viewpoints. Below, we follow the AD terminology for consistency.

In AD, Alice publicly reveals \(N\) instances where her bits are equal, i.e., \(a_{i_1} = a_{i_2} = \dots = a_{i_N} = a\). Bob checks his corresponding bits and announces whether all his bits are also equal. If Bob’s bits are all equal, Alice and Bob keep one of these instances, $(a_{i_k},b_{i_k})$; otherwise, they discard the \(N\) instances. The error rate between Alice and Bob after this process, denoted as \(\tilde{e}_{AB}\), becomes exponentially smaller:
\[
\tilde{e}_{AB} = \frac{e_{AB}^N}{(1 - e_{AB})^N + e_{AB}^N},
\]
where \(e_{AB}\) is the initial error rate between Alice and Bob.
As \(N \to \infty\), \(\tilde{e}_{AB} \to 0\), meaning that Alice and Bob almost always share identical bits after a sufficiently large number of instances.

The probability that Eve makes an error on Bob’s symbols after AD is approximated by
\[
\tilde{e}_E \gtrsim C [f(e_{AB})]^N,
\]
where \(f(\cdot)\) is a function that depends on the probability distribution and \(C\) is a constant. As long as the condition \(f(e_{AB}) > e_{AB}/(1 - e_{AB})\)
is satisfied, Eve’s error increases exponentially with \(N\). There is always a finite value of \(N\) such that Eve’s error becomes greater than Bob’s, ensuring that a secret key can be extracted. 

The bound on the tolerable error after AD is derived by solving this inequality and provides the necessary condition for secrecy extraction. Without preprocessing $p_{\mathrm{NL}}\gtrsim 0.2$; with preprocessing (allowing Alice and Bob to flip some bits before AD) $p_{\mathrm{NL}}\gtrsim 0.09$ \cite{renner2008security}. It remains an open question if in two-way postprocessing, a Bell limit can be reached. Although one might consider that a two-way communication would increase interceptions, overall AD is more noise tolerant (lower \(p_{NL}\)) than one-way post-processing, by iteratively improving correlations and discarding mismatched rounds to reduce error rates. Preprocessing enhances this by scrambling Eve’s knowledge before AD.

\paragraph{Intrinsic information}
Given a tripartite probability distribution \(\tilde{\bm p}=\{p_{ABE|XYZ}(abe|xyz)\}_{a,b,e}\), the intrinsic information \(I_\downarrow=I(A:B \downarrow E)= \min_{E \to \bar{E}} I(A:B|\bar{E})\ge r\) \cite{maurer1999unconditionally}.
This upper bound represents the mutual information between Alice and Bob conditioned on Eve's knowledge. $I_\downarrow <0 \Longrightarrow r=0$ witness the impossibility of secret correlations in $\tilde{\bm p}$. The vice-versa is unknown \cite{Tyagi2013,Mista2015,Khesin2023}.
Furthermore, $\exists \, \tilde{\bm p}$ with a $I_\downarrow > 0$ but $r=0$, indicating the presence of bound information (similarly to bound entanglement). While bound information has been proven in multipartite settings, its existence in bipartite scenarios remains unknown.

For the CHSH protocol, \(I(A:B|E)=p_{NL}\) when Alice and Bob are perfectly correlated (see Fig.~\ref{fig:iCHSH}). In other cases, they are uncorrelated. A conjectured optimal map for minimizing this conditioned mutual information is introduced, supported by numerical evidence giving
\begin{equation}\label{eq:ichsh}
    I_\downarrow=(1-p_L/2)[1-h(\,p_L/(4-2p_L)\,)].
\end{equation}
This conjectured intrinsic information remains positive for \(p_{NL} > 0\), which leads to two possibilities: (i) $r>0\,\forall \tilde{\bm p}\notin \LC$ (ii) In the Bell limit (\(p_{NL}\simeq 0\)) $\tilde{\bm p}$ might represent bipartite bound information.

The CHSH protocol in quantum theory (Alice and Bob share a quantum state of two qubits, agreeing on specific measurements, while Eve holds a purification) is shown to be equivalent to the BB84 protocol with added classical preprocessing, having the same robustness to noise. However, BB84 achieves a higher secret key rate at low error rates but cannot be used for device-independent proofs, as its security becomes compromised if the Hilbert space dimensionality is unknown.

Then how to improve the rate in  DI-QKD? For example, CHSH protocol admits larger-dimensional outcomes generalization using CGLMP inequalities with more extractable secrecy as the outcomes increase (see Ref. \cite{scarani2006secrecy} sec.IV) or via chain BI (Sec. \ref{sec:CHAIN-protocol}).
The former is a tight family of BI, while chain BI, though not tight, can be efficiently implemented in the next protocol (see definition \ref{definition:tightBI} for tight BI). 

\begin{figure}
    \centering
    \begin{subfigure}{.49\textwidth}
        \centering
        \includegraphics[width=\textwidth]{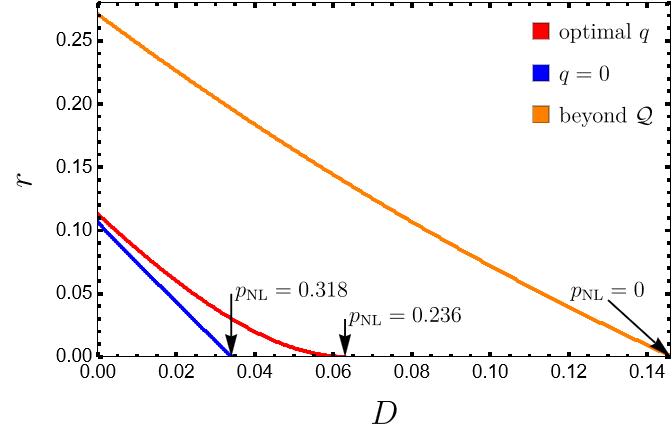}
        \caption{\textit{CHSH protocol}}
        \label{fig:rCHSH}
    \end{subfigure}
    \hfill
    \begin{subfigure}{.47\textwidth}
        \centering
        \includegraphics[width=\textwidth]{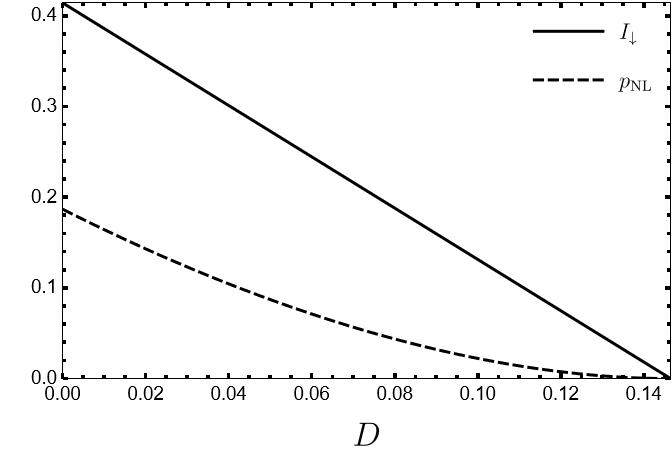}
        \caption{\textit{Intrinsic information in CHSH protocol}}
        \label{fig:iCHSH}
        \end{subfigure}
        \caption{\textit{Key rate versus noise in individual attack} -- In \ref{fig:rCHSH}, key rate for one way PD of Eq.\eqref{eq:rCHSH}, with preprocessing (red); without preprocessing (blue); without preprocessing, but using a distributions compatible with CHSH protocol, but that cannot be
 broadcasted using quantum states. In \ref{fig:iCHSH} Eq.~\eqref{eq:ichsh} is positive $\forall p_{\mathrm{NL}}$.}
    \label{fig:keyrate}
\end{figure}

\subsubsection{Chain Protocol (2006) --  enlarged Bell scenario}\label{sec:CHAIN-protocol}
The previous protocol considered the simplest CHSH scenario with inputs and outputs \(x, y, a, b \in \{0,1\}\), providing a preliminary step towards key-rate estimation. In this section, we begin by revisiting a slightly extended CHSH-based analysis, where Alice’s input \(x\) can take three values, \(x \in \{0,1,2\}\). We then generalize this setting to the more powerful chained Bell inequalities (CHAIN), introduced in~\cite{Acin2006}, which offer improved key rates and greater noise tolerance while remaining secure against individual no-signaling adversaries.

\textbf{(3,2)--CHSH protocol}. \textit{Data generation}. Alice selects $x\in\{0,1,2\}$ with probability $q, (1-q)/2, (1-q)/2$, Bob selects $y\in\{0,1\}$ with probability $q^\prime, 1-q^\prime$. They obtain $a,b\in\{0,1\}$. In QT $M_{A|x}=\{\Pi_{0|x},\Pi_{1|x}\}$ and $N_{B|y}=\{\Pi_{0|y},\Pi_{1|y}\}$:
\begin{equation}
    \Pi_{a|x}=U_{\theta_x}\ket{a}\bra{a}U_{\theta_x}^\dagger,\quad
    U_{\theta}=\mathrm{e}^{-\mathrm{i}\frac{\theta}{2}\sigma_z},\quad
    \theta_{x=(0,1,2)}=\left(\frac{\pi}{4},0,\frac{\pi}{2}\right),\quad
    \theta_{y=(0,1)}=\left(\frac{\pi}{4},-\frac{\pi}{4}\right).
\end{equation}
acting on the Werner state $\rho_W$ of fig. \ref{fig:WernerCorr}.
 with $P_+=\ket{\phi^+}\bra{\phi^+}$.

\textit{PD and raw key validation}.
They publicly announce $x,y$. For $x\in\{1,2\}$ they evaluate $\beta_{\downarrow \QC}(\bm{p})=2-\sqrt{2}p$ and abort if $\beta_{\downarrow \QC}(\bm{p})\ge 1$. When the settings unmatch: $(x,y) = (0,1)$, the outcomes are uncorrelated and they reject the round, instead for $(x,y)=(0,0)$ they quantify the correlation $ \langle C \rangle_\rho\equiv P(a_0 = b_0)-P(a_0 \neq b_0)=\nu$ and extract the key. If they share $\ket{\phi^+} = \frac{1}{\sqrt{2}}(\ket{00} + \ket{11})$ they exploit perfect outcome correlations.

\textbf{Security} In Eve's strategy \cite{Acin2006}, for each Alice's measurement, there might be output: predetermined (D)   $p(a|x)=\delta_{a,\lambda_A(x)}$ or uniformly random (R) $p(a|x)=1/2$. Similarly for Bob.
If $y=0$ is (D), then all the measurements are (D). Eve's strategies can be classified as in Tab. \ref{tab:Evestrategies} into three sets, according to whether $(x,y)=(0,0)$ yields predetermined (D) or uniformly random outcomes (R). For each strategy a bound on the \textit{security check} $\beta_{\downarrow \QC}(\bm{p})$ and \textit{raw key validation} $\langle C \rangle $ is computed:
\begin{equation}\label{eq:rchain}
    r\ge\sqrt{2}\nu-1-h\left(\frac{1+\nu}{2}\right)
    ,\qquad
    r
    \le
    I_{\downarrow}\le   \langle C\rangle_\rho-\beta_{\downarrow \QC}(\bm{p})=(1+\sqrt{2})\nu-2.
\end{equation}
\begin{table}[h]
    \centering
\begin{tabular}{lcccccc}
\hline
&Strategies  &$S_{\bar{*}}$& $\langle C\rangle_\rho$ & $H(A|E)$ & $H(B|E)$& $I(A:B|E)$ \\ 
\hline
$p_1$&(D,D)  &$\ge 1$& $\le 1$ & 0 & 0 & 0\\
$p_2$&(D,R)  &$\ge 0$& 0 & 0 & 1 & 0\\
$p_3$&(R,R)  &$\ge 0$& $\le 1$ & 1 & 1 & 1\\
\hline
\end{tabular}
        \caption{\textit{Eve's extremal strategies for $(x,y)=(0,0)$ with probability $p_i$} (details in \cite{Acin2006}).}
    \label{tab:Evestrategies}
\end{table}
\begin{figure}
    \centering
    \begin{subfigure}{.47\textwidth}
    \includegraphics[width=\linewidth]{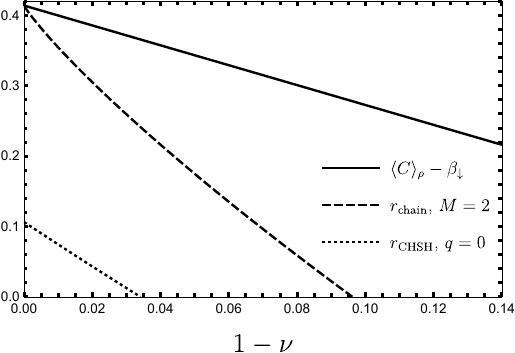}
    \caption{$(3,2)$--CHSH vs.  CHSH protocol rate }
    \label{fig:rchain}
    \end{subfigure}
    \begin{subfigure}{.5\textwidth}
    \includegraphics[width=\linewidth]{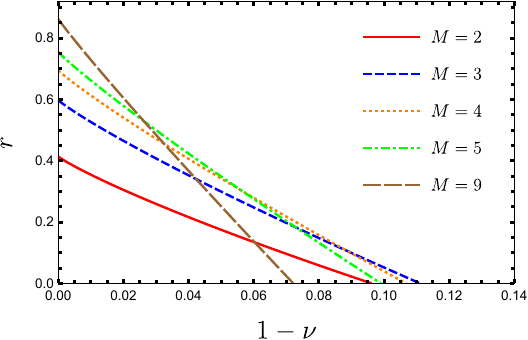}
    \caption{$(M,M+1)$--chain protocol rate}
    \label{fig:Mrchain}
    \end{subfigure}
    \caption{In \ref{fig:rCHSH} the rate of $(3,2)$--CHSH protocol in Eq.\eqref{eq:rchain} vs. the rate of CHSH protocol without preprocessing where $1-\nu$ in $\rho_W$ in \eqref{fig:WernerCorr} in one-way communication protocols  bounded via $\beta_{\downarrow \QC}(\bm{p})$ of Eq. \eqref{eq:CHSHSS}. If Eve saturates the inequalities of Tab. \ref{tab:Evestrategies} the rate is saturated. Without noise $r\lvert_{\nu=1}\ge \sqrt{2}-1\simeq 0.414$ and $r=0$ for $\nu=0.9038$; upper--bounded by intrinsic information $I_{\downarrow}$ using two-way key distillation protocols  \cite{maurer1999unconditionally}
    For $\nu=2/(1+\sqrt{2})\simeq0.8284\Longrightarrow r\le I_\downarrow=0$, however CHSH is violated for $p\ge 0.7071$ always with $I_\downarrow> 0$ \cite{Masanes2006}. In \ref{fig:Mrchain} Eq. \eqref{eq:KN} generalized $M=2$.}
\label{fig:chain}
\end{figure}
\begin{proof}
 1) From Tab. \ref{tab:Evestrategies}, $I(B:E)\le I(A:E)$ then the privacy amplification goes from Bob to Alice, thus $r=I(A:B)-I(B:E)$. The mutual information is $I(A:B)=1-h(\frac{1+\nu}{2})$, where $h$ is the binary entropy and 
\begin{equation}
    I(B:E)=H(B)-\sum_{k=1}^3p_kH_k(B|E)=p_1\le \beta_{\downarrow\QC}(\bm p), \mbox{ where } H(B)=1.
\end{equation} 
This follows from $p_k\ge 0$, $\sum_kp_k=1$ and measured values of $\beta_{\downarrow\QC}(\bm p)$ and $\langle C\rangle$. 

2) The intrinsic information $I_\downarrow$ is upper--bounded, since $I(A:B\downarrow E)\le I(A:B|E)$, thus $I_\downarrow\le\sum_{k=1}^3p_kI_k(A:B|E)=p_3$. Additionally $p_1+p_3\ge \langle C\rangle_\rho$ concludes the proof.
\end{proof}
Whether a key can be extracted from such data, and if it can, what is the best protocol for achieving it, remains an open challenge. 
\paragraph{(M+1,M)-- chain protocol} As the performance depends on the BI, a generalization extends from $(3,2)\to(M+1,M)$ settings
\begin{equation}
    \theta_{x=0}=\frac{\pi}{2M},\quad\theta_{x>0}=\frac{x\pi}{2M},\qquad \theta_{y}=-\frac{\pi(y+1/2)}{M}, \mbox{ for }x,y=1,\dots,M.
\end{equation}
Similarly, for $(x,y)=(0,0)$ the measurement gives highly correlated bit used for the secret key, the other choices are used to violate the chain BI (used in \eqref{chainBI} with a quantum state $\rho_{AB}$ from a post-quantum tripartite $\lambda$) from~\cite{Pearle1970,Braunstein1990,Jogenfors2017} based on Franson interferometer \cite{Franson1989}
where it is known \cite{Aerts1999,Jogenfors2014} that the CHSH is insufficient as a security test. Full security can be reestablished with \cite{Jogenfors2015,Tomasin2017}

\begin{equation}\label{eq:BC2}
    \beta_{\downarrow\QC}^M(\bm{p}) =\sum_{i=1}^M \left(p(a_i\neq b_{i-1})+p(a_i\neq b_{i})\right)=
    \begin{cases}
        2(M-1) & \bm{p}\in \mathcal{L}\\
        M\left(1-\nu\cos\left(\frac{\pi}{2M}\right)\right)& \bm{p}\notin \LC
    \end{cases}
\end{equation}

where $b_M\equiv b_0=1\,\mod 2$.
One-way privacy amplification lower bounds as (see Fig. \ref{fig:Mrchain})
\begin{equation}\label{eq:KN}
    r_M\ge 1-h\left(\frac{1+\nu}{2}\right)-M\left(1-\nu\cos\left(\frac{\pi}{2M}\right)\right).
\end{equation}
A variant with pre-processing from Ref. \cite{Kraus2005} improves the noise-resistance.
For $M$ increase, $r_M\gtrsim 1-\pi^2/8M$ and $\bm{p}=\sum_ic_i\bm p_i$ become maximally non-local with any local component: Eve must always use non-local strategies for which has zero knowledge about Bob's outcome $I(B:E)=0$ (see end of Sec. \ref{sec:nosignalQKD}). 
Each BI provides a different estimation of Eve's knowledge so that for $M$ large, the corresponding protocols are very sensitive to noise, but in the absence of noise, Alice and Bob extract one secret bit per e-bit asymptotically. 
For a post-quantum and quantum individual attacks, the maximal value of the resistance to noise are respectively $\nu=0.86$ and $\nu=0.75$. However, the chain protocol originally did not take into account the detection loophole that would lower such values to certify security (see Sec. \ref{sec:locloop}) nor collective attacks, but nowadays we know the following.

\begin{theorem}
    $\forall p_{A,B,E|X,Y,Z}$ there always exists $\mathrm{hash}$ such that the transformed  distribution $P_{K_A,K_B,\text{BI},T,E|Z}$ via the $(M+1,M)$--chain protocol is universal composable secure \cite{Masanes2014}.
\end{theorem}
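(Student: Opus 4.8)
\emph{Proof strategy.} The plan is to reduce the claim to a single trace-distance bound and then split that bound into a correctness part and a secrecy part, exactly matching the criterion displayed just before the theorem. By the composability theorem underlying the UC framework, it suffices to exhibit a simulator — for QKD the canonical one outputs a fresh uniform string $k_A=k_B$ of the appropriate length and replays the public transcript $T$ generated by the real interaction — and to prove that the resulting real and ideal joint distributions satisfy $\sum_{K_A,K_B,t}\max_z\sum_e|p^{\text{real}}_{K_AK_Bte|z}-p^{\text{ideal}}_{K_AK_Bte|z}|=O(1/N)$. A triangle inequality separates this into $\epsilon_{\mathrm{cor}}$, the probability that $k_A\neq k_B$ while the protocol does not abort, and $\epsilon_{\mathrm{sec}}$, the distance between the genuine key--adversary state conditioned on non-abort and the ideal state in which the key is uniform and product with $(T,E)$; I would choose $\mathrm{hash}$, i.e. the error-correction map, the raw-key-validation hash and the privacy-amplification extractor of Box~1, so that both terms are $O(1/N)$.

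\emph{Correctness.} First I would bound $\epsilon_{\mathrm{cor}}$. After exchanging the error-correction string Bob holds $\kappa_B$; the raw-key-validation step applies a two-universal hash producing a $\lceil-\log\varepsilon_{\mathrm{KV}}\rceil$-bit digest, so by two-universality the chance that $\kappa_A\neq\kappa_B$ yet the digests agree is at most $\varepsilon_{\mathrm{KV}}$. Taking $\varepsilon_{\mathrm{KV}}=O(1/N)$ gives $\epsilon_{\mathrm{cor}}=O(1/N)$. This step is purely classical and does not interact with the no-signalling structure.

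\emph{Secrecy --- the core.} Here the CHAIN-M Bell test does the work. The protocol proceeds only if the empirical value of the chained estimator $t^{\rho}(\bm p)$ of \eqref{eq:BC2} (equivalently the quantity $t_j^{\lambda}$ of \eqref{chainBI}) lies below the local threshold $2(M-1)$ by a fixed margin. Because all $n$ measurements are performed before any basis is revealed and Eve's box is time-independent (Section~\ref{sec:nosignalQKD}), the test rounds behave like an i.i.d.\ sample and a Hoeffding-type concentration bound shows that, conditioned on non-abort, with probability $1-O(1/N)$ the ``true'' block Bell value is within $O(1/N)$ of the threshold. I would then invoke the single-round no-signalling monogamy bound --- the analogue of Lemma~\ref{lemma05} and of the theorem bounding $t_s$ in Section~\ref{sec:nosignalQKD}, now applied with the CHAIN-M inequality --- which converts the observed violation into a guessing-probability bound $p_{\mathrm g}(\kappa_{B_i}\mid E)\le \tfrac12+O(1/N)$ on each key round. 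Aggregating these per-round bounds into a smooth-min-entropy lower bound for the whole raw key $\kappa_B$ given $(E,T)$ --- using the causal order of the protocol to forbid Eve from adaptively steering earlier boxes with later announcements --- and then applying the privacy-amplification extractor $\mathrm{EXT}$ of Box~1 with output length $\ell$ slightly below that bound, the no-signalling leftover-hash lemma of \cite{Masanes2014} yields $\epsilon_{\mathrm{sec}}=O(1/N)$. Summing with $\epsilon_{\mathrm{cor}}$ gives the asserted $O(1/N)$, and unwinding the composability theorem turns it into UC security.

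\emph{Main obstacle.} The delicate point is privacy amplification against an \emph{unrestricted} no-signalling Eve. The ordinary --- and even the quantum-proof --- leftover hash lemma does not apply: there is no de Finetti or post-selection theorem for no-signalling systems, so one cannot reduce a general $n$-round attack to a product of single-round ones and must argue directly on the full $n$-partite no-signalling box; moreover the natural entropic quantity, the no-signalling guessing probability, fails the usual chain rules, so passing from the per-round Bell bound to a min-entropy bound on the whole key requires the dedicated argument of \cite{Masanes2014}, which exploits precisely the ordering of measurements and announcements to stop Eve from using her freedom of later measurement to correlate with earlier outcomes. Tracking the interplay between this no-signalling min-entropy and the small leakage from error correction and raw-key validation is the other subtlety. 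Everything else --- the composability reduction, the correctness estimate, the Hoeffding bound for the Bell test, and the final choice of parameters ($M$ fixed, $N\to\infty$, $n=\mathrm{poly}(N)$, thresholds at distance $O(1/N)$ from the local bound, $\ell$ a constant fraction of the number of key rounds) --- is routine bookkeeping once the single-round bound and the no-signalling leftover-hash lemma are in place.
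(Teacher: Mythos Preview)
The paper does not supply its own proof of this theorem: it is stated with a bare citation to \cite{Masanes2014}, followed only by the caveat that ``the protocol must assume memoryless devices, weak noise tolerance and key production.'' So there is nothing to compare your argument against line by line; what can be said is whether your outline matches the structure of the cited result and whether it respects the assumptions the paper flags.

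Your high-level decomposition (composability $\Rightarrow$ trace-distance bound $\Rightarrow$ correctness $+$ secrecy, with the chained Bell violation feeding a guessing-probability bound which in turn feeds privacy amplification) is the right skeleton, and you correctly single out no-signalling privacy amplification as the crux. Two points deserve sharpening. First, the memoryless-device / commuting-measurements hypothesis that the paper highlights immediately after the theorem is not optional: your sentence ``the test rounds behave like an i.i.d.\ sample'' is only justified once Alice's and Bob's $N$ uses of their boxes commute as in Eq.~\eqref{commutingmeasurements}. Without that assumption neither the Hoeffding step for parameter estimation nor the aggregation of single-round guessing bounds goes through, and indeed the result of \cite{Masanes2014} is proved \emph{under} that assumption; you should state it explicitly rather than leave it implicit in ``Eve's box is time-independent.'' Second, the phrase ``smooth-min-entropy lower bound'' is quantum vocabulary that does not transplant cleanly to a purely no-signalling adversary: the argument in \cite{Masanes2014} does not invoke the quantum leftover-hash lemma or $H_{\min}^{\varepsilon}$, but instead bounds directly the variational distance between Eve's conditional distributions and proves a tailored extraction lemma for two-universal hashing against non-signalling side information. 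Your ``main obstacle'' paragraph already anticipates this, so it is more a matter of using the right language than of missing an idea.
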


\subsection{CHSH\textsubscript{$\chi$} protocol secured by Holevo quantity} \label{DIQKD-collective}
Instead of extending the Bell scenario, to improve the key rate, a tighter bound that relates $\beta_\QC$ with Holevo quantity $\chi$, itself related to $p_g(B|E)$ can be derived against \textit{quantum} collective attacks~\cite{acin2007device,pabgs09andSangouard.4}. Specifically, given $n$ the number of instances, the state prepared by Eve is the same at each instance $\ket{\Psi}=\ket{\psi}^{\otimes n}\in \mathcal{H}_{ABE}$ and Alice and Bob receive $\rho_{\mathrm{AB}}=\mathrm{Tr}_E(\ket{\Psi}\bra{\Psi})=\sum_cp_c\rho_{\mathrm{AB}}^c$ without unknowing $\dim \mathcal{H}_{AB}$ for Alice and Bob similarly to Eqs. \eqref{eq:individual_attack}-\eqref{eq:AB-behavior} for individual attacks. Because of \textit{Jordan's lemma}~\ref{Jordanlemma}  (any pair of binary measurements can be decomposed as the direct sum of pairs of measurements acting on two-dimensional spaces) $\rho_{\mathrm{AB}}^c\in \mathcal{B}(\mathbb{C}^2\otimes\mathbb{C}^2)$ without restrictions. At each $\rho_{\mathrm{AB}}^c$, the classical ancilla $c$ known to Eve determines which measurements Alice and Bob choose. 

\textbf{CHSH\textsubscript{$\chi$} protocol}. \textit{Data generation}. 
The same as CHSH protocol (Sec. \ref{sec:CHSHprotocol}), but a variant allows Bob's choices $y\in\{0,1\}$ (or $y\in\{0,1,2\}$ as in E91 \ref{sec:E91} and \cite{Slater2014})
so that $M_{A|0} = N_{B|1} =\sigma_z$, $ N_{B|2} = \sigma_x$, $M_{A|x} = (\sigma_z \pm \sigma_x)/\sqrt{2}$ for $x=1,2$; the raw key is extracted from the pair $(x,y)=(0,1)$, and  $\mathrm{QBER}=\mathrm{Q}=p(a\neq b|01)$ (from Eq. \eqref{eq:QBER}).

\textbf{Security 1 way PD}. As detailed in \cite{acin2007device,pabgs09andSangouard.4} (see Sec.\ref{sec:analyticalBounds}) Eq.~\eqref{devatekwinterformula} valid for in the asymptotic $n\to \infty$ one-way public discussion (PD) gives,
\begin{equation}\label{eq:2007device}
    r_{\mathrm{CHSH}}=1-h(Q)-h\left[\frac12 +\frac12\sqrt{\left(\frac{\beta_\QC }{2}\right)^2-1}\right]
    \le r_{\mathrm{BB84}}=1-h(Q)-h\left(\frac{\beta_\QC }{2 \sqrt{2}}+Q\right),\quad \beta_\QC=2\sqrt{2}(1-2Q).
\end{equation}
\begin{figure}
    \centering
    \begin{subfigure}{.49\textwidth}
        \centering
        \includegraphics[width=\textwidth]{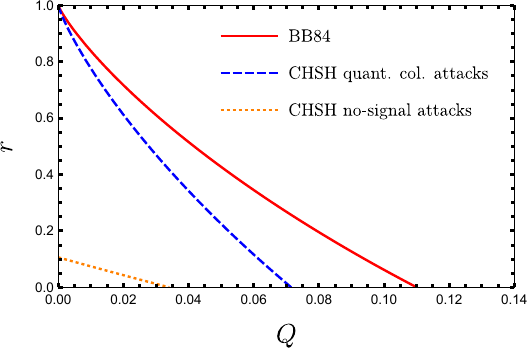}
        \caption{\textit{CHSH protocol improvement}}
        \label{fig:2007devicea}
    \end{subfigure}
    \hfill
    \begin{subfigure}{.49\textwidth}
        \centering
        \includegraphics[width=\textwidth]{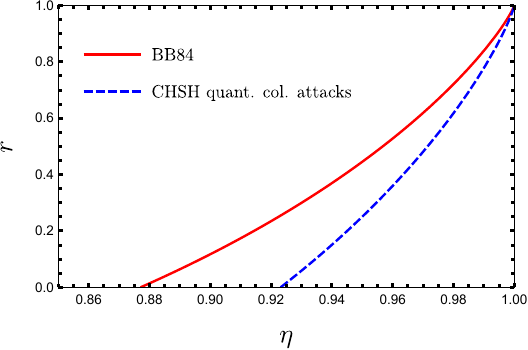}
        \caption{\textit{CHSH protocol vs detection efficiency}}
        \label{fig:2007deviceb}
        \end{subfigure}
        \caption{In \ref{fig:2007devicea} \textit{Key rate versus noise}-- The noise in orange dotted line is $Q=D$ as in Eq.\eqref{eq:rCHSH} obtained for the CHSH protocol under no-signaling individual attacks. Within quantum collective attacks, $r_\mathrm{CHSH}$  is enhanced in Eq.\eqref{eq:2007device} (dashed blue) and compared with standard QKD BB84 (solid red). In \ref{fig:2007deviceb} $Q=\eta(1-\eta)$ and $\beta_\QC=2\sqrt{2}\eta^2+(1-\eta)^2$ following the discussion below Eq.~\eqref{eq:Sphat}.}
    \label{fig:Acin2007}
\end{figure}
Eq.~\eqref{eq:2007device} is obtained as $r=I(a:b|x=0,y=1)-\max\{I(a:E|x=0),I(b:E|y=1)\}$ where $I(a:b|x=0,y=1)=1-h(Q)$ is computed assuming uniform marginals and $\max\{I(a:E|x=0),I(b:E|y=1)\}=\chi(b:E|y=1)$ (PD from Alice to Bob)
can be tightly upper bounded by the binary entropy $h$
\begin{equation}
    \chi(b:E|y=1)\equiv H(\rho_E)-\frac12\sum_{b=0}^1H(\rho_{E|b})\le h\left[\frac12+\frac12\sqrt{\left(\frac{\beta_\QC}{2}\right)^2-1}\right]\equiv \chi_0(\beta_\QC).
\end{equation}
The Holevo quantity upper bounds the amount of classical information (about the variable $b$ (when $y=1$) that can be extracted by any measurement on the quantum state $\rho_{E|b}$. In other words, if you encode classical information $b$ in quantum states $\rho_{E|b}$, then the accessible information (mutual information between $b$ and the outcome of any measurement $E$ is at most $\chi(b:E|y=1)$. The expression is valid for $\beta_\QC>2$\footnote{Note that in Sec.~\ref{sec:CHAIN-protocol} $\beta_{\downarrow\QC}=2-\sqrt2p$, then $\beta_QC=2\sqrt2p$, then $Q=(1-p)/2$.}. In short, $r_\mathrm{CHSH}$ of Eq.~\eqref{eq:2007device} reads as
\begin{equation}\label{eq:2007device2}
r=1-h(\mathrm{Q})-\chi_0(\beta_\QC).
\end{equation}

\textit{Eve's guessing measurement $p_G^M$}. If Eve has some probability $p_g^M$ of making a correct guess on the choice of measurement settings (a flag $f=1$) so that she fixes a priori the players' outcomes engineering
\begin{equation}
    \beta_\QC^\prime(\bm p)=4p_g^M+(1-p_g^M)\beta_\QC(\hat{\bm p}),
     \text{ to mimic  }\beta_\QC^\prime(\bm p)\in[2,2\sqrt2]:\,p_g^M\in[0,\sqrt{2}-1],
\end{equation}
(otherwise her guess is uncorrelated, $f=0$), then Eq. \eqref{eq:2007device2} becomes~\cite{Kofler2006},
\begin{equation}\label{DevetakWinter}
		r=1-h(Q)-p_g^M-(1-p_g^M)\,\chi_0(\beta_\QC^\prime)
\end{equation}
To take into account the detection loophole, the key rate in Eqs \eqref{DevetakWinter} must be computed on $\hat{\bm{p}}$ rather than the ideal $\bm{p}$ (with $\eta_A=\eta_B=1$) as discussed in Eq. \eqref{eq:phat} of Sec. \ref{subsec:detection-loophole}, therefore for $\eta_A=\eta_B=\eta$, $Q=\eta(1-\eta)$ and $\beta(\hat{\bm{p}})=2\sqrt{2}\eta^2+2(1-\eta)^2$ ( $\langle M_{A|0} \rangle=\langle N_{B|0} \rangle=0$ in Eq. \eqref{eq:Sphat})
. 

Fig~\ref{fig:rate_vs_eta-q} shows $r\to0 \Longrightarrow \eta>0.924$ at $p_g^M=0$. 
\begin{figure}
    \centering
    \begin{subfigure}{.49\textwidth}
        \centering
        \includegraphics[width=\textwidth]{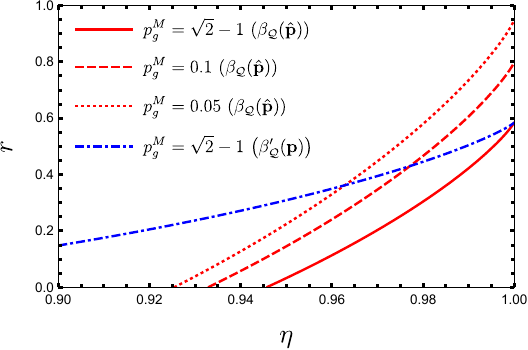}
        \caption{\textit{detection efficiency vs. fake BI violation}}
        \label{fig:rate_vs_eta-q}
    \end{subfigure}
    \hfill
    \begin{subfigure}{.49\textwidth}
        \centering
        \includegraphics[width=\textwidth]{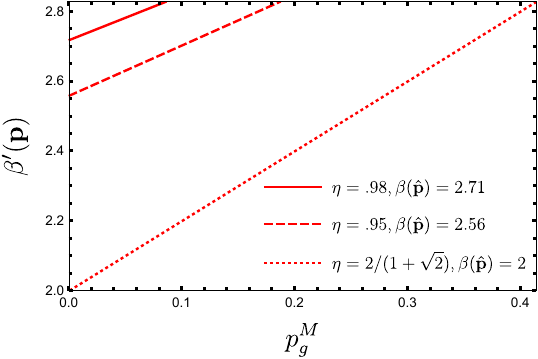}
        \caption{\textit{fake BI vs. Eve's measurement guessing probability}}
        \label{fig:rate_vs_eta-q2}
        \end{subfigure}
        \caption{In \ref{fig:rate_vs_eta-q} -- Eve's guessing probability $p_g$ open the detection loophole (dash-dotted blue line) where $\beta_\QC(\hat{\bm p})=2$, but $\beta_\QC^\prime(\bm p)\in(2,2\sqrt2]$. To close this loophole for large values of $p_g$ a higher critical detection efficiency $\eta^*$ is required. In \ref{fig:rate_vs_eta-q2} higher is $\beta_\QC(\hat{\bm p})$ less manoeuvre is left to Eve to mimic BI violation by $\beta_\QC^\prime(\bm p)$.}
    \label{fig:Acin2007a}
\end{figure}
A variant of CHSH$_\chi$ protocol that
instead of $M_{A|0}=\{M_{a|0}\}_{a=0}^1$ assume $M_{A|0}=\{M_{a|0}\}_{a=0}^3$ reduces $\eta^*=0.909$ \cite{Ma2012a}. 
Ref. \cite{McKague2009} shows an analogue of Eq.\eqref{eq:2007device} for coherent attacks with memoryless measurement devices.
\paragraph{CHSH$_\chi$ protocol vs. BB84}
For the entanglement-based BB84, $r_{\mathrm{BB84}}$ in Eq.~\eqref{eq:2007device} has $\chi^\prime(b:E|y=1) \leq h(\beta_\QC/(2\sqrt{2})+\mathrm{Q})$ with $\beta_\QC$ computed on $\rho_{\mathrm{AB}}=\sum_{j=\pm } (1+jC)/2\ket{\Phi_j}\bra{\Phi_j}$, 
where \( C = \sqrt{(\beta_\QC/2)^2 - 1} \) is the concurrence that maximize $\beta_\QC$ at a given amount of entanglement. The measurements defined by Eve are \(\sigma_z,\sigma_x \), fr $y=1,2$, and \(\frac{1}{1+C^2}\sigma_z \pm \frac{C}{1+C^2}\sigma_x \) for $x=1,2$. Under collective attack, with this realization (state and settings), CHSH$_\chi$ protocol saturates Holevo bound and BB84 turns out \textit{unsafe}. Indeed, not only \( \rho_{AB} \), but also guessing Alice’s measurements depend explicitly on the observed values \( \beta_\QC \) and \( Q \).
\paragraph{CV - CHSH$_\chi$ protocol} 
The encoding of CHSH$_\chi$ protocol (see Sec. \ref{DIQKD-collective} and \cite{acin2007device}) is possible with continuous variable (CV) \cite{Marshall2014}. Typically, CV adopts Gaussian states that alone cannot violate a BI \cite{Paternostro2009}. As a result, non-Gaussian states or measurements are necessary, though they are harder to produce experimentally. This poses challenges for developing a CV-based version of DI-QKD, as most current CV-QKD protocols rely on Gaussian states. 
This challenge, however, can be addressed by utilizing, for instance, a single mode of the electromagnetic field as the harmonic oscillator
by \textit{GKP encoding} (Gottesman, Kitaev, Preskill): embedding a two-level Hilbert space into the full infinite-dimensional space \cite{Gottesman2001}. A qubit is encoded in the infinite-dimensional space of an oscillator, allowing protection against arbitrary but small shifts in the canonical variables such as position and momentum. 

However, this result is possible in 2-input and 2-output scenarios thanks to the \textit{qubit reduction argument} (Jordan's lemma).
Do similar results exist for more complex scenarios for $m$ measurements of $d$
 outcomes \cite{Navascues2008}? Along this line, some progress for $n=2, m\to \infty$ in \cite{Vertesi2009,Briet2009} conjectured in \cite{Brunner2008a}.

We analyzed CHSH protocol under the no-signaling attack and recalled it CHSH$_\chi$ under quantum collective attacks by related the key rate to Holevo quantity $\chi$. In the next section, instead, the key rate is obtained via $H_\mathrm{min}=-\log_2p_g$ as we discussed above, and we recall the protocol CHSH$_{p_g}$.
\subsection{CHSH$_{p_g}$ protocol - bounding the attacks with $H_{\min}$ assuming independent measurement devices}\label{CHSHp}
It is remarkable to highlight the issues due to memory effects in the data generation~\cite{barrett2013memory}.
\textbf{CHSH$_{p_g}$ protocol}. \textit{Data generation}.  Alice and Bob select $x,y\in\{1,\dots,M\}$
to generate a raw key $K_A=K_B\in\{0,1\}^n$ with $n=M$ via separate pair of commuting measurements \cite{masanes2011secure,hanggi2010device} 
	\begin{equation}\label{commutingmeasurements}
		[M_{a|x},N_{b|y}]=0,\qquad [M_{a|x},M_{a'|x'}]=[N_{b|y},N_{b'|y'}]=0,\quad x,y=1,\dots,M.
	\end{equation}
The commutation relations can be satisfied by either using the devices in parallel where the $n$ bits of the raw key are generated by $n$ separate and non-interacting devices or using in a sequential way in which the raw key is generated by repeatedly performing measurements provided that the functioning of the devices do not depend on any internal memory storing the quantum states and measurement results obtained in each round so that $\bm{p}$ has only one entry
 \begin{equation}
p(a_1,\dots,a_N,b_1,\dots,b_N|x_1,\dots,x_N,y_1,\dots,y_N)=\textrm{tr}\left(\rho_{AB}\prod_{i=1}^{N}M_{a_i|x_i}N_{b_i|y_i}\right), 
 \end{equation} 
where \(\rho_{AB} = \textrm{tr}_E(\rho_{ABE})\).  After Alice's \(n\) systems have been measured, $\rho_{AE} = \sum_{\boldsymbol{a}} p(\boldsymbol{a}|\boldsymbol{x}_{raw}) \ket{\boldsymbol{a}} \bra{\boldsymbol{a}} \otimes \rho_{E|a}$ where \(\rho_{E|a}\) is the reduced state of Eve conditioned on Alice measuring outcome \(\boldsymbol{a}\)\footnote{In a parallel execution the devices receive all inputs $\bm{x}, \bm{y}$ simultaneously, and return the outputs $\bm a , \bm b$ at once. The security proof is equivalent to the sequential execution \cite{Jain2020}, but experimentally more challenging as higher-dimensional entanglement is required.
}
In the protocol Alice always uses a particular measurement $x_{\text{raw}}$ (call it $a_0$) to produce the sifted key bit.

\textbf{Security 1 way PD}~\cite{masanes2011secure,hanggi2010device}. After PD Eve possesses some (possibly huge-dimensional) quantum system $E$.
Whatever coherent processing she performs, she must end with a single classical guess $\hat a\in\{0,1\}$ for Alice’s bit $a_0$.
In each round Eve wants to maximize her success probability $p_{\text{succ}}=p(\hat a=a_0\,|\,x_{\text{raw}})$. The best she can do is always output the most likely value of $a_0$. Therefore the optimal success probability is
$
    p_g=\max_{a\in\{0,1\}} P(a\,|\,x_{\text{raw}}),
$  (For a binary outcome variable this coincides with the operationally meaningful smooth min-entropy $H_{\min}(A|E)=-\log_2p_g$ as expressed in Eq.~\eqref{H=-logP}). Because the two outcomes are encoded as $\pm1$, $p_g$ can be rewritten as $p_g=\bigl(1+\lvert\langle a_0\rangle\rvert\bigr)/2$. Again, via Jordan's lemma it is enough to consider $\ket{\psi0}=\cos\theta\ket{00}+\sin\theta\ket{11}$ where $\theta\in[0,\frac{\pi}{4}]$ determines the amount of the entanglement. Then, $\lvert\langle a_0\rangle\rvert=\cos2\theta\equiv\alpha\Longrightarrow p_g=(1+|\cos2\theta|)/2$ with $M_{A|0}=\sigma_z$ and $M_{A|1}=\sigma_x$. It is known (and easy to verify by direct optimization) that Bob should measure $N_{B|y}=\cos\mu\,\sigma_z+(-1)^y\sin\mu \,\sigma_x$ (inverting the role of the players in Eq.~\eqref{eq:realization2}), with $\mu$ chosen to maximize the CHSH value.
A straightforward calculation gives~\cite{Horodecki1995}

\begin{equation}
    \beta_\QC(\alpha,\mu)
  = 2\bigl(\cos 2\theta\cos\mu + \sin 2\theta\sin\mu\bigr)
  = 2\bigl(\alpha\cos\mu + \sqrt{1-\alpha^{2}}\,\sin\mu\bigr).
\end{equation}
Maximizing the right-hand side over $\mu$ yields
\begin{equation}
    \beta_\QC(\alpha)
  = 2\sqrt{2-\alpha^{2}} \Longrightarrow  p_g
    \;=\;
    \frac12\bigl(1+|\alpha|\bigr)
    \;\le\;
    \frac12\Bigl(1+\sqrt{\,2-\tfrac{\beta_\QC^{2}}4}\Bigr)
    \;=\;
    f(\beta_\QC).
\end{equation}
For $n$ independent rounds (the “no memory” or “causal-independence” assumption), $p_g$ translates into a smooth min-entropy $H_{\min}(A|E)=-\log_2 p_g^n=-n\log_2 f(\beta_\QC)$.
The amount of public communication needed to reconcile Bob with Alice is
$n\,H(A|B)=n\,h(Q)$. Privacy amplification therefore yields the asymptotic key rate
\begin{equation}
    r\;\ge\;\frac{1}{n}\Bigl[H_{\min}(A|E)-n\,h(Q)\Bigr]\;=\;-\log_2 f(\beta_\QC)-h(Q)=1-h(Q)\;-\log_2\!\Bigl[1+\sqrt{2-\tfrac{\beta_\QC^{2}}4}\Bigr].
\end{equation}
Last step uses $h\left(\frac12(1+x)\right)=1-\log_2(1+x)$. For the usual depolarized noise ($\beta_\QC =2\sqrt2\,\nu,\;Q=(1-\nu)/2$),
\begin{equation}
    r(\nu)\;\ge\;1-h\!\Bigl(\tfrac{1-\nu}{2}\Bigr)
          -\log_{2}\!\Bigl[1+\sqrt{2(1-\nu^{2})}\Bigr]\;,
\end{equation}
this rate vanishes at
$Q\simeq5\%$ (visibility $\nu \simeq0.95$).
\begin{table}[h!]
\centering
\renewcommand{\arraystretch}{1.3}
\begin{tabularx}{\textwidth}{@{}lX X@{}}
\toprule
\textbf{Aspect} &
\textbf{CHSH$_{p_g}$}~\cite{masanes2011secure,hanggi2010device} &
\textbf{CHSH$_\chi$}~\cite{pabgs09andSangouard.4} \\
\midrule
\textbf{Attack} &
Arbitrary (coherent) attacks on \emph{any} dimension, only causally independent uses assumed &
Collective IID attacks only \\
\textbf{Security metric} &
Smooth min-entropy $\Rightarrow$ \emph{explicit} upper bound on Eve’s success probability, implies universally composable secrecy &
Holevo information $\chi(b\!:\!E|y=1)$ $\Rightarrow$ bound on \emph{average} quantum information; extra steps needed for composability \\
\textbf{Technique} &
Direct operator inequality gives $p_g \le f(\beta_\QC)$ that is concave, tight, and easily generalized to other Bell inequalities &
Two-qubit reduction, optimization over eigenvalues and entropic inequalities \\
\textbf{Generality} &
Same machinery works for any Bell inequality once $f(\beta_\QC)$ is computed (via SDP) &
Tailored to CHSH; extending to other inequalities is non-trivial \\
\textbf{Noise threshold} &
More conservative (5\% QBER) because the proof holds for stronger adversaries &
Higher tolerance (7.1\% QBER) but under weaker assumptions \\
\textbf{Conceptual clarity} &
Security expressed in \emph{operational} terms: “Eve cannot guess the raw bit better than $f(\beta_\QC)$” &
Security expressed via $\chi$, which is less intuitive and only bounds Eve’s \emph{knowledge in principle} \\
\bottomrule
\end{tabularx}
\caption{Why the guessing-probability proof improves on the Holevo-$\chi$ approach}
\end{table}

\textbf{Security 2-w PD}~\cite{tan2020advantage}. \emph{advantage distillation} (AD, see Eq.~\eqref{eq:AD}), improves the tolerable imperfections of quantum devices. The security proof is fully composable and phrased in the \emph{min-entropy} framework: for every set of observed correlators, a level-two NPA semidefinite programme is used to upper-bound the trace distance \( d(\rho_{E|00},\rho_{E|11}) = \frac{1}{2}\lVert \rho_{E|00} - \rho_{E|11} \rVert_1 \) between Eve’s conditional states when Alice and Bob both obtain outcome \(0\) or \(1\). A sufficient condition for security is then provided by Corollary~1 in \cite{tan2020advantage}, which states that \( 1 - d > \varepsilon/(1 - \varepsilon) \), where \( \varepsilon \) is the bit-error rate after advantage distillation. This replaces earlier security analyses based on the Holevo quantity \( \chi \), and yields direct bounds on the smooth min-entropy \( H_{\min}(A|E) \), which governs universally composable key extraction.
The authors analyze two noise models: \textit{(i)} depolarizing noise, parameterized by \( q \in [0, 1/2] \), where the observed distribution is \( \bm{p} = (1 - 2q)\bm{p}_T + q/2 \) for an ideal target distribution \( \bm{p}_T \); and \textit{(ii)} detection inefficiency, where each outcome is independently flipped by a \( \sigma_z \) channel with probability \( 1 - \eta \), for efficiency \( \eta \in [0,1] \). In both cases, advantage distillation raises the noise tolerance and lowers the detection threshold (see Tab.\ref{tab:tan2020}).
\begin{table}[h!]
\centering
\renewcommand{\arraystretch}{1.4}
\small
\begin{tabular}{>{\raggedright}p{4.7cm} >{\raggedright}p{7.3cm} >{\centering\arraybackslash}p{1.5cm} >{\centering\arraybackslash}p{1.5cm}}
\toprule
\textbf{Description of $\bm{p}_T$} & \textbf{State and measurements for $\bm{p}_T$} & $q_t$ & $\eta_t$ \\
\midrule
(i) Achieves maximal CHSH value with the measurements $A_0$, $A_1$, $B_1$, $B_2$. & 
$A_0 = B_0 = \sigma_z$, $A_1 = \sigma_x$, $B_1 = (\sigma_x + \sigma_z)/\sqrt{2}$, $B_2 = (\sigma_x - \sigma_z)/\sqrt{2}$. & 6.0\% & 93.7\% \\
\addlinespace
(ii) Modification of a distribution exhibiting the Hardy paradox for improved robustness against limited detection efficiency. & 
$\lvert \psi \rangle = \sqrt{\kappa}(|01\rangle + |10\rangle) + \sqrt{1 - 2\kappa}|11\rangle$, with $\kappa = (3 - \sqrt{5})/2$; the 0 outcomes correspond to projectors onto:
\[
\begin{aligned}
|a_0\rangle &= |b_0\rangle \propto (\sqrt{1 + 2\kappa} - \sqrt{1 - 2\kappa})|0\rangle + 2\sqrt{\kappa}|1\rangle, \\
|a_1\rangle &= |b_1\rangle \approx 0.37972|0\rangle + 0.92510|1\rangle, \\
|a_2\rangle &= |b_2\rangle \approx 0.90821|0\rangle + 0.41851|1\rangle.
\end{aligned}
\]
& 3.2\% & 92.0\% \\
\addlinespace
(iii) Includes the Mayers–Yao self-test and the CHSH measurements. & 
$A_0 = B_0 = \sigma_z$, $A_1 = B_1 = (\sigma_x + \sigma_z)/\sqrt{2}$, $A_2 = B_2 = \sigma_x$, $A_3 = B_3 = (\sigma_x - \sigma_z)/\sqrt{2}$. & 6.8\% & 92.7\% \\
\addlinespace
(iv) Achieves maximal CHSH value with the measurements $A_0$, $A_1$, $B_0$, $B_1$. & 
$A_0 = \sigma_z$, $A_1 = \sigma_x$, $B_0 = (\sigma_x + \sigma_z)/\sqrt{2}$, $B_1 = (\sigma_x - \sigma_z)/\sqrt{2}$. & 7.7\% & 91.7\% \\
\addlinespace
(v) Similar to (iv), but with measurements optimized for robustness against depolarizing noise. & 
Measurements are in the $x$-$z$ plane at angles:
\[
\theta_{A_0} = 0.42, \, \theta_{A_1} = 1.79, \,
\theta_{B_0} = 0.86, \, \theta_{B_1} = 2.63.
\]
& 9.1\% & 90.0\% \\
\addlinespace
(vi) Similar to (iv), but states and measurements maximizing CHSH for each value of detection efficiency $\eta$. & 
$\lvert \psi \rangle = \cos\Omega |00\rangle + \sin\Omega |11\rangle$, with $\Omega = 0.6224$;  
0 outcomes correspond to projectors onto states of the form:
\(
\cos(\theta/2)|0\rangle + \sin(\theta/2)|1\rangle,
\)
with
$
\theta_{A_0} = -\theta_{B_0} = -0.36, \,
\theta_{A_1} = -\theta_{B_1} = 1.15.
$
& 7.3\% & 89.1\% \\
\bottomrule
\end{tabular}
\caption{Noise thresholds for advantage distillation in various DI-QKD scenarios. $\bm{p}_T$ is the ideal target distribution, $q_t$ is the tolerated depolarizing noise, and $\eta_t$ is the minimum detection efficiency under which a positive key rate is provable~\cite{tan2020advantage}.}
\label{tab:tan2020}
\end{table}
Because the SDP formulation depends only on observed Bell correlators, the method is general and adaptable to other inequalities and more advanced Bell tests.

\subsection{CHSH$_T$ -- Noisy preprocessing}\label{subsec:CHSHT}
\noindent Noisy preprocessing denotes the deliberate random flipping of raw key bits prior to information reconciliation and privacy amplification. When appropriately tuned, this strategy reshapes the trade-off between Eve’s information and the observed error rate, thereby increasing the tolerable noise level for a given Bell violation and improving finite-size key rates. In the following, we focus on the CHSH$_T$-based realization, which provides a concrete and widely studied implementation of this idea.

\textbf{Protocol}. \textit{Data generation}. If Bob applies the preprocessing $T$ of Eq.\eqref{devatekwinterformula} (see \cite{Kraus2005,Acin2006})
generating a new raw key $T(K_B)$ by flipping each bit independently with probability $q$ before the post-processing~\cite{Ho2020}. 

\textbf{Security}. The secret key rate is derived using a min-entropy uncertainty relation that bounds Eve’s knowledge about Bob’s (post-processed) raw key $\widehat{B}_1$ in terms of the observed CHSH violation $\beta_\QC$. The rate takes the form: $r \geq H_{\min}(\widehat{B}_1 | E) - H(\widehat{B}_1 | A_0)$,
where the first term is bounded via a refined uncertainty relation, and the second is the conditional Shannon entropy (the error correction cost).
The bound on $H_{\min}(\widehat{B}_1 | E)$ is obtained using the \textit{entropy accumulation theorem (EAT)}, and incorporates the effect of Bob’s deliberate bit-flip preprocessing (probability $q$) through the function $I_p(S)$. This function is not a Holevo quantity, but rather a derived bound on min-entropy that depends concavely on the CHSH violation $\beta_\QC$ and reads~\cite{Ho2020}
\begin{equation}
    I_p(\beta_\QC) =\chi_0(\beta_\QC) - h\left(\frac{1+\sqrt{1-q(1-q)(8-\beta_\QC^2)}}{2}\right),\quad (p_g^M=0).
\end{equation}
so that the asymptotic rate becomes 
\begin{equation}
    r\ge 1-h(Q) -I_p(\beta_\QC),\qquad
     Q=\frac{1}{2}\left(1-\frac{\beta_\QC}{\sqrt{2}}\right),
\end{equation}
where we model $H(\widehat{B}_1|A_0)=h(Q)$ with a binary key $\widehat{B}_1$ and four-valued outcome $A_0 \in \{0,1,2,3\}$ (representing 0-click, 1-click-left, 1-click-right, 2-click), this entropy is:
 $$
H(\widehat{B}_1 | A_0) = \sum_{a=0}^3 p_{A_0}(a) \cdot h\left( p(\widehat{B}_1 = 1 | A_0 = a) \right)=h(Q),
$$
where $p_{A_0}(a)$ is the empirical probability distribution of Alice’s outcome $A_0$, $p(\widehat{B}_1 = 1 | A_0 = a)$ the empirical conditional probability of Bob's bit being 1 given $A_0 = a$ and assuming binary errors with some effective QBER $Q(\beta_\QC)$.
Again, including the artificial pre-processing noise $q$ to $K_B$ damages both the correlation between Alice and Bob and the correlation to Eve. However, since the possibility of generating a key depends on the difference between the strengths of these correlations, the net effect is positive.
optimization over $q$ therefore lowers the security threshold without changing the physical devices or the Bell test itself.  The security proof applies to arbitrary, memory-free devices and is fully composable.
A realistic performance analysis for a pulsed SPDC source, non–photon-number-resolving detectors and finite-mode multiplexing shows a dramatic improvement: the global detection-efficiency threshold drops from $92.7\%$ (original CHSH protocol) to $90.9\%$ when four-outcome error correction is used, and further to $83.2\%$ once the noisy pre-processing is included—an effective $78\%$ relative gain that approaches the fundamental Bell limit of $82.8\%$.  At unit efficiency the protocol still produces nearly one secret bit per entangled pair despite multi-pair emissions, and for $\eta>0.85$ the predicted key rate is within a factor $\lesssim2$ of the BB84 Shannon limit because the only extra requirement is software-level bit-flipping advances a near-term DI-QKD photonic demonstration.
 This represents an improvement compared to the efficient post-processing method described in \cite{Ma2012a}.
\subsection{CHSH\texorpdfstring{$_\ell$}{l} protocol - DI-QKD with local test and entanglement swapping}\label{subsec:CHSHl}
A \emph{local Bell-test} approach is introduced to close the detection loophole arising from the lossy channel between Alice and Bob~\cite{lim2013device}.
\begin{figure}[h!]
\centering    
\begin{tikzpicture}[
    box/.style={draw, rounded corners, minimum width=1.cm, minimum height=.8cm, thick, fill=white},
    circ/.style={draw, circle, minimum size=.8cm, thick, fill=white},
    font=\small
  ]

\begin{scope}
  \fill[red!20] (-2.,1.3) rectangle (2.,-1.3);
\end{scope}

\node[box] (mtest) at (1.2, 0) {$\mathrm{M_{test}}$};
\node[circ] (s) at (0, 0) {$S$};
\node[box] (mkey) at (-1.2, 0) {$\mathrm{M_{key}}$};
\node (alice) at (s.north) [yshift=0.5cm] {Alice};
\draw[->, thick] (s) -- (mtest);
\draw[->, thick] (s) -- (mkey);

\node (zxlabel) at (mkey.north) [yshift=0.6cm] {$\{\sigma_z, \sigma_x\}$};
\node (a01label) at (mkey.south) [yshift=-0.6cm] {$\{0,1\}$};
\draw[->, thick] (zxlabel) -- (mkey.north);
\draw[->, thick] (mkey.south) -- (a01label);

\node (UVPlabel) at (mtest.north) [yshift=0.6cm] {$\{U, V, P\}$};
\node (a01label2) at (mtest.south) [yshift=-0.6cm] {$\{0,1\}$};
\draw[->, thick] (UVPlabel) -- (mtest.north);
\draw[->, thick] (mtest.south) -- (a01label2);
\begin{scope}
  \fill[blue!20] (4.4,1.3) rectangle (7,-1.3);
\end{scope}
\node[box] (mkeyB) at (6.3, 0) {$\mathrm{M_{key}^\prime}$};
\node[circ] (sB) at (5, 0) {$S^\prime$};
\node (bob) at (sB.north) [yshift=0.5cm] {Bob};

\draw[->, thick] (sB) -- (mkeyB);

\node (zxlabelB) at (mkeyB.north) [yshift=0.6cm] {$\{\sigma_z, \sigma_x\}$};
\node (a01labelB) at (mkeyB.south) [yshift=-0.6cm] {$\{0,1\}$};

\draw[->, thick] (zxlabelB) -- (mkeyB.north);
\draw[->, thick] (mkeyB.south) -- (a01labelB);
%
%

\node[box] (sC) at (3.2, 0) {BSM};
\node (charlie) at (sC.north) [yshift=0.5cm] {Charlie};
\draw[->, thick] (mtest.east) -- (sC);
\draw[->, thick] (sB) -- (sC);

\node (c01label) at (sC.south) [yshift=-0.6cm] {$\{0,1\}^2$};
\draw[->, thick] (sC) -- (c01label);

\end{tikzpicture}
    \caption{The CHSH$_\ell$ protocol~\cite{lim2013device}, inspired by the time-reversed BB84, introduces an untrusted party, Charlie, who performs a Bell-state measurement on $\rho$ sent by Alice and Bob. He outputs a pass or fail indicating success, and if successful, provides two bits for Alice to apply bit-flip corrections.}
\end{figure}

\textbf{Protocol}. \textit{Data generation}. The CHSH test is performed entirely inside Alice’s lab using a setup composed of three separate devices: a source device $\mathcal{S}$ that generates entangled systems $\rho$, and two measurement devices $\mathcal{M}_{\text{key}}$ and $\mathcal{M}_{\text{test}}$. The device $\mathcal{M}_{\text{key}}$ implements two settings $\{\sigma_x, \sigma_z\}$ to generate the raw key, while $\mathcal{M}_{\text{test}}$ has three settings—two used to evaluate $\beta_\QC$, and one to send $\rho$ into the quantum channel towards Charlie. Bob operates a similar setup with measurement $\mathcal{M}_{\text{key}}'$ and source $\mathcal{S}'$. A Bell-state measurement (BSM) is then performed by Charlie on qubits $A'$ and $B'$; when a successful detection occurs in orthogonal spatial modes, bosonic symmetry implies a collapse into the antisymmetric Bell state $\ket{\psi^-}_{A'B'}$, which by entanglement swapping also projects the remaining qubits $AB$ into $\ket{\psi^-}_{AB}$. This clever scheme ensures the overall detection efficiency $\eta = \eta_\ell + \eta_d$ has $\eta_\ell = 0$ due to the post-selected BSM (see Sec.~\ref{subsec:detection-loophole}), effectively closing the detection loophole that plagues long-distance DI-QKD.

\textbf{Security}. The security proof is proven via an entropic-uncertainty framework that links the observed local CHSH value $\beta_\QC$ to the incompatibility of Alice’s measurement bases, yielding a lower bound on the smooth min–entropy of the raw key. This distinguishes the protocol from standard DI-QKD schemes, which rely on the monogamy of nonlocal correlations. The asymptotic secret key rate is
\[
r=\frac{\ell}{m_x}=1-2 h(Q)-\log_2\left(1+\frac{\beta_\QC}{4\eta}\sqrt{8-\beta_\QC^2}\right).
\]
where $\ell$ is the secret key length, $m_x$ is the number of sifted key bits, and $\beta_\QC, \eta, Q$ are the tolerated CHSH value, Bell-state measurement efficiency, and quantum channel error rate, respectively. This rate is independent of transmission losses and matches that of BB84 in the ideal limit $\beta_\QC \to 2\sqrt{2}$. Ref.~\cite{tan2020advantage} shows how the key rate degrades with decreasing $\eta$ for visibility $\nu = 0.99$ and $0.999$, yet still allows key generation at the channel transmission $t \simeq 45\%$ (about 17 km of optical fiber), demonstrating remarkable robustness under realistic visibilities.
\subsection{CH-SH protocol -- Splitting CHSH parameter estimation}\label{ch-sh}
CH-SH protocol \cite{Sekatski2021,Woodhead2021}, so-called because Eve's knowledge is bounded by the quantity $X$ and $Y$ obtained by splitting the parameter estimation of $\beta_\QC$ more finely than the single CHSH score:
\begin{equation*}
    X = \langle A_0 \otimes (B_0 + B_1) \rangle, \qquad
    Y = \langle A_1 \otimes (B_0 - B_1) \rangle. 
\end{equation*}
Apart from that, the protocol is the same as CHSH$_\chi$ protocol.
Keeping the two CHSH sub-sums separate and derive a tight, closed-form lower bound $H(A_0|E)\ge 1-h_q(z)$ with $z=\frac12\!\left(\frac{Y}{\sqrt{4-X^2}}+1\right)$ and the noisy-pre-processing parameter $q$.  Because the full pair $(X,Y)$ is used, the bound is always as good as— and often strictly better than— the CHSH-only bound. Indeed, the set of points \((X, Y)\) where Eve's conditional entropy is bounded below by a constant is convex. Consequently, combining two quantum models into a new one results in Eve's conditional entropy being bounded by the weighted sum of the individual models' entropy bounds. Using this fact and considering all possible quantum models \((\psi_{ABE}, A_x, B_y)\) where \( X_{\textrm{model}} \geq X \) and \( Y_{\textrm{model}} \geq Y \), it is equivalent to bound Eve's conditional entropy with the following linear constraint:
\begin{equation}
    \frac{\cos(\Omega)}{2} X_{\textrm{model}} + \frac{\sin(\Omega)}{2} Y_{\textrm{model}} \geq \beta, 
    \label{generalizedchsh}
\end{equation}
where \(\beta = \frac{1}{2} (\cos(\Omega) X + \sin(\Omega) Y)\) is deduced from the quantities \( X \) and \( Y \). An improved bound on Eve's knowledge can be obtained in two identified regimes: \(\Omega \leq \frac{\pi}{4}\) and \(\Omega > \frac{\pi}{4}\). In the first regime, \(\Omega \leq \frac{\pi}{4}\), the optimal value is given by \(\cot(\Omega) = \frac{XY}{4 - X^2}\), and it is verified that the bound in this regime is better than the CHSH formula if \(\frac{4 - X^2}{XY} < 1\). The regime \(\Omega > \frac{\pi}{4}\) is more complicated, but numerical evaluation shows the advantage of the generalized CHSH inequality \eqref{generalizedchsh}. When applying this result to photonic implementations of DI-QKD using an SPDC source, the key rate improved by up to $37\%$ with $\eta=1$ while the efficiency threshold ($\approx 82.8 \%$ for photon-loss-limited optics) is unchanged  with $T=\mathrm{fp}$ \cite{Ho2020}.  The analysis remains analytic thanks to the \textit{qubit reduction argument} and shows that the advantage is largest when the correlations lie far from the line $X(X\!+\!Y)=4$ (where usual CHSH attacks are optimal).

Ref.~\cite{Woodhead2021} tackles the complementary mismatch that only Alice’s measurement $A_1$ is used for key generation although the CHSH expression is symmetric.  They introduce an \textit{asymmetric} CHSH family
$\beta_\alpha=\alpha\langle A_1B_1\rangle+\alpha\langle A_1B_2\rangle+\langle A_2B_1\rangle-\langle A_2B_2\rangle$
and, combining a qubit decomposition with BB84-type techniques, derive the exact conditional-entropy bound
$H(A_1|E)\ge\bar g_{q,\alpha}(S_\alpha)$ valid for every $\alpha$ and any noise $q$.  Optimising $\alpha$ and $q$ raises the tolerable depolarising error from 7.15 \% to 8.34 \%, and pushes the loss-only detection-efficiency threshold for maximally-entangled qubits from 90.8 \% down to 90.0 \%.  In the high-loss regime still larger gains arise when Bob re-optimizes his bases; conversely, in realistic noisy channels the two refinements (asymmetric tests + noise pre-processing) are most beneficial at moderate to high visibilities where they deliver the best device-independent key rates reported so far without changing the implementation.

\subsection{CHSH\textsubscript{\text{IMD}} protocol--conditional entropies via iterated-mean divergences} \label{CHSHIMD}
The CHSH$_{\text{IMD}}$~\cite{Brown2021} exploits a new family of Rényi-like quantities---the iterated‑mean divergences-to certify secret key directly from CHSH statistics with tighter entropy bounds than previous numerical techniques.

\textbf{Protocol}. \textit{Data generation}. This stage is identical to the CHSH setting, but the rate analysis departs from min‑entropy methods.  Alice and Bob use their raw correlation sample to solve an SDP that outputs the IMD‑based conditional entropy under the observed CHSH $\beta$.  This bound is evaluated at a low hierarchy level (degree 3), making the computation fast enough to be iterated in real time while the experiment runs.  Optimizing measurement angles within this loop raises the certified asymptotic key rate by $15–35\%$ in the high‑violation, low‑noise regime and reaches the analytic maximum of two bits of global randomness already with qubit systems.

\textbf{Security}
Security against collective quantum and general non‑signaling attacks follows because every IMD conditional entropy lower bounds the true von Neumann entropy. By deriving affine trade-off functions from the SDP dual, this approach enables a plug-and-play use of EAT. As a result, finite-key analysis inherits the same modular proof structure as modern BB84 protocols.  Importantly, the new technique lowers the minimal global detection efficiency for two‑qubit CHSH DI‑QKD without noisy preprocessing to $\eta_{\text{crit}}=0.843$.  See the mathematical technique of the protocol in Sec. \ref{subsubsec:numeriacallowerbounds}.
\subsection{CHSH\textsubscript{$2e$} protocol -- random key basis}\label{subsubsec:randomkeybais}
We name the CHSH$_{2e}$ protocol in Ref. \cite{Schwonnek2021} a variant of CHSH$_{p_g}$ protocol because it provides an extra Bob's choice that doubles the event to generate the raw key for $x=y=0,1$.

\textbf{Protocol}. \textit{Data generation}. Alice selects $x\in\{0,1\}$ with probability weight $p,1-p$. The same of CHSH protocol, but Bob selects $y\in\{0,1,2,3\}$, with probability respectively $q p, q(1-p), (1-q)/2, (1-q)/2$. The extra Bob's choice doubles the event to generate the raw key for $x=y=0,1$.
This enhances the security besides BI violation as decreases $p_g^M$, i.e. Eve does not even know the pair $(x,y)$ used to generate the key anymore.

\textbf{Security 1-w PD}. In CHSH$_{2e}$ protocol Eq. \eqref{DevetakWinter} yields
\begin{equation}\label{eq:r-DW2}
    r=\max_{\lambda} p_s\left[\lambda H(A_0|E)+(1-\lambda) H(A_1|E)-\lambda h(Q_0) - (1-\lambda)h(Q_1)\right],
\end{equation} 
with  $p_s=q(p^2+(1-p)^2)=p(x=y)$
is the matching basis probability; $Q_0$ and $Q_1$ is the $\mathrm{QBER}$ to generate the key when $x=y=0$ and $x=y=1$ respectively (in the depolarizing noise, \( Q_0 = Q_1 = \frac{1}{2} \left( 1 - \beta_\QC/\sqrt{8} \right) \)), $\lambda=p^2/p_s$, and $E$ is Eve's variable gathered before the error correction step with the quantum side-information. 
To estimate the LHS of Eq. \eqref{eq:r-DW2} with a function $C^\star(\beta_\QC)$depending only on $\beta_\QC$ the first step is to consider in the asymptotic limit regime $q\to 1$ (CHSH$_{2e}\to$ CHSH) and reformulate the tripartite problem among Alice, Bob, and Eve into a bipartite one so that the conditional entropy $H(A_i|E) \mapsto H(T_X(\rho_{AB}))-H(\rho_{AB})=D(\rho_{AB}||T_X(\rho_{AB}))$ is mapped into the \textit{quantum relative entropy}, as well as the entropy production of the quantum channel $T_X$ on $\rho_{AB}$, e.g. the \textit{pinching channel}
$T_X(\rho)=\sum_{a=0}^1(M_{a|x}\otimes \bm{1})\rho(M_{a|x}\otimes \bm{1})$. Eve's action mix $\rho_{AB}\in \mathcal{B}(\mathbb{C}^2\otimes\mathbb{C}^2)$, as $q\to 1$, and CHSH$_{2e}\to$ CHSH the same argument (see sec. \ref{sec:CHSHprotocol}, Jordan's lemma \ref{Jordanlemma})
applies here. It implies that
\begin{equation}
    C^\star\ge \inf_\mu \int_2^{2\sqrt{2}}C^\star_{M_\mathbb{C}(4)}(\beta_\QC)\,\mu(\mathrm{d}\beta_\QC),\quad
    \mbox{ s.th. : }
    \mu([2,2\sqrt{2}])\le 1,\,\mu\ge0,\,
    \int_{\beta_\QC^\prime=2}^{2\sqrt{2}}\beta_\QC^\prime\,\mu(\mathrm{d}\beta_\QC^\prime)=\beta_\QC,
\end{equation}
yielding
\begin{equation}
    C^\star(\beta_\QC)\;:=\;1-h\left(\frac{1+\tfrac{\beta_\QC}{2\sqrt2}}{2}\right),\qquad
    \xi(S)\;:=\;
  \begin{cases}
    \dfrac12 & \text{for }S\le 2.5,\\[6pt]
    1        & \text{for }S> 2.5,
  \end{cases}
\Longrightarrow\qquad
  r=\;\xi(S)\Bigl[C^\star(\beta_\QC)-h\left(Q(\beta_\QC)\right)].
\end{equation}

A direct computation of the $4\times 4$ complex operator $\hat\beta_\QC$ is still an open problem as no known proof techniques can be applied. To that end, a refined version of Pinsker's inequality formulates the SPD problem in terms of trace norm, i.e. $D(\rho||T(\rho))\ge \log 2 - h(1/2-||\rho-T(\rho)||_1/2)$, obtaining a closed, efficiently computable expression for the one-way secret fraction in the asymptotic IID regime. The optimization involves similar methods discussed in Sec. \ref{subsec:NPA} and \ref{sec:self-testing} (see details in Ref.~\cite{Schwonnek2021} and Sec.~\ref{sec:chap4}) to estimate the lower bound of uncertainty relations.
Most importantly, their bound is \emph{near-optimal}: for CHSH violations $S\lesssim2.5$ the choice $\lambda=\tfrac12$ (fully random basis) maximizes Eve’s conditional entropy $H(Z|E\Theta)$ essentially up to the linear algebraic limit, while for $S\gtrsim2.5$ the protocol smoothly reverts to the original single-basis variant (\(\lambda=1\)).  Thus the family interpolates between minimal and maximal entropy production without sacrificing simplicity or implementability.

Assuming isotropic depolarising noise, the new protocol tolerates QBER up to $Q=8.2\,\%$, corresponding to a critical CHSH value of $\beta_\QC=2.362$, whereas the CHSH$_{p_g}$~\cite{masanes2011secure} ceased to produce key beyond $Q=7.1\,\%$ $\beta_\QC=2.423$).  This $\sim\!15\,\%$ noise-budget increase brings today’s loophole-free, event-ready Bell experiments firmly into the positive-rate region: the 2015 NV-center test by Hensen \textit{et al.}~\cite{Hensen2015a} and its 2017 cold-atom successor by Rosenfeld \textit{et al.}~\cite{ Munich17} both admit asymptotic key rates of order $10^{-2}$–$10^{-1}$ bits per channel use.  A finite-key analysis based on EAT shows that $10^{8}$–$10^{10}$ rounds suffice to beat finite-size overheads against \emph{coherent} attacks, squarely within reach of next-generation entanglement-swapping platforms.  In contrast, all-photonic loophole-free experiments still fall short, primarily due to detector inefficiencies, but the authors’ Monte-Carlo study indicates that threshold efficiencies could drop below the often-quoted $94\,\%$ mark once modest multiplexing or precertification techniques mature.  Overall, the random-basis variant narrows the gap between theory and experiments.

\subsection{CHSH$_{p_{\mathcal{V}_p}}$ protocol - random postselection pre-processing} \label{subsub:randompostselection}
Ref.~\cite{Xu2022} introduces a simple yet powerful modification of the standard two–basis\,/\,three–setting DI-QKD protocol.

\textbf{Protocol}. \textit{Data generation}. After the usual CHSH  rounds, Alice and Bob keep only those \emph{key-generation} outcomes for which at least one detector clicks, and they do so \emph{independently at random}.  
Formally, every raw bit pair $(a,b)$ obtained with inputs $(\bar x,\bar y)=(1,3)$ is accepted with weight  
$\omega_{ab}\in\{1,p,p,p^{2}\}$, where $p$ is the local post-selection probability and $V_{p}=\{(a,b)|(a,b)=(0,0),(0,1),(1,0),(1,1)\}$ the set of potentially retained events.  
Because the secrecy analysis still uses the \emph{complete} detection statistics—including all double-clicks and no-clicks—the postselection does \textit{not} re-open the detection loophole~(Sec.\ref{subsec:detection-loophole}), yet it strips away the high-error, low-correlation data that would otherwise dominate the error-correction cost.  

\textbf{Security}. The authors cast Eve’s optimal attack as a semidefinite programme (see \ref{subsec:numericaltechniques}) over the \emph{unnormalized} quantum behaviors $p^{0}_{e}(a,b|x,y)$ compatible with the observed distribution $p(a,b|x,y)$, and bound her information through the conditional min-entropy $H_{\min}(A|E,V_{p})$ of the postselected string.

Then, in the asymptotic IID regime and against collective attacks, the extractable secret key rate per \emph{channel use} obeys the Devetak–Winter lower bound  and yields
\begin{equation}
    r
\;\ge\;
p_{V_{p}}\Bigl[
      H_{\min}\!\bigl(A\mid E,V_{p}\bigr)
      \;-\;
      H\!\bigl(A\mid B,V_{p}\bigr)
\Bigr]
.
\end{equation}
Here $H_{\min}(A|E,V_{p})=-\log_{2}p_g(A|E,V_{p})$ is the guessing probability of Alice’s bit $A$ by an eavesdropper Eve, \emph{conditioned on} the fact that the round was retained (i.e., passed the postselection $V_p$)\footnote{Formally, it is defined as:
$
p_g(A|E, V_p) = \max_{\{E_e\}} \sum_{a \in \{0,1\}} p(A = a, \hat{A}_E = a \mid V_p)
$
where:
$\{E_e\}$ is the POVM Eve applies to her system; $\hat{A}_E$ is Eve’s guess based on her measurement outcome; the probability is conditioned on the postselected events $V_p = \{00, 01, 10, 11\}$; the overall state is a classical–quantum (cq) state of the form
  $\rho_{AE} = \sum_a p_a |a\rangle\langle a| \otimes \rho_E^a$
  after the postselection.}, while $H(A|B,V_{p})$ is the (classically estimated) one-way error-correction cost for the retained bits.  All three quantities—$p_{V_{p}}$, $H_{\min}$ and $H(A|B)$—are directly computable from the experimentally observed correlations and the chosen postselection parameter $p$.
Specifically, the usual secrecy term is multiplied by the \emph{sifting factor} $p_{V_{p}}$, 
\begin{equation}
    p_{V_{p}}\;=\;\sum_{(a,b)\in V_{p}}\omega_{ab}\,p(a,b\!\mid\!\bar x,\bar y),
\qquad
\omega_{00}=1,\;
\omega_{01}=\omega_{10}=p,\;
\omega_{11}=p^{2},
\end{equation}
denote the probability that a key-generation round survives the random postselection. 
Numerical optimization over the quantum realization (quantum state, measurement settings and the single parameter $p$) shows that a positive key rate is already achieved for a global detection efficiency of $\eta\simeq68.5\%$—a dramatic improvement over the $92.4\%$ required by CHSH$_{p_g}$~\cite{masanes2011secure} and the $82.6\%$ achieved by noisy preprocessing.  
At $\eta=80\%$ the protocol cuts the Shannon error-correction term from $0.65$ to $0.033$ bits per retained round and \emph{simultaneously} raises Eve’s uncertainty, yielding $r\approx7.9\times10^{-5}$ bits per \emph{attempted} channel use—two orders of magnitude better than without postselection (see Fig. \ref{tab:diqkd_threshold}. 

\paragraph{Remark} As we emphasized, the security proof above is against collective attacks. More recently, it was shown in~\cite{Sandfuchs2023} that CHSH-based DI-QKD protocols with random post-selection are vulnerable to coherent attacks. This vulnerability, however, is specific to the random post-selection variant and does not extend to standard DI-QKD protocols such as E91, which do not employ post-selection.

\subsection{Mermin-Peres Magic Square Game-based DI-QKD} \label{msg-di-qkd}
Ref.~\cite{Zhen2023a} proposes a DI-QKD protocol based on the Mermin-Peres Magic Square Game (MPG) \cite{Mermin1990,Peres1990}. In this game, there are two players, Alice and Bob, who are forbidden to communicate with each other. In each round of the game, a referee generates two random 'trits' \(x, y \in \{0, 1, 2\}\) and sends the index \(x\) to Alice and the index \(y\) to Bob. Alice and Bob then reply to the referee with \([a_{0}^x, a_{1}^x, a_{2}^x]\) and \([b_{0}^y, b_{1}^y, b_{2}^y]\) under the conditions that \(a_{2}^x = a_{0}^x + a_{1}^x\) and \(b_{2}^y = b_{0}^y + b_{1}^y \oplus 1\). The winning condition is when \(a_{i=y}^x = b_{j=x}^y\). After the game, the referee decides whether Alice and Bob win or not according to the average winning probability 
\begin{equation}
    \omega = \frac{1}{9}\sum_{x,y} p(a_{y}^x = b_{x}^y | x, y)
\end{equation}
where \(p(a_{y}^x = b_{x}^y | x, y)\) is the winning probability of Alice and Bob with respect to \((x, y)\). The MPG DI-QKD in \cite{Zhen2023a} is based on the fact that for all classical strategies \(\omega \leq \frac{8}{9}\) and since all classical strategies are equal to local hidden variables, then \(\omega \leq \frac{8}{9}\) is actually a Bell inequality that some quantum strategies can violate \cite{Cabello2001, GISIN2007}. \\

\textbf{Protocol}. Alice and Bob initially generate data by playing the MPG. They announce their inputs and record the overlapped bits. To estimate parameters, Alice communicates to Bob which part of the bits serves as raw keys, with the remaining part of the bits announced to play the MPG. If the average winning probability estimated from the announced data is less than an expected value, they abort the protocol; otherwise, they perform data reconciliation on raw keys to obtain the final keys.

\textbf{Security}. The security analysis is against collective attacks in the asymptotic scenario and showed that if \(\omega > 0.9575\), it generates higher secret keys than the CHSH-based protocol for $\rho_p$ of Fig. \ref{fig:WernerCorr} with $\nu>0.978$ ($\eta=1$) or $\eta\ge 0.982$ ($\nu=1$). Later in \cite{CerveroMartin2025}, this protocol is generalized to the finite-round regime under general attacks.

\subsection{rDI-QKD: DI-QKD based on routed Bell tests}\label{sec:routedBI}
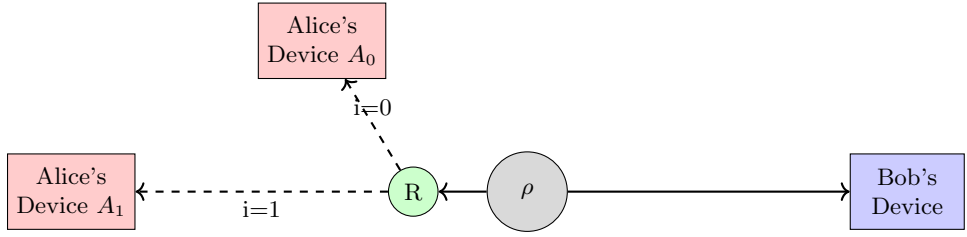
\begin{figure}[h]
\begin{tikzpicture}[scale=1, every node/.style={font=\small}]
\node[draw, circle, fill=gray!30, minimum size=30pt] (S) at (1,0) {$\rho$};

\node[draw, rectangle, fill=blue!20, minimum width=1.5cm, minimum height=1cm, align=center] 
(B) at (6,0) {Bob's \\ Device };

\node[draw, rectangle, fill=red!20, minimum width=1.5cm, minimum height=1cm, align=center] 
(A0) at (-1.7,2) {Alice's \\ Device $A_0$} ;

\node[draw, rectangle, fill=red!20, minimum width=1.5cm, minimum height=1cm, align=center] 
(A1) at (-5,0) {Alice's \\ Device $A_1$} ;

\node[draw, circle, fill=green!20, minimum size=10pt] (R) at (-0.5,0) {R};

\draw[thick, ->] (S) -- (B) node[midway, above] {};
\draw[thick, ->] (S) -- (R) node[midway, above] {};

\draw[thick, ->, dashed] (R) -- (A0) node[midway, above] {i=0};
\draw[thick, ->, dashed] (R) -- (A1) node[midway, below] {i=1};
\end{tikzpicture}
\caption{A schematic of routed Bell experiment introduced in \cite{chaturvedi2024extending}. In each round of experiment, Alice transmits her choice of $i\in\{0,1\}$ to the relay switch,$R$, which then transmits the quantum system from the source to either the measurement device,$A_0$, placed close to the source or $A_1$, placed further away, based on $i$.} 
\label{fig:routedBI}
\end{figure}
The current state-of-the-art combination of detection efficiency and visibility $v$ \eqref{fig:WernerCorr} (see Sec.\ref{sec:experiments}) is achieved only for distances less than 400 m. For DI-QKD to become a widely adopted near-term technology, operationally certifiable robust nonlocal correlations must be sustained over distances that are orders of magnitude greater ($\gg$100 km). Due to the high sensitivity inherent in traditional approaches for establishing nonlocal quantum correlations, these methods prove ineffective for long-distance DI-QKD. Consequently, developing alternative methods to establish nonlocality over large distances is essential.  \\
An approach to extending nonlocal correlations over large distances involves generalizing standard Bell experiments to the \textit{routed Bell experiments} introduced in \cite{chaturvedi2024extending}. In a routed Bell experiment, as illustrated in Figure \ref{fig:routedBI}, the measuring parties randomly select the location of their measurement in each round (the relay $R$ in Fig.\ref{fig:CVP2024}).\\
For a general quantum strategy, the correlations in a routed Bell experiment can be expressed as follows \cite{Lobo2024} 
\begin{equation}
    p(a,b|x,y,i)=\textrm{Tr}[C_i\otimes \mathcal{I}(\rho_{AB})M_{a|xi}\otimes M_{b|y}]=\mathrm{Tr}[\rho_{AB}\tilde{M}_{a|xi}\otimes M_{b|y}],
\end{equation}
where $C_i$ is the CPTP map describing the transmission of Alice's system on the short-path ($i=0$) or the long-path ($i=1$), and $\tilde{M}_{a|xi}=C_{i}^{\dagger}(M_{a|xi})$ are the elements of a valid POVM. Thus general correlation in a routed Bell experiment coincides with those of regular bipartite Bell experiment where Alice has $m_0+m_l$ inputs ($m_0$ and $m_1$ are the number of measurement settings at short-distance $i=0$ and long-distance $i=1$, respectively). Various subsets of general quantum correlations $\QC$ then can be defined as follows \cite{Lobo2024}:
\begin{definition}
\label{routedBIcorrelations}
\begin{itemize}
    \item \textbf{Short-range quantum correlations:} denoted as $\mathcal{Q}_{SR}$, short-range quantum (SRQ) correlations refer to the correlations achieved without the distribution of any entanglement to $A_1$. For these correlations, $C_1$ is an entanglement-breaking channel $C_1(\rho)=\sum_{\lambda}\mathrm{Tr}[N_{\lambda}\rho]\rho_{\lambda}$, where $N_{\lambda}$'s are the elements of a POVM. Therefore, the POVM elements of $M_{a|y_1}$ are maps to 
    \begin{equation}
        \tilde{M}_{a|x_1}=\sum_{\lambda}p(a|x,\lambda)N_{\lambda}, \quad \text{where}  \enspace\enspace p(a|x,\lambda)=\mathrm{Tr}[\rho_{\lambda}M_{a|x_1}], 
    \end{equation}
    which is equivalent to the statement that the measurement $M_{a|x1}$  are jointly-measurable. Therefore, the SRQ correlations can be expressed as 
    \begin{equation}
    \label{SRQcorrelation}
        p(a,b|x,y,i)=
\begin{cases} 
\textrm{Tr}[\rho_{AB}\tilde{M}_{a|x0}\otimes M_{b|y}] & \text{if } i=0, \\
\sum_{\lambda}p(a|x,\lambda)\textrm{Tr}[\rho_{AB}N_{\lambda}\otimes M_{b|y}] & \text{if } i=1.
\end{cases}
    \end{equation}
This operationally means that if the relay selects the short path $i=0$, the correlations are obtained by measuring a shared entangled state $\rho_{AB}$ as in a regular Bell
experiment. If it selects the long path $i=1$, a fixed measurement $N_{\lambda}$ is performed
on Alice’s system, yielding a classical outcome $\lambda$ with the probability distribution $p(a|x,\lambda)$, and transmit it to $A_1$.\\
    \item \textbf{Fully quantum marginal correlations:} denoted as $\mathcal{M}_{qq}$, the fully quantum marginal correlations are where the source prepares on Alice's side a pair of systems $A=(A_0, A_1)$ and if $i=0$ ($i=1$) the relay routes the first subsystem to the device $A_0$ ($A_1$). The resulting correlations can be encompassed to the bipartite marginals of qqq-correlations 
    \begin{equation}
        p(a_0,a_1,b|x_0,x_1,y)=\mathrm{Tr}[\rho_{A_0A_1B}M_{a|x,0}\otimes M_{a|x,1}\otimes M_{b|y}].
        \label{eq:Mqq}
    \end{equation}
    \item \textbf{Quantum-classical marginal correlations:} denoted as $\mathcal{M}_{qc}$, the quantum-classical marginal correlations can be obtained by further restricting the state $\rho_{A_0A_1B}$ to be a qqc-state as 
    \begin{equation}
        \rho_{A_0A_1B}=\sum_{\lambda}p(\lambda)\rho_{A_0B}\otimes\ket{\lambda}\bra{\lambda}_{A_1}.
    \end{equation}
\end{itemize}
\end{definition}
All the above correlations can be written as 
\begin{equation}
    p(a,b,i)=\mathrm{Tr}[\rho M_{a|xi}M_{b|y}],
\end{equation}
where $M_{a|xi}$ and $M_{b|y}$ are projectors that satisfy the following commutation relations for each subset 
\begin{align}
    [M_{a|xi}, M_{b|y}] &= 0, \quad &&\text{if } p \in \mathcal{Q}, \\
    [M_{a|xi}, M_{b|y}] &= 0, \quad [M_{a|x1}, M_{a'|x'1}] = 0, \quad &&\text{if } p \in \mathcal{Q}_{SR}, \\ 
    [M_{a|xi}, M_{b|y}] &= 0, \quad [M_{a|x0}, M_{a'|x'1}] = 0, \quad &&\text{if } p \in \mathcal{M}_{qq}, \\
    [M_{a|xi}, M_{b|y}] &= 0, \quad [M_{a|x0}, M_{a'|x'1}] = 0, \quad [M_{a|x1}, M_{a'|x'1}] = 0, \quad &&\text{if } p \in \mathcal{M}_{qc}.
    \label{routedBIcommutation}
\end{align}
which follows from the fact that each of the tensor products between subsystems in definition \ref{routedBIcorrelations} can be replaced by commutation relations.\\
As a result, the above representation fits in the framework of non-commutative polynomial optimization, which means that the sets defined in \ref{routedBIcorrelations} can be outer-approximated through SDP hierarchies. 

\paragraph{Trade-off relations between $\beta_0$ and $\beta_1$}
Consider a realistic CHSH scenario in which the source and measurement devices are imperfect. Considering the situation described in fig. \ref{fig:routedBI}, when $i=0$ Alice places her measurement device $A_0$ close to the source achieving effective detection efficiencies, $\eta_{A_0}$, while when $i=1$, she places her device $A_1$, further away from the source, therefore $\eta_{A_1}\leq \eta_{A_0}$. Similarly, for effective visibilities $v_{A_1}\leq v_{A_0}$, which result in the following quantum state shared between $(A_{i},B)$ 
\begin{equation}
    \rho(v_i)=v_{A_i}v_{B}\ket{\psi}\bra{\psi}+v_{A_i}(1-v_B)\left(\rho_B\otimes\frac{\bm{I}}{2}\right)+(1-v_{A_i})v_B\left(\frac{\bm{I}}{2}\otimes\rho_A\right)+(1-v_{A_i})(1-v_B)\frac{\bm{I}_4}{4},
\end{equation}
where $\rho_{A(B)}=\mathrm{Tr}_{B(A)}(\ket{\psi}\bra{\psi})$.
Treating $i\in\{0,1\}$ as an additional  Alice's input, denoting the location of her measurement device, the CHSH value is also dependent on $i$, which we denote by $\beta_i$. The following theorem captures the tradeoffs between $\beta_0$ and $\beta_1$ \cite{chaturvedi2024extending}

    \begin{theorem}
    \label{routedBItradeoff}
 If a loophole-free nonlocal correlation between $(A_0,B)$  is witnessed i.e. $\beta_0>2$, then loophole-free nonlocal correlations between $(A_1,B)$ can be certified whenever the following inequality is violated
\begin{equation}
\label{analyticaltradeoff}
    \beta_1\leq\sqrt{8-\beta_{0}^2}. 
\end{equation}
\end{theorem}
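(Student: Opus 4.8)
The inequality \eqref{analyticaltradeoff} is a Bell-monogamy bound of the Tsirelson--Toner--Verstraete type, and the plan is to prove it by showing that $S_0^2+S_1^2\le 8$ holds for every ``no long-distance entanglement'' strategy; its violation, together with the hypothesis $S_0>2$, then rules out all such strategies and certifies that genuinely nonlocal correlations have reached the distant device $A_1$. The strategies to be excluded are exactly the marginal model $\mathcal{M}_{qq}$ of Def.~\ref{routedBIcorrelations}: a single tripartite state $\rho_{A_0A_1B}$ is prepared, the relay forwards one of Alice's two subsystems depending on $i$, and---this is the structural point---Bob's device is physically the same in every round, so the $(A_0,B)$ and the $(A_1,B)$ CHSH tests are built from the \emph{same} pair of observables $B_0,B_1$ \eqref{eq:Mqq}. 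Bob thus plays the role of the shared party in a bipartite CHSH-monogamy scenario, so the theorem reduces to the operator statement $\langle\hat\beta_0\rangle^2+\langle\hat\beta_1\rangle^2\le 8$ in $\rho_{A_0A_1B}$, where $\hat\beta_i$ is the CHSH operator \eqref{eq:CHSH1} of the $(A_i,B)$ test.

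The steps I would carry out are: \emph{(i)} use the commutation relations \eqref{routedBIcommutation} to take all observables projective w.l.o.g.\ and apply Jordan's lemma (Lemma~\ref{Jordanlemma}) to simultaneously block-diagonalise $B_0,B_1$ into $2\times2$ blocks---and likewise Alice's two observables on each of her subsystems $A_0$ and $A_1$---so that, by the qubit-reduction argument of Sec.~\ref{sec:self-testing}, $\hat\beta_0$ and $\hat\beta_1$ split blockwise and it suffices to treat a three-qubit state $\rho_{A_0A_1B}$. \emph{(ii)} On qubits use the CHSH operator identity $\hat\beta_i^{\,2}=4\,\mathbb{I}-K_{A_i}\otimes[B_0,B_1]$, with $K_{A_i}$ the commutator of Alice's two observables at location $i$; summing over $i$, noting that the \emph{same} anti-Hermitian factor $[B_0,B_1]$ occurs in both, and using $\langle\hat\beta_i\rangle^2\le\langle\hat\beta_i^{\,2}\rangle$, reduces the claim to $\langle(K_{A_0}\!+\!K_{A_1})\otimes[B_0,B_1]\rangle_{\rho_{A_0A_1B}}\ge0$. \emph{(iii)} Prove this last inequality from the two-qubit geometry: parametrising Bob's directions $\hat b_0,\hat b_1$ and using the Horodecki criterion, $S_i$ is governed by the correlation matrix of the marginal $\rho_{A_iB}$ evaluated along $\hat b_0\pm\hat b_1$, while the monogamy of entanglement of the \emph{shared} Bob qubit prevents $\rho_{A_0B}$ and $\rho_{A_1B}$ from being strongly correlated along the same directions simultaneously; made quantitative this is exactly the Toner--Verstraete relation, equivalently the quantum monogamy bound $\beta_{\mathcal{AB}}^2+\beta_{\mathcal{EA}}^2\le8$ recalled in Sec.~\ref{sec:chap2}. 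Finally, since the efficiency and visibility imperfections act on $\rho(v_i)$ through an affine map of the type $\Omega_{\eta_A\eta_B}$ of \eqref{eq:phat}, the bound for the observed noisy $(S_0,S_1)$ descends from the ideal one, and taking the contrapositive gives \eqref{analyticaltradeoff}.

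The step I expect to be genuinely hard is \emph{(iii)}: a crude operator-norm estimate on the shared-commutator term only gives $S_0^2+S_1^2\le16$, so the factor-of-two improvement really needs the fine structure of two-qubit correlation matrices constrained by a common reduced state (a monogamy-of-steering argument), which is where Jordan's lemma must be paired with a Groebner-basis/SOS analysis---or where one instead certifies the quadratic bound numerically through the SDP hierarchy for routed correlations noted after \eqref{routedBIcommutation} (Sec.~\ref{subsec:NPA}). A lighter, bookkeeping point is to check that the hypothesis $S_0>2$ is precisely what makes $\sqrt{8-S_0^2}<2$, so that the certified long-distance violation is strictly beyond anything a bare CHSH test at the reduced efficiency $\eta_{A_1}$ could attain.
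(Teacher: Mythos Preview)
Your framing contains a conceptual error: $\mathcal{M}_{qq}$ is \emph{not} a ``no long-distance entanglement'' set---it explicitly allows the $A_1$ subsystem to be entangled with $B$, so ruling it out does not by itself certify $(A_1,B)$-nonlocality. The null hypothesis the theorem actually excludes is the model \eqref{CVPshortrange}, in which the long-path correlations are \emph{local}: $p(a,b|x,y,1)=\sum_\lambda p(\lambda)\,p(a|x,\lambda)\,p(b|y,\lambda)$. You are saved only because \eqref{CVPshortrange} embeds into $\mathcal{M}_{qc}\subseteq\mathcal{M}_{qq}$, so a Toner--Verstraete bound proved on $\mathcal{M}_{qq}$ would descend to the correct target---but that is not the logic you wrote, and ``outside $\mathcal{M}_{qq}$'' is a structural statement about the routing, not about nonlocality at $A_1$.

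The paper's proof is completely different and bypasses your ``genuinely hard'' step~(iii) entirely. Working directly with \eqref{CVPshortrange}, it uses the short-path value $S_0$ to certify \emph{randomness of Bob's outcomes against any classical side information} $\lambda$: the DI-randomness/guessing-probability bound of \cite{pironio2010random} gives $p(b|y,\lambda)\le\tfrac12\bigl(1+\tfrac12\sqrt{8-S_0^2}\bigr)$, i.e.\ $|\beta_{y,\lambda}|:=|2p(b|y,\lambda)-1|\le\tfrac12\sqrt{8-S_0^2}$ for every $\lambda$. Substituting into $S_1=\sum_\lambda p(\lambda)\bigl[\alpha_{0,\lambda}(\beta_{0,\lambda}{+}\beta_{1,\lambda})+\alpha_{1,\lambda}(\beta_{0,\lambda}{-}\beta_{1,\lambda})\bigr]$ with $|\alpha_{x,\lambda}|\le1$ yields $|S_1|\le 2\max_{y,\lambda}|\beta_{y,\lambda}|\le\sqrt{8-S_0^2}$ in one line---no Jordan's lemma, no three-qubit reduction, no SOS, no SDP. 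Your monogamy route is not wrong and would even give a stronger statement (the bound holds on all of $\mathcal{M}_{qq}$), but it either imports Toner--Verstraete as a black box or forces you through the step you yourself flag as hard; the paper instead reduces everything to a single known bound on Bob's bias conditioned on third-party information.
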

\begin{proof}
    The short-range correlations that were considered in \cite{chaturvedi2024extending} are represented as
\begin{equation}
    p(a,b|x,y,i)=
\begin{cases} 
\sum_{\lambda}p(\lambda)\textrm{Tr}[\rho_{A_{0}B}^{\lambda} M_{a|x0}\otimes M_{b|y}] & \text{if } i=0, \\
\sum_{\lambda}p(\lambda)p(a|x,\lambda)p(b|y,\lambda) & \text{if } i=1.
\end{cases}
\label{CVPshortrange}
\end{equation}
The intuition for the proof is as follows: for simplicity, consider the case where  $\beta_0=2\sqrt{2}$. Then, by standard
self-testing result \ref{sec:self-testing}, it can be inferred that the measurement $\{B_{b|x}\}$ corresponds
to a Pauli measurement on a two-dimensional subspace of $B$ that is maximally entangled
with $A_0$.  In particular, the
Bob's measurement outcomes must be fully random and uncorrelated with the classical instructions
$\lambda$ shared with $B$ which result in $p(b|y,\lambda)=p(b|y)=\frac{1}{2}$ for all $\lambda$'s which gives $\beta_1=0$
Although the assumption that $\beta_0=2\sqrt{2}$ is too strong, for any value $\beta_0>2$ there is a bound on how much the outcomes of  POVM $\{B_{b|y}\}$ can be correlated to other systems besides. Particularly $p(b|x,\lambda)\leq\frac{1}{2}(1+\frac{\sqrt{8-\beta_0^2}}{2})$ for all $\lambda$ \cite{pironio2010random}, which result in the bound \eqref{analyticaltradeoff}.
\end{proof}
\noindent Since $\sqrt{8-\beta_{0}^2}$ is a monotonically decreasing function of $\beta_0$, the inequality $\eqref{analyticaltradeoff}$ implies that any amount of loophole-free violation of the CHSH inequality between $(A_0,B)$ ($\beta_0>2$), results in reducing the threshold value of the CHSH expression $\beta_1$ \\
The significance of this result depends on the assumptions underlying the derivation of the bound \eqref{analyticaltradeoff} (see proof in \cite{chaturvedi2024extending}). Specifically, it relies on the interpretation of \textit{nonlocality} and \textit{ruling out classical models} in routed Bell experiments, since the local variable model can already be excluded by performing a simple CHSH test between $(A_0,B)$. The main purpose is that by observing the nonlocality in $(A_0,B)$ one can conclude about the classicality or nonclassicality of the observed outcomes of the remote device $A_1$. Therefore, theorem \ref{routedBItradeoff} provides a convenient tool to estimate the critical parameters, $(\eta_{A_1}^{*},v_{A_1}^{*})$, required for the loophole-free certification of nonlocal correlation between $(A_1,B)$. Given a tuple of experimental parameters $(\eta_B,\eta_{A_0},v_B,v_{A_0})$, the critical parameters $(\eta_{A_1}^{*},v_{A_1}^{*})$ are those that saturate inequality \eqref{analyticaltradeoff}.\\
Considering both asymmetric and symmetric cases, if the parties share a maximally nonlocal isotropic strategy, the following results hold:
\begin{enumerate}
    \item \textbf{Asymmetric Case}:  
    Suppose $B$ is placed extremely close to the perfect source, such that $\eta_B=1$, and all devices exhibit perfect visibilities. In this scenario:
    \[
    \beta_0 = 2\sqrt{2}\eta_{A_0}, \quad \beta_1 = 2\sqrt{2}\eta_{A_1}.
    \]
    The violation of \eqref{analyticaltradeoff} yields:
    \[
    \eta_{A_1} > \sqrt{1-\eta_{A_0}^2} = \eta_{A_1}^{*}.
    \]

This result highlights a central insight: the closer $A_0$ is placed to the source, the higher its effective detection efficiency, $\eta_{A_0}$, becomes. Consequently, the critical detection efficiency of $\eta_{A_1}$ decreases, allowing $A_1$ to be placed further from the source while still retaining loophole-free non-local correlations with $B$.

\item \textbf{Symmetric case}:  $(A_0,B)$ are equidistant from the source, such that $\eta_B=\eta_{A_0}=\eta$ and in the presence of perfect visibility for all devices, then 
\begin{eqnarray}
    \beta_0&=&2\sqrt{2}\eta^2+2(1-\eta)^2, \nonumber \\
    \beta_1&=&2\sqrt{2}\eta_{A_1}\eta +2(1-\eta_{A_1})(1-\eta). 
\end{eqnarray}
where a loophole-free violation can be observed for $\eta\in (\frac{2}{1+\sqrt{2}},1]$.  Fig.\ref{fig:CVP2024} shows the detection efficiency values $\eta_{A_1}^{*}$ versus $\eta$. The critical detection efficiency starts to decline after $\eta$ exceeds the threshold value ,$\eta=\frac{2}{1+\sqrt{2}}\approx 0.828$.  
\end{enumerate}
    \begin{figure*} 
    \centering
    \includegraphics[width=0.5\linewidth]{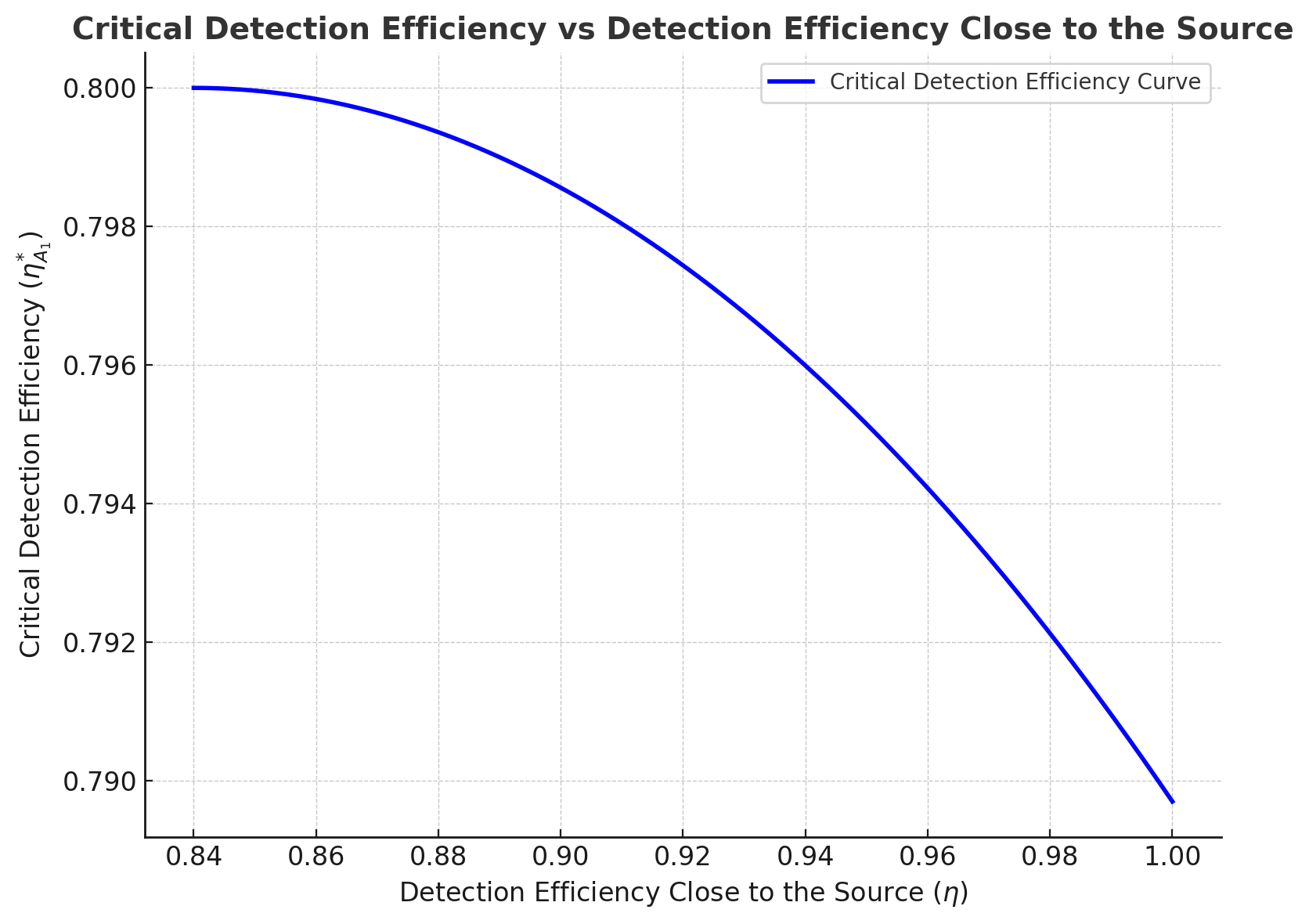}
    \caption{The detection efficiency $\eta_{A_1}^{*}$ versus detection efficiency $\eta$ obtained by equation \eqref{analyticaltradeoff} in symmetric case where $A_0$ and $B$ are equidistance from the source \cite{chaturvedi2024extending}. }
    \label{fig:CVP2024}
\end{figure*}
Consider now a strategy where the source, Bob's measurement device, the relay $R$, and the measurement device $A_0$ all behave as in CHSH expectations. Thus, any  $\beta_0\in[0,2\sqrt{2}]$ can be obtained by tuning $\eta_{A_0}$. For $i=0$, consider the case where at some point between $R$ and $A_1$ (before $A_1$) the second qubit is measured on the basis $z$, yielding
a binary result $\lambda$. This classical outcome is then transmitted to $A_1$
through some purely classical channel and upon receiving it, $A_1$ outputs it, irrespective of the input $x$. Therefore, we have $p(\lambda)=\frac{1}{2}$, $p(a|x,\lambda)=\delta_{a,\lambda}$, and $p(b|y,\lambda)=\textrm{Tr}(\rho_{\lambda}B_y)$ ($B_y$ is the corresponding observable for Bob's measurement) where $\rho_0=\ket{0}\bra{0}$ and $\rho_1=\ket{1}\bra{1}$ are the reduced states for Bob conditioned upon $\lambda$. Inserting in \eqref{CVPshortrange}, one gets the value $\beta_1=2$, which gives a violation of \eqref{analyticaltradeoff} while $A_1$ is fully classical. This shows that the form of non-classicality defined as \eqref{CVPshortrange} is weaker than $\mathcal{Q}_{SR}$ \ref{routedBIcorrelations}. \\
Lobo, Pauwels, and Pironio showed that there exist stronger versions of tradeoffs between long path (LP) and short path (SP) correlations in which, instead of CHSH, they considered the following Bell expression for LP \cite{Lobo2024} 
\begin{equation}
\label{JLP}
    J_{1}^{\theta}=\tan{\theta}\langle A_{01}B_{0}\rangle+\langle A_{11}B_0\rangle+ \langle A_{01}B_{1}\rangle-\tan{\theta}\langle A_{11}B_1\rangle,
\end{equation}
satisfying the following local and quantum bound 
\begin{equation}
    J_{1}^{\theta}\overset{L}{\leq}2\overset{Q}{\leq} 2/c_{\theta},
    \label{eq:JLP}
\end{equation}
where $c_{\theta}=\cos{\theta}$ ($\theta\in[0,\frac{\pi}{4}]$). For the case of observing the maximal CHSH violation for SP ($\beta_0=2\sqrt{2}$), SRQ correlations defined in \ref{routedBIcorrelations} satisfy the following bound 
\begin{equation}
 J_{1}^{\theta}\leq\frac{\sqrt{2}}{c_{\theta}}, 
 \label{SRQbound}
\end{equation}
The intuition of finding the bound \eqref{SRQbound} is that when Alice and Bob observe $\beta_0=2\sqrt{2}$, by self-testing the LP correlators are associated with the Pauli expectations i.e. $\mathrm{Tr}(PA_{x1})$ where $P\in\{\mathbb{I},\sigma_x,\sigma_z\}$, of the observables $A_{x1}$ and as a result, performing tomography of these observables is restricted to $ZX$ plane. In the case $\theta=0$ it was found that $\frac{1}{2}[\mathrm{Tr}(\sigma_x A_{11})+\mathrm{Tr}(\sigma_z A_{01})]\leq\sqrt{2}$ \cite{Pusey2015}. Therefore, the bound \eqref{SRQbound} can be obtained by rotation $R_{\theta}=\begin{bmatrix} \cos{\theta} & \sin{\theta} \\ -\sin{\theta} & \cos{\theta} \end{bmatrix}$ in the $ZX$ plane. \\
For any value $0\leq\theta<\frac{\pi}{4}$, the bound obtained \eqref{SRQbound} is strictly smaller than the local bound, which means that the SP CHSH test weakens the condition of witnessing a long-range quantum correlation based on $J_{1}^{\theta}$. This weakening is maximal in the case $\theta=0$, yielding
\begin{equation}
    J_{1}=\langle A_{1L}B_0\rangle+\langle A_{0L}B_1\rangle \leq 2,
\end{equation}
which coincides with the quantum bound in \eqref{eq:JLP}. Although the above inequality in the standard case (without any relay $R$) does not demonstrate a violation and may not appear to be a proper Bell inequality, in a routed Bell scenario equipped with a strategy that achieves $\beta_0 = 2\sqrt{2}$, the local bound is replaced by the SRQ bound \eqref{SRQbound}. 
\begin{equation}
     J_{1}=\langle A_{1L}B_0\rangle+\langle A_{0L}B_1\rangle \overset{L}{\leq} \sqrt{2} \overset{Q}{<} 2,
\end{equation} 
which is smaller than the quantum bound, making it a proper witness for long-range nonlocality ($J_1=J_{1}^{\theta=0}$). Therefore, since the SP CHSH is maximally violated, long-range quantum correlations can be witnessed whenever $J_{1}^{\theta}>\frac{\sqrt{2}}{c_{\theta}}$ instead of the more constraining criterion $J_{1}^{\theta}>2$.\\
The maximal SP CHSH violation $S_{0}$ is too strong and unrealistic in experimental settings, so deriving bounds based on non-maximal cases is important. For the case $\theta=0$ the following analytical tradeoff between $J_{1}$ and $\beta_{0}$ can be proved \cite{Lobo2024}
\begin{theorem}
    For any SRQ correlation defined in \ref{routedBIcorrelations}, the following inequality holds for $\beta_{0}\in[2,2\sqrt{2}]$
    \begin{equation}
        J_{1}\leq\frac{\beta_{0}+\sqrt{8-\beta_{0}^2}}{2}.
    \end{equation}
\end{theorem}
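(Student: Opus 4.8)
The plan is to exploit the fact that on the long path the SRQ model is \emph{classically mediated}: the shared state $\rho_{AB}$ is probed by a fixed POVM $\{N_\lambda\}$ on Alice's side, and Alice's long-path output is only a classical post-processing of $\lambda$. First I would set $B_y:=\sum_b(-1)^bM_{b|y}$ for Bob's two $\pm1$ observables and $\alpha_x(\lambda):=\sum_a(-1)^ap(a|x,\lambda)\in[-1,1]$ for Alice's decodings, and introduce Bob's assemblage $q_\lambda\rho^B_\lambda:=\operatorname{Tr}_A[(N_\lambda\otimes\mathbb I)\rho_{AB}]$ with $\beta_y(\lambda):=\operatorname{Tr}[\rho^B_\lambda B_y]$. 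Plugging $\tilde M_{a|x1}=\sum_\lambda p(a|x,\lambda)N_\lambda$ into the long-path correlators gives $\langle A_{xL}B_y\rangle=\sum_\lambda q_\lambda\,\alpha_x(\lambda)\beta_y(\lambda)$, and since $|\alpha_x(\lambda)|\le1$,
\[
J_1=\langle A_{1L}B_0\rangle+\langle A_{0L}B_1\rangle\ \le\ \sum_\lambda q_\lambda\bigl(|\beta_0(\lambda)|+|\beta_1(\lambda)|\bigr),
\]
so the problem is reduced to Bob's assemblage $\{q_\lambda\rho^B_\lambda\}$ (which satisfies $\sum_\lambda q_\lambda\rho^B_\lambda=\rho_B$) together with his two observables, with no remaining reference to the details of the long-path strategy.

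Next I would invoke Jordan's lemma (Lemma~\ref{Jordanlemma}): without loss of generality $B_0,B_1$ are projective and simultaneously block-diagonal in $2\times2$ blocks, with $B_0^{(i)}=\sigma_z$ and $B_1^{(i)}=\cos2\phi_i\,\sigma_z+\sin2\phi_i\,\sigma_x$ on block $i$. Dephasing Bob's system across these blocks changes neither $S_0$ nor any $\langle A_{xL}B_y\rangle$ (all operators involved are block-diagonal on Bob's factor), so I may also take $\rho_{AB}=\bigoplus_i p_i\,\rho_{AB}^{(i)}$ on Bob's side; then each $\rho^B_\lambda$ is block-diagonal with block weights $w^\lambda_i=p_iq^{(i)}_\lambda/q_\lambda$, $\sum_iw^\lambda_i=1$. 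Two elementary estimates then feed the endgame. (a) For a single qubit, $\max_{\ket\psi}\bigl(|\langle\psi|\sigma_z|\psi\rangle|+|\langle\psi|(\cos2\phi_i\sigma_z+\sin2\phi_i\sigma_x)|\psi\rangle|\bigr)=\max_{\epsilon_0,\epsilon_1=\pm1}\|\epsilon_0 B_0^{(i)}+\epsilon_1 B_1^{(i)}\|=\sqrt{2(1+|\cos2\phi_i|)}=:h_i$; since $\rho\mapsto|\operatorname{Tr}\rho B_0|+|\operatorname{Tr}\rho B_1|$ is convex, every $\lambda$ obeys $|\beta_0^{(i)}(\lambda)|+|\beta_1^{(i)}(\lambda)|\le h_i$, and combining with $|\beta_y(\lambda)|\le\sum_iw^\lambda_i|\beta^{(i)}_y(\lambda)|$ gives $J_1\le\sum_ip_ih_i$. (b) Cauchy--Schwarz on block $i$ gives $\langle\mathrm{CHSH}\rangle_{\rho_{AB}^{(i)}}\le\sqrt{\operatorname{Tr}[\rho_B^{(i)}(B_0+B_1)^2]}+\sqrt{\operatorname{Tr}[\rho_B^{(i)}(B_0-B_1)^2]}=\sqrt{2+2\cos2\phi_i}+\sqrt{2-2\cos2\phi_i}=2\sqrt{1+|\sin2\phi_i|}=:s_i$, whence the observed value satisfies $S_0=\sum_ip_i\langle\mathrm{CHSH}\rangle_{\rho_{AB}^{(i)}}\le\sum_ip_is_i$.

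Finally I would notice the algebraic identity $h_i=F(s_i)$ with $F(S):=\tfrac12\bigl(S+\sqrt{8-S^2}\bigr)$: from $|\sin2\phi_i|=s_i^2/4-1$ one gets $|\cos2\phi_i|=\tfrac{s_i}{4}\sqrt{8-s_i^2}$, so $h_i^2=2+\tfrac{s_i}{2}\sqrt{8-s_i^2}=F(s_i)^2$. Differentiating twice shows $F$ is strictly concave and non-increasing on $[2,2\sqrt2]$. Hence, by Jensen and then monotonicity (with $s_i$, $\sum_ip_is_i$ and $S_0$ all lying in $[2,2\sqrt2]$ and $\sum_ip_is_i\ge S_0$),
\[
J_1\ \le\ \sum_ip_ih_i\ =\ \sum_ip_iF(s_i)\ \le\ F\!\Bigl(\sum_ip_is_i\Bigr)\ \le\ F(S_0)\ =\ \frac{S_0+\sqrt{8-S_0^2}}{2}.
\]

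The main obstacle I anticipate is the blockwise bookkeeping: making the dephasing reduction airtight, and in particular respecting that a low-weight Jordan block with nearly parallel $B_0,B_1$ has $h_i$ close to $2$ yet contributes only $p_ih_i$ to $J_1$. It is precisely this state-weighted (rather than worst-case) accounting that allows $\sum_ip_ih_i$ to collapse, via $h_i=F(s_i)$ and concavity of $F$, exactly onto the claimed bound $F(S_0)$; by contrast the per-block optimizations in (a) and (b) are routine trigonometry and operator-norm estimates.
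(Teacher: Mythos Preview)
Your proof is correct. The paper itself does not prove this theorem---it merely cites \cite{Lobo2024}---and the surrounding discussion covers only the extremal case $S_0=2\sqrt2$, where self-testing fixes Bob's observables as Paulis and the long-path bound becomes Pusey's joint-measurability steering inequality \cite{Pusey2015}. Your argument supplies a clean analytic derivation for all $S_0\in[2,2\sqrt2]$: the reduction of $J_1$ to a functional of Bob's assemblage via the entanglement-breaking long path, followed by Jordan's lemma on Bob's side, is exactly the right route; the Cauchy--Schwarz estimate in (b) needs only $\|A_{x0}\|\le1$ (not $A_{x0}^2=I$), so it covers general short-path POVMs; and the identity $h_i=F(s_i)$ together with concavity and monotonicity of $F$ on $[2,2\sqrt2]$ closes the argument. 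The block-weighting concern you flag at the end is handled by $\sum_\lambda q_\lambda w^\lambda_i=p_i$, which follows from $\sum_\lambda q_\lambda\rho^B_\lambda=\rho_B$.
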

\noindent For other values of $\theta$, an SRQ bound of $J_{1}^{\theta}$ can be obtained numerically using the NPA hierarchy \ref{subsec:NPA}. 
\paragraph{Universal bounds on critical detection efficiency} Similarly to the case of the regular Bell experiment, lower bounds on critical detection efficiency can be obtained in routed Bell scenarios. The detection efficiencies in a routed Bell experiment can be denoted by the vector $\Vec{\eta}=(\eta_{A_0},\eta_{A_1},\eta_{B})$. In \cite{Lobo2024} it was proved that there exists an SRQ model if the following condition is satisfied.
\begin{equation}
    \eta_{A_1}\leq\frac{\eta_B(m_B-1)}{\eta_B(m_{A_1}m_B-1)-(m_{A_1}-1)},
    \label{SRQuniverslabound}
\end{equation}
which is independent of the number of measurement settings $m_{A_0}$. For the special case \(m_{A_0}=0\), this bound can be applied to the standard Bell experiment. Therefore, this bound places fundamental limits on the distance at which nonlocal correlations can be observed for both regular and routed Bell experiments. Since the right-hand side of \eqref{SRQuniverslabound} is always greater than \(1/m_{A_1}\), the detection efficiency of the remote device \(A_1\) cannot be lower than \(1/m_{A_1}\), even if the other detectors are perfect. \\
Although the bound in \eqref{SRQuniverslabound} applies to both standard and routed Bell experiments, this does not mean that the Routed Bell experiment cannot be more robust to photon losses than the regular ones. To see this, consider a protocol that the nearby detectors have the same efficiency $\eta_{A_0}=\eta_B=\eta_S$ and the remote detector has a lower one $\eta_{A_1}\leq \eta_{S}$.  Assuming the case that produces maximal CHSH violation ($\beta_0=2\sqrt{2}$ for $\eta_S=1$), and following anticommuting measurement settings for $A_1$ 
\begin{equation}
    A_{01}=s_{\theta}\sigma_x+c_{\theta}\sigma_z, \quad A_{11}=c_{\theta}\sigma_x-s_{\theta}\sigma_z.
\end{equation}
Considering the standard Bell experiment between $(A_1,B)$ by ignoring the relay $R$ in fig. \ref{fig:routedBI}. The violation of CHSH inequality is then implied 
\begin{equation}
    \beta_1=2\eta_{B}(c_{\theta}+s_{\theta})>2 \longrightarrow \eta_{B}>\frac{1}{c_{\theta}+s_\theta},
\end{equation}
for the routed Bell inequality using equation \eqref{SRQbound}, one can get the following bound for all values of $\theta$
\begin{equation}
    \eta_{B}\geq\frac{1}{\sqrt{2}}\approx 0.71,
\end{equation}
Considering that $\frac{1}{c_\theta+s_{\theta}}\geq0.71$ (the equality holds for $\theta=\frac{\pi}{4}$), for values $\theta\in(0,\frac{\pi}{4}]$ the routed Bell experiment can tolerate higher losses compared to standard Bell inequality.
The critical efficiency can be further reduced by the following LP inequality. 
\begin{align}
    J_{1}^{\theta_{+},\theta_{-}} &= (c_{\theta_{+}} + s_{\theta_{-}}s_{\theta_{+}})\langle A_{01}B_1\rangle 
    + (c_{\theta_{+}} - s_{\theta_{-}}s_{\theta_{+}})\langle A_{11}B_1\rangle \nonumber \\
    &\quad + (s_{\theta_{+}} - s_{\theta_{-}}c_{\theta_{+}})\langle A_{01}B_0\rangle 
    + (s_{\theta_{+}} + s_{\theta_{-}}c_{\theta_{+}})\langle A_{11}B_0\rangle \nonumber \\
    &\quad + c_{\theta_{-}}(\langle A_{0L}\rangle + \langle A_{1L}\rangle) \leq 2
\end{align}
where the SRQ bound $J_{1}^{\theta_{+},\theta_{-}}\leq 2$ is obtained assuming $\beta_0=2\sqrt{2}$ \cite{Lobo2024}. By considering the general projective measurements for $A_1$ of the form $A_{01}=s_{\theta_0}\sigma_x+c_{\theta_0}\sigma_z$ and $A_{11}=s_{\theta_1}\sigma_x+c_{\theta_1}\sigma_z$ the following lower bounds can be obtained for standard and routed Bell strategies 
 \begin{equation}
        \eta_{A_{1}}>
\begin{cases} 
\frac{1}{c_{\theta_{+}}(c_{\theta_{-}}+s_{\theta_{-}})} & \text{Standard Bell test}, \\
\frac{1}{1+c_{\theta_{-}}} & \text{Routed Bell test }.
\end{cases}
    \end{equation}
    As $\theta\rightarrow 0$, the critical efficiency in routed Bell scenario approaches $\frac{1}{2}$, which saturates the universal lowerbound \eqref{SRQuniverslabound}. There exist an explicit SRQ bound when $\eta_{A_L}=\frac{1}{1+c_{\theta_{-}}}$ implying that the above bound is tight \cite{Lobo2024}.  \\
 Sekatski et al. \cite{Sekatski2025} consider another test of nonlocality between $(A_1,B)$ where $A_1$ has a continuous number of settings $A_1\equiv\theta\in[0,2\pi)$ and they evaluate the following LP quantity 
 \begin{equation}
     \mathcal{C}=\int\frac{d\theta}{2\pi}\sum_{a_1,b=0,1}(-1)^{a+b}(c_{\theta}p(a_1,b|\theta,0,b)+s_{\theta}p(a_1,b|\theta,1,b))
 \end{equation}
    and satisfy the Bell inequality $\mathcal{C}\leq\frac{2\sqrt{2}}{\pi}\sin(\pi\frac{\mathcal{T}}{2})$ where $\mathcal{T}=\int\frac{d\theta}{2\pi}\sum_{a_1=0,1}p(a_1|\theta)$ is the average click probability of $A_1$ detector. Thanks to this Bell inequality, a strong routed Bell inequality can be proved as follows \cite{Sekatski2025} 
    \begin{theorem}
    \label{theorem:sekatski2025}
        All SRQ correlations satisfy the following tight routed Bell inequality
        \begin{equation}
            \mathcal{C}\leq\frac{2}{\pi}\sin\left(\frac{\pi}{2}\mathcal{T}\right) \begin{cases}
                \frac{\beta_0+\sqrt{8-\beta_0^2}}{2\sqrt{2}} \quad &\beta_0>2 \\
                \sqrt{2} \quad &\beta_0\leq 2
            \end{cases}
            \label{eq:sekatski2025}
        \end{equation}
    \end{theorem}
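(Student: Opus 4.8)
The plan is to unfold the structure of short‑range quantum (SRQ) correlations, reduce the integral over the continuous setting $\theta$ to an elementary one‑dimensional rearrangement, and then control the residual ``steering'' of Bob's dichotomic observables $B_0,B_1$ by the short‑path value $S_0$ via Jordan's lemma. First I would write out the definition~\eqref{SRQcorrelation}: for $\bm p\in\mathcal{Q}_{SR}$ the long‑path statistics factor through a classical flag, i.e.\ there are a distribution $p(\lambda)$, a $\lambda$‑conditioned response $p(a_1|\theta,\lambda)$ of Alice's remote device, and steered states $\rho_\lambda^B\propto\mathrm{Tr}_A[(N_\lambda\otimes\bm 1_B)\rho_{AB}]$ with $\sum_\lambda p(\lambda)\rho_\lambda^B=\rho_B$. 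Abbreviating $\alpha^\lambda_\theta=\sum_{a_1}(-1)^{a_1}p(a_1|\theta,\lambda)$, $\tau^\lambda_\theta=\sum_{a_1}p(a_1|\theta,\lambda)\in[0,1]$ (so $|\alpha^\lambda_\theta|\le\tau^\lambda_\theta$) and $\beta^\lambda_y=\mathrm{Tr}[\rho_\lambda^B B_y]$, the quantities of \eqref{eq:sekatski2025} become
\begin{equation}
\mathcal{C}=\sum_\lambda p(\lambda)\int\frac{d\theta}{2\pi}\,\alpha^\lambda_\theta\bigl(\cos\theta\,\beta^\lambda_0+\sin\theta\,\beta^\lambda_1\bigr),\qquad \mathcal{T}=\sum_\lambda p(\lambda)\int\frac{d\theta}{2\pi}\,\tau^\lambda_\theta=:\sum_\lambda p(\lambda)\,\mathcal{T}_\lambda .
\end{equation}

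Next I would collapse the $\theta$‑integral for each fixed $\lambda$. Setting $R_\lambda:=\sqrt{(\beta^\lambda_0)^2+(\beta^\lambda_1)^2}$ and choosing $\phi_\lambda$ so that $\cos\theta\,\beta^\lambda_0+\sin\theta\,\beta^\lambda_1=R_\lambda\cos(\theta-\phi_\lambda)$, the integral $\int\frac{d\theta}{2\pi}\alpha^\lambda_\theta\cos(\theta-\phi_\lambda)$ is maximised---subject to $|\alpha^\lambda_\theta|\le\tau^\lambda_\theta\le1$ and $\frac1{2\pi}\int\tau^\lambda_\theta\,d\theta=\mathcal{T}_\lambda$---by the ``bathtub'' profile that sets $\tau^\lambda_\theta=1$ on the two arcs of total length $2\pi\mathcal{T}_\lambda$ centred on the peaks of $|\cos(\theta-\phi_\lambda)|$ and aligns $\mathrm{sgn}\,\alpha^\lambda_\theta$ with $\cos(\theta-\phi_\lambda)$ there. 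An elementary computation gives $\int\frac{d\theta}{2\pi}\alpha^\lambda_\theta\cos(\theta-\phi_\lambda)\le\tfrac2\pi\sin(\tfrac\pi2\mathcal{T}_\lambda)$, so that
\begin{equation}
\mathcal{C}\ \le\ \frac2\pi\sum_\lambda p(\lambda)\,R_\lambda\,\sin\!\Bigl(\frac\pi2\,\mathcal{T}_\lambda\Bigr).
\end{equation}

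The core is to bound the steering average by $S_0$. Using Jordan's lemma (Lemma~\ref{Jordanlemma}) one may take $B_0=\bigoplus_j\sigma_Z^{(j)}$, $B_1=\bigoplus_j(\cos\mu_j\,\sigma_Z+\sin\mu_j\,\sigma_X)^{(j)}$. Since $\cos\phi\,B_0+\sin\phi\,B_1$ is block‑diagonal, $R_\lambda=\max_\phi\mathrm{Tr}[\rho_\lambda^B(\cos\phi\,B_0+\sin\phi\,B_1)]$ splits over blocks, and from $\|(\cos\phi\,B_0+\sin\phi\,B_1)^{(j)}\|=\sqrt{1+\sin2\phi\cos\mu_j}\le\sqrt{1+|\cos\mu_j|}$ one gets $R_\lambda\le\sum_j q^\lambda_j\sqrt{1+|\cos\mu_j|}$ with $q^\lambda_j=\mathrm{Tr}[\Pi_j\rho_\lambda^B]$; averaging and using $\sum_\lambda p(\lambda)q^\lambda_j=w_j:=\mathrm{Tr}[\Pi_j\rho_B]$ yields $\sum_\lambda p(\lambda)R_\lambda\le\sum_j w_j\sqrt{1+|\cos\mu_j|}$. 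The two‑dimensional Tsirelson bound caps the short‑path CHSH contribution of block $j$ at $2w_j\sqrt{1+|\sin\mu_j|}$, hence $S_0\le2\sum_j w_j\sqrt{1+|\sin\mu_j|}$; feeding this constraint into the previous inequality and using that $t\mapsto\sqrt{1+\sqrt{1-(t^2-1)^2}}$ is concave and decreasing on $[1,\sqrt2]$ gives
\begin{equation}
\sum_\lambda p(\lambda)\,R_\lambda\ \le\ \sqrt{1+\sqrt{1-(S_0^2/4-1)^2}}\ =\ \frac{S_0+\sqrt{8-S_0^2}}{2\sqrt2}\ =:\ g(S_0)\qquad(S_0>2),
\end{equation}
while for $S_0\le2$ one uses the trivial $R_\lambda\le\sqrt2=g(S_0)$. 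Combining this with the concavity of $t\mapsto\sin(\tfrac\pi2 t)$ to extract the detector budget from the $\lambda$‑average gives $\mathcal{C}\le\frac2\pi g(S_0)\sin(\tfrac\pi2\mathcal{T})$, i.e.\ \eqref{eq:sekatski2025}. The clean anchor is $S_0=2\sqrt2$: then $B_0,B_1$ genuinely anticommute, $R_\lambda\le1$, and the trace normalisation of a rank‑one POVM $\{N_\lambda\}$ forces $\sum_\lambda p(\lambda)R_\lambda\le1=g(2\sqrt2)$, recovering $J_1\le\sqrt2$ of \eqref{SRQbound} as a special case.

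\textbf{Main obstacle.} Steps~1--2 are mechanical; the delicate point is the packaging in Step~3. Per outcome, $R_\lambda$ can be pushed towards $\sqrt2$ even when $S_0>2$, by steering into a near‑commuting Jordan block of small weight, so no uniform bound $R_\lambda\le g(S_0)$ holds; what rescues the argument is that such $\lambda$ necessarily carry small $p(\lambda)$ (because $w_j=\sum_\lambda p(\lambda)q^\lambda_j$ must be small), and the block‑wise concavity chain above is exactly the device that turns this into the averaged bound---while simultaneously the detector budgets $\mathcal{T}_\lambda$ must be tracked through the same average so that they collapse to the single prefactor $\sin(\tfrac\pi2\mathcal{T})$. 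I expect the cleanest way to make the joint trade‑off between the block weights $w_j$, the probabilities $p(\lambda)$, the amplitudes $R_\lambda$ and the budgets $\mathcal{T}_\lambda$ rigorous is to cast the whole maximisation as a semidefinite program over the moment matrix of $\{N_\lambda\}\cup\{B_0,B_1\}$ in the spirit of Section~\ref{subsec:NPA} and argue through its dual, in the manner of the $J_1$‑tradeoffs of \cite{Lobo2024,pironio2010random}. Tightness then follows by exhibiting the saturating strategy of \cite{Sekatski2025}---a maximally entangled qubit pair, Bob's observables at the angle $\mu^\star=\tfrac\pi2-2\arccos\!\bigl(S_0/(2\sqrt2)\bigr)$, a projective relay measurement steering Bob onto the bisector of $B_0,B_1$, and a remote detector clicking on arcs of relative length $\mathcal{T}$---for which every inequality above is an equality.
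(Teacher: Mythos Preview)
Your bathtub reduction and the Jordan--lemma bound on $\sum_\lambda p(\lambda)R_\lambda$ (steps~1--2) are both correct and closely track the steering picture the paper sketches. The obstacle you flag in step~3 is real, but more severe than the ``block-wise concavity chain'' you offer as a rescue. The remote response $p(a_1|\theta,\lambda)$---and hence the per-$\lambda$ budget $\mathcal{T}_\lambda$---is entirely decoupled from Bob's Jordan structure, so an adversarial SRQ strategy can load all click probability onto those $\lambda$ that steer Bob into a near-commuting block (where $R_\lambda\to\sqrt2$) and none onto the anticommuting block carrying $S_0$. With two equal-weight Jordan blocks at angles $\mu_1=\tfrac\pi2$, $\mu_2=0$ one has $S_0=1+\sqrt2$, $g(S_0)\approx1.375$; steering $\lambda_1$ (weight $\tfrac12$) fully into block~2 with $\mathcal{T}_{\lambda_1}=2\mathcal T$, and setting $\mathcal{T}_\lambda=0$ on the remaining $\lambda$'s, \emph{saturates} your intermediate upper bound at $\tfrac{\sqrt2}{2}\sin(\pi\mathcal T)$, which for small $\mathcal T$ strictly exceeds $g(S_0)\sin(\tfrac\pi2\mathcal T)$ (the ratio tends to $2\sqrt2/2.75>1$). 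So the averaged constraints you isolate---$\sum_\lambda p(\lambda)R_\lambda\le g(S_0)$, $\sum_\lambda p(\lambda)\mathcal{T}_\lambda=\mathcal T$, $R_\lambda\le\sqrt2$---are provably too weak to yield the product form; the moment-matrix SDP you propose at the end might in principle supply the missing constraints, but that is a different argument from the one in your main body, and it is not carried out.

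The paper's own proof does not pass through this factorisation. It recasts the SRQ hypothesis as a local-hidden-state model for the assemblage $\{\sigma_{a_1|\theta}\}$ that $A_1$ remotely prepares on Bob's side, and then invokes a steering inequality---one in which the click fraction $\mathcal T$ and the $S_0$-certification of $(B_0,B_1)$ enter \emph{jointly}---to obtain \eqref{eq:sekatski2025} directly, the details being deferred to \cite{Sekatski2025}. That coupled steering bound is the device your outline is missing; the anchor $S_0=2\sqrt2$, where self-testing forces $R_\lambda\le1$ \emph{uniformly} and your concavity step does go through, is precisely the only regime in which the two averages can be safely decoupled.
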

   \noindent The proof of this theorem is based on a steering scenario. Specifically, we first apply the SP-CHSH test to gather information about Bob's measurement settings, which is then used to determine whether the correlations observed in the LP are compatible with an SRQ model \eqref{SRQcorrelation}. For instance, when $ \beta_0 = 2\sqrt{2}$, one can certify that the shared state $\rho_{AB}$ is a two-qubit Bell state and  Bob's measurement settings correspond to the Pauli operators $\sigma_z$ and  $\sigma_x$. In this case, Eq.~\eqref{SRQcorrelation} becomes equivalent to a steering scenario in which Bob's states are remotely prepared by $A_1$ measurements, i.e., the assemblage $\rho_{A_1|\theta}$ admits a LHV model of the form $\rho_{A_1|\theta} = \sum_{\lambda} p(a_1|\theta,\lambda) \rho_{\lambda}$. If one can show that $\rho_{A_1|\theta}$ does not admit such a model, then it exhibits steering and therefore belongs to the set $\mathcal{M}_{qqq}$. Consequently, the violation of any steering inequality certifies that the full correlations are genuinely quantum. The complete proof of this theorem in \cite{Sekatski2025} uses such steering inequalities to derive Eq.~\eqref{eq:sekatski2025}.
\\
    If $\beta_0=2\sqrt{2}$, and $A_1$ perform all real projective qubit measurements of the form $c_{\theta}\sigma_z+s_{\theta}\sigma_x$, then  $\mathcal{C}=\mathcal{T}=\eta_{l}$ where $\eta_l$ is the transmission rate of the $LP$ link (see section \ref{sec:losses}). Putting these together, there is a violation of the inequality \eqref{eq:sekatski2025} that happens when 
    \begin{equation}
        \mathcal{C}=\eta_l>\frac{2}{\pi}\sin{\left(\frac{\pi\eta_l}{2}\right)},
\end{equation}
which can be satisfied for any $\eta_l>0$. As a result, \textit{the routed Bell inequality expressed in \ref{theorem:sekatski2025} can be violated for arbitrary transmission $\eta_l>0$, offering a dramatic advantage in terms of robustness to loss compared to standard Bell tests.} Even when the source and detectors are not ideal but sufficiently reliable, quantum nonlocal correlations can still be established for arbitrarily low transmission, provided that the number of measurement settings on $A_1$ ($n_{A_1}$) satisfies condition $n_{A_1} > \frac{1}{\eta_l}$. \\
\paragraph{Parallel repeated routed Bell experiments}  
The repeated routed Bell experiments introduced in \cite{Chaturvedi2025} work as follows: consider a Bell experiment where Alice and Bob both have $N$-bit strings $\boldsymbol{x}=(x_j)_{j=1}^N$ and $\boldsymbol{y}=(y_j)_{j=1}^N$ as inputs and produce $N$-bit strings $\boldsymbol{a}=(a_j)_{j=1}^N$ and $\boldsymbol{b}=(b_j)_{j=1}^N$. In the routed version, based on an additional input $i\in\{0,1\}$, Alice decides to direct the transmission from the source to her measurement device $A_i$. The source prepares a $2^N\times 2^N$ dimensional state $\ket{\Phi^{+}}^{\otimes N}$, and the devices close to the source, $(A_0,B)$ perform the following measurements 
\begin{equation}
A_{a|x} = \bigotimes_{j=1}^{N} A_{a_{j}|x_j}, \quad B_{b|y} = \bigotimes_{j=1}^{N} B_{b_{j}|y_j},
\end{equation}
where 
\begin{equation*}
A_{a_{j}|x_j} = \frac{1}{2} \left[ \bm{I}_{2} + (-1)^{a_j} \left( \delta_{x_j,0} \sigma_z + \delta_{x_j,1} \sigma_x \right) \right] \quad
B_{b_j|y_j} = \frac{1}{2} \left[ \bm{I}_{2} + \frac{(-1)^{b_j}}{\sqrt{2}} \left( \delta_{y_j,0} (\sigma_x + \sigma_z) + \delta_{y_j,1} (\sigma_x - \sigma_z) \right) \right]
\end{equation*}
This strategy achieves the maximal violation of the $N$-product CHSH inequality defined as 
\begin{equation}
\label{eq:CHSHNproduct}
C^{N}_{A_0B} = \frac{1}{2^{2N}} \sum_{\boldsymbol{x},\boldsymbol{y},\boldsymbol{a},\boldsymbol{b}} (\Pi_{j=1}^N\delta_{x_j  y_j = a_j \oplus b_j}) \, p(a, b \mid x, y, 0) = \alpha^{N},
\end{equation}
where $\alpha=\frac{1}{2}(1+\frac{1}{\sqrt{2}})$. A penalized version of this equation with the penalty parameter $q$ can be obtained as a generalization of the LP inequality in \eqref{JLP} (Also see equation (63) in \cite{Lobo2024}) 
\begin{equation}
\label{eq:q-penalized}
C^{N}_{A_1B}(q) = C^{N}_{A_1B} - q \, \frac{1}{2^{N}} \sum_{\boldsymbol{x},\,\boldsymbol{a} \in \{0,1\}^{N}} p_{A}(\boldsymbol{a}\mid \boldsymbol{x}, 1),
\end{equation}
where $q\in[0,\frac{1}{2^N}(1+\frac{1}{\sqrt{2}})^N]$. For a strategy with $A_1$ clicking with efficiency $\eta$ gives 
\begin{eqnarray}
    C^{N}_{A_1B}(q)=(\alpha^N-q)\eta.
\end{eqnarray}
It is shown in \cite{Chaturvedi2025} that when $A_1$ and $B$ witness the maximal quantum violation of \eqref{eq:CHSHNproduct} the maximal attainable value of the $q$-penalized $N$-product CHSH inequality \eqref{eq:q-penalized} with joint measurable measurements in $A_1$ satisfies the following upper bound
\begin{equation}
\label{CHSHNmax}
C^{N}_{A_1B(q)} \leq \frac{1}{2^{N}} \left( \alpha^{N} - q \right),
\end{equation}
for $q\in[\alpha^{N-1}\beta,\alpha^{N-1}]$ where $\beta=\frac{1}{8}(2+\sqrt{2}+\sqrt{4\sqrt{2}-2})$. Then the following result is immediately obtained 
\begin{theorem}
    The $q$-penalized $N$-product CHSH inequality \eqref{eq:q-penalized} can be violated by the strategy achieving \eqref{CHSHNmax} for $q\in[\alpha^{N-1}\beta,\alpha^{N-1}]$  whenever $\eta>\frac{1}{2^N}$. 
\end{theorem}
As a result, the threshold detection efficiency of the distant device, needed to certify non-jointly-measurable measurements, a prerequisite of secure DI-QKD, decreases exponentially with the number $N$ of parallel repetitions.

\paragraph{rDI-OKD: DI-QKD based on the routed Bell test}
\label{sec:rDIQKD}
\begin{figure}
\begin{tikzpicture}[scale=1, every node/.style={font=\small}]
\node[draw, circle, fill=gray!30, minimum size=30pt] (S) at (-3,0) {$\rho$};

\node[draw, rectangle, fill=blue!20, minimum width=1.5cm, minimum height=1cm, align=center] 
(B) at (6,0) {B};

\node[draw, rectangle, fill=red!20, minimum width=1.5cm, minimum height=1cm, align=center] 
(A0) at (-1,1.6) {T} ;

\node[draw, rectangle, fill=red!20, minimum width=1.5cm, minimum height=1cm, align=center] 
(A1) at (-5,0) { $A$} ;

\node[draw, circle, fill=green!20, minimum size=10pt] (R) at (-1.5,0) {R};
\draw[thick,->] (-5,1) -- (A1) node[midway, above, yshift=5pt] {$x$};
\draw[thick,->] (A1) -- (-5,-1) node[midway, above, yshift=-16pt] {$a$};
\draw[thick,->] (-1,2.6) -- (A0) node[midway, above, yshift=5pt] {$z$};
\draw[thick,->] (A0) -- (-1,0.6) node[midway, above, yshift=-16pt]{$c$};
\draw[thick,->] (-1.8,0.7) -- (R) node[midway, above, yshift=4pt] {$s$};
\draw[thick,->] (6,1) -- (B) node[midway, above, yshift=5pt] {$y$};
\draw[thick,->] (B) -- (6,-1) node[midway, above, yshift=-18pt] {$b$};
\draw[thick, ->] (S) -- (A1) node[midway, above] {};
\draw[thick, ->] (S) -- (R) node[midway, above] {};
\draw[thick, ->] (R) -- (B) node[midway, above] {};

\draw[thick, ->, dashed] (R) -- (A0) node[midway, above] {};
\end{tikzpicture}
\caption{rDI-QKD Protocol: Alice ($A$) and Bob ($B$) aim to establish a secret key over a long distance using the routed Bell setting. The relay ($R$) receives an input $s \in {0, 1}$ and, based on its value, sends the particle from the source to either the nearby device $T$ ($s = 0$) or the distant device $B$ ($s = 1$).  } 
\label{fig:rDIQKD}
\end{figure}
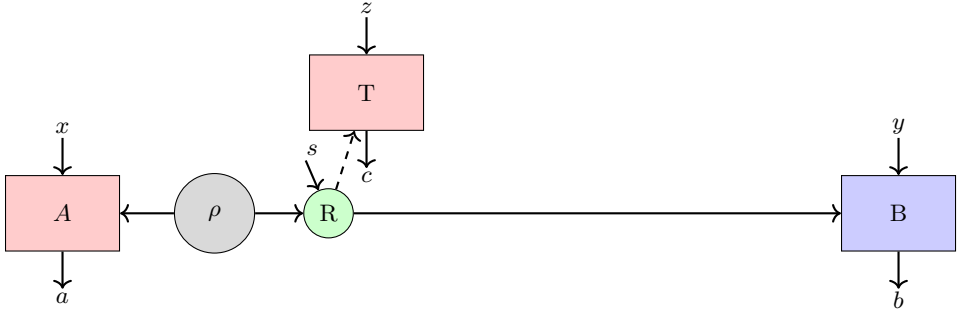
The routed Bell scenario introduced in \ref{sec:routedBI} exhibits features that enhance its applicability to DI-QKD and, in certain cases, provides advantages over standard Bell tests. For example, the BB84 correlations, which can be generated from a maximally entangled two-qubit state by performing measurements in the $\sigma_x$ and $\sigma_z$ bases, can be reproduced classically in a standard Bell setup which makes these correlations unsuitable for standard DI-QKD. However, their quantum nature can be demonstrated in a routed Bell scenario by performing random CHSH tests in the $(A,T)$ configuration.  
Consider the case in \ref{fig:rDIQKD} where $A$ and $T$ are two devices close to the source and the device $B$ is located far from the source and the purpose is to establish a key between $A$ and $B$. As in standard DI-QKD, all components including the relay $R$ are untrusted. The only assumption is that they obey certain no-signaling constraints preventing them from signaling arbitrarily to each other. \\
Unlike standard DI-QKD, some particles emitted from the source are routed to the testing device $T$ instead of $B$. The output of these cases always contributes to parameter estimation and cannot be used to generate a key.\\
The rDI-QKD protocol introduced in \cite{LeRoyDeloison2025} as follows: in each round of the protocol, Alice generates independent random variables $X_i\in\mathcal{X}$ and $s_i\in\{0,1\}$ and feed them to her device $A$ and the relay $R$, respectively. Based on the value of $s_i$, two cases can occur: 
\begin{enumerate}
    \item $s_i=0$:  Bob's quantum particle is routed to $T$, in this case, Alice generates a random variable $Z$ and feeds it to $T$. She records the output variable $A_i$ and publicly announces $S_i$.
    \item $s_i=1$ : Bob's quantum particle is routed to $B$ where Bob generates a random variable $Y_i\in\mathcal{Y}$, feeds it on his device, and records the output $B_i$.
\end{enumerate}
After all rounds, Alice and Bob check the date for which $s_i=0$ to see if they violate a routed Bell inequality. On the other hand, they agree on a subset of the rounds to generate the raw key. Finally, if the test of violating a routed Bell inequality is passed, they apply error correction and privacy amplification to extract the secret key. \\
An important condition must be satisfied for the security of rDI-QKD. This condition is captured in the following theorem \cite{LeRoyDeloison2025}.
\begin{theorem}
    Long-range quantum correlation as defined in \ref{routedBIcorrelations} is necessary for the security of rDI-QKD.
\end{theorem}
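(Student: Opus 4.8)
The plan is to prove the contrapositive: if the correlations produced in the routed rounds do \emph{not} require long-range quantum behaviour---that is, they admit a short-range quantum (SRQ) model in the sense of Definition~\ref{routedBIcorrelations}, $\bm p\in\mathcal{Q}_{SR}$---then no secret key can be distilled between $A$ and $B$, so the protocol of~\cite{RoyDeloison2024} cannot be secure. First I would fix notation for the $s=1$ rounds that feed the raw key: $A$ measures the particle received directly from the source with setting $x$ and outcome $a$, while Bob's particle is routed by the untrusted relay $R$ along the long path to $B$, who measures $M_{b|y}$. By hypothesis the long-path channel $C_1:\rho\mapsto\sum_\lambda \operatorname{Tr}[N_\lambda\rho]\,\rho_\lambda$ is entanglement-breaking, so conditioned on $s=1$ the state reaching $B$ is a fixed state $\rho_\lambda$ labelled by a classical register $\lambda$, and the key-round statistics factorize as
\begin{equation}
p(a,b\mid x,y,s{=}1)=\sum_\lambda p(a,\lambda\mid x)\,\operatorname{Tr}[M_{b|y}\,\rho_\lambda],\qquad p(a,\lambda\mid x)=\operatorname{Tr}\big[(M_{a|x}\otimes N_\lambda)\rho_{AB}\big].
\end{equation}

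The decisive observation is that the register $\lambda$ is generated \emph{inside} the relay, which is an untrusted device; keeping a copy of $\lambda$ violates none of the standing no-signaling constraints, so in any sound security analysis $\lambda$ must be absorbed into Eve's side information, $\Lambda\subseteq E$. Together with the public settings, the factorization then exhibits $A\perp B\mid E$: once $\lambda$ and $y$ are fixed, Bob's outcome $b\sim\operatorname{Tr}[M_{b|y}\rho_\lambda]$ carries no further correlation with Alice's data. Hence $I(A:B\mid E)=0$, so the intrinsic information vanishes, $I(A:B\downarrow E)=0$, and the upper bound $r\le I(A:B\downarrow E)$ on the extractable secret-key rate recalled in Sec.~\ref{subsubsec:extraction} forces $r=0$; equivalently, $H(B\mid\Lambda Y)\le H(B\mid AY)$, so Eve's knowledge of Bob's key symbols is never inferior to Alice's and privacy amplification cannot succeed. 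This settles the contrapositive, and it is exactly dovetailed with the role of the routed Bell test in the protocol: the parameter-estimation step is chosen to certify $\bm p\notin\mathcal{Q}_{SR}$, so a non-aborting run guarantees the long-range quantum property that the argument has just shown to be indispensable---any weaker certification would leave the door open to the SRQ attack above.

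I expect the main obstacle to be lifting this information-theoretic sketch, which is naturally phrased for i.i.d./asymptotic correlations, to a finite-size, composable statement valid against coherent adversaries---the regime in which DI-QKD security is actually formulated. Concretely, the Markov-chain/intrinsic-information step must be replaced by a one-shot inequality of the form $H_{\min}(B\mid E)_{\rho}\le H(B\mid A)$, which morally follows because Eve's register $\Lambda$ already fixes Bob's conditional state $\rho_\lambda$, but making it rigorous requires treating the general (no-signaling, possibly device-correlated and round-adaptive) behaviour of the relay and justifying that ``the observed statistics admit an SRQ model'' is precisely the failure mode the routed Bell inequality excludes; entropy-accumulation-type tools (Sec.~\ref{sec:chap4}) would be the natural vehicle, with the no-signaling assumption on $R$ doing the work that reduces an adaptive relay to a round-wise SRQ decomposition.
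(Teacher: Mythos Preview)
Your proposal is correct and matches the paper's proof: both argue by contrapositive that if the correlations admit an SRQ model, Eve can hold the classical register $\lambda$ from the entanglement-breaking decomposition, whence $p(a,b\mid x,y,\lambda)=p(a\mid x,\lambda)\,p(b\mid y,\lambda)$ and no key can be extracted. The only cosmetic difference is that the paper has Eve actively perform the parent POVM $\{N_\lambda\}$ on the public $R$--$B$ channel (rather than locating $\lambda$ ``inside the relay''), which makes explicit that the attack is available whenever the \emph{observed} statistics lie in $\mathcal{Q}_{SR}$, regardless of whether the honest long-path channel is itself entanglement-breaking.
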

\begin{proof}
    The proof can be done by contradiction. Let us consider that Alice and Bob only generate the SRQ correlation of the form \eqref{SRQcorrelation}, then Eve can perform the parent POVM $N=\{N_{\lambda}\}$ on the public channel between $R$ and $B$. Since Eve can keep a copy of the classical outcome $\lambda$, the correlation between $A$ and $B$ was factorized when conditioned on Eve's information, i.e. $p(a,b|x,y,\lambda)=p(a|x,\lambda)p(b|y,\lambda)$, implying that no secure key can be extracted between $A$ and $B$ when only an SRQ correlation exists in the protocol.
\end{proof}
\paragraph{General behavior of Eavesdropper in rDI-QKD} Based on the location of the relay $R$, the device $T$, and the source, two situations can be considered: first, consider them as being outside Alice's and Bob's labs and in full control of the eavesdropper, and the second option is to assume that the relay $R$ and the device $T$ are all situated in Alice's lab by imposing the condition that they cannot arbitrarily communicate their private input to Alice's device. Clearly, security in the first case also implies security in the second case, since in the first case, the eavesdropper has more power. The protocol in \cite{LeRoyDeloison2025} considered the first case in which Eve has the measurement device $T$ that performs a measurement $T_z=\{T_{c|z}\}$ that acts jointly on the subsystems $B$ and $E$ when $s=0$, as this is the most general thing she can do to simulate the honest correlation between $A$ and $T$. Considering this behavior for Eve, the secret key rate can be calculated using the method explained in section \ref{subsec:rDIQKDsecurity}. 

To conclude the list of the CHSH-like protocol, in the following, we update the list of critical detection efficiency from Ref~\cite{Xu2022}.

\begin{table}[h]
\centering
\begin{tabular}{clc}
\toprule
\textbf{Protocols} & \textbf{Method} & \textbf{$\eta^*$} \\
\midrule
CHSH$_{p_{g}}$ & Standard analysis \cite{pabgs09andSangouard.4} & 92.4\% \\ 
CHSH$_{p_{g}}$& Efficient post-processing \cite{Ma2012a} & 90.9\% \\ 
CHSH$_{p_{g}}$ & Advantage distillation \cite{tan2020advantage} & 89.1\% \\ 
CHSH$_{\text{IMD}}$ & Iterated mean divergence \cite{Brown2021} & 84.5\% \\ 
CHSH$_T$ & Noisy preprocessing \cite{Ho2020} & 83.2\% \\ 
CH-SH & Asymmetric inequality \cite{Woodhead2021,Sekatski2021} & 82.6\% \\ 
CHSH$_T$ & Quasi-relative entropy \cite{Brown2024} & 80.5\% \\ 
CHSH$_{p\nu_p}$ & Random Postselection \cite{Xu2022} & 68.5\% \\ 
rDI-QKD & Routed BI \cite{chaturvedi2024extending} & 50\%\\
\bottomrule
\end{tabular}
\caption{A comparison of critical detection efficiency $\eta^*$ among different methods for DI-QKD.}
\label{tab:diqkd_threshold}
\end{table}
\paragraph{DI-QKD Protocols Beyond Binary Inputs and Outputs}
Up to now, we have discussed DI-QKD protocols with two inputs and two outputs. It is worth mentioning some results beyond this scenario: Ref~\cite{GonzalesUreta2021} for  BI $I_{4422}^4$ (also in \cite{Brunner2008}); Ref~\cite{Cabello2001} for BI $I_{234}$  with $T=\mathrm{id}$. Both of these protocols perform better than the CHSH$_\chi$ protocol \cite{pabgs09andSangouard.4}, but not as good as the CHSH$_{p_{\mathcal{V}_p}}$ protocol with $T=\mathrm{fp}$ \cite{Xu2022}. This motivates the research to characterize any Bell scenario~\cite{acin2002bell,collins2002bell,zukowski2002bell,Karczewski2022,ho2022entanglement}. Given a Bell scenario, find the largest gap $\beta_\QC-\beta_\LC$ is promising, but does not guarantee a good QKD performance as many aspects needs to be further addressed. Among them, \textit{memory loophole}.

\subsection{DI-QKD memory loopholes}\label{sec:memoryloopholeDIQKD}
We discussed the theoretical security proofs of (fully)-DI-QKD. However, to satisfy the definition of \textit{universal composable security} the theory demands many different devices at each run to close \textit{memory attack loophole}, i.e. a leakage of information due to correlated subsequent outputs \cite{barrett2013memory,Masanes2014,Jain2022}. This is experimentally critical.
Suppose that Alice and Bob have only one device each. Then a memory attack is possible:
the protocol is run on day $i$, generating $K^i_A=K^i_B$
, while informing Eve on day $i=1$ of the hash functions used by Alice for postprocessing. 

Eve can instruct the devices to proceed as follows. 
On day 2, Eve modulates $\rho_{AB}^c$ where $c$ is a classical ancilla to carry new instructions to the device in Alice’s lab as discussed in \textit{collective attacks in QT}. These instructions tell the device the hash functions
used on day 1, allowing Eve to compute $K^1_A=K^1_B$. 
The classical ancillae also instruct the device to bias the protocol for randomly selected inputs by producing specified bits from this secret key as outputs for example by programming the device to announce one key bit of day 1 as an output of some specified input. 
If any of these selected outputs are among those announced in the parameter estimation step, Eve learns the corresponding bits of day 1’s secret key. 
But, Alice or Bob might abort the protocol any day, thus, by waiting long enough, Eve can program them to communicate some or all information about their day 1 key to obtain \(N\) secret bits from day 1 and then program it to abort on the day $N+2$ since from this day Alice and Bob do not have any secret key available. This type of attack is called an \textit{abort attack} and cannot be detected until it is too late.
To defend against these attacks \textit{(i)} all quantum data and public communication come only from Alice, even if, Eve can still program Alice's device to leak $K_A^1$ or $K_A^i$, $i=1,2,\dots$ via the abort attack; \textit{(ii)} encrypt $c_\QC$ 
with some initial preshared seed randomness, even if fails if an abort attack 
involves communication with multiple users who may not be trustworthy; \textit{(iii)} additional independent measurement devices close memory attack loophole, but leave open the imposter attack \cite{barrett2013memory}. 

\section{Other device-independent protocols}
The device-independent framework can be extended to other cryptographic primitives that rely on similar physical assumptions and security techniques. In this section, we briefly discuss three such protocols: device-independent quantum random-number generation, quantum secret sharing, and quantum secure direct communication. In the same fashion as DI-QKD, these protocols exploit BI violations to achieve secure operation with untrusted devices, and their security analyses often rely on comparable tools such as entropy accumulation and randomness extraction.

\subsection{DI quantum random number generation (DIQRNG)} \label{subsec:DIQRNG}
The fact that the decay of the isotope that kills Schroedinger's cat is an intrinsically random event means that the cat is in the superposition state $\ket{\mathrm{cat}}=\alpha \ket{\mathrm{alive}}+\beta \ket{\mathrm{dead}}$. This is in contrast to   simple ignorance of the observer, who would describe the cat in the statistical mixture $\rho_\mathrm{cat}=\alpha \rho_\mathrm{alive}+ \beta \rho_\mathrm{dead}$.  Both examples give a probabilistic prediction concerning the status of the cat.  However, it is only in the former case that true unpredictability is present.   Associating random bits to the clicks of a Geiger counter, already in the 1940s, is based on this idea \cite{gaiger}. However, in practice, the ontological quantum randomness, appearing in the superposition state $\ket{\mathrm{cat}}$, is difficult to distinguish from the randomness due to the ``classical ignorance" present in the mixture $\rho_\mathrm{cat}$  \cite{PhysRevLett.48.291}. 

Realism in Bell’s hypothesis (Sec. \ref{def:realism}) rules out any intrinsic randomness. Therefore, a Bell‐inequality violation not only produces randomness but also— in a device‐independent way—certifies that this randomness cannot be mimicked by a classical mixture \cite{colbeck2009quantum,Colbeck_2011}.
More quantitatively, for iid trials the \emph{conditional smooth min--entropy}
\begin{equation}\label{eq:min-entropy}
H_{\min}^{\varepsilon}(AB|XYE)\ge n\,f(\beta_\QC)\equiv-n\log_{2}\left(\tfrac12+\tfrac12\sqrt{\tfrac{\beta_\QC^{2}}{4}-1},\right),
\end{equation}
where $E$ denotes the adversary's quantum side information, $n$ is the number of rounds is tight for CHSH\cite{Pironio2010,colbeckrenner2012free}.  Equation~\eqref{eq:min-entropy} lower--bounds the number of uniform bits that can be extracted with a strong randomness extractor.
The Bell tests use initial classical randomness for the choice of the settings, DIQRNGs {\it expand} randomness rather than {\it produce} it.
Since then, the terms \textit{randomness expansion} and \textit{randomness generation} have been used in the DIQRNG literature interchangeably. 
A DIQRNG consumes a short, uniformly--random seed $R_{\mathrm{in}}$ to choose the settings $(X,Y)$, and---upon observing a BI violation---outputs a longer string $R_{\mathrm{out}}$; the expansion factor is $|R_{\mathrm{out}}|/|R_{\mathrm{in}}|$.  Unbounded expansion is possible using two interleaved devices that alternately supply fresh settings for each other~\cite{Miller2016}.  Finite--size analyses must account for statistical fluctuations (via, e.g., Azuma--Hoeffding inequalities) and possible memory attacks that correlate different rounds.
The theoretical and experimental challenges follows:
\begin{itemize}
\item \textbf{Detection \& locality loopholes.}  A DIQRNG must simultaneously close the detection and locality loopholes; otherwise an adversary can fake a violation and inject deterministic bits.  This demands overall detection efficiencies $\eta\gtrsim 90\%$ and space--like separation of the measurement events.
\item \textbf{Finite--key effects.}  High Bell values are necessary but not sufficient; one must also accumulate a sufficiently large dataset before the entropy rate in~\eqref{eq:min-entropy} becomes positive.
\item \textbf{Rate versus security trade--off.}  Short gate times and high--brightness entangled sources improve rates but exacerbate multi--pair noise, which lowers $S$.
\item \textbf{Contrast with DI-QKD.}  In DI-QKD, the raw outcomes $A$ and $B$ are further processed into a shared secret key.  Besides randomness, one must bound the \emph{Holevo information} $\chi(E:K)$ leaked to an eavesdropper, perform information reconciliation, and authenticate classical communication, which reduces the asymptotic key rate per run to $R_{\mathrm{key}}\le H_{\min}(AB|E)-H_{\max}(A|B)$.  Moreover, DI-QKD typically requires distributing entanglement over \emph{kilometer} scales, whereas DIQRNG may keep both measurement stations on a single optical table.  Conversely, DI-QKD is tolerant to \emph{loss} (lost events are simply discarded) but highly sensitive to \emph{phase noise}, while DIQRNG demands near--unit detection efficiency but can operate with smaller spacings.
\end{itemize}
The first experimental demonstration of randomness expansion was presented in \cite{pironio2010random}. The experiment there was based on the violation of CHSH in a setup which used heralded entanglement generation through entanglement swapping. Later, experiments also based on CHSH but involving direct production of entanglement were performed \cite{Liu2018b,Bierhorst2018,Li2021,Liu2021b,Shalm2021}.
Table~\ref{tab:diqrng-milestones} compiles the most influential experimental demonstrations to date.

\begin{table}[ht]
\centering
\caption{Key milestones in DIQRNG and certified randomness.  Only experiments that reported a full finite--size analysis and extractor implementation are listed.}
\label{tab:diqrng-milestones}
\begin{tabular}{lcc}
\toprule
Platform & $\beta_\QC$ value & Net random bits \\
\midrule
Photonic (heralded) ``10 ~\cite{Pironio2010}& $2.41\pm0.10$ & $42$ bits \\
Photonic time--bin ``18 \cite{Liu2018b} & $2.63\pm0.04$ & $6\times10^{6}$ \\
Photonic (NV centers) ``18 \cite{Bierhorst2018}& $2.34\pm0.18$ & $1.02\times10^{7}$ \\
Photonic (SNSPD, spot--checking) ``21 \cite{Shalm2021} & $2.75\pm0.02$ & $1.18\times10^{9}$ \\
Superconducting qubits ``23 \cite{Storz2023} & $2.0747\pm0.0033$ & $\sim 10^{5}$ \\
Photonic two--photon interference ``23 \cite{mongia2024investigating}& $2.62\pm0.03$ & $\sim10^{6}$  \\
Trapped--ion processor (cloud) ``25 \cite{Liu2025} & RCS,* & $7.1\times10^{4}$ \\
\bottomrule
\end{tabular}
\flushleft Random circuit sampling certificate rather than CHSH.
\end{table}
Device--independent protocols continue to push both theoretical and technological boundaries~\cite{Amer2025}.  Near--term goals include (i) giga--bit--per--second DIQRNG via multiplexed sources and feed--forward switching, (ii) metropolitan--scale DI-QKD combining satellite links with ground--based quantum repeaters, and (iii) hybrid protocols that merge computation--based certification (as in random circuit sampling) with Bell--test randomness to relax detection efficiency requirements.

\blk
\subsection{ Device-independent quantum secret sharing} 
Quantum secret sharing (QSS), introduced first in \cite{hillery1999quantum}, is a multiparty communication protocol where a dealer tries to share a secret among a number of participants in such a way that a set of participants is needed to recover the secret. Each QSS includes three stages: Data gathering, Parameter estimation, and Key extraction.\\ 
The first device-independent quantum secret sharing (DI-QSS) protocol was introduced in \cite{Roy2019}, based on an $N$-partite $d$-dimensional XOR game; however, it was limited to cases where $d$ is an even number. Almost simultaneously, it was shown in \cite{Moreno2020} that in a three-party protocol, a maximal violation of Svetlichny's inequality \cite{Svetlichny1987} guarantees that an untrusted receiver cannot obtain any information about the secret unless they collaborate with the other parties. In cases of non-maximal violation, the adversary's access to the secret can still be limited. The advantage of using Svetlichny's inequality lies in its ability to certify a stronger form of nonlocality, one that even an untrusted party (as in three-party QSS) cannot simulate. The security analysis robust individual and collective attack is respectively in \cite{Moreno2020}  and \cite{Zhang2024}. Suppose that three separated parties, Alice, Bob, and Charlie,
measure their photons with their binary measurement settings $x_i$,$y_j$, and $z_k$, respectively, where $i,k\in\{0,1\}$, and $j\in\{0,1,2\}$ and obtain the outcomes $a_i$,$b_j$, and $c_k$, (all with binary outcomes $-1$ and $1$); therefore, the Svetlichny  polynomial $S_{ABC}$ can be written as the combination of the CHSH polynomials 
\begin{equation}
\label{Svetlichny}
    S_{ABC}^{\text{Svetlichny}}=\langle S_{AB}^{j=2,3}c_2\rangle + \langle S^{{\prime}}_{AB}c_1\rangle,
\end{equation}
where  $S_{AB}^{j=2,3}$ is the CHSH value between Alice and Bob with $y_j=\{2,3\}$ measurement settings, and $S^{\prime_{AB}}$  is obtained from  $S_{AB}^{j=2,3}$ by mapping $b_2\rightarrow b_3$ and $b_3\rightarrow -b_2$. As a result of \eqref{Svetlichny}, the violation of the Svetlichny inequality for three users is equal to the violation of the CHSH inequality for Alice and Bob. Consequently, the security proof of DI-QSS is similar to the one for DI-QKD given in section \ref{DI-QKD-collective}. By assuming that each user successfully detects the transmitted photon with a global detection efficiency $\eta$ and considering the white noise model with the fidelity $F$ for photon loss and decoherence ($\rho=F\ket{GHZ}\bra{GHZ}+\frac{1-F}{8}\bm{I}$), the following lower bound can be obtained \cite{Zhang2024}
\begin{equation}
    r\geq 1-h\left(\frac{\sqrt{2F^2\eta^6-1}}{2}+\frac{1}{2}\right)-h\left(1-\frac{1}{2}\eta^3-\frac{1}{2}\eta^3F\right).
\end{equation}
A positive value of $r$ requires high global detection efficiency and fidelity. For example, with no decoherence ($F = 1$), the detection efficiency threshold is $96.32\%$; at $F = 0.95$, it increases to $97.57\%$. The minimum tolerable fidelity is $85.1\%$. Below either threshold, no key can be generated.\\
The high detection efficiency thresholds in DI-QSS arise from the need to distribute multiphoton entanglement among multiple parties through noisy quantum channels. Photon loss in each channel can destroy the entanglement, making DI-QSS especially vulnerable to such losses. Consequently, it is crucial to develop methods that reduce the global detection efficiency requirement and improve noise tolerance for practical implementations. Similar techniques such as noise preprocessing and random postselection used for DI-QKD have been shown to enhance DI-QSS performance~\cite{Zhang2024}. Building on this, Ref.~\cite{Zhang2025} combined these techniques with a random key generation basis (see Section~\ref{subsubsec:randomkeybais}) and demonstrated that the global detection efficiency can be reduced to $93.41\%$.

\subsection{Device-independent quantum secure direct communication}
Unlike QKD, quantum secure direct communication (QSDC) enables the sender to transmit secret messages directly to the receiver without first establishing a shared key. The QSDC protocol was introduced in Ref.~\cite{Long2002}, leveraging quantum entanglement and block transmission techniques. In QSDC protocols, photons are transmitted in two rounds. Therefore, to achieve device-independent QSDC (DI-QSDC), a security check based on the BI violation is performed after each transmission round. A more robust security against collective attacks, instead, is in Ref.~\cite{Zhou2020}. The protocol works as follows: Alice has three measurement settings $x\in\{0,1,2\}$ and Bob has two measurement settings $y\in\{0,1\}$ same as CHSH$_\chi$ protocol. Alice prepares $n$ EPR pairs, divides them into two sequences and sends one to Bob. They perform the first security check (a CHSH test) by measuring a subset of EPR pairs. If the security check passes, Alice selects a subset of photons as checking photons and does not perform any operation on them. For the rest of them, she uses four operations, $\bm{I}$, $\sigma_x$, $\sigma_y$, and $i\sigma_z$ to encode the message and then sends all of her photons to Bob and reveals the position of each message photon and each checking photon through a classical channel. Bob performs the second security check on checking photon pairs on his own. If the second security check passes, Bob can read Alice's message by performing a Bell measurement on the message pairs. \\
Since in DI-QSDC there is no key generation between two parties, to study the security of the DI-QSDC, the communication efficiency $E_{c}$ is defined \cite{Zhou2020} as the amount of transmitted correct secure information qubits divided by the total amount of information qubits. Similar as the DI-QKD, for collective attacks it can be lower-bounded by the Devetak-Winter rate \eqref{DevetakWinter}. Since both security checking processes of the DI-QSDC are the same as that of DI-QKD, the following lower bound can be obtained 
\begin{equation}
    E_{c}\geq 1-h(Q)-h\left(\frac{1+\sqrt{(\frac{\beta_1}{2})^2-1}}{2}\right)-h\left(\frac{1+\sqrt{(\frac{\beta_2}{2})^2-1}}{2}\right),
\end{equation}
where $\beta_1$ and $\beta_2$ are the CHSH values for the first and second security checks, respectively. Since DI-QSDC requires two rounds of photon transmission, the influence of environmental noise and eavesdropping on DI-QSDC is more serious than DI-QKD.
To compensate for these negative effects, the methods entanglement-based noiseless linear amplification and entanglement purification protocols can be applied to DI-QSDC. In a modified protocol with both of these methods, Alice and Bob can have a near-perfect entanglement channel where the secure communication is effectively extended and Eve cannot steal any secure information without being detected. \\ 
Other DI-QSDC protocols have been proposed since then. A one-step DI-QSDC protocol was proposed in \cite{Zhou2022a}, wherein Bob prepares a large number of polarization–spatial mode hyperentangled photon pairs, transmit one photon from each pare to Alice in a single round, performs CHSH tests on both degrees of freedom, and Alice encodes her message in the polarization mode. Combining it with the hyperentanglement heralded amplification and the hyperentanglement purification can eliminate the message loss and the message error caused by the channel noise. A DI-QSDC protocol based on single-photon sources and the heralded Bell state measurement was introduced in \cite{Zhou2023a}, where Alice and Bob each have two photon sources send one photon to a third party who performs a  Bell state measurement, and if the measurement is successful, then they have an entanglement channel that can be used to send the message. The DI-QSDC protocol with hyper-encoding technique was studied in \cite{Zeng2024}, and non-Markovian dynamics have been shown to enhance protocol security compared to their Markovian counterparts significantly \cite{Roy2024}.
\section{Mathematical techniques for advanced security proofs}\label{sec:chap4}
\noindent In the previous section, we examined a secure protocol for generating a secret key $K$, along with its associated security proof, treating it as an independent \textit{module}. We analyzed lower bounds on Eve's uncertainty or randomness under various types of attacks—individual, collective, and coherent—to establish robust security guarantees. While Section~\ref{sec:DI-QKD} demonstrated the feasibility of such protocols, our current focus is on optimizing the key rate by considering correlations across protocol rounds. Broadly, there are two categories of protocols: those based on the (Generalized) Entropy Accumulation Theorem and Quantum Probability Estimation, as well as complementarity-based methods. These are typically used in \textit{sequential implementations}, where each round of the protocol depends on the outcomes of previous rounds. This contrasts with \textit{parallel approaches}, which require distinct analytical techniques. In this section, we explore advanced mathematical methods for constructing such security proofs in real-world settings.

\subsection{Entropy accumulation theorem (EAT)} \label{sec:EAT}
Let us begin with some key definitions required for a formal description of EAT.
\begin{figure}[h]
    \centering
    \begin{subfigure}{.49\textwidth}
        \centering
        \resizebox{\textwidth}{!}{
\begin{tikzpicture}[
  state/.style={rectangle, draw=purple, fill=purple!20, minimum size=1.2cm, text centered},
  arrow/.style={-latex, thick},
  curvedarrow/.style={-latex, thick},
  label/.style={font=\LARGE},
]

\node[label] (initial) at (-1, 0) {$\rho_{\text{in}}$};
\node[state] (M1) at (2,-0.7) {\LARGE{$\mathcal{M}_1$}};
\node[state] (M2) at (5, -0.7) {\LARGE{$\mathcal{M}_2$}};
\node[state] (Mn) at (9, -0.7) {\LARGE{$\mathcal{M}_n$}};
\node[label] (final) at (12, 0) {$\rho_{\text{out}}$};

\node[label] (R0) at (0.8, -0.3) {$R_0$};
\node[label] (R1) at (3.5, -0.3) {$R_1$};
\node[label] (R2) at (6., -0.3) {$R_2$};
\node[label] (p2) at (6.5, -0.7) {};
\node[label] (p3) at (7.3, -0.7) {};
\node[label] (Rn1) at (7.6, -0.3) {$R_{n-1}$};
\node[label] (Rn) at (10, -0.3) {$R_n$};

\node[label] (O1) at (1.55, -2) {$O_1$};
\node[label] (S1) at (2.45, -2) {$S_1$};
\node[label] (O2) at (4.65, -2) {$O_2$};
\node[label] (S2) at (5.55, -2) {$S_2$};

\node[label] (On) at (8.65, -2) {$O_n$};
\node[label] (Sn) at (9.55, -2) {$S_n$};

\draw[arrow] (M1) -- (M2);
\draw[arrow] (M2) -- (p2);
\draw[dashed,line] (p2) -- (p3);
\draw[arrow] (p3) -- (Mn);

\draw[curvedarrow, rounded corners=5pt] (initial) -- (0, 1) -- (11, 1) -- (final);
\draw[curvedarrow] (initial) -- (0,-0.7) -- (M1);
\draw[curvedarrow] (Mn) -- (11,-0.7) -- (final);

\draw[arrow] (M1) -- (O1);
\draw[arrow] (M1) -- (S1);

\draw[arrow] (M2) -- (O2);
\draw[arrow] (M2) -- (S2);

\draw[arrow] (Mn) -- (On);
\draw[arrow] (Mn) -- (Sn);

\node[label] at (5.5, 1.3) {$E$};

\end{tikzpicture}
}
        \caption{\textit{EAT}}
        \label{fig:EAT}
    \end{subfigure}
    \hfill
    \begin{subfigure}{.49\textwidth}
        \centering
        \resizebox{\textwidth}{!}{
\begin{tikzpicture}[
  state/.style={rectangle, draw=purple, fill=purple!10, minimum size=2cm, text centered},
  arrow/.style={-latex, thick},
  curvedarrow/.style={-latex, thick},
  label/.style={font=\LARGE},
]

\node[label] (initial) at (-1, -0.6) {$\rho_{\text{in}}$};
\node[state] (M1) at (2,-0.7) {\LARGE{$\mathcal{M}_1$}};
\node[state] (M2) at (5, -0.7) {\LARGE{$\mathcal{M}_2$}};
\node[state] (Mn) at (9, -0.7) {\LARGE{$\mathcal{M}_n$}};
\node[label] (final) at (12, -0.6) {$\rho_{\text{out}}$};

\node[label] (R0) at (0.4, -1.1) {$R_0$};
\node[label] (R1) at (3.5, -1.1) {$R_1$};
\node[label] (R2) at (6.3, -1.1) {$R_2$};
\node[label] (p2) at (6.5, -0.7) {};
\node[label] (p3) at (7.3, -0.7) {};
\node[label] (Rn1) at (7.3, -1.8) {$R_{n-1}$};
\node[label] (Rn) at (10.4, -1.) {$R_n$};
\node[label] (E0) at (0.4, 0.6) {$E_0$};
\node[label] (E1) at (3.5, 0.6) {$E_1$};
\node[label] (E2) at (6.3, 0.6) {$E_2$};
\node[label] (E2) at (6.5, 0.6) {};
\node[label] (E3) at (7.3, 0.6) {};
\node[label] (En1) at (7.6, 0.6) {$E_{n-1}$};
\node[label] (En) at (10.4, 0.6) {$E_n$};

\node[label] (O1) at (2, -3) {$O_1$};
\node[label] (O2) at (5, -3) {$O_2$};
\node[label] (On) at (9, -3) {$O_n$};
\draw[arrow] (3, -1.4) -- (4,-1.4);
\draw[arrow] (3, 0.07) -- (4,0.07);
\draw[arrow] (6, -1.4) -- (6.5,-1.4);
\draw[arrow] (6, 0.07) -- (6.5,0.07);
\draw[arrow] (7.5, -1.4) -- (8,-1.4);
\draw[arrow] (7.5, 0.07) -- (8,0.07);
\draw[dashed,line] (6.6, -1.4) -- (7.9,-1.4);
\draw[dashed,line] (6.6, 0.07) -- (7.9,0.07);

\draw[curvedarrow] (initial) -- (0, -1.4) -- (1,-1.4);
\draw[curvedarrow] (initial) -- (0, 0.07) -- (1,0.07);
\draw[curvedarrow] (10,0.07) -- (10.5, 0.07) -- (final);
\draw[curvedarrow] (10,-1.4) -- (10.5, -1.4) -- (final);

\draw[arrow] (M1) -- (O1);
\draw[arrow] (M2) -- (O2);
\draw[arrow] (Mn) -- (On);

\end{tikzpicture}
}
        \caption{\textit{GEAT
}}
        \label{fig:GEAT}
    \end{subfigure}
    \caption{\textit{(Generalized) Entropy accumulation theorem} -- \ref{fig:EAT} sequential processes $\bigcirc_{i=1}^{n}\mathcal{M}_{i}\otimes \mathrm{id}$ with $\mathcal{M}_i:R_{i-1}\mapsto R_iO_iS_iC_i$, and its generalization $\bigcirc_{i=1}^{n}\mathcal{M}_{i}$ with $\mathcal{M}_i:R_{i-1}E_{i-1}\mapsto R_iE_iO_iS_iC_i$ in
    \ref{fig:(G)EAT}.
    } 
    \label{fig:(G)EAT}
\end{figure}
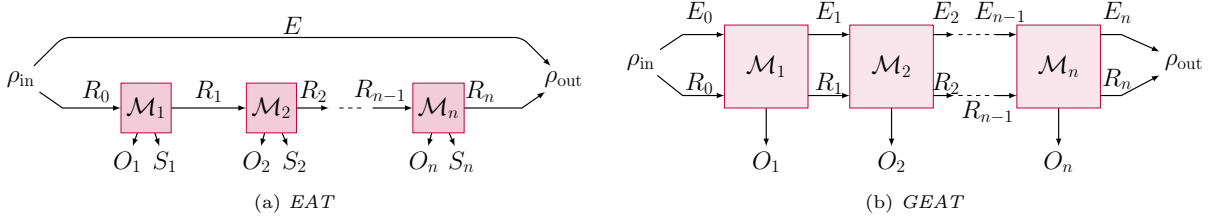
\begin{definition}(Sequential process)\label{def:seq}
We call \textit{sequential process} the composition map $\mathcal{M}=\bigcirc_{i=1}^{n}\mathcal{M}_{i}$, where $\mathcal{M}_i:R_{i-1}\mapsto R_iO_iS_iC_i$ are CPTP maps that transform the state on $R_{i-1}$ (quantum registers) into $R_i$, with output quantum system $O_i$ (readout observed outcome), $S_i$ (side information), $C_i$ (classical check), and where the composition $\bigcirc_{i=1}^{n} \mathcal{M}_{i}$ is defined as $\mathcal{M}_n \circ \cdots \circ \mathcal{M}_1$, acting sequentially on the quantum registers.
\end{definition}
As in Fig. \ref{fig:EAT}, in the $i$--th round, the internal state in the input memory $R_{i-1}$ is updated to the output memory $R_i$ ensuring that the state at the step $i$ depends on the previous one (non-iid). At each $i$, the quantum output system in the register $O_i$ \textit{accumulates} the entropy of Eve. The leaked information (about the measurements or outcomes) is in the partial state on the support of $S_i$ and the environment "controlled" by Eve in the Hilbert space $E$. The conditional entropy $H(O_1^n|S_1^nE)$  quantifies how much uncertainty remains about the update post-measurement state outputs 
$O_1^n$ after Eve learns the side information $S_1^n$ and external system $E$. The quantity $X^n_1$ refers to the whole process of $n$ rounds where each round $i$ is isomorphic the $i=1$.
The protocol is considered secure if the entropy in Eq. \eqref{eq:minHep} is higher than a lower bound from parameter estimation that is computed by other output $c_i^{(j)}$ or simply $c_i\in C_i$ stored in a classical register $C_i=\{c_i^{(j)}\}_j$ with probability distribution $p_i^{(j)}=p(c_i^{(j)})$ such that $ \sum_jp_i^{(j)}=1,\,p_i^{(j)}\ge0$. This is derived from the system $\rho_{O_iS_i}$ and used for BI violation.
\begin{definition}(Markovianity)
    Given $\mathcal{M}$ a sequential process from \ref{def:seq}. It is \textit{Markovian} iff 
    $O_{i-1}\leftrightarrow S_{i-1}E\leftrightarrow S_i$, i.e. the mutual information $I(O_{i-1}:S_i|S_{i-1}E)=0$
\end{definition}    
\begin{definition}(trade-off functions)
    The following quantum state set 
    \begin{equation}
        \Sigma_i(p_{i_j})=\{\rho_{R_iO_iS_iC_iE}=\mathcal{M}_i(\rho_{R_{i-1}}E)|\rho_{C_i}=\rho_{c_{i_j}}=p_{i_j}\in C_i\}
    \end{equation}
    with $\rho_{C_i}$ defines in the classical register $C_i$ the probability distribution with weight $p_{i_j}=\bra{c_{i_j}}\rho_{C_i}\ket{c_{i_j}}\ge0$ and $\sum_jp_{i_j}=1$  on the possible classical output $c_{i_j}$ in the $i$--th round. 
    Given $p_{i_j}$, then real functions $f_{\mathrm{min}}$ and $f_{\mathrm{max}}$ are called \textit{min(max)--tradeoff function} for $\mathcal{M}_i$ if respectively
    \begin{equation}
        f_{\mathrm{min}}(p)\le \inf_{\rho \in \Sigma_i(p)}H(O_i|S_i E)_\rho,\qquad
        f_{\mathrm{max}} (p)\ge \sup_{\rho \in \Sigma_i(p)}H(O_i|S_i E)_\rho
    \end{equation}
\end{definition}
The function $f$ is adequate to quantify the accumulated entropy in a single step of the process because it
balances between overly optimistic and pessimistic entropy estimates. A naive approach might use the conditional von Neumann entropy \( H(O_2 | O_1) \), which averages the entropy over all states and overestimates the extractable randomness. On the other hand, a worst-case min-entropy \( H_{\text{min}}^{\text{w.c.}} = \min_{o_1, o_2} \left[ -\log \Pr(o_2 | o_1) \right] \) is too pessimistic, as it fails to capture the realistic entropy when the systems are independent. The correct definition considers the worst-case state \( o_1 \) but averages the entropy contribution \( -\log \Pr(o_2 | o_1) \) over \( o_2 \), leading to \( \min_{o_1} \mathbb{E}_{o_2} \left[ -\log \Pr(o_2 | o_1) \right] = \min_{o_1} H(O_2 | O_1 = o_1) \). 
\begin{definition}(events on classical registers) The classical registers $C_i$ defines the following classical probability space $(\Omega,\mathcal{B}(\Omega),p)$ where the sample set
\begin{equation}
    \Omega=\{\omega=(c_1,\dots,c_n)|\forall i,\,c_i\in\{c_{i_j}\}_j\}\subseteq C_1\times\dots\times C_n\equiv C^n
\end{equation}
contains the results from each step extracted by $\rho_{O_iS_i}$ for $i=1,\dots,n$ so that the updated final state reduced to the classical registers $C^n$ is the probability distribution $\rho_{C^n}=p(\omega)$, with $\omega\in \Omega$. The updated final state at the output of the sequential process conditioned by the event $\omega$ is denoted as $\rho_{|\omega}\in \bigotimes_{i=1}^n\Sigma_i(p_{i})\subseteq \bigotimes_{i=1}^n R_iO_iS_iC_iE=R_1^nO_1^nS_1^nC_1^nE$ (denoting, e.g. $R^n\equiv R^{\otimes n}$). With this notation, each register is isomorphic to the corresponded register at round $i=1$. $\mathcal{B}(\Omega)$ is the Borel $\sigma$--algebra.
\end{definition}
Note that the trade-off functions applied for the probabilities $p(\omega)$ have consistent bound from all the other quantum output $O_1,\dots,O_n\equiv O_1^n$ and similarly $S_1^n$. For instance $ \inf_{\rho_{|\omega}}H(O_1^n|S_1^nE)_{\rho_{|\omega}}\ge f_{\mathrm{min}}(p(\omega))$. Using these definitions, EAT can be stated as the following theorem.
\begin{theorem}[Entropy accumulation theorem] 
    Given a Markovian sequential process $\mathcal{M}=\bigcirc_{i=1}^n\mathcal{M}_i\otimes\mathrm{id}$ (see fig. \ref{fig:EAT}) such that the output state is $\rho_{|\omega}=\mathcal{M}(\rho_\mathrm{in})$, a convex $f_{\mathrm{min}}(p(\omega))\ge t$ with $t\in\mathbb{R}$, and $\varepsilon \in (0,1)$, then 
    \begin{equation}
    H_{\mathrm{min}}^\varepsilon(O_1^n|S_1^nE)_{\rho_{|\omega}} \ge n t - \nu \sqrt{n},\qquad \nu=2\left(\log(1+2\dim O_i)+\lceil \parallel\nabla f_{\mathrm{min}}\parallel_\infty \rceil\right)\sqrt{1-2\log(\varepsilon p(\omega))}.
    \end{equation}
\end{theorem}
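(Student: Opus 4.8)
The plan is to follow the Rényi-entropy route of \cite{dfr16,Dupuis2020}: trade the smooth min-entropy for a conditional sandwiched Rényi entropy $H_\alpha^{\uparrow}$ with $\alpha$ slightly above $1$, prove that this Rényi entropy \emph{accumulates additively} along the rounds of the sequential process $\mathcal{M}=\bigcirc_{i=1}^n\mathcal{M}_i\otimes\mathrm{id}$, lower-bound each per-round contribution by the min-tradeoff function up to a second-order correction, and finally optimize $\alpha$ so that the total correction scales like $\sqrt n$. First I would invoke the standard conversion lemma between smooth min-entropy and conditional Rényi entropy: for $\alpha\in(1,2)$ and the state $\rho_{|\omega}$ obtained by conditioning on the event $\omega$ on the classical registers $C_1^n$ (which occurs with probability $p(\omega)$),
\begin{equation}
    H_{\min}^{\varepsilon}(O_1^n\mid S_1^n E)_{\rho_{|\omega}} \ge H_\alpha^{\uparrow}(O_1^n\mid S_1^n E)_{\rho_{|\omega}} - \frac{g(\varepsilon,p(\omega))}{\alpha-1},
\end{equation}
where $g$ is logarithmic in $1/(\varepsilon\, p(\omega))$; the appearance of $p(\omega)$ is because restricting the (subnormalized) state to the event $\omega$ rescales it by $p(\omega)$, costing $\log(1/p(\omega))$ up to constants. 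This already accounts for the combination $1-2\log(\varepsilon p(\omega))$ inside $\nu$.

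The technical heart is the \emph{chain rule} for $H_\alpha^{\uparrow}$ along the sequential process. Exploiting the Markov condition $O_{i-1}\leftrightarrow S_{i-1}E\leftrightarrow S_i$ (which forbids past outputs from leaking sideways into future side information), one shows round by round that
\begin{equation}
    H_\alpha^{\uparrow}(O_1^{i}R_i\mid S_1^{i}E)_{\rho} \ge H_\alpha^{\uparrow}(O_1^{i-1}R_{i-1}\mid S_1^{i-1}E)_{\rho} + \inf_{\sigma\in\Sigma_i}H_\alpha^{\uparrow}(O_i\mid S_i E)_{\sigma},
\end{equation}
the infimum being over all states reachable through the channel $\mathcal{M}_i$. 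Iterating from $i=n$ down to $i=1$ and discarding $R_n$ yields $H_\alpha^{\uparrow}(O_1^n\mid S_1^n E)\ge\sum_{i}\inf_{\sigma\in\Sigma_i}H_\alpha^{\uparrow}(O_i\mid S_i E)_{\sigma}$. This step rests on duality and data-processing for the sandwiched Rényi divergence and on writing $H_\alpha^{\uparrow}$ as an optimization over a conditioning operator $\sigma_{S_iE}$ that can be propagated across the cut between $S_1^{i-1}$ and $S_i$; the Markov hypothesis is exactly what licenses reusing the same optimizer across that cut.

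Next I would relate the per-round Rényi term to $f_{\min}$. Since $C_i$ is a deterministic function of $O_iS_i$, averaging over its realized value $c_i$ and Taylor-expanding $\alpha\mapsto H_\alpha^{\uparrow}$ about $\alpha=1$ gives
\begin{equation}
    \inf_{\sigma\in\Sigma_i}H_\alpha^{\uparrow}(O_i\mid S_i E)_{\sigma} \ge f_{\min}\big(p(\omega)\big) - (\alpha-1)\,\tfrac{\ln 2}{2}\,V^2 - O\!\big((\alpha-1)^2\big),
\end{equation}
with $V$ a variance-type quantity bounded above by $\log(1+2\dim O_i)+\lceil\|\nabla f_{\min}\|_\infty\rceil$: the $\dim O_i$ term is the usual control on the second derivative of the Rényi entropy, while the gradient term enters because evaluating $f_{\min}$ at the empirical frequency rather than the true single-round distribution incurs a Lipschitz error governed by $\|\nabla f_{\min}\|_\infty$. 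Summing over $i$, using convexity of $f_{\min}$ and the hypothesis $f_{\min}(p(\omega))\ge t$, and substituting back,
\begin{equation}
    H_{\min}^{\varepsilon}(O_1^n\mid S_1^n E)_{\rho_{|\omega}} \ge n t - (\alpha-1)\,n\,\tfrac{\ln 2}{2}V^2 - \frac{g(\varepsilon,p(\omega))}{\alpha-1}.
\end{equation}
Choosing the free parameter $\alpha-1=\Theta\!\big(\sqrt{g/(nV^2)}\big)$ balances the two error terms and produces exactly $n t-\nu\sqrt n$ with $\nu=2\big(\log(1+2\dim O_i)+\lceil\|\nabla f_{\min}\|_\infty\rceil\big)\sqrt{1-2\log(\varepsilon p(\omega))}$.

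I expect the chain rule for $H_\alpha^{\uparrow}$ to be the main obstacle: it is the only genuinely non-elementary ingredient, requiring careful operator-algebraic manipulation of the sandwiched Rényi divergence, and it is precisely where the Markov hypothesis is indispensable — weakening it to a "no signalling from the future" condition is what forces the heavier machinery of the generalized EAT sketched in Fig.~\ref{fig:GEAT}. A secondary subtlety is the bookkeeping of the event conditioning, so that the $p(\omega)$-dependence ends up inside $\nu$ as stated rather than multiplying the whole bound; this is handled by first proving the statement for the unnormalized conditioned state and then normalizing at the very end.
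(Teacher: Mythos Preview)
Your proposal is correct and follows essentially the same R\'enyi-entropy route as the paper's proof sketch (itself a summary of \cite{dfr16,Dupuis2020,adfrv17}): convert $H_{\min}^\varepsilon$ to $H_\alpha$, use the Markov condition to obtain an additive chain rule over rounds, bound each per-round R\'enyi term by the tradeoff function with an $O(\alpha-1)$ correction, and optimize $\alpha$ to balance the errors into a $\sqrt{n}$ term. Your identification of the chain rule as the key technical ingredient and of the role of Markovianity matches the paper's emphasis exactly.
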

\begin{proof}
    (details in Refs. \cite{adfrv17,dfr16,Dupuis2020}).
    The smooth min-entropy \( H_{\text{min}}^{\epsilon} \) is related to the sandwiched Rényi entropy \( H_{\alpha} \) for some parameter \( \alpha > 1 \) used to decompose the entropy of the full sequence into a sum of conditional entropies for each round:
  $
  H_{\alpha}(O_1^n | S_1^n E) \approx \sum_{i=1}^n H_{\alpha}(O_i | S_i E)
  $.
  The Markovianity ensures that each term \( H_{\alpha}(O_i | S_i E) \) depends only on the previous rounds and not the entire sequence. Now, \( H_{\alpha} \) can be bounded by the von Neumann entropy \( H(O_i | S_i E) \) using properties of the Rényi entropy $  H_{\alpha}(A_i | B_i R) \approx H(A_i | B_i R) - \mathcal{O}(\alpha - 1)$. Combining this with the tradeoff function \( f_\mathrm{min} \), which lower bounds \( H(O_i | S_i E) \), we get:
  $H_{\alpha}(A_i | B_i R) \geq f(p(\omega)) - \mathcal{O}(\alpha - 1)$. The distribution $p(\omega)$ of the classical outcomes is used to bound the entropy of the sequence. The observed event \( \Omega \) ensures that $f(p(\omega)) \geq t$, i.e. the entropy rate averaged over all rounds is at least \( t \). The finite-size effects arise because \( n \) is finite and \( f_\mathrm{min} \) depends on the second order statistical fluctuations $\nu\sqrt{n}$ in the observed data. The sandwiched Rényi entropy \( H_{\alpha} \) is converted to the smooth min-entropy \( H_{\text{min}}^{\epsilon} \).
Thus, the total smooth min-entropy grows approximately \textit{linearly} with \( n \), up to finite-size corrections.
\end{proof}
\paragraph{Finite key analysis with EAT}
\begin{figure*} 
    \centering
    \includegraphics[width=1\linewidth]{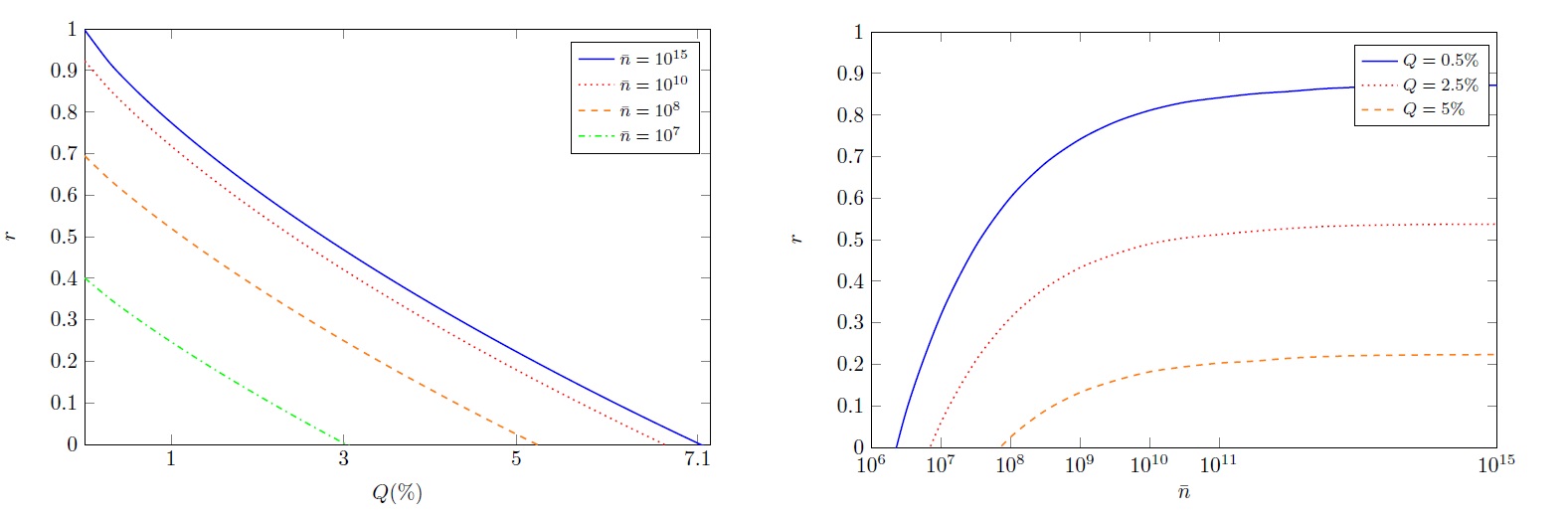}
    \caption{Figure from \cite{Tan2022}. (left) The expected key rate versus the QBER is lower for finite $n$ and with a security proof against coherent attacks than the one against individual and collective attacks. (Right) The expected key rate as a function of the number of rounds $\Bar{n}$.}
    \label{fig:Tan2022}
\end{figure*}
Thanks to EAT, the key rate \( r \) of CHSH protocol versus the QBER \( Q \) can be predicted at finite $n$ (see Fig. \ref{fig:Tan2022}). 
For \(n = 10^{15}\), the curve nearly overlaps the asymptotic iid case \ref{DIQKD-collective}, which was shown to be optimal allowing the protocol to tolerate up to $Q=0.071$.
Instead, for non-iid coherent attack analysis, the key rate obtained in \cite{vazirani2014fully} remains well below the lowest curve presented in Fig. \ref{fig:Tan2022}, even if the number of signals approaches infinity, with a maximum noise tolerance of only 1.6\%. 

Fig. \ref{fig:Tan2022} shows the key rate as a function of the number of rounds \( n \), for different values of \( Q \). Evidently, the rates achieved are significantly higher than those without EAT, 
and are comparable to the key rates of the practical device-dependent QKD with the difference that the device-independent one requires a larger $n$. 
Indeed, the finite-key analysis in Ref. \cite{Tan2022} shows that the experiments \cite{Hensen2015a} and \cite{Munich17} require, respectively, $n=10^8$ and $n=10^{10}$ rounds against coherent attacks for $r>0$. This analysis involves noisy preprocessing, random key measurements, and modified CHSH inequalities.
While this is a marked improvement wrt the basic protocol in \cite{pabgs09andSangouard.4} (Sec. \ref{DIQKD-collective}) (which yields zero asymptotic key rates for those experiments),
the values of $n$ are still impractical. Therefore, two modifications were applied: (\textit{i}) a pre-shared key, which results in a net key generation rate approximately double that of the original protocol; (\textit{ii}) relax the protocol by considering the collective-attacks assumption to alter its structure and enhance the key rate. However, despite the drop in $n\sim [10^6,10^7]$ the required number of rounds remains impractically large.
\paragraph{Entropy accumulation theorem with improved second order}
EAT theorem provides tight bounds only at the first order. The second-order term can be improved in many protocols of interest, where the entropy is estimated by testing positions with probability $O(n^{-1})$. Since $\nu_1\propto \parallel \nabla f_\mathrm{min}\parallel_\infty\propto O(n)$, this gives 
$\nu_1 \sqrt{n}\gg t n$.
Ref. \cite{Dupuis2019} show the correction,
$
H_{\min}^{\varepsilon}(K_{1}^n\mid S_{1}^n E)\ge nt - (\nu_1 \sqrt{n} + \nu_{2})
$, 
with $\nu_2$, a functions of $\varepsilon$, the maximum dimensions of the systems $K_i$ ($d_K$), and the variance of the function $f$. This further improvement contributes to reducing $n$.  In recent works on DI-QKD \cite{Nadlinger2022} and randomness expansion \cite{Liu2021b,Bhavsar2023}, the improved second-order term is explicitly employed to derive positive key rates and certify randomness with realistic experimental resources. Without this advancement, the required number of Bell test rounds would be orders of magnitude larger, making such protocols infeasible with current technology. 

\subsection{Generalized entropy accumulation theorem} \label{sec:GEAT}
EAT is incompatible with prepare-and-measure protocols because it assumes Markovianity, where side information $\rho_{s_i}$, once output, cannot be updated so that the total side information is in $\rho_{ES^n}$. But in prepare-and-measure protocols, Eve intercepts $\rho_i$ at the $i$-th round and updates her side information $\rho_{S_1,\dots,S_i}$ so that the total side information is higher than the one in $\rho_{ES_1^n}$. 
Although Markovianity allows estimating the smoothed min-entropy from a single round, it conflicts with the dynamic nature of side information in prepare-and-measure scenarios. For these protocols, to apply EAT one must first convert the protocol to an Entanglement-based one. 
To illustrate what could happen without
Markovianity, consider a case where $K_i$ is classical and no side information is output in the first $n-1$ rounds. Consider the side information of the last round in $\rho_{S_n}$ that contains a copy of
the systems $A_n$  
which can be passed along during the process in the systems $R_i$. Then, $H_{\min}^{\epsilon}=0$ while for the previous $n-1$ rounds, each single-round entropy bound that only considers the system $K_i$ and $S_i$ can be positive.
To address these issues, the \textit{Generalized Entropy Accumulation Theorem (GEAT)} replaces the Markov condition with a natural non-signaling condition between past outputs and future side information \cite{Metger2024}. 
\begin{definition}(non-signal process)
    Given $\mathcal{M}$ a sequential process from \ref{def:seq}. It is \textit{non-signal} if 
\begin{equation}
    \forall \mathcal{M}_i \qquad \exists \mathcal{R}_i:E_{i-1}\rightarrow E_{i} \mbox{ CPTP s.t } \qquad
    \mathrm{Tr}_{K_iR_i} \circ \mathcal{M}_i=\mathcal{R}_i\circ \mathrm{Tr}_{R_{i-1}}.
    \label{GEAT:nosignaling}
\end{equation}
\end{definition} 
Let us consider the systems $R_{i-1}$ and $R_i K_i$ as the inputs and outputs on ``Alice’s side'' of $\mathcal{M}_i$, and $E_{i-1}$ and $E_i$ as the inputs and outputs on Eve’s side, then Eq. \eqref{GEAT:nosignaling} states that the marginal of the output on Eve’s side cannot depend on the input on Alice’s side. This is exactly the non-signaling condition of Eq. \eqref{eq:no-signalling} in non-local quantum games. 

\begin{theorem}
   
    Given a non-signal sequential process $\mathcal{M}=\bigcirc_{i=1}^n\mathcal{M}_i$ with $\mathcal{M}_i:R_{i-1}E_{i-1}\rightarrow R_i K_i C_i E_i$ (see fig. \ref{fig:(G)EAT}) such that the output state is $\rho_{|\omega}=\mathcal{M}(\rho_\mathrm{in})$, an affine min-tradeoff 
    $f$ such that $t=\min f(p(\omega))$, $\varepsilon \in (0,1)$, $\alpha\in (1,\frac{3}{2})$, then 
    
    \begin{eqnarray}
    \label{eq:GEAT}
        H_{\min}^{\epsilon}(K^n|E_n)_{\rho_{|\omega}} \geq n\left(t -\frac{\alpha-1}{2-\alpha}\frac{\ln{2}}{2}V^2-\left(\frac{\alpha-1}{2-\alpha}\right)^2K^{\prime}(\alpha)\right)-\frac{g(\epsilon)-\alpha\log p(\omega)}{\alpha-1},
    \end{eqnarray}
    where $p(\omega)$ is the probability of observing event $\omega$, and 
    \begin{equation}
        g(\epsilon)=-\log(1-\sqrt{1-\epsilon^2})
        ,\quad
        V=\log(2d_{A}^2+1)+\sqrt{2+\Delta_f},
        \quad
        K^{\prime}(\alpha)=\frac{(2-\alpha)^3\ln^3(2^{\beta}+e^2)}{6(3-2\alpha)^
    3\ln{2}}2^{\frac{\alpha-1}{2-\alpha}(\beta+\log d_A)}
    \end{equation}
    with $d_A=\max_id_{A_i}$, $\Delta_f=\mathrm{Var}f$ and $\beta=\log d_A+\text{Max}(f)-\text{Min}_{\Sigma}(f)$
\end{theorem}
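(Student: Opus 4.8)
The plan is to follow the Rényi-entropy route of~\cite{Metger2024}, which replaces the smooth-entropy chain rule of the original EAT by a chain rule for the sandwiched Rényi conditional entropy $H_\alpha^\uparrow$ that requires only the non-signalling condition~\eqref{GEAT:nosignaling} instead of Markovianity. The first step is to pass from the smooth min-entropy to a Rényi quantity on the \emph{unconditioned} output state of the sequential process. Using the standard bound $H_{\min}^{\varepsilon}(K^n\mid E_n)_\sigma \ge H_\alpha^\uparrow(K^n\mid E_n)_\sigma - \tfrac{g(\varepsilon)}{\alpha-1}$ for $\alpha>1$, together with the event-conditioning estimate $H_\alpha^\uparrow(K^n\mid E_n)_{\rho_{|\omega}} \ge H_\alpha^\uparrow(K^n\mid E_n)_{\rho} + \tfrac{\alpha}{\alpha-1}\log p(\omega)$, one recovers exactly the additive term $-\tfrac{g(\varepsilon)-\alpha\log p(\omega)}{\alpha-1}$ and the problem reduces to lower bounding $H_\alpha^\uparrow(K_1^n\mid E_n)_\rho$.

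\textbf{The non-signalling chain rule.} The heart of the argument is to show that this $n$-round Rényi entropy decomposes round by round,
\begin{equation}
H_\alpha^\uparrow\bigl(K_1^n\bigm| E_n\bigr)_\rho \ \ge\ \sum_{i=1}^n\ \inf_{\nu_{R_{i-1}E_{i-1}}}\ H_\alpha^\uparrow\bigl(K_i\bigm| E_i\bigr)_{\mathcal{M}_i(\nu)} .
\end{equation}
I would prove this by induction on $n$: peel off the last map $\mathcal{M}_n$ using a sandwiched-Rényi chain rule of the form $H_\alpha^\uparrow(AB\mid C)\ge H_\alpha^\uparrow(A\mid BC)+\inf H_\alpha^\uparrow(B\mid C)$, and then invoke~\eqref{GEAT:nosignaling}: since the marginal on Eve's side $E_n$ is produced from $E_{n-1}$ by a fixed channel $\mathcal{R}_n$ not depending on Alice's input, data processing lets one replace $E_n$ by $E_{n-1}$ inside the first $n-1$ rounds without increasing the entropy, so the induction hypothesis applies to a genuine $(n-1)$-round non-signalling process. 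This is precisely where the old EAT needed the Markov chain $O_{i-1}\!\leftrightarrow\! S_{i-1}E\!\leftrightarrow\! S_i$; it is now avoided at the price of restricting $\alpha\in(1,\tfrac32)$, the range in which the relevant Rényi chain-rule inequalities hold with controllable error. I expect this to be the main obstacle: obtaining the inequality in the correct direction, with the single-round terms being entropies of the \emph{channel} $\mathcal{M}_i$ evaluated on a worst-case input, and with remainders small enough that the eventual $O(\sqrt n)$ penalty stays subleading.

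\textbf{Single-round bounds and assembly.} For each round I would then bound $\inf_\nu H_\alpha^\uparrow(K_i\mid E_i)$ below by $\inf_\nu H(K_i\mid E_i) - \tfrac{\alpha-1}{2-\alpha}\tfrac{\ln2}{2}V^2 - \bigl(\tfrac{\alpha-1}{2-\alpha}\bigr)^2 K'(\alpha)$ via a second-order Taylor expansion of $\alpha\mapsto H_\alpha^\uparrow$ at $\alpha=1$: the first-order term vanishes, the second-order coefficient is a variance of surprisal bounded through $V=\log(2d_A^2+1)+\sqrt{2+\Delta_f}$, and the third-order remainder is the explicit $K'(\alpha)$; producing these constants cleanly is a bounded-derivative calculation rather than a conceptual difficulty. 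By the defining property of the affine min-tradeoff function, $\inf_\nu H(K_i\mid E_i)\ge f(p_i)$ with $p_i$ the distribution on the $i$-th check register, and since $f$ is affine, $\sum_{i=1}^n f(p_i)=n\,f(\bar p)$ with $\bar p$ the empirical mixture, which is at least $nt$ on the accept event $\omega$. Summing the $n$ single-round bounds, substituting into the chain rule, and re-attaching the conversion terms from the first step yields the stated inequality; optimising over $\alpha\in(1,\tfrac32)$ (left free in the statement) then produces the familiar $\sqrt n$-type finite-size behaviour.
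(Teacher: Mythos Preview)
The paper does not give its own proof of this theorem: it is a review article and merely states the GEAT, referring to \cite{Metger2024} for the derivation, just as it gave only a heuristic sketch for the original EAT. Your proposal is therefore not to be compared against a proof in the paper, but against the argument of the cited source, and at the level of a plan it matches that argument well: smooth-min to sandwiched R\'enyi, event conditioning producing the $-\tfrac{g(\varepsilon)-\alpha\log p(\omega)}{\alpha-1}$ term, a round-by-round R\'enyi chain rule that exploits the non-signalling condition~\eqref{GEAT:nosignaling} in place of Markovianity, and then a continuity/Taylor step from $H_\alpha$ to $H$ together with the affine tradeoff to get the $nt$ leading term and the $V^2$, $K'(\alpha)$ corrections.

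One caveat worth flagging: the state-level chain rule you write, $H_\alpha^\uparrow(AB\mid C)\ge H_\alpha^\uparrow(A\mid BC)+\inf H_\alpha^\uparrow(B\mid C)$, is not the inequality actually used in \cite{Metger2024}. The working object there is a \emph{channel} conditional R\'enyi entropy (an infimum over inputs built into the definition), and the chain rule is proved for those channel quantities using a Stinespring dilation of $\mathcal{M}_i$ together with the non-signalling hypothesis; only afterwards is it specialised to the state produced by the process. You already identify this as the main obstacle, which is accurate: getting the induction to close requires that the single-round term be the channel entropy of $\mathcal{M}_i$ (worst-case input on $R_{i-1}E_{i-1}$), not merely a state entropy on the actual marginal, and the non-signalling map $\mathcal{R}_i$ enters precisely to decouple Eve's updated register from Alice's input so that data processing can be applied in the right direction. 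With that refinement your outline is the correct one.
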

\noindent
The GEAT deals with a sequence of channels $\mathcal{M}_i$ that can update both the internal memory register $R_i$ and the side information register $E_i$ (subject
to the no-signaling condition of Eq.\eqref{GEAT:nosignaling}), while EAT sequential channels \textit{do} not update from each round the side information in the next rounds. As a result, GEAT is strictly more general than the EAT \cite{Metger2024}.
The B92 protocol and BB84 decoy-state protocol, lacking direct conversion to an entanglement-based form, cannot use EAT for security proof but it is based on GEAT \cite{Metger2023,Tan2024}.
 
Before (G)EAT the security proof bounds utilized de Finetti-type theorems combined with the QAEP, but with several drawbacks: (i) applicable only under specific assumptions regarding the symmetry of the protocols robust only against specific attacks; (ii) limited in the practically finite-size analysis; (iii) limited in a device-independent context. Entropy Accumulation Theorem (EAT) \cite{dfr16,Dupuis2019,Metger2022} applied for DI-QKD \cite{ArnonFriedman2019} solve these drawbacks. 
However, if condition (i) is satisfied, the security of DI-QKD against coherent attacks follows from security under the iid assumption.
Moreover, the dependence of the key rate on the number of rounds, $n$, is the same as that in the iid case, up to terms that scale like $\frac{1}{\sqrt{n}}$. 
As a consequence, one can extend tight results known for DI-QKD, under the iid assumption, to the most general setting. This yields the best rates known for any protocol for a DI cryptographic task as shown in fig. \ref{fig:Tan2022} for $n=10^{15}$.
\subsubsection{security of E91 with (G)EAT} 
\label{subsec:E91GEAT}
To study the security of E91 (introduced in Section~\ref{sec:E91}), we consider a simplified version where, in each round $i\in [n]$, Alice and Bob independently choose basis bits \(x_i, y_i \in \{0,1\}\) such that \(\Pr[x_i = y_i = 1] = \mu\) (diagonal basis) and \(\Pr[x_i = y_i = 0] = 1 - \mu\) (computational basis). Defining \(x^n := (x_1, \dots, x_n)\) and \(y^n := (y_1, \dots, y_n)\), they announce their bases and keep the subset $S = \{i : x_i = y_i\}$, with Alice's sifted string denoted $A_S$ and Eve's total side information represented by $E$.\\
Using EAT, one obtains the finite-size bound (Dupuis \textit{et al.} \cite{Dupuis2020} Theorem 5.1)   
\begin{equation}
H_{\min}^{\varepsilon}\!\bigl(A_S \mid x^{n}y^{n}E\bigr)_{\rho_{|\omega}}
   \ge
   n\!\bigl(1-2\mu-H(e)\bigr)-O(n),
\label{eq:EAT-min-entropy}
\end{equation}
where \(e\) is the observed phase-error rate on rounds with $x_i=y_i=1$. \\
Information reconciliation leaks at most \(n\vartheta_{\mathrm{EC}}\) syndrome bits,
so privacy amplification produces a \(\kappa\)-bit secret key whenever  
\begin{equation}
\kappa \;\le\;
   H_{\min}^{\varepsilon}\!\bigl(A_S \mid x^{n}y^{n}E\bigr)_{\rho_{|\omega}}
   \;-\; n\vartheta_{\mathrm{EC}} \;-\;O(1).
\label{eq:key-length}
\end{equation}
Dividing \eqref{eq:key-length} by \(n\) and inserting
\eqref{eq:EAT-min-entropy} yield the finite-size key-rate condition,  
\begin{equation}
r =\frac{\kappa}{n} < 1 - h(e) - \vartheta_{\mathrm{EC}} - 2\mu .
\label{eq:finite-rate}
\end{equation}

In the asymptotic limit \(n\!\to\!\infty\) with sampling fraction \(\mu\to 0\),
all finite-size terms vanish and \eqref{eq:finite-rate} reduces to  
\begin{equation}
r_\infty \;=\; 1 - h(e) - \vartheta_{\mathrm{EC}} .
\label{eq:asymp-rate}
\end{equation}
When applying GEAT to the same E91 protocol, it can be seen explicitly that the single-round trade-off function and testing step do not change, and the remainder of the security proof is exactly as EAT. As a result, GEAT does not change the key rate obtained by EAT; however, GEAT drops the Markov assumption, so one can treat protocols (e.g. prepare-and-measure or ones with device memory) that EAT could not handle at all.

\subsubsection{Security of rDI-QKD with GEAT}
\label{subsec:rDIQKDsecurity}
An rDI-QKD protocol introduced in \ref{sec:rDIQKD}  mainly differs from DI-QKD in the quantum measurement phases $\mathcal{M}_i$. To see it more clearly, let us focus on each step $i$. Conditioning on the input classical variables $x_i$, $s_i$, $z_i$, and $y_i$, each $\mathcal{M}_i$ can be described as a CPTP map  $\mathcal{M}_i:\mathcal{Q}_{A_{i-1}}\mathcal{Q}_{B_{i-1}}E_{i-1}\rightarrow A_i B_iC_i\mathcal{Q}_{A_i}\mathcal{Q}_{B_i}E_i$ that takes as input the quantum registers $\mathcal{Q}_{A_{i-1}}$ (Alice's private measurement device $A$), $\mathcal{Q}_{B_{i-1}}$ (B's private measurement device $B$), and $E_{i-1}$ (Eavesdropper Eve) and outputs the classical variables $A_i$, $B_i$, $C_i$ along with updated quantum registers $\mathcal{Q}_{A_i}$,$\mathcal{Q}_{B_i}$, and $E_i$. By including the additional data in rDI-QKD (compared to DI-QKD) i.e. random inputs $S_i$, $z_i$, and the outcome $c_i$ into the Eve's side information $E$, the non-signal condition in \ref{GEAT:nosignaling} remains unaffected and GEAT can be applied for the security proof of the protocol.\\ 
 An rDI-QKD protocol introduced in \ref{sec:rDIQKD} can be shown by a tuple $\mathcal{Q}_{r}=\{\rho_{AB},A_x,B_y,T_z\}$. which gives rise to the correlations $p(a,b|x,y)$ and $p(a,c|x,z)$. Based on the above discussion, as in the standard DI-QKD, the asymptotic key rate can be calculated by the iid Devetak-Winter rate $r=H(A|XE)-H(A|B)$. To lower-bound the term $H(A|XE)$, considering that the source initially produces a state $\rho_{ABE}$, without loss of generality, one can assume that this state is a pure state $\ket{\Psi_{ABE}}$ and all the measurement settings are projective. The possible quantum strategies that Eve can use are fully characterized by the pure state $\ket{\psi_{ABE}}$ and the projective measurements $\{A_{a|x}\}$,$B_{b|y}$, and $T_{c|z}$ conditioned to the fact that they return the honest correlations 
 \begin{align}
  \label{eq:rdihonestcorr}
     p(a,c|x,z)=\bra{\Psi_{ABE}}A_{a|x}\otimes T_{c|z}\ket{\Psi_{ABE}}, \\
     p(a,b|x,y)=\bra{\Psi_{ABE}}A_{a|x}\otimes B_{b|y}\otimes\bm{I}_E\ket{\Psi_{ABE}}, \nonumber
 \end{align}
where $T_{c|z}$ acts jointly on subsystems $B$ and $E$. To each strategy, one can associate the post-measurement state $\sigma_{AXE}=\sum_{a,x}p(x)\ket{ax}\bra{ax}\otimes\sigma_{E}^{a,x}$ where $\sigma_{E}^{a,x}=\mathrm{tr}_{AB}(\ket{\Psi_{ABE}}\bra{\Psi_{ABE}}(A_{a|x}\otimes \bm{I}_{B}\otimes\bm{I}_E))$ is the unnormalized state held by Eve conditioned to Alice's inputs and outputs. The conditional min-entropy can then be computed as 
\begin{equation}
H(A|XE)=\inf_{\mathcal{\hat{Q}}|p}H(A|XE)_{\sigma_{AXE}},
\end{equation}
where the optimization runs over all quantum strategies $\hat{\mathcal{Q}}$ compatible with the honest correlations \eqref{eq:rdihonestcorr}. Notice that this optimization is almost identical to the optimization problem in a standard DI-QKD protocol where Bob performs the measurements $T_z\otimes B_y$, with a difference that the measurements $T_z$ act on the joint systems $BE$, instead of just $B$. The method in section \ref{subsubsec:numeriacallowerbounds} (see Theorem \ref{theorem:BFF21}) then can be applied to lower bound the conditional entropy $H(A|XE)$ in rDI-QKD in almost the same way as in DI-QKD. For the rCHSH protocol which is the routed version of the DI-QKD CHSH protocol, if $A$ and $T$ have perfect detectors $\eta_A=\eta_B=1$, the key rats are very robust as $\eta_B$ decreases, remaining positive for $\eta_B\gtrsim 0.68$ \cite{Tan2024}. However, this value is not robust when $\eta_A$ and $\eta_B$ are decreased for example, in the case where all devices have the same detection efficiency $\eta$, the key rate is positive for $\eta\gtrsim 0.96$ which is worse than standard CHSH based DI-QKD protocols.
\subsubsection{GEAT and the security of monogamy-of-entanglement based DI-QKD}
While some protocols, such as the generalized CHSH game \ref{ch-sh} and the magic square game \ref{msg-di-qkd}, have also been considered, most of the protocols studied so far have been based on the CHSH game due to its simplicity of implementation. So, here an important question arise: \textit{Is it possible to explicitly prove secrecy of a DI-QKD protocol using an arbitrary monogamy-of-entanglement game?} This question was tackled in \cite{CerveroMartin2025}. 

Let us start by defining \textit{a non-local game}.  \\
\begin{definition}
A two-party nonlocal game is a tuple $G_2 = (\pi, \mathcal{X}, \mathcal{Y}, \mathcal{A}, \mathcal{B}, V),$
where $\pi$ is a probability distribution over input pairs  $(x, y) \in \mathcal{X} \times \mathcal{Y}$, and  $V : \mathcal{X} \times \mathcal{Y} \times \mathcal{A} \times \mathcal{B} \to \{0,1\}$ is the winning predicate. Alice and Bob receive input $(x,y)$, respond with output $a \in \mathcal{A}$, $b \in \mathcal{B}$, and win if $V(x, y, a, b) = 1$. Similarly, a three-party game $G_3$ is defined as one in which a third player (Eve) also contributes by receiving $(x, y)$ and outputs a bit $c \in \{0,1\}$.
\end{definition}
\noindent Using this definition, the main theorem in \cite{CerveroMartin2025} is expressed as the following theorem. 
\begin{theorem}
\label{theorem:monogamyofentanglementgames}
Consider a DI-QKD protocol based on a two-player non-local game $G_2$ between Alice and Bob, with quantum winning probability $\omega_2$. Suppose an adversary (Eve) holds quantum side information and may launch general coherent attacks, which can be modeled by extending the game to a three-party non-local game $G_3$ with quantum winning probability $\omega_3 < \omega_2$. Then, there exists an affine min-tradeoff function $f : [0,1]\to \mathbb{R}$, defined for any $\beta \in [\omega_3, \omega_2]$ by
\begin{equation}
f(p) = \frac{p-\beta}{\ln 2}(1 - \beta + \omega_3) - \log(1 - \beta + \omega_3),
\end{equation}
such that the smooth min-entropy of Alice's raw key conditioned on Eve's quantum side information and public communication, satisfies
\begin{equation}
H_{\min}^\varepsilon(A|E) \geq n f(p_{\mathrm{exp}}) - O(\sqrt{n}),
\end{equation}
where $p_{\mathrm{exp}} \in [0,1]$ is the observed winning probability in the testing rounds.
\end{theorem}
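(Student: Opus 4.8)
The plan is to place the protocol inside the generalized entropy accumulation framework stated above and to supply its one genuinely game-dependent ingredient: an affine min-tradeoff function whose value and slope are dictated by the quantum values $\omega_2$ and $\omega_3$.

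\emph{Step 1 — the protocol as a non-signalling sequential process.} I would model round $i$ by a CPTP map $\mathcal{M}_i:R_{i-1}E_{i-1}\to A_iC_iR_iE_i$, with $R_i$ the honest parties' internal quantum memory, $A_i$ Alice's raw-key symbol, $C_i$ a classical flag recording whether the round is a testing round and, if so, whether $G_2$ was won, and $E_i$ Eve's updated quantum side information together with all of round $i$'s public announcements. The one thing to verify is the GEAT non-signalling condition: there is a CPTP $\mathcal{R}_i:E_{i-1}\to E_i$ with $\operatorname{Tr}_{A_iR_i}\circ\mathcal{M}_i=\mathcal{R}_i\circ\operatorname{Tr}_{R_{i-1}}$. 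This holds because, in the device-independent Bell setting, Eve's marginal after round $i$ is produced from her input and the broadcast transcript only, neither of which depends on Alice's and Bob's private inputs; the only subtlety is the ordering of sifting and announcements, which is routine to handle.

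\emph{Step 2 — the single-round entropy bound (the core).} For a single round I would lower bound $H(A_i\mid E_i)_{\rho}$ over all $\rho$ produced by a general coherent strategy with prescribed testing score $p$. Every such strategy, together with Eve's measurement, defines a play of the three-party game $G_3$, so its quantum value $\omega_3$ controls how well Eve can reproduce Alice's output in a won round; combining this monogamy-of-entanglement bound with $H(A_i\mid E_i)\ge H_{\min}(A_i\mid E_i)=-\log p_{\mathrm{guess}}(A_i\mid E_i)$ (or a tighter entropic relation) and a flag-mixing argument — which shows the optimal single-round profile is convex — yields a convex lower bound on $H(A_i\mid E_i)$ of the form $-\log$ of an affine function of $p$, determined by $\omega_2$ and $\omega_3$. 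The affine min-tradeoff function is then its linearisation at the anchor $\beta\in[\omega_3,\omega_2]$; computing the value $-\log(1-\beta+\omega_3)$ and the slope $(1-\beta+\omega_3)/\ln 2$ at $\beta$ reproduces exactly the $f$ in the statement. Because $\omega_3<\omega_2$ the slope is strictly positive, so $f$ is increasing and certifies a positive entropy rate for observed scores near $\omega_2$.

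\emph{Step 3 — accumulation and bookkeeping.} With this $f$ and $t=f(p_{\mathrm{exp}})$, taken as $\min_\omega f(p(\omega))$ over the event that the observed fraction of won testing rounds is at least $p_{\mathrm{exp}}$, I would invoke the generalized entropy accumulation theorem above to get $H^{\varepsilon}_{\min}(A^n\mid E_n)_{\rho_{|\omega}}\ge n\,f(p_{\mathrm{exp}})-O(\sqrt n)$; the $V$, $K'(\alpha)$ and $g(\varepsilon)$ corrections, optimised over $\alpha\in(1,3/2)$ with $\alpha-1$ chosen of order $n^{-1/2}$, collapse into the $O(\sqrt n)$ term, and their dependence on $\|\nabla f\|_\infty=(1-\beta+\omega_3)/\ln 2$, on $\operatorname{Var} f$ and on $d_A$ is harmless. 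Finally I would fold the residual public communication (sifting, parameter-estimation announcements) into $E_n$ via the min-entropy chain rule, which costs only a sublinear additive term, leaving the claimed bound.

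\emph{Main obstacle.} The delicate part is Step 2: extracting from an \emph{arbitrary} monogamy-of-entanglement game a bound on the conditional von Neumann entropy — not just the asymptotic i.i.d.\ or min-entropy value — that remains tight near $\omega_2$, and then exhibiting the \emph{affine} under-estimator in closed form throughout $[\omega_3,\omega_2]$; the convexity of the optimal single-round profile is easy, but pinning its tangent explicitly in terms of $\omega_2$ and $\omega_3$ is where the work lies. One must also check that the reduction of coherent attacks to a bona fide three-party game $G_3$ (with Eve measuring jointly on her share) is faithful, and that the strict gap $\omega_3<\omega_2$ is genuinely certifiable for the games of interest, since this gap is precisely what gives $f$ a positive slope and hence a positive key rate.
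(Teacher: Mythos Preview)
The paper does not prove this theorem itself; it is stated as the main result of \cite{CerveroMartin2025}, and the only commentary the review offers is that the monogamy gap $\omega_3<\omega_2$ is the operative mechanism. Your plan --- embed the protocol as a GEAT non-signalling sequential process, bound the single-round guessing probability via the three-player extension by $p_{\mathrm{guess}}(A_i\mid E_i)\le 1-p+\omega_3$, linearise the resulting convex lower bound $-\log_2(1-p+\omega_3)$ at an anchor $\beta$ to obtain the affine min-tradeoff function, and then invoke the GEAT statement of the preceding subsection with $\alpha-1\sim n^{-1/2}$ --- is precisely the architecture the paper's GEAT section sets up and is the route of the cited reference.

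One inconsistency in your Step~2 is worth flagging: the tangent slope of $-\log_2(1-p+\omega_3)$ at $p=\beta$ is $1/\bigl((1-\beta+\omega_3)\ln 2\bigr)$, not $(1-\beta+\omega_3)/\ln 2$ as you assert. The latter matches the review's displayed $f$ verbatim, but it is not the derivative of the function your own argument produces; so your claim that ``computing the slope \ldots\ reproduces exactly the $f$ in the statement'' does not follow from what you wrote. You should check whether the review has a transcription slip from the original or whether a different convex single-round bound is intended. The overall proof structure is unaffected either way.
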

\noindent The monogamy-of-entanglement property in this setting is reflected in the fact that the optimal quantum winning probability $\omega_2$ of the two-player non-local game (played between Alice and Bob) exceeds the tripartite quantum winning probability $\omega_3$ of the extended game, in which a third party receives both inputs and attempts to guess the key bit produced by Alice and Bob. \par
Theorem~\ref{theorem:monogamyofentanglementgames} implies that it is indeed possible to construct DI-QKD — and prove their security — from any two-player non-local game that exhibits a sufficiently large gap $\omega_2 > \omega_3$ between the two-party and three-party instances of the game.

\subsubsection{(G)EAT vs. iid and non-iid techniques} (G)EAT is more general in the sense that it does not need to assume that the rounds of the experiment are \textit{independent and identically distributed} (iid). This, in particular, implies that (\textit{i}) the measurement devices are memoryless, i.e. they behave independently and in the same way in every round of the protocol;(\textit{ii}) the distributed state is the same for every round $\rho_{A_1^nB_1^nE}=\rho_{ABE}^{\otimes n}$. The iid simplification can be justified, for example, in experimental setups where Alice and Bob control, to some extent, the source and measurement devices, but do not have a full characterization of their working devices.
In this case, $H_{\min}^{\varepsilon}(K_{1}^n\mid E_{1}^n)$ can be directly related to the single-round conditional von Neumann entropy $H(K_i|E_i)$ and (G)EAT is equivalent to the quantum asymptotic equipartition property (AEP) \cite{TCR09} yielding
\begin{align}
	H_{\min}^{\varepsilon}(K_{1}^n\mid E_{1}^n)\ge n H(K_i\mid E_i)-c_\varepsilon \sqrt{n} ,\label{eq:AEP]}
\end{align}
where $c_\varepsilon$ is dependent only on $\varepsilon$ and $H(K_i\mid E_i)\le 1-\chi_0$ of Eq. \eqref{Holevo-upperbound} for CHSH protocol. 

%
%
%
%
%
%
(G)EAT improves the traditional DI-QKD security proofs under coherent attacks \cite{vazirani2014fully,vazirani2019fully}. This, in particular, assumes that Eve exploits all degree of freedoms of the quantum systems, applying global operations across all protocol rounds \(\rho_{ABE}=\rho_{A^n B^n E} \) and a global measurement \( \mathcal{M}_E \) on \( \rho_{E} \).
Let us consider CHSH protocol with abortion threshold $S\le 2\sqrt{2}(1-2Q)$, then
\begin{equation}\label{eq:Vaziranisecproof}
 H_{\min}^{\varepsilon}(A\mid E)_{\rho}> 
 -6(1-\tau')\log\left(\frac{11}{12}+\frac{3}{8}\sqrt{\frac{Q}{1-\tau}}\right)-O\left(\frac{\log(1/\varepsilon)}{2Q^2n}\right),\qquad \forall \tau+\tau'>1,
\end{equation}
with $n$ the rounds and $Q$ the QBER. 
After postprocessing 
$
    r\geq H_{\min}^{\epsilon}(A|E)- h(Q),
$
as in Eq. \eqref{devatekwinterformula}. 
Eq. \eqref{eq:Vaziranisecproof} relies on \textit{quantum reconstruction paradigm} (QRP) \cite{De2012}.

However, the key rate is lower compared to security proofs obtained via (G)EAT.
\subsubsection*{(G)EAT vs Rényi-divergence method}

Before the development of the EAT, one of the most complete finite-key security proofs for robust DI-QKD was introduced by Miller and Shi~\cite{Miller2016}. Their approach was based on bounding a Rényi-divergence that decreases linearly with the number of rounds, relying on a game-specific uncertainty relation for strong self-testing XOR games and an intricate induction technique using partially trusted measurements. This allowed them to achieve notable milestones such as a polylogarithmic seed length, constant quantum memory requirements for the honest parties, and explicit exponential soundness error bounds. However, their proof was highly specialized and tightly coupled to the particular structure of the underlying nonlocal game, making it difficult to generalize or adapt to new protocols, statistics, or noise models. In contrast, the (G)EAT framework generalizes this type of entropy accounting by providing a universal accumulation statement: once a suitable min-tradeoff function is established, the conditional smooth min-entropy grows predictably and additively, regardless of whether the device behavior is iid or exhibits arbitrary non-stationary correlations. This shift delivers sharper finite-size bounds, native composability, and a modular, game-independent toolkit that can be applied across a wide range of DI-QKD, DI-randomness expansion, and other sequential quantum protocols without rederiving core security proofs. As such, (G)EAT is widely regarded as the natural successor to the Rényi-divergence-based method for modern device-independent security analysis.

\subsection{Quantum Probability Estimation method}
\label{subsec:QPE}
Another useful method for proving security in device-independent protocols is to employ the quantum probability estimation method, which was introduced in \cite{Zhang2018d,Zhang2020a}. The analysis in \cite{Zhang2020a} enables a full security analysis for DIQRNG. Similarly to (G)EAT, in QPE, inputs and outputs are determined in a sequence of $n$ time-ordered trials, where the $i$ trial has input $Z_i$ and output $C_i$, so $\boldsymbol{Z}=(Z_i)_{i=1}^n$ and $\boldsymbol{C}=(C_i)_{i=1}^n$. QPE introduces a flexible tool called a Quantum Estimation Factor (QEF), which quantifies certifiable randomness in quantum experiments, even in the presence of quantum side information. QEF is defined as 
\begin{definition}[Quantum Estimation Factor]    
Let $C$ and $Z$ be classical random variables that represent, respectively, the output (e.g., measurement result) and input (e.g., measurement setting) in a quantum experiment, and let $E$ be a quantum system that may hold quantum side information by an adversary.
Then a function $F : \mathrm{Rng}(C) \times \mathrm{Rng}(Z) \to \mathbb{R}_{\geq 0}$
is called a Quantum Estimation Factor (QEF) with power $\beta > 0$ if, for every valid classical-quantum state $\rho_{CZE}$ , the following inequality holds:
\begin{equation}
\sum_{c \in \mathrm{Rng}(C)} \sum_{z \in \mathrm{Rng}(Z)} \mathrm{Tr}\left[\rho_E(c, z)\right] \cdot F(c, z) \cdot \hat{R}_\alpha\left( \rho_E(c, z) \,\middle\|\, \rho_E(z) \right) \leq 1,
\end{equation}
where $Rng(A)$ is the value space of a classical variable $A$ and
\begin{itemize}
  \item $ \alpha = 1 + \beta $,
  \item $\rho_E(c, z)$ is the (possibly subnormalized) quantum state of the adversary’s system $E$ conditioned on observing $C = c$ and input  $Z = z$,
  \item $\rho_E(z) = \sum_{c} \rho_E(c, z)$ is the marginal quantum state for input $z$,
  \item $\hat{R}_\alpha(\rho \| \sigma)$ is the normalized sandwiched R\'enyi power of order $\alpha$, defined as:
  \begin{equation}
  \hat{R}_\alpha(\rho \| \sigma) = \frac{1}{\mathrm{Tr}[\rho]} \cdot \mathrm{Tr}\left[ \left( \sigma^{-\frac{\beta}{2\alpha}} \rho \, \sigma^{-\frac{\beta}{2\alpha}} \right)^{\alpha} \right].
  \end{equation}
\end{itemize}
\end{definition}
Using this definition, the following lower bound on quantum smooth min-entropy can be proved \cite{Zhang2020a}.
\begin{theorem}[Lower Bound on Quantum Smooth Min-Entropy via QEFs]
Suppose that the experiment is repeated for $n$ rounds. For each round $i$, let  $F_i : \mathrm{Rng}(C) \times \mathrm{Rng}(Z) \to \mathbb{R}_{\geq 0}$ be a QEF with power $\beta>0$. By defining the total QEF value as $
f = \prod_{i=1}^{n} F_i(c_i, z_i)$ and  any desired soundness error \( \epsilon \in (0, 1] \), if
$f \geq \left( \frac{2^{-k} \cdot \epsilon^2}{2} \right)^{-\beta}$, then the smooth conditional min-entropy of the output string  $\boldsymbol{C}$ with given inputs $\boldsymbol{Z}$ satisfies:
\begin{equation}
H_{\min}^\epsilon(C \mid Z E) \geq k.
\end{equation}
\end{theorem}
\noindent The above theorem states that at least $k$ bits of the output string are private and unpredictable, which can be safely extracted using a quantum-proof randomness extractor.\\
Since certifying quantum smooth conditional min-entropies is a central task in QKD, QPE can be adapted to enhance the finite-data efficiency of DI-QKD. To achieve this, it is necessary to certify the quantum smooth conditional min-entropy evaluated on a classical-quantum state following the error-correction step, similar to the approach used in security proofs of DI-QKD via the EAT \cite{adfrv17} (see Section \ref{sec:EAT}).

\subsection{Complementarity-Based Security for DI-QKD}

So far, DI-QKD security proofs have primarily relied on entropic techniques such as the (G)EAT. While powerful, these methods are technically demanding and typically require large data sizes, limiting their experimental feasibility.\\
A different security proof for DI-QKD was proposed by Zhang et al. \cite{Zhang2023b} by employing a \textit{complementarity-based} approach. They recast the DI-QKD security problem as a form of \textit{quantum error correction} against \textbf{phase errors}, drawing on principles already familiar from standard device-dependent QKD \cite{Lo1999,Shor2000,Koashi2009}. \\
The central insight in \cite{Zhang2023b} is that a BI violation can be used to upper-bound the adversary's information via a \textit{phase error probability} derived from complementary observables and provide an operational connection between the Bell value and phase error rate using sample-entropy techniques together with a composable, finite-size security proof under coherent (non-iid) attacks. This leads to the following main result:

\begin{theorem}
In a CHSH-based DI-QKD protocol with memoryless devices executed over $n$ rounds, let $m$ rounds be used for key generation. If the observed average Bell value is $\bar{\beta} > 2$, then, except with small failure probability, the number of secure key bits that can be extracted is approximately:
\begin{equation}
k \approx m \cdot \left[1 - h(e_b) - h\left(e_p(\bar{\beta})\right)\right],
\end{equation}
where 
\[
e_p(\bar{\beta}) = \frac{1 - \sqrt{(\bar{\beta}/2)^2 - 1}}{2}
\]
is the phase error probability inferred from the Bell violation $\beta$.
\end{theorem}
 Simulation results demonstrate that the complementarity-based approach significantly improves the finite-size performance of DI-QKD. For example, in an ion-trap experiment with $\beta = 2.64$ and $Q=1.8\%$, approximately $3.16 \times 10^5$ measurement rounds are required to generate a secure key, which is less than one-third of the rounds required by EAT methods. This substantial reduction in data size makes the protocol more practical for experimental platforms with low entangled-pair rates, such as cold atoms and nitrogen-vacancy centers.
Moreover, it was shown that advantage distillation can be applied without assuming iid behavior \cite{Zhang2023b}, lowering the thresholds for required transmittance and fidelity and expanding the range of conditions under which secure DI-QKD is possible.

\subsection{Security Proof of Parallel DI-QKD}
The original security proof for parallel DI-QKD relates collision entropy of the raw key to the success probability in a derived game. By showing that even entangled strategies cannot win this game with high probability, they obtain the bound:
\begin{equation}
H^{\epsilon}_{\min}(A | E) - H^{\epsilon}_{\max}(A | B) \geq R(n)
\end{equation}
where \(R(N)\) is the extractable key rate, and \(R(N) \in \Omega(N)\) means it grows at least linearly with the number of rounds \(N\).\\
 Soon after, Vidick \cite{Vidick2017} provided a simplified security proof for such protocols relying on two powerful and general tools from quantum information theory, the immunization technique and anchored parallel repetition. The immunization method ensures that even entangled adversaries cannot win a modified "guessing" version of the Magic Square game with certainty, introducing unavoidable uncertainty that translates into key entropy. The anchored parallel repetition theorem then guarantees that this uncertainty accumulates exponentially with the number of rounds, even under coherent attacks. By combining these ideas, he shows that a positive key rate can be securely extracted with a far more direct and modular argument. 
Further improvement was done by Jain et al. \cite{Jain2022} by introducing the direct product theorem. They show that it is possible to do DI quantum cryptography without
the assumption that devices do not leak any information. By applying it to parallel DI-QKD, they showed that when the protocol is carried out with devices that are compatible with $n$ copies of the Magic Square game, it is possible to extract $\Omega(n)$ bits of key from it, even in the presence of $O(n)$ bits of leakage.

\subsection{Analytical bounds}\label{sec:analyticalBounds}

\subsubsection{2-input/2-output protocols}

The first analytical bound, as mentioned in Section \ref{DIQKD-collective}, was established in Ref.~\cite{acin2007device} against collective attacks.
\begin{theorem}
    Let $\ket{\psi_{ABE}}$ be a quantum state for a CHSH$_c$ protocol. Then, the following upper bound holds for the Holevo quantity:
    \begin{equation*}
      \chi(B_1:E) \leq h\left(\frac{1+\sqrt{(\beta/2)^2-1}}{2}\right),
    \end{equation*}
\end{theorem}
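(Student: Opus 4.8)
The plan is to turn the device-independent optimization — maximize $\chi(B_1:E)$ over all $(\ket{\psi_{ABE}}, M_{A|x}, N_{B|y})$ that reproduce a CHSH value $S$ — into a tractable two-qubit problem, and then to solve the resulting constrained entropy maximization. First I would invoke Jordan's lemma (Lemma~\ref{Jordanlemma}): since Alice and Bob each apply only two dichotomic observables, there are local bases in which $M_{A|0},M_{A|1}$ and $N_{B|0},N_{B|1}$ are simultaneously block-diagonal with blocks of size at most two. Handing Eve the classical block labels of Alice and Bob can only increase $\chi(B_1:E)$, and within each matched pair of blocks the relevant Hilbert space is $\mathbb{C}^2\otimes\mathbb{C}^2$; by concavity of the Holevo quantity in the block distribution it suffices to bound $\chi$ for a single two-qubit block. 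From here on $\rho_{AB}$ is a two-qubit state, $B_1$ is the rank-one qubit measurement generating the key, $A_0=B_1$, and the CHSH operator is built from qubit observables with $\langle\hat\beta\rangle=S$.

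Next I would symmetrize the state. The CHSH expression is invariant under a finite group of local unitaries (a Pauli twirl together with the protocol's built-in symmetry); letting Eve keep a classical record of the random group element applied, and using that $\mathcal{S}(\rho_{AB})$ together with the conditional entropies combine into a quantity that is not decreased by this averaging once Eve's record is included, one reduces to $\rho_{AB}$ Bell-diagonal, with ordered eigenvalues $\lambda_1\ge\lambda_2\ge\lambda_3\ge\lambda_4$ in the Bell basis, while leaving $S$ unchanged. I would then rewrite the Holevo quantity by purification: Eve holds $\ket{\psi_{ABE}}$ with $\rho_{AB}=\mathrm{Tr}_E\ket{\psi}\bra{\psi}$, so $\mathcal{S}(\rho_E)=\mathcal{S}(\rho_{AB})$, and since $B_1$ is a rank-one projective measurement on a qubit the post-measurement tripartite state conditioned on the outcome $b_1$ is pure on $AE$, so $\mathcal{S}(\rho_{E|b_1})=\mathcal{S}(\rho_{A|b_1})$. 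With the (twirled) uniform marginal this gives
\begin{equation}
\chi(B_1:E) = \mathcal{S}(\rho_{AB}) - \tfrac12\sum_{b_1=0,1}\mathcal{S}(\rho_{A|b_1}).
\end{equation}

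For a Bell-diagonal $\rho_{AB}$ one has $\mathcal{S}(\rho_{AB})=H(\{\lambda_j\})$; the CHSH value $S$ is fixed by the two largest squared correlators, hence by two functions of the $\lambda_j$ once Eve picks the state's CHSH-optimal qubit measurements (allowed device-independently); and $\mathcal{S}(\rho_{A|b_1})$ is likewise an explicit function of the $\lambda_j$. The final step is to maximize the right-hand side of the displayed identity over all $\{\lambda_j\}$ compatible with a given $S$. I expect the optimum to be attained on the two-parameter family for which the key-basis measurement does not reveal the Bell-state index, so that $\rho_{A|b_1}$ is pure and $\chi(B_1:E)=\mathcal{S}(\rho_{AB})$, with $\mathcal{S}(\rho_{AB})$ then taking its largest value compatible with $S$, namely $h\!\big(\tfrac{1+C}{2}\big)$ where $C=\sqrt{(S/2)^2-1}$ is exactly the concurrence that the CHSH value forces to be at least this large. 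Identifying $C$ as the concurrence and recalling that for two qubits a fixed concurrence caps the von Neumann entropy is what pins the bound to the claimed form.

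The main obstacle, I expect, is making the symmetrization rigorous — arguing that restricting to Bell-diagonal $\rho_{AB}$ is without loss of generality for $\chi(B_1:E)$ itself, not merely for $S$ — and then carrying out the constrained optimization cleanly: Lagrange multipliers together with a careful case analysis of the eigenvalue ordering are needed to certify that the two-Bell-state mixture is genuinely the global maximizer and not just a critical point. A secondary subtlety is the device-independent treatment of the measurements: via Jordan's lemma and self-testing-type rigidity (Sec.~\ref{sec:self-testing}) one must argue that fixing the qubit observables to their canonical forms is without loss of generality, so that the bound ultimately depends only on the observed number $S$.
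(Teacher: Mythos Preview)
Your proposal is correct and follows essentially the same architecture as the paper's proof: Jordan's lemma to reduce to two-qubit blocks, reduction to Bell-diagonal states with $xz$-plane measurements, a per-block bound, and concavity to handle the mixture over blocks. The paper black-boxes the Bell-diagonal step as Lemma~\ref{lemma:belldiag} (stated without proof), whereas you sketch its proof via the purification identity $\chi(B_1{:}E)=\mathcal{S}(\rho_{AB})-\sum_{b_1}p(b_1)\mathcal{S}(\rho_{A|b_1})$ and the concurrence/CHSH relation, which is exactly how the original reference establishes it; one small tightening is that the passage from blocks to the observed $S$ uses concavity and monotonicity of the bound function $F(S)=h\big(\tfrac{1+\sqrt{(S/2)^2-1}}{2}\big)$ via $\chi\le\sum_c p_c F(S_c)\le F(\sum_c p_c S_c)\le F(S)$, rather than concavity of $\chi$ itself.
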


\noindent For the proof of this theorem, we use the following lemma, which we put here without proof.
\begin{lemma}
\label{lemma:belldiag}
For a Bell-diagonal state with eigenvalues $\lambda$ ordered as $\lambda_{{\Phi}^{+}}\geq\lambda_{\psi^{-}}$ and $\lambda_{\Phi^{-}}\geq\lambda_{\psi^{+}}$ and for measurements in the $xz$ plane, the following bound holds for the Holevo quantity $\chi_{\lambda}(B_1|E)$
\begin{equation*}
  \chi_{\lambda}(B_1|E) \leq F(\beta_{\lambda})\leq h\left(\frac{1+\sqrt{(\beta_{\lambda}/2)^2-1}}{2}\right),
\end{equation*}
where $\beta_{\lambda}$ is the largest violation of the CHSH inequality by the state $\rho_{\lambda}$.
\end{lemma}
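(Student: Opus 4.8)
\emph{The plan.} Since security is proved against collective attacks, the plan is to exploit the standard fact that Eve may, without loss of generality, hold a purification of $\rho_\lambda$ (this can only increase $\chi_\lambda(B_1|E)$, and $\chi$ is invariant under isometries on Eve's side). Because $\rho_\lambda=\sum_j\lambda_j\ket{B_j}\bra{B_j}$ is Bell-diagonal, with $\ket{B_j}$ ranging over $\ket{\Phi^\pm},\ket{\Psi^\pm}$, I would take the canonical purification
\begin{equation}
\ket{\psi_{ABE}}=\sqrt{\lambda_{\Phi^+}}\ket{\Phi^+}\ket{e_1}+\sqrt{\lambda_{\Phi^-}}\ket{\Phi^-}\ket{e_2}+\sqrt{\lambda_{\Psi^+}}\ket{\Psi^+}\ket{e_3}+\sqrt{\lambda_{\Psi^-}}\ket{\Psi^-}\ket{e_4},
\end{equation}
with $\{\ket{e_j}\}$ orthonormal, and compute $\chi_\lambda(B_1|E)=\mathcal{S}(\rho_E)-\tfrac12\sum_{b}\mathcal{S}(\rho_{E|b})$ directly for the key measurement $B_1=\sigma_z$.

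\emph{Step 1: the Holevo quantity in closed form.} Orthogonality of the Bell basis gives $\rho_E=\mathrm{diag}(\lambda_{\Phi^+},\lambda_{\Phi^-},\lambda_{\Psi^+},\lambda_{\Psi^-})$, so $\mathcal{S}(\rho_E)=H_4(\lambda)$, the Shannon entropy of the four eigenvalues. Projecting Bob's qubit onto $\ket{b}$ splits Eve's space into two mutually orthogonal blocks: $\{\Phi^+,\Phi^-\}$ (which fix Alice's bit one way) and $\{\Psi^+,\Psi^-\}$ (the other). A short computation, tracking the signs of the projected Bell vectors, shows $\rho_{E|b}$ has eigenvalues $p_\Phi:=\lambda_{\Phi^+}+\lambda_{\Phi^-}$ and $p_\Psi:=\lambda_{\Psi^+}+\lambda_{\Psi^-}$ for \emph{both} outcomes $b$, hence $\mathcal{S}(\rho_{E|b})=h(p_\Phi)$. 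Writing $\mu_1=\lambda_{\Phi^+}/p_\Phi$, $\mu_2=\lambda_{\Psi^+}/p_\Psi$ and applying the grouping rule $H_4(\lambda)=h(p_\Phi)+p_\Phi h(\mu_1)+p_\Psi h(\mu_2)$ yields
\begin{equation}
\chi_\lambda(B_1|E)=p_\Phi\, h(\mu_1)+p_\Psi\, h(\mu_2).
\end{equation}
In parallel I would record the $\sigma_i\otimes\sigma_i$ correlators $t_z=p_\Phi-p_\Psi=2p_\Phi-1$ and $t_x=p_\Phi(2\mu_1-1)+p_\Psi(2\mu_2-1)$, so that the maximal $xz$-plane CHSH value is $S_\lambda=2\sqrt{t_x^2+t_z^2}$; here the eigenvalue ordering $\lambda_{\Phi^+}\ge\lambda_{\Psi^-}$, $\lambda_{\Phi^-}\ge\lambda_{\Psi^+}$ guarantees that $t_x,t_z$ are the two largest correlators, so $(S_\lambda/2)^2=t_x^2+t_z^2$ is the genuine maximal violation.

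\emph{Step 2: collapse the optimization by concavity.} The main obstacle is that $\chi_\lambda$ must be maximized over the three-parameter Bell-diagonal simplex at fixed $S_\lambda$. I expect this to collapse cleanly. Set $g(u):=h\!\big(\tfrac{1+u}{2}\big)$, which is concave and even on $[-1,1]$ and decreasing in $|u|$. Since $t_x$ is precisely the $(p_\Phi,p_\Psi)$-convex combination of $2\mu_1-1$ and $2\mu_2-1$, Jensen's inequality removes the dependence on $\mu_1,\mu_2$:
\begin{equation}
\chi_\lambda(B_1|E)=p_\Phi\,g(2\mu_1-1)+p_\Psi\,g(2\mu_2-1)\le g(t_x)=h\!\Big(\tfrac{1+|t_x|}{2}\Big).
\end{equation}
The remaining freedom in $p_\Phi$ is then killed by the elementary bound $t_z=2p_\Phi-1\in[-1,1]$, which forces $t_x^2=(S_\lambda/2)^2-t_z^2\ge(S_\lambda/2)^2-1$; monotonicity of $g$ in $|u|$ gives $g(|t_x|)\le g\big(\sqrt{(S_\lambda/2)^2-1}\big)$, i.e.
\begin{equation}
\chi_\lambda(B_1|E)\le h\!\left(\frac{1+\sqrt{(S_\lambda/2)^2-1}}{2}\right).
\end{equation}

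\emph{Closing the two-step inequality.} Defining $F(S)$ as the maximum of $\chi_\lambda$ over Bell-diagonal states with $xz$-plane CHSH value $S$ makes $\chi_\lambda\le F(S_\lambda)$ true by definition, while the display above shows $F(S)\le h\big(\tfrac{1+\sqrt{(S/2)^2-1}}{2}\big)$. I would finish by noting the bound is tight: the rank-two family $\lambda_{\Psi^+}=\lambda_{\Psi^-}=0$ has $t_z=1$, $|t_x|=\sqrt{(S/2)^2-1}$ and $\chi=h(\mu_1)=h\big(\tfrac{1+\sqrt{(S/2)^2-1}}2\big)$, so $F$ equals the stated function. The only places demanding genuine care are the block structure and sign bookkeeping of $\rho_{E|b}$ in Step 1, and the use of the ordering hypothesis to certify that the $xz$ plane carries the maximal violation; everything past that is forced by concavity of $g$ and $|t_z|\le1$.
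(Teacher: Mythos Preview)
The paper does not actually prove this lemma—it is explicitly stated ``without proof,'' with the derivation deferred to the original reference \cite{pabgs09andSangouard.4}. So there is no paper proof to compare against. Your argument is correct and self-contained: the canonical-purification computation of $\chi_\lambda$ in Step~1 is accurate (the block structure of $\rho_{E|b}$ and the grouping identity $H_4(\lambda)=h(p_\Phi)+p_\Phi h(\mu_1)+p_\Psi h(\mu_2)$ both check out), and the Jensen-plus-monotonicity collapse in Step~2 is clean. The tightness witness via the rank-two family is also correct. Compared with the original derivation, which carries out a more explicit constrained optimisation over the Bell-diagonal simplex, your concavity route is arguably tidier.

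One inaccuracy to flag: your claim that the ordering hypothesis ``guarantees that $t_x,t_z$ are the two largest correlators'' is false as stated. Take $(\lambda_{\Phi^+},\lambda_{\Phi^-},\lambda_{\Psi^+},\lambda_{\Psi^-})=(0.4,0.3,0.1,0.2)$: both ordering conditions hold, yet $t_x=0$ while $|t_y|=0.2$. What the ordering \emph{does} give is just $t_z\ge 0$. This does not damage your proof, provided you read $S_\lambda$ as the maximal CHSH value over $xz$-plane measurements—which is precisely the setting of the lemma (``for measurements in the $xz$ plane'') and of the theorem that consumes it. Under that reading $(S_\lambda/2)^2=t_x^2+t_z^2$ holds by definition, independently of any comparison with $t_y$, and Step~2 goes through unchanged. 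Simply delete or rephrase that one sentence about the ordering.
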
 
Using this lemma, the proof of the theorem can be established.
\begin{proof}
As stated at the beginning of sec \ref{sec:DI-QKD}, suppose that Eve sends to Alice and Bob a mixture $\rho_{AB}=\sum_c p_c \rho_{AB}^c$ of two-qubit states with a classical ancilla known to her which carries on the information about measurement settings on Alice and Bob side. Two measurements on Alice and Bob can be assumed as von Neumann measurements (if necessary by including ancillas in $\rho_{AB}$). Thus the measurements $A_{1,2}$ are Hermitian d-dimensional operators. Using the Jordan lemma \ref{Jordanlemma} one can show that $A_1$ and $A_2$ are block diagonal, with blocks of size $1\times 1$ or $2 \times 2$ i.e. $A_j=\sum_c P_c A_j P_c$ with $P_c$'s as projectors of rank 1 or 2. Therefore, from Alice's standpoint, $A_{1,2}$ amounts at projecting in one of the at most two-dimensional subspaces defined by the projectors $p_c$  followed by a measurement on the reduced state observable $P_cA_iP_c$. The same argument holds for Bob. As a result, one can conclude that in each round of the protocol Alice and Bob receive a two-qubit state. \\
Each state $\rho_{AB}^c$ can be taken to be a Bell diagonal state ($\sum_{\lambda}p_{\lambda}\rho_{\lambda}$), and the measurements of Alice and Bob to be measurements in the $xz$ plane which result in $\chi(B_1:E)=\sum_{\lambda}p_{\lambda}\beta_{\lambda}$ plane. Therefore, using the lemma \ref{lemma:belldiag} the concavity of function $F$ 
\begin{equation*}
    \chi(B_1:E)\leq \sum_{\lambda}p_{\lambda}F(\beta_{\lambda})\leq F\left(\sum_{\lambda}p_{\lambda}\beta_{\lambda}\right)\leq F(\beta),
\end{equation*}
the last inequality comes from the fact that $F$ is a monotonically decreasing function.
\end{proof}
Based on this bound, the following lower bound for the key rate can be derived:

\begin{equation}
		r \geq I(A_0:B_1)-h\left(\frac{1+\sqrt{(\beta/2)^2-1}}{2}\right),
  \label{Acin2007bound}
\end{equation}

The basic CHSH protocol based on the above lower bound is, however, not optimal in several respects. To address the drawbacks, Masini et al. \cite{Masini2022}, introduced a new and versatile
approach to bound the conditional entropy in the 2-input/2-output device-independent setting that is
conceptually and technically relatively simple. The starting point is to use Jordan's lemma to reduce the
analysis to convex combinations of qubit strategies. \\ 
The next step, as in a standard qubit QKD protocol like BB84,  is to bound the conditional entropy of Alice's key-generating
measurement, $A_1$, through an uncertainty relation involving the correlations $\langle \Bar{A}_1\otimes B \rangle$ where $\Bar{A}_1$ is an orthogonal measurement on Alice's subsystem and $B$ is a binary observable on Bob's system. Considering the situation where Alice's raw key bit  $A_1$ is obtained as the outcome of the measurement, then we have the following bounds which are qubit uncertainty relations of the standard entanglement-based BB84 protocols its variants:\\

\begin{tabular}{@{} p{0.4\textwidth} >{\centering\arraybackslash}p{0.55\textwidth} @{}}
\toprule

BB84 entropy bound \cite{Berta2010} &
\scalebox{1.0}{$H(A_1|E) \geq 1 - \phi\left(|\langle\Bar{A}_1 \otimes B\rangle|\right)$} \\
\midrule

BB84 bound with noisy preprocessing \cite{Woodhead2021,Woodhead2014} &
\scalebox{1.0}{$H(A_{1}^q|E) \geq f_q\left(|\langle\Bar{A}_1 \otimes B\rangle|\right)$} \\
\midrule

BB84 with noisy preprocessing and bias \cite{Masini2022} &
\scalebox{1.0}{$H(A_{1}^q|E) \geq g_q\left(|\langle A_1 \rangle|, |\langle\Bar{A}_1 \otimes B\rangle|\right)$} \\
\midrule

Two-basis bound \cite{Masini2022,Woodhead2021} &
\scalebox{1.0}{$H(A_{X}^q|E) \geq f_q\left(\sqrt{p\langle \Bar{A}_1 \otimes B\rangle^2 + (1-p)\langle \Bar{A}_2 \otimes B^{\prime} \rangle^2}\right)$} \\
\bottomrule
\end{tabular} 
\\ \\
where $\phi(x)=h(\frac{1}{2}+\frac{1}{2}x)$ and $h(x)$ is the binary entropy. Moreover, $f_q(x)=1+\phi(\sqrt{(1-2q)^2+4q(1-q)x^2})-\phi(x)$, and $g_q(z,x)=\phi(\frac{1}{2}(R_{+}+R_{-}))-\phi(\sqrt{z^2+x^2})$, with $R_{\pm}=\sqrt{(1-2q\pm z)^2+4q(1-q)x^2}$.  
 The second step approach consists in deriving a constraint on these correlators in terms of correlators involving only the observables $A_1$, $A_2$, $B_1$, $B_2$  measured by the devices \\
 
 \begin{tabular}{@{} p{0.4\textwidth} >{\centering\arraybackslash}p{0.55\textwidth} @{}}
\toprule

CHSH correlation bound \cite{Woodhead2021} &
\scalebox{1.0}{$ |\langle\Bar{A}_1 \otimes B\rangle|\geq \sqrt{\beta^2 /4 -1}$} \\
\midrule

asymmetric CHSH correlation bound \cite{Woodhead2021} &
\scalebox{1.0}{$|\langle\Bar{A}_1 \otimes B\rangle|\geq E_{\alpha}(\beta_{\alpha})$} \\
\midrule

Two-basis correlation bound \cite{Masini2022} &
\scalebox{1.0}{$p\langle\Bar{A}_1 \otimes B\rangle^2+(1-p)\langle\Bar{A}_2 \otimes B^{\prime}\rangle^2\geq E_p(\beta)^2$} \\
\bottomrule
\end{tabular} 
\\

where 
\begin{equation}
E_{\alpha}(\beta_{\alpha}) = 
\begin{cases} 
\sqrt{\frac{\beta_{\alpha}^{2}}{4} - \alpha^2}, & \text{if } |\alpha| \geq 1, \\
\sqrt{1 - \left(1 - \frac{1}{|\alpha|}\sqrt{(1 - \alpha_2)\left(\frac{\beta_{\alpha}^2}{4} - 1\right)}\right)^2}, & \text{if } |\alpha| < 1.
\end{cases}
\end{equation}
and $E_p(\beta)^2$ is the solution of a polynomial optimization problem of five real variables which for the case $p=\frac{1}{2}$ can be solved analytically \cite{Masini2022}. \\
By combining the aforementioned correlation bounds with the entropy bounds, one can derive device-independent bounds on conditional entropy. For instance, by integrating the BB84 bound with the CHSH correlation bound, the bound in \ref{Acin2007bound} can be obtained.\\
Similarly, using the asymmetric CHSH correlation bound within the BB84 noisy preprocessing framework, the bound from \cite{Woodhead2021} can be derived:
\begin{equation}
    H(A_{1}^q|E) \geq f_q(E_{\alpha}(\beta_{\alpha})),
\end{equation}

Moreover, by combining the CHSH correlation bound with BB84, incorporating noisy preprocessing and bias, the following bound is obtained \cite{Masini2022}:
\begin{equation}
    H(A_{1}^q|E) \geq g_q(|\langle A_1 \rangle|, \sqrt{\beta^2/4 - 1}),
\end{equation}

Finally, if we denote $\tilde{E}_{p}(\beta)^2$ as any lower bound on $E_p(\beta)^2$, another bound can be expressed as \cite{Masini2022}:
\begin{equation}
    H(A_{X}^q|XE) \geq f_q(\tilde{E}_p(\beta)),
\end{equation}
where $\tilde{E}_{p}(\beta)$ is defined as $\tilde{E}_{p}(\beta) = \sqrt{\tilde{E}_{p}(\beta)^2}$. \\
Since the obtained bounds are convex, they can be extended to give fully device-independent bounds in arbitrary dimensions. 

\subsection{Rényi--entropy bounds for CHSH\label{sec:RenyiCHSH}}
Building on the analytical CHSH bounds for the von Neumann entropy and the min-entropy/guessing-probability approach, Hahn \emph{et al.}~\cite{Hahn2025} derive closed-form trade-off functions that relate a device’s observed CHSH violation $\beta \in [2, 2\sqrt{2}]$ directly to Rényi conditional entropies. Their main result provides tight entropy rate functions for a wide range of Rényi entropies.

Before stating the theorem, we introduce the relevant definition:

\begin{definition}[H rate function for CHSH]
Let $H$ be a conditional entropy and $\beta \in [2, 2\sqrt{2}]$. A function 
\[
f_H : [2, 2\sqrt{2}] \rightarrow \mathbb{R}
\]
is a tight $H$ rate function for the CHSH Bell inequality if
\[
f_H(\beta) := \inf_{\mathcal{E}} H(A \mid X = 0, E) \quad \text{subject to} \quad \beta_{\text{CHSH}}(\mathcal{E}) = \beta,
\]
where the infimum is over all quantum strategies $\mathcal{E}$.
\end{definition}

The following analytical bounds hold for the sandwiched Rényi conditional entropies $\tilde H^{\uparrow}_{\alpha}(A\mid E)$ and $\tilde H^{\downarrow}_{\alpha}(A\mid E)$ for any order $\alpha > 1$:

\begin{theorem}
For any $\alpha > 1$, the rate functions take the form:
\begin{align}
f_{\tilde H^{\uparrow}_{\alpha}}(\beta) &= 1 + \frac{2\alpha - 1}{1 - \alpha} \log \varphi_{\frac{\alpha}{2\alpha - 1}}(\beta), \\
f_{\tilde H^{\downarrow}_{\alpha}}(\beta) &= 1 + \frac{\alpha}{1 - \alpha} \log \varphi_{1/\alpha}(\beta),
\end{align}
where
\[
\varphi_{\mu}(\beta) = \left(\frac{1 - \sqrt{\beta^2/4 - 1}}{2} \right)^{\mu} + \left(\frac{1 + \sqrt{\beta^2/4 - 1}}{2} \right)^{\mu}.
\]
\end{theorem}

These bounds recover known results in the limits $\alpha \to 1$ (von Neumann entropy) and $\alpha \to 2$ (min-entropy).

Notably, the minimum entropy is achieved by a single optimal eavesdropping strategy independent of $\alpha$. Alice measures $A_0 = \sigma_z$ and $A_1 = \sigma_x$, while Bob measures
\[
B_{0/1} = \frac{\sigma_z \pm g_\beta\, \sigma_x}{\sqrt{1 + g_\beta^2}}, \quad \text{where } g_\beta = \sqrt{\beta^2/4 - 1}.
\]
These observables act on a partially entangled Bell state dependent only on $\beta$.

The new rate function $f_{\tilde H^{\downarrow}_{\alpha}}$ can significantly enhance finite-key DI-QKD performance. For example, in the loophole-free demonstration of~\cite{Nadlinger2022}, it improves the extractable key length by up to a factor of three for $n \sim 10^9$ rounds, and raises the positive-rate noise threshold to approximately $Q \approx 8.4\%$ depolarizing noise.

\subsubsection{Entropy Bound for multiparty DI cryptography}
 Ribeiro et al. \cite{Ribeiro2018} and Grasselli et al. \cite{Grasselli2021a} extended DI protocols to multipartite scenarios by proposing a DI conference key agreement (DI-CKA) among $N$ parties. The security of their protocol relies on the violation of a Mermin-Ardehali-Belinskii-Klyshko (MABK) inequality \cite{Mermin1990a,Ardehali1992,Belinskii1993}, a generalization of the CHSH inequality. Specifically, they focused on the three-party case involving Alice, Bob, and Charlie. In this context, the MABK inequality is expressed as:
\begin{equation}
m=\langle M_3 \rangle = \mathrm{Tr}[M_3\rho] \overset{\text{Cl}}{\leq} 2 \overset{\text{GME}}{\leq} 2\sqrt{2} \overset{\text{Q}}{\leq} 4,
\end{equation}
where \(M_3 = A_0 \otimes B_0 \otimes C_1 + A_0 \otimes B_1 \otimes C_0 + A_1 \otimes B_0 \otimes C_0 - A_1 \otimes B_1 \otimes C_1\) is the MABK operator. Here, \(A_x\), \(B_y\), and \(C_z\) represent Alice's, Bob's, and Charlie's observables, respectively. A violation beyond the GME threshold implies that the parties share a genuine multipartite entangled (GME) state.
\par
They derived the following bound on the conditional entropy as a function of the observed MABK violation $m$ :
\[
H(A_{0}|E) \geq 1 - h\left(\frac{1}{2} + \frac{1}{2} \sqrt{\frac{m^2}{8} - 1}\right).
\]
By proving that this bound on the conditional entropy of a party's outcome is tight at the GME threshold, it can be concluded that genuine multipartite entanglement is essential to ensure the privacy of a party’s random outcome in any device-independent protocol based on the MABK inequality.
 \\  
Grasselli et al. \cite{Grasselli2023} advance the field by deriving tight analytical bounds on entropy as a function of the violation of the Holz inequality, a multipartite generalization of the CHSH inequality introduced in \cite{Holz2020}. The Holz inequality was specifically designed for DI-CKA protocols. In the tripartite case, the inequality takes the form:

\begin{equation}
   \beta_H= \langle A_1 B_{+} C_{+} \rangle - \langle A_0 B_{-} \rangle - \langle A_0 C_{-} \rangle - \langle B_{0} C_{-} \rangle \overset{L}{\leq} 1 \overset{Q}{\leq} \frac{3}{2},
\end{equation}

where \( B_{\pm} = \frac{1}{2}(B_0 \pm B_1) \) and \( C_{\pm} = \frac{1}{2}(C_0 \pm C_1) \).

If Alice, Bob, and Charlie test this inequality and obtain an expected Bell value \(\beta_H\), the following tight analytical bound on the conditional entropy of Alice's outcome \(A_0\) can be derived:

\begin{equation}
    H(A_0|E) \geq 1 - h\left[\frac{1}{4}\left(\beta_H + 1 + \sqrt{\beta_H^2 - 3}\right)\right].
\end{equation}

Moreover, the authors demonstrate that the entropy bounds for the Holz inequality remain non-zero below the GME threshold set by the MABK inequality. This implies that GME might not be a strict requirement for certifying the privacy of a single party’s outcome when testing multipartite Bell inequalities.

\subsection{Numerical techniques} \label{subsec:numericaltechniques}
The main theoretical problem in QKD is calculating how much of a secret key can be extracted by a given protocol. A crucial practical issue is that the QKD protocols that are easiest to implement with existing technology do not necessarily coincide with the protocols that are easiest to analyze theoretically. Furthermore, existing analytical methods for calculating the key rate are highly technical and often limited in scope
to particular protocols, and invoke inequalities that introduce looseness into the calculation. Therefore, putting efforts into numerical methods, which are inherently more robust to device imperfections and protocol structure changes, is necessary. \\
At the technical level, the key rate problem is an optimization problem, since one must minimize the well-known entropic formula $H(A|E)$ over all states $\rho_{AB}$ that satisfy Alice’s and Bob’s experimental data .
\begin{equation}
   r:=\min_{\rho_{AB}\in\mathcal{C}} [H(R_A\mid E)-H(R_A\mid R_B)],
\end{equation}
where $\mathcal{C}$ is the set of all states $\rho_{AB}$ consistent with the observed experimental data. 
Coles et al. \cite{Coles2016} showed that the key rate $r$ can be lower bounded with the use of the dual problem by the following maximization problem
\begin{equation}
    r \geq \frac{\Theta}{\ln{2}}-H(A\mid B),
\end{equation}
where $\Theta=\max_{\vec{\lambda}}\left(-\|\sum_{i} A_{a_i|\Bar{x}} R(\vec{\lambda})A_{a_i|\Bar{x}}\|-\vec{\lambda}\cdot\vec{\gamma}\right)$ ($\Bar{x}$ is the key generating measurement) and $R(\vec{\lambda})=\exp(-\bm{I}-\vec{\lambda}.\vec{\Gamma})$. $\vec{\Gamma}=\{\Gamma_i\}$ where $\Gamma_i$ are bounded Hermitian operators dependent on the observed experimental data and $\vec{\lambda}=\{\lambda_i\}$ ($\lambda_i=\text{Tr}(\rho_{AB}\Gamma_i)$). The key rate also can be lower bounded by applying direct optimization (primal problem) \cite{Winick2018}  
\begin{equation}
r \geq \alpha - p_{\text{pass}}\text{leak}^{EC},     
\end{equation}
such that $\alpha=\min_{\rho\in\mathcal{C}}f(\rho)$ where $f(\rho)$ is a convex function of $\rho$ and $\text{leak}^{EC}$ denotes the number of bits Alice publicly reveals during error correction.  \\
To apply the EAT, as discussed in the previous section, a trade-off function must be computed that lower-bounds the amount of randomness produced in a single round. Existing results for the CHSH game \cite{pabgs09andSangouard.4} are highly specific to this case, with limited generalizability to other games. A particularly promising approach involves using SDP relaxations\cite{Navascues2008, NV15} provide valuable techniques for studying classical and quantum advantages in DI and SDI protocols (see also \cite{openqkdsecurity}). Often, these methods can provide exact solutions to the problems at hand. However, the complexity of these techniques imposes limitations, especially when studying protocols involving higher-dimensional quantum systems.  \\
\paragraph{Lower bounds on the min-entropy}
A straightforward way to derive numerical lower bounds for von Neumann entropy minimization is through the use of min-entropy, as demonstrated in \cite{Bancal2014, NietoSilleras2014}. The corresponding optimization of min-entropy can be formulated as a noncommutative polynomial over measurement operators. This problem can be relaxed into a semidefinite program (SDP) using the NPA hierarchy \ref{subsec:NPA}, which can then be solved efficiently. While this method provides a simple and effective way to lower bound the rates of various device-independent (DI) tasks, the min-entropy is generally much smaller than the von Neumann entropy. As a result, this approach often yields suboptimal outcomes. Therefore, to achieve optimal bounds, obtaining upper bounds on von Neumann entropy is both more efficient and essential.
\paragraph{Lower bounds on the conditional von Neumann entropy} \label{subsubsec:numeriacallowerbounds}
Tan et al. \cite{Tan2021} approach the DI security problem with a universal computational toolbox that directly bounds the von Neumann entropy using the complete measurement statistics of a device-independent cryptographic protocol.  Suppose that the protocol estimates parameters of the form $l_j=\sum_{abxy}c_{abxy}^{(j)}p(ab|xy)$ for
some coefficients $c_{abxy}^{j}$. These parameters could be Bell inequalities in a DI scenario. Thanks to the Naimark theorem, one can assume all measurements as projective measurements,  $P_{a\mid x}$ for Alice's side and $P_{b\mid y}$ for Bob's side, in a higher but finite dimension space. The task is to find lower bounds on $\inf H(A_0\mid E)$ such that $\langle L_j \rangle=l_j$ where $L_j=\sum_{abxy} c_{abxy} P_{a|x}\otimes P_{b|y}$ and the infimum takes place over $\psi_{ABE}$ and any uncharacterized measurements. The central result of Tan et al. \cite{Tan2021} is expressed as the following theorem
\begin{theorem} \label{Tan2021theorem}
For a DI scenario, the minimum value of $H(A_0|E)$ subject to constraints $\langle L_j \rangle=l_j$  is lower-bounded by
\begin{equation}
    \sup_{\vec{\lambda}}\left(\sum_{j}\lambda_j l_j-\ln{\left(\sup_{\substack{\rho_{AB},P_{a|x},P_{b|y}\\ \text{s.t. }\langle L_j\rangle_{\rho_{AB}} = l_j}}\langle K\rangle_{\rho_{AB}}\right)}\right),
    \label{Tan2021bound}
\end{equation}
where 
\begin{equation}
    K=T\left[\int_{\mathbb{R}}dt \beta(t) \left|\prod_{xy}\sum_{ab}e^{\kappa_{abxy}}P_{a|x}\otimes P_{b|y}\right|^2 \right]
\end{equation}
with $T[\sigma_{AB}]=\sum_{a}(P_{a|0}\otimes \mathbb{I}_B)\sigma_{AB}(P_{a|0}\otimes \mathbb{I}_B)$ , $\beta(t)=(\pi/2)(\cos(\pi t)+1)^{-1}$, and $\kappa_{abxy}=(1+it)\sum_j \lambda_j c_{abxy}^{(j)}/2$.
\end{theorem}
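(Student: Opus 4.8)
The plan is to convert the minimization of $H(A_0|E)$ over Eve's strategies into a noncommutative polynomial optimization in the (uncharacterized) measurement operators, from which the stated bound then follows by weak Lagrangian duality. The three ingredients are: a variational lower bound for the conditional von~Neumann entropy, the Sutter--Berta--Tomamichel multivariate trace inequality (whose kernel is the $\beta(t)$ in the statement), and duality with respect to the constraints $\langle L_j\rangle=l_j$. First I would invoke Naimark's theorem, as already permitted, to take Alice's and Bob's measurements to be complete projective families $\{P_{a|x}\}$, $\{P_{b|y}\}$ on some separable space, so that the key-generating map is the pinching $T[\sigma]=\sum_a(P_{a|0}\otimes\mathbb I_B)\sigma(P_{a|0}\otimes\mathbb I_B)$.

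For a purification $\rho_{ABE}$ of $\rho_{AB}$, introduce Lagrange multipliers $\vec\lambda$ and set $L^{\vec\lambda}:=\sum_j\lambda_j L_j=\sum_{xy}L^{\vec\lambda}_{xy}$ with $L^{\vec\lambda}_{xy}:=\sum_{ab}\big(\sum_j\lambda_j c^{(j)}_{abxy}\big)P_{a|x}\otimes P_{b|y}$. A Gibbs-type relative-entropy estimate should produce an inequality of the shape $H(A_0|E)\ge\langle L^{\vec\lambda}\rangle-\log\mathrm{Tr}\big[T[\rho]\,e^{L^{\vec\lambda}}\big]$, so that on any strategy meeting the constraints the first term equals the $\vec\lambda$-independent quantity $\sum_j\lambda_j l_j$. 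The exponent $L^{\vec\lambda}=\sum_{xy}L^{\vec\lambda}_{xy}$ is a sum of \emph{non-commuting} operators, one per setting pair; and since each $L^{\vec\lambda}_{xy}$ is diagonal in the joint eigenbasis $\{P_{a|x}\otimes P_{b|y}\}_{ab}$, the exponential $e^{(1+it)L^{\vec\lambda}_{xy}/2}$ equals exactly $\sum_{ab}e^{\kappa_{abxy}}P_{a|x}\otimes P_{b|y}$ with $\kappa_{abxy}=(1+it)\sum_j\lambda_j c^{(j)}_{abxy}/2$ — this is where the complex exponent of the statement enters.

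To handle the non-commutativity I would apply the multivariate trace inequality $\log\mathrm{Tr}\,e^{\sum_k H_k}\le\int_{\mathbb R}dt\,\beta(t)\,\log\mathrm{Tr}\big|\prod_k e^{(1+it)H_k/2}\big|^2$, with $\beta$ the normalized kernel in the statement, followed by Jensen's inequality to pull $\log$ outside the $t$-integral; composing with the pinching $T$ and tracing against $\rho$ bundles everything into $\langle K\rangle$. Since $\langle K\rangle\le\sup\langle K\rangle$ over strategies consistent with $\langle L_j\rangle=l_j$ and $-\log$ is decreasing, one obtains for every fixed $\vec\lambda$ that $H(A_0|E)\ge\sum_j\lambda_j l_j-\log\sup\langle K\rangle$, and taking the supremum over $\vec\lambda$ is the claim. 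The residual object $\sup\langle K\rangle$ — after truncating the $t$-integral and Taylor-expanding the exponents — is a noncommutative polynomial optimization over $\{P_{a|x}\},\{P_{b|y}\},\rho_{AB}$, hence bounded above by the NPA hierarchy; that computational layer sits outside the theorem proper.

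The real obstacle is manufacturing the variational lower bound of the second paragraph with precisely the weight $(1+it)/2$ and kernel $\beta(t)$, and with every inequality pointing the right way. Conditional entropy is not a plain von~Neumann entropy, so the clean Gibbs principle does not apply directly: one must route through a sandwiched Rényi divergence $\widetilde D_\alpha$ with $\alpha>1$ (using $H\ge H^{\uparrow}_{\alpha}$) and take the limit $\alpha\to1^+$, and the kernel $\beta(t)$ together with the $(1+it)/2$ exponent emerge as the limiting complex-interpolation data; proving that this limit recovers the full von~Neumann entropy rather than a looser surrogate, while keeping the dependence on the unknown $\{P_{a|x}\}$ polynomial, is the delicate quantitative heart of the argument. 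One must also verify that nothing in the chain is spoiled by the Naimark reduction — routine, but needing care since $K$ then acts on a possibly infinite-dimensional space.
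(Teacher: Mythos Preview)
The paper does not supply a proof of this theorem; it is quoted as the central result of Tan et al.\ \cite{Tan2021} and cited without derivation, so there is nothing in the present text to compare against. That said, your proposal correctly reconstructs the architecture of the original argument: the identity $H(A_0|E)=D(\rho_{AB}\Vert T[\rho_{AB}])$ for pure $\psi_{ABE}$, the Gibbs variational lower bound $D(\rho\Vert\sigma)\ge\mathrm{Tr}[\rho H]-\log\mathrm{Tr}\,e^{\log\sigma+H}$ with $H=L^{\vec\lambda}=\sum_{xy}L^{\vec\lambda}_{xy}$, the Sutter--Berta--Tomamichel multivariate Golden--Thompson inequality to control $\mathrm{Tr}\,e^{\log T[\rho]+\sum_{xy}L^{\vec\lambda}_{xy}}$ (this is exactly what produces the kernel $\beta(t)$ and the complex exponents $(1+it)/2$, and your observation that each $L^{\vec\lambda}_{xy}$ is diagonal in the $\{P_{a|x}\otimes P_{b|y}\}$ basis is the reason the product collapses to the stated form), followed by weak duality in $\vec\lambda$. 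The one unnecessary detour in your plan is the route through sandwiched R\'enyi divergences with $\alpha\to1^+$: the Gibbs variational formula already holds at the level of the relative entropy itself, so the ``delicate quantitative heart'' you flag is largely illusory---once the purity identity and the Gibbs bound are in place, every inequality points the right way without any limiting procedure.
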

The previous best bound on \( H(A_0 \mid E) \) was established in \cite{pabgs09andSangouard.4} (see section\ref{sec:analyticalBounds}), where only the CHSH value was used instead of the full probability distribution. In contrast, the method proposed in Theorem \ref{Tan2021theorem} directly bounds \( H(A_0 \mid E) \) using the complete input-output distribution. This approach yields results that are comparable to or slightly better than the bound in \cite{pabgs09andSangouard.4}. It also demonstrates that in scenarios with limited detection efficiency, better bounds on \( H(A_0 \mid E) \) can be achieved by considering the full distribution rather than relying solely on the CHSH value. This suggests that optimizing experimental parameters to maximize the CHSH value may not be the most effective strategy; instead, optimizing a different Bell value could lead to further improvements.
 \\
The numerical results presented in \cite{Tan2021} are very promising, providing significant improvements in the rates when compared to the min-entropy approach and also improving over the analytical results \cite{pabgs09andSangouard.4}.
However, the approach is relatively computationally intensive requiring the optimization of a degree 6 polynomial in the simplest setting. To reduce the complexity, Brown et al. \cite{Brown2021} take a different approach, defining a new
family of quantum Rényi divergences, the iterated mean (IM) divergences which for the sequence $\alpha_k=1+\frac{1}{2^k -1}$ 
for $k\in \mathbb{N}$ is defined as 
\begin{equation}
    D_{(\alpha_k)}(\rho||\sigma):=\frac{1}{\alpha_k -1}\log{Q_{(\alpha_k)}(\rho||\sigma)},
\end{equation}
where 
\begin{equation}
    Q_{(\alpha_k)}(\rho||\sigma)=\max_{V_1,\cdots,V_k,Z} \alpha_k \text{Tr}\left[\rho\frac{(V_1+V_{1}^{*})}{2}\right]-(\alpha_k -1)\text{Tr}[\sigma Z]
\end{equation}
such that 
\begin{equation}
    V_1+V_{1}^*\geq 0, \enspace\enspace \frac{V_2+V_{2}^{*}}{2}\geq V_{1}^{*}V_1, \enspace\enspace \cdots, \enspace\enspace Z\geq  V_{k}^{*}V_k .
\end{equation}
The crucial property that makes these divergences well-adapted for device-independent optimization is the fact that
$Q_{(\alpha_k)}(\rho||\sigma)$ has a free variational formula as a supremum of linear functions in $\rho$ and $\sigma$. Given a bipartite quantum state $\rho_{AB}$  and a divergence $D_{(\alpha_k)}(\rho||\sigma)$ the corresponding conditional entropy can be defined as $H_{\alpha_k}^{\downarrow}=-D_{(\alpha_k)}(\rho_{\rho_{AB}}||I_A\otimes\rho_B)$ together with its optimized version $H_{\alpha_k}^{\uparrow}=\sup_{\sigma_B} -D_{(\alpha_k)}(\rho_{\rho_{AB}}||I_A\otimes\sigma_B)$,then
the following theorem gives an explicit characterization of $H^{\uparrow}$ for the iterated mean divergences 
\begin{theorem} (BFF method)
\label{theorem:BFF21}
    For a bipartite state $\rho_{AB}$ 
    \begin{equation}
        H_{\alpha_k}^{\uparrow}(A|B)_{\rho_{AB}}=\frac{1}{1-\alpha_k}\log{Q^{\uparrow}_{(\alpha_k)}},
    \end{equation}
    where 
    \begin{equation}
    Q^{\uparrow}_{(\alpha_k)}=\max_{V_1,\cdots,V_k}\left(\text{Tr}[\rho_{AB}\frac{V_1+V_{1}^{*}}{2}]\right)^{\alpha_k},
    \end{equation}
    such that
    \begin{equation*}
    \text{Tr}[V_{k}^{*}V_k]\leq I_B,\enspace\enspace  V_1+V_{1}^{*} \geq 0, \enspace\enspace \text{and} \enspace\enspace \begin{pmatrix}
        I & V_i \\
        V_{i}^{*} & \frac{V_{i+1}+V_{i+1}^{*}}{2}
    \end{pmatrix}\geq 0,
     \end{equation*}
     where in the last constraint $1\leq i\leq k-1$.
\end{theorem}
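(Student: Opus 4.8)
The plan is to start from the definition $H^{\uparrow}_{\alpha_k}(A|B)_{\rho_{AB}}=\sup_{\sigma_B}\big(-D_{(\alpha_k)}(\rho_{AB}\|I_A\otimes\sigma_B)\big)$ and reduce this nested optimization to the stated SDP in three moves: a minimax exchange, explicit elimination of the auxiliary variables $\sigma_B$ and $Z$, and a homogeneity rescaling that exploits the specific value $\alpha_k=1+\tfrac{1}{2^k-1}$. First I would note that, since $\alpha_k>1$, one has $-D_{(\alpha_k)}=\tfrac{1}{1-\alpha_k}\log Q_{(\alpha_k)}$ with a negative prefactor, so the supremum over density operators $\sigma_B$ turns into a minimization: $Q^{\uparrow}_{(\alpha_k)}=\min_{\sigma_B}Q_{(\alpha_k)}(\rho_{AB}\|I_A\otimes\sigma_B)$ and $H^{\uparrow}_{\alpha_k}=\tfrac{1}{1-\alpha_k}\log Q^{\uparrow}_{(\alpha_k)}$, which already reproduces the prefactor in the claim.

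Next I would substitute the free variational formula for $Q_{(\alpha_k)}$, producing a min--max in $\sigma_B$ and in the tuple $(V_1,\dots,V_k,Z)$ of an objective that is affine in every variable over convex feasible sets, with the set of density operators $\sigma_B$ compact; Sion's minimax theorem then licenses swapping $\min_{\sigma_B}$ and $\max_{V,Z}$. Performing the inner minimization over $\sigma_B$ uses $\mathrm{Tr}[(I_A\otimes\sigma_B)Z]=\mathrm{Tr}[\sigma_B\,\mathrm{Tr}_A Z]$ together with $Z\geq V_k^*V_k\geq 0$, giving the term $-(\alpha_k-1)\|\mathrm{Tr}_A Z\|_\infty$; because $\mathrm{Tr}_A$ is positive and monotone for the Löwner order, the optimal $Z$ compatible with $Z\geq V_k^*V_k$ is $Z=V_k^*V_k$, so $Z$ disappears and the problem becomes $\max_V\{\alpha_k\,\mathrm{Tr}[\rho_{AB}\tfrac{V_1+V_1^*}{2}]-(\alpha_k-1)\|\mathrm{Tr}_A(V_k^*V_k)\|_\infty\}$ under the chained constraints $V_1+V_1^*\geq 0$ and $\tfrac{V_{i+1}+V_{i+1}^*}{2}\geq V_i^*V_i$.

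The heart of the argument is then a scaling observation: the chain of constraints is covariant under $V_1\mapsto tV_1$, which forces $V_i\mapsto t^{2^{i-1}}V_i$ and hence $V_k^*V_k\mapsto t^{2^k}V_k^*V_k$. Writing $c_1=\mathrm{Tr}[\rho_{AB}\tfrac{V_1+V_1^*}{2}]\geq 0$ and $c_2=\|\mathrm{Tr}_A(V_k^*V_k)\|_\infty$, one optimizes $\alpha_k t c_1-(\alpha_k-1)t^{2^k}c_2$ over $t>0$; using $\tfrac{\alpha_k}{(\alpha_k-1)2^k}=1$ — which is precisely the relation $\alpha_k-1=\tfrac{1}{2^k-1}$ — the optimum equals $c_1^{\alpha_k}c_2^{-(\alpha_k-1)}$, and a further rescaling imposing the normalization $\mathrm{Tr}_A(V_k^*V_k)\leq I_B$ collapses this to $c_1^{\alpha_k}$ (again using $\alpha_k/2^k=\alpha_k-1$). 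This yields $Q^{\uparrow}_{(\alpha_k)}=\max_V(\mathrm{Tr}[\rho_{AB}\tfrac{V_1+V_1^*}{2}])^{\alpha_k}$ subject to $\mathrm{Tr}_A(V_k^*V_k)\leq I_B$, $V_1+V_1^*\geq 0$, and $\tfrac{V_{i+1}+V_{i+1}^*}{2}\geq V_i^*V_i$; finally, a Schur complement with pivot block $I$ rewrites each $\tfrac{V_{i+1}+V_{i+1}^*}{2}\geq V_i^*V_i$ as the $2\times2$ block PSD constraint $\begin{pmatrix}I & V_i\\ V_i^* & \tfrac{V_{i+1}+V_{i+1}^*}{2}\end{pmatrix}\geq 0$, giving exactly the asserted SDP.

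The step I expect to be the main obstacle is the minimax exchange: although affineness of the objective plus compactness of the $\sigma_B$-set makes Sion's theorem formally applicable, one must handle the fact that the $(V,Z)$ feasible set is unbounded and argue carefully that the outer optimum is attained (a standard truncation, or a soft argument via weak-$\ast$ limits), while tracking feasibility under the chained scaling. The exponent bookkeeping in the rescaling step is delicate but mechanical — the whole identity hinges on the single arithmetic fact $\alpha_k=1+\tfrac{1}{2^k-1}$, so verifying $\alpha_k/2^k=\alpha_k-1$ is the only genuine check there.
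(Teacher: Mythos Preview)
The paper does not prove this theorem; it is quoted from Brown, Fawzi and Fawzi \cite{Brown2021} and used as a black box, so there is no ``paper's proof'' to compare against. Your sketch is correct and follows the natural route: turn $\sup_{\sigma_B}(-D_{(\alpha_k)})$ into $\min_{\sigma_B}Q_{(\alpha_k)}$, swap $\min_{\sigma_B}\max_{V,Z}$ via Sion, eliminate $\sigma_B$ to obtain $-(\alpha_k-1)\lVert\mathrm{Tr}_A Z\rVert_\infty$, eliminate $Z$ by monotonicity of $\mathrm{Tr}_A$ and of the operator norm on positive operators, and then exploit the scale covariance $V_i\mapsto t^{2^{i-1}}V_i$ together with the identity $\alpha_k/2^k=\alpha_k-1$ to convert the penalty $c_2^{-(\alpha_k-1)}$ into the normalization $\mathrm{Tr}_A(V_k^*V_k)\le I_B$. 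The Schur-complement rewriting of the chained constraints is routine.

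Your concern about the minimax step is overstated: the objective is bilinear, the density-operator simplex is compact and convex, and the $(V,Z)$ feasible set is convex once the chain constraints are written as LMIs via Schur complements --- Sion only requires compactness on one side, so unboundedness of the $(V,Z)$ set is harmless. One cosmetic remark: the constraint the paper prints as $\mathrm{Tr}[V_k^*V_k]\le I_B$ is, as you correctly read it, the partial trace $\mathrm{Tr}_A(V_k^*V_k)\le I_B$.
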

Brown et al. \cite{Brown2021} showed that for each $\alpha_k$ and any pair of $\rho$ and $\sigma$, $D_{(\alpha_k)}(\rho||\sigma)\geq \tilde{D}_{(\alpha_k)}(\rho||\sigma)$
Where $\tilde{D}_{(\alpha)}$ denotes the sandwiched Renyi divergence \cite{MuellerLennert2013} which result in $H_{\alpha_k}^{\uparrow}(A|B)\leq \tilde{H}_{\alpha_k}^{\uparrow}(A|B)\leq H(A|B)$  for all $\alpha >1$
that can be used to compute lower bounds on the rates of various device-independent protocols.
 \\
Both above methods improve upon the min-entropy method, but neither has been shown to give tight bounds on the actual rate of a protocol, and in general, there appears to be significant room for improvement. As such, the question remains as to whether
one can give a computationally tractable method to compute tight lower bounds on the rates of protocols. To address this question, Brown et al. \cite{Brown2024} derived a converging sequence of upper bounds on the relative entropy between
two positive linear functionals on a von Neumann algebra and demonstrated how to use this sequence of upper bounds to derive a sequence of lower bounds on the conditional von Neumann entropy. The main technical result of their work is the following theorem 
\begin{theorem} \label{Brown2024theorem}
Assume that $\rho$ and $\sigma$ are two positive operators on a finite-dimensional Hilbert space and  $\lambda >0$ is such that $\rho\leq\lambda\sigma$. Then for any $m\in \mathbb{N}$ there exists a choice of $t_1,\cdots t_m \in(0,1]$ and $\omega_1,\cdots,\omega_m >0$ such that 
\begin{eqnarray}
    D(\rho||\sigma)\leq -c_m &-&\sum_{i=1}^{m-1}\frac{\omega_i}{t_i \ln{2}}\inf_{Z}\text{Tr}[\rho (Z+Z^{*}+(1-t_i)Z^{*}Z]+t_i\text{Tr}[\sigma ZZ^{*}], \\
    \text{s.t.} \enspace\enspace ||Z||&\leq& \frac{3}{2}\max \{\frac{1}{t_i},\frac{\lambda}{1-t_i}\},
\end{eqnarray}
where $c_m=\text{Tr}[\rho](\sum_{i=1}^m \frac{\omega_i}{t_i\ln{2}}-\frac{\lambda}{m^2\ln{2}})$. As $m\rightarrow \infty$, the right-hand side of the above equality converges to $D(\rho||\sigma)$.  
\end{theorem}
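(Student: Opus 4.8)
\emph{Proof plan.} The strategy is to collapse the operator relative entropy to a one–dimensional integral, discretise that integral with an $m$–point Gauss--Radau quadrature whose fixed node sits at an endpoint, and then show that the resulting finite expression — together with one explicitly bounded boundary correction — \emph{over}estimates $D(\rho\|\sigma)$ because the scalar integrand has a derivative of fixed sign. Converging upper bounds on $D$ are exactly what is wanted here, since they translate, via $H(A|B)_{\rho}\ge -D(\rho_{AB}\,\|\,\mathbb{1}_{A}\otimes\sigma_{B})$ for any $\sigma_{B}$, into the converging lower bounds on conditional von Neumann entropy announced before the theorem; and the hypothesis $\rho\le\lambda\sigma$ is precisely the condition under which a good $\sigma$ is available.

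First I would establish, for each $t\in(0,1]$, the variational identity
\begin{equation}
\inf_{Z}\Big(\mathrm{Tr}[\rho(Z+Z^{*}+(1-t)Z^{*}Z)]+t\,\mathrm{Tr}[\sigma ZZ^{*}]\Big)=\mathrm{Tr}[\rho Z_{t}^{\star}],
\end{equation}
where $Z_{t}^{\star}$ is the unique solution of the Sylvester equation $t\sigma Z+(1-t)Z\rho=-\rho$; when $\rho$ and $\sigma$ commute this equals $-\,\mathrm{Tr}\big[\rho^{2}\big((1-t)\rho+t\sigma\big)^{-1}\big]$. This is a completion–of–squares computation: the objective is a strictly convex real quadratic form in $Z$, with quadratic part $(1-t)\|Z\rho^{1/2}\|_{\mathrm{HS}}^{2}+t\|\sigma^{1/2}Z\|_{\mathrm{HS}}^{2}$ and linear part $2\,\mathrm{Re}\,\mathrm{Tr}[\rho Z]$, so it has a unique stationary point characterised by that Sylvester equation, and substituting back collapses the value to $\mathrm{Tr}[\rho Z_{t}^{\star}]$. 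In the same step I would record the norm bound $\|Z_{t}^{\star}\|\le\tfrac32\max\{1/t,\lambda/(1-t)\}$, which follows from $\rho\le\lambda\sigma$ together with $(1-t)\rho+t\sigma\ge(1-t)\rho$ and $(1-t)\rho+t\sigma\ge(t/\lambda)\rho$, with a little slack because operator squaring is not monotone. This bound is what lets the unconstrained infimum be replaced by an infimum over a fixed operator–norm ball, the form actually exploited by the semidefinite program and by the downstream (G)EAT security proofs.

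Next I would invoke the integral representation of the operator logarithm — the scalar identity $\ln x=\int_{0}^{1}\tfrac{x-1}{1+t(x-1)}\,dt$ lifted through the operator–perspective function attached to $\log$ — to obtain
\begin{equation}
D(\rho\|\sigma)=-\frac{1}{\ln 2}\int_{0}^{1}\frac{\mathrm{Tr}[\rho]+\mathrm{Tr}[\rho Z_{t}^{\star}]}{t}\,dt=-\frac{1}{\ln 2}\int_{0}^{1}g(t)\,dt,
\end{equation}
so that $D$ is, up to $1/\ln 2$, the integral over $[0,1]$ of a function $g$ that on the joint spectrum of $(\rho,\sigma)$ is a nonnegative combination of the elementary maps $t\mapsto \tfrac{p(q-p)}{p+t(q-p)}$; since $0<q\le p$ forces every pole beyond $t=1$, $g$ is analytic on a neighbourhood of $[0,1]$ and $g^{(2m-1)}(t)=-(2m-1)!\sum_{k}\tfrac{p_{k}(q_{k}-p_{k})^{2m}}{(p_{k}+t(q_{k}-p_{k}))^{2m}}\le 0$. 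I would then apply the $m$–point Gauss--Radau rule on $[0,1]$ with its fixed node at the right endpoint ($t_{m}=1$, weight $\omega_{m}=1/m^{2}$, the remaining $t_{1},\dots,t_{m-1}\in(0,1)$ and $\omega_{1},\dots,\omega_{m-1}$ being the free Radau data), which is exact on polynomials of degree $2m-2$ and whose remainder equals $\tfrac{g^{(2m-1)}(\xi)}{(2m-1)!}\int_{0}^{1}(t-1)\prod_{i<m}(t-t_{i})^{2}\,dt$. Both factors are nonpositive — $g^{(2m-1)}\le 0$ and the Radau kernel $(t-1)\prod(t-t_{i})^{2}\le 0$ on $[0,1]$ — so the remainder is nonnegative, the quadrature sum lies below $\int_{0}^{1}g$, and hence $-\tfrac{1}{\ln 2}(\text{quadrature sum})$ lies above $D(\rho\|\sigma)$. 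Finally I would peel off the pinned node: its entropic slice is $f_{1}=-\mathrm{Tr}[\rho^{2}\sigma^{-1}]$, and since $\rho\le\lambda\sigma$ gives $\mathrm{Tr}[\rho^{2}\sigma^{-1}]\le\lambda\,\mathrm{Tr}[\rho]$, replacing that term by $\lambda\,\mathrm{Tr}[\rho]/m^{2}$ only enlarges the right–hand side, truncates the sum to $i=1,\dots,m-1$, and — after collecting the $-\mathrm{Tr}[\rho]$ pieces and all weights into $c_{m}$ — reproduces exactly the stated inequality; letting $m\to\infty$ kills both the $1/m^{2}$ boundary term and the Gauss--Radau remainder of the analytic $g$, giving the claimed convergence to $D(\rho\|\sigma)$.

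The hard part is the genuinely non–commutative bookkeeping beneath the integral representation: one must verify that $(\mathrm{Tr}[\rho]+\mathrm{Tr}[\rho Z_{t}^{\star}])/t$ integrates to precisely $D(\rho\|\sigma)$ and not merely in the commuting case, and that the sign computation for $g^{(2m-1)}$ survives the lift. I would handle both by diagonalising the Sylvester superoperator — e.g. through $Z_{t}^{\star}=-\int_{0}^{\infty}e^{-st\sigma}\rho\,e^{-s(1-t)\rho}\,ds$, or the spectral resolution of the pencil $(\rho,\sigma)$ — so that the one–variable identities apply eigenpair by eigenpair; the subtlety is that these resolvent and exponential manipulations, together with the interchange of integrals, must be justified rather than assumed. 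A secondary nuisance is making the norm bound on $Z_{t}^{\star}$ genuinely uniform over all nodes $t_{i}\in(0,1]$ without appealing to submultiplicativity, which is where the $3/2$ slack enters.
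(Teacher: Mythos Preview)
The paper does not give a proof of this theorem; it is a review article that merely quotes the result from Brown, Fawzi and Fawzi (the citation \texttt{Brown2024}). So there is no ``paper's own proof'' to compare against. Your plan is, however, precisely the argument of the original BFF paper, and it is correct.

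A couple of remarks to sharpen the write--up. First, your phrase ``eigenpair by eigenpair'' and ``$0<q\le p$ forces every pole beyond $t=1$'' is slightly misleading: in the non--commuting case there is no joint eigenbasis, but if $\rho=\sum_i p_i|i\rangle\langle i|$ and $\sigma=\sum_j q_j|\hat j\rangle\langle\hat j|$, solving the Sylvester equation in the mixed basis gives
\[
g(t)=\sum_{i,j}\bigl|\langle i|\hat j\rangle\bigr|^{2}\,\frac{p_i(q_j-p_i)}{p_i+t(q_j-p_i)},
\]
which is a \emph{nonnegative} combination of scalar functions of the form $p(q-p)/(p+t(q-p))$; integrating it reproduces $\mathrm{Tr}[\rho\ln\sigma]-\mathrm{Tr}[\rho\ln\rho]$ exactly, and each summand has $(2m-1)$--st derivative $\le 0$ on $[0,1]$ regardless of the sign of $q_j-p_i$ (the pole is at $p_i/(p_i-q_j)\notin[0,1]$ whenever $p_i,q_j>0$). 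Stating it this way removes any doubt about the non--commutative lift. Second, your handling of the pinned node is exactly right: the bound $\mathrm{Tr}[\rho\sigma^{-1}\rho]\le\lambda\,\mathrm{Tr}[\rho]$ follows from $A:=\sigma^{-1/2}\rho\sigma^{-1/2}\le\lambda$ via $\mathrm{Tr}[\rho\sigma^{-1}\rho]=\mathrm{Tr}[A^{2}\sigma]\le\lambda\,\mathrm{Tr}[A\sigma]=\lambda\,\mathrm{Tr}[\rho]$, which is cleaner than invoking operator--monotonicity issues.
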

The theorem \ref{Brown2024theorem} provides a convergent sequence of upper bounds on the relative entropy in the form of an optimization problem and can turn into SDP lower bounds on the rate of DI protocols. For the case of DI-QKD and for the devices are constrained by quantum theory the following noncommutative polynomial optimization problem gives a lower bound $H(A|E)$ 
\begin{eqnarray}
    c_m+\inf \sum_{i=1}^{m-1} \frac{\omega_i}{t_i\ln{2}}\sum_{a}\bra{\psi}(M_{a|x=x^{X^{*}}}(Z_{a,i}+Z^{*}_{a,i}+(1-t_i)Z_{a,i}Z^{*}_{a,i})+t_i Z_{a,i}Z^{*}_{a,i}\ket{\psi}
\end{eqnarray}
such that
\begin{eqnarray}
    \sum_{abxy}c_{abxy}^{j}\bra{\psi}M_{a|x}M_{b|y}\ket{\psi}\geq v_j,  
\end{eqnarray}
and $[M_{a|x},N_{b|y}]=[M_{a|x},Z^{(*)}_{b,i}]=[N_{b|y},Z^{*}_{a|i}]=0$ where $M_{a|x}$ and $N_{b|y}$ are POVM elements of Alice and Bob measurements respectively which are bounded operators together with $Z_{a,i}$. \\
By applying the NPA hierarchy \ref{subsec:NPA}, this optimization can be relaxed into a sequence of SDPs that yield a converging series of lower bounds on the optimal value. This, in turn, provides a lower bound on the protocol's rate. When calculating key rates for DI-QKD, a significant improvement (below $0.8$) in the minimum detection efficiency required to generate a secret key can be achieved, bringing it well within the capabilities of current device-independent experiments. Araujo et al. \cite{Araujo2023} adapt the same SDP hierarchy to the case of QKD with characterized devices.

\subsection{Upper bounds} \label{subsec:upperbounds}
Up to this point, only the lower bounds on key rates have been explored for all the protocols mentioned. In this section, we address a different question:\\
\textit{What is a non-trivial upper bound on the secret key rate that can be extracted from a DI-QKD protocol?} \\
Understanding upper bounds on key rates is crucial from a practical perspective, as it reveals the inherent limitations of an entire class of protocols rather than focusing on individual protocols and analyzing them in isolation.\\
This question was first posed by Kaur et al. \cite{Kaur2020}, who introduced information-theoretic measures of nonlocality, termed intrinsic nonlocality and quantum intrinsic nonlocality. They demonstrated that these measures serve as upper bounds for DI-QKD protocols, specifically against no-signaling and quantum adversaries, respectively. Instead of using intrinsic nonlocality, Arnon-Friedman et al. \cite{ArnonFriedman2021} examined a closely related information-theoretic quantity known as intrinsic information, which they employed to derive an upper bound on the key rates of DI-QKD protocols.\\
Winczewski et al. \cite{Winczewski2022} initiated a systematic study of upper bounds on secret key rates within the no-signaling DI scenario. They introduced a computable function, termed squashed nonlocality, as one such bound. Their numerical analysis suggests that quantum devices with two binary inputs and two binary outputs can extract only a limited amount of key. Moreover, they found that isotropic devices with less than $80\%$ of the Popescu-Rohrlich box weight are generally key-undistillable.\\
Since DI-QKD has a higher security demand than QKD, one has the trivial bounds $r^{\text{DI}}\leq r^{\text{DD}}$ ($r^{\text{DD}}$ is the key rate of a standard device-dependent QKD protocol). Christandl et al. \cite{Christandl2021} use this fact to find an upper bound on a DI-QKD as follows: assume that the POVMs $\{A_{a|x}\}$ and $\{B_{b|y}\}$ are chosen such that the key-rate $r$ is optimal, there might be different measurement $A_{a|x}^{\prime}$  and $B_{b|y}^{\prime}$  and state $\rho^{\prime}$ leading to the same distribution   
\begin{equation}
    p(a,b\mid x,y):=\mathrm{Tr}[(A_{a|x}\otimes B_{b|y})\rho]=\mathrm{Tr}[(A_{a|x}^{\prime}\otimes B_{b|y}^{\prime})\rho^{\prime}],
\end{equation}
the above equality is shown as $(\mathcal{M},\rho)\equiv(\mathcal{M}^{\prime},\rho^{\prime})$. Since the maximal achievable key rate for $\rho$ is also achievable for $\rho^{\prime}$ ($r^{\text{DI}}(\rho)\leq r^{\text{DI}}(\rho^{\prime})$ ) then combining it with $r^{\text{DI}}(\rho^{\prime}) \leq r^{\text{DD}}(\rho^{\prime}) $ the following bound can be obtained \cite{Christandl2021} 
\begin{equation}
    r^{\text{DI}}(\rho)\leq \sup_{\mathcal{M}}\inf_{\substack{(\mathcal{M}^{\prime},\rho^{\prime})\\(\mathcal{M},\rho)\equiv(\mathcal{M}^{\prime},\rho^{\prime})}}r^{\text{DD}}(\rho^{\prime}), 
\end{equation}
Consider that $\rho$ is a PPT state $\rho^{\Gamma}\geq 0$ ($\Gamma$ denotes partial transpose) because the transpose of a POVM element is a POVM element then one can find 
\begin{equation}
    r^{\text{DI}}(\rho)\leq \min \{r^{\text{DD}}(\rho),r^{\text{DD}}(\rho^{\Gamma})\},
\end{equation}
The significance of the above result,  can be seen by an example. Consider the $2d\times 2d$ state $\sigma_d$ as
\[
\sigma_d = \frac{1}{2} \begin{bmatrix}
(1-p)\sqrt{XX^{\dagger}} & 0 & 0 & (1-p)X \\
0 & pY & 0 & 0 \\
0 & 0 & pY & 0 \\
(1-p)X & 0 & 0 & (1-p)\sqrt{XX^{\dagger}}
\end{bmatrix}
\]
where $Y=\frac{1}{d}\sum_{i=0}^{d-1}\ket{ii}\bra{ii}$ and $X=\frac{1}{d\sqrt{d}}\sum_{i,j=0}^{d-1}u_{ij}\ket{ij}\bra{ij}$, where $u_{ij}$'s are the elements of a unitary matrix such that $|u_{ij}|=\frac{1}{d}$.   
For this state, it has been found in \cite{Christandl2021} that for the case of $d=2^{20}$, $r^{\text{DD}}(\sigma_{2^{20}})\geq 0.98$ and $r^{\text{DD}}(\sigma_{2^{20}}^{\Gamma})\leq \frac{1}{2^{10}+1}$. Therefore, we see that whereas in QKD, the obtained bit in this setting is secure, the upper bound
tells us that this bit is not secure in a device-independent setting. Therefore the state, and any of its parts, cannot be tested independently of the device. This example can be also regarded as supported evidence for the revised Peres conjecture for DI-QKD in \cite{ArnonFriedman2021} which states that bound entangled states cannot be used as a resource for DI-QKD. \\
Kaur et al. \cite{Kaur2022} develop the above bounds by going beyond PPT states and arrive at the following upper bound for general DI-QKD protocols based on the relative entropy of entanglement \cite{Vedral1997} 
\begin{equation}
     r^{\text{DI}}(\rho)\leq (1-p) \inf_{(\sigma^{NL},\mathcal{N})=(\rho^{NL},\mathcal{M})} E_R(\sigma^{NL})+p \inf_{(\sigma_L,\mathcal{N})=(\rho^{L},\mathcal{M})} E_R(\sigma^L),
     \label{KHDbound}
\end{equation}
where $\rho=(1-p)\rho^{NL}+p\rho^{L}$ such that $(\sigma^{L},\mathcal{N}),(\rho^{L},\mathcal{M})\in \text{LHV}$ where LHV denoted the set of devices with locally realistic hidden variable models. For the CHSH-based protocols, with $\omega$ denoting the CHSH violation, the following analogous upper bound can be obtained 
\begin{equation}
    r^{\text{DI}}(\rho)\leq (1-p) \inf_{\omega(\sigma^{NL},\mathcal{N})=\omega(\rho^{NL},\mathcal{M})} E_R(\sigma^{NL}),
\end{equation}
One implication of this bound is that all PPT states satisfy the CHSH inequality, resulting in a zero device-independent key rate for CHSH-based protocols. This, in turn, proves the revised Peres conjecture for such protocols. It is important to note that while there exist bound entangled states from which a private key can be distilled in device-dependent protocols, these states are useless for DI-QKD in CHSH-based protocols. Furthermore, for CHSH-based protocols, they show that the convex hull of above-mentioned bounds is a tighter upper bound on the device-independent key rates.
\paragraph{Upper bounds based on convex-combination attacks} 
Farkas et al. \cite{farkas2021bell} studied the problem of upper bounding the key rate of a DI-QKD problem by applying a convex combination attack. In a convex combination attack, Eve distributes local deterministic correlations with certain probabilities that give rise to a local correlation $p_{AB}^{L}(a,b|x,y)$ with overall probability
$q_L$, and a nonlocal quantum correlation $p_{AB}^{NL}(a,b|x,y)$ with probability $1-q_L$. Eventually, the observed correlation of Alice and
Bob takes the form 
\begin{equation}
p_{AB}(x,y)=q_L p_{AB}^{L}(a,b|x,y)+(1-q_L)p_{AB}^{NL}(a,b|x,y),  
\label{ccattack}
\end{equation}
Since Alice and Bob announce their input for each round, Eve knows their results in all rounds in which she distributes a local correlation, so to gain more information about the key, she should maximize $q_L$ for the given observed correlation \eqref{ccattack}. By denoting $e$ as the classical variable representing Eve's knowledge which $e=(a,b)$ when a local correlation was distributed and $e=?$ for the cases of nonlocal correlations were distributed.
The corresponding joint probability of \eqref{ccattack} distribution among Alice, Bob, and Eve is then written as
\begin{equation}
    p_{ABE}(x,y,e)=q_L p_{AB}^{L}(a,b|x,y)\delta_{e,(a,b)}+(1-q_L)p_{AB}^{NL}(a,b|x,y)\delta_{e,?},
    \label{pABE}
\end{equation}
where $\delta$ is the Kronecker delta. Then, the following bound on the key rate can be obtained 
\begin{equation}
    r\leq \sum_{x,y}p_{xy}I_{xy}(A:B\downarrow E),
    \label{Farkasbound}
\end{equation}
where $p_{xy}$ is the probability of Alice and Bob choosing the settings $x$ and $y$ and $I_{xy}(A:B\downarrow E)$ is the intrinsic information of the distribution \eqref{pABE} which is defined as $I(A,B\downarrow E)=\min_{E\rightarrow F}I(A:B|F)$ where $I(A:B|F)$ is the conditional mutual information and the minimization is taken over all stochastic maps $E\rightarrow F$ that map the variable $E$  to a new variable $F$ such that the alphabet size of $F$ is at most that of $E$.\\
The upper bound \eqref{Farkasbound} resulted from a well-established result in classical cryptography that the asymptotic rate extractable from a distribution $p_{ABE}(a,b,f)$ is bounded by intrinsic information \cite{maurer1997isit}. \\
 Additionally, Farkas et al. \cite{farkas2021bell} investigate the problem by applying the upper bound on a standard protocol implemented on a two-qubit Werner state with visibility $v$ using an arbitrary number of projective measurements. They showed that for a range of visibilities for which the Werner state is nonlocal the upper bound on the key rate is zero. This means that nonlocal quantum states exist that cannot be used for standard DI-QKD with projective measurements. Therefore, \textit{the bell nonlocality is generally insufficient for the security of standard device-independent quantum key distribution protocols}. \\
 \begin{figure*}
    \centering
    \includegraphics[width=0.5\linewidth]{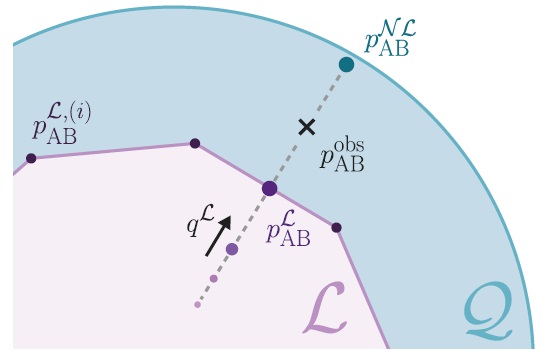}
    \caption{Figure from \cite{Lukanowski2023}. Geometry formulation of the CC attack.}
    \label{fig:Lukanowski2023}
\end{figure*}
 To find the maximum value of \( q_L \) in \eqref{ccattack}, Lukanowski et al. \cite{Lukanowski2023} provided a geometric interpretation of the CC attack, which describes its optimization in terms of a linear program for determining the tightest upper bound on the key rate. A given correlation \( p(a,b|x,y) \) represents a point within the probability space (Figure \ref{fig:Lukanowski2023}). Maximizing \( q_L \) involves identifying two additional points that are collinear with this point: \( p_{AB}^{L}(a,b|x,y) \), located within the local set \( \mathcal{L} \), and \( p_{NL} \), which lies outside the local set but within the quantum set \( \mathcal{Q} \). The goal is to minimize the ratio of the distances from \( p_{AB}^{L}(a,b|x,y) \) to \( p_{AB}(a,b|x,y) \) and from \( p_{AB}^{NL}(a,b|x,y) \) to \( p_{AB}(a,b|x,y) \).\\
The assertion that Eve has perfect knowledge of all outcomes when distributing a local correlation to Alice and Bob is based on the fact that the local set \( \mathcal{L} \) forms a convex polytope within the probability space. As such, any local correlation \( p_{AB}^{L}(a,b|x,y) \) can be expressed as a combination of the extremal points of the polytope, which can be represented by the vector \( \boldsymbol{p}^L = (p_i^L(a,b|x,y))_i \). Furthermore, in this geometric construction, maximizing the local weight \( q_L \) results in the optimal local correlation \( p_{AB}^{L}(a,b|x,y) \) lying on the boundary of the local polytope \( \mathcal{L} \), meaning it must always reside on one of the polytope's facets. \\
Analogous to \( \boldsymbol{p}^L \), one can define the vector \( \boldsymbol{q}^L = (q_i^L(a,b|x,y))_i \), which represents the probabilities assigned by Eve to each local correlation in the CC attack. Similarly, the average non-local correlation that Eve distributes can be modeled as a mixture of preselected non-local quantum correlations, forming the vector \( \boldsymbol{p}^{NL} = (p_i^{NL}(a,b|x,y))_i \), with the corresponding probability vector \( \boldsymbol{q}^{NL} = (q_i^{NL}(a,b|x,y))_i \) indicating the probabilities assigned by Eve to each non-local correlation.\\
To optimize the CC attack, Eve seeks a probability vector \( \boldsymbol{q} = \boldsymbol{q}^L \oplus \boldsymbol{q}^{NL} \), ensuring that local correlations are distributed as frequently as possible. This requires solving the following linear program, which maximizes the overall probability of sending any local boxes:

\[
\boldsymbol{q}_{CC}[\boldsymbol{p}^{NL}, p(a,b|x,y)] = \arg\max \sum_i q_i^L,
\label{pl-linearprogram}
\]
subject to the following constraints:
\[
\boldsymbol{q}^L \cdot \boldsymbol{p}^L + \boldsymbol{q}^{NL} \cdot \boldsymbol{p}^{NL} = p(a,b|x,y), \nonumber
\]
\[
\sum_i q_i^L + \sum_j q_j^{NL} = 1, \nonumber
\]
\[
\forall i,j : 0 \leq q_i^L, q_j^{NL} \leq 1. \nonumber
\]
The first condition is nothing but eq. \eqref{ccattack} and the other constraints ensure $\boldsymbol{q}$ is a valid probability vector.
The set of extremal local correlation, $\boldsymbol{p}^L$, is a predetermined fixed collection fixed by the scenario.  \\
To apply the CC attack and upper bound the key rate in a DI-QKD protocol, one must first specify the ideal correlation that would be shared by the parties in the absence of noise, denoted by \( Q_{AB}(a,b|x,y) \). However, due to practical imperfections—such as finite detection efficiency $\eta$ and visibility $v$—the actual noisy correlation observed is \( p_{AB}(a,b|x,y) \). This correlation is then decomposed within the attack into local and nonlocal parts as equation \eqref{ccattack}. \\
The method in \cite{Lukanowski2023} works such that the eavesdropper must specify in advance the set of nonlocal correlations \( p_{NL} \) to be used in the convex decomposition and then apply the linear program \eqref{pl-linearprogram} to determine the local contribution. For a CHSH protocol involving maximally entangled states with finite detection efficiency \(\eta\), the maximum local weight can be analytically determined as
\[
q^L = (1 - \eta)(1 + (3 + 2\sqrt{2})\eta) \quad \text{for} \quad \eta \geq \eta_{\text{loc}},
\]
where \(\eta_{\text{loc}} = 2(\sqrt{2} - 1) \approx 82.8\%\) is the detection efficiency threshold, below which (\(\eta < \eta_{\text{loc}}\)) the correlation \( p(a, b | x, y) \) becomes local. \\
As a result, the following bounds can be derived for the one-way and two-way protocols:
\begin{align}
    r_{\text{1-way}} &\leq (3 + 2\sqrt{2})\eta^2 - 2(1 + \sqrt{2})\eta - \frac{\eta}{2}h(\eta) - (1 - \eta)h\left(\frac{\eta}{2}\right), \\
    r_{\text{2-way}} &\leq \eta \left( 2(1 + \sqrt{2})\eta - 2\sqrt{2} - 1 \right) \left( 1 - h\left( \frac{1 - \eta}{1 - 2(1 + \sqrt{2})(1 - \eta)} \right) \right).
\end{align}
There are critical values for one-way (\(\eta_{\text{1-crit}} \approx 89.18\%\)) and two-way (\(\eta_{\text{2-crit}} \approx 85.36\%\)) protocols, below which the key rates become negative. This demonstrates that, for detection efficiencies in the range \(\eta_{\text{loc}} \leq \eta \leq \eta_{\text{i-crit}}\), no DI-QKD protocol is feasible, even though the shared correlation remains nonlocal. The same result can be extended to the case of finite visibility (\(v < 1\)); i.e., there are critical visibilities that introduce nonlocal intervals in which no DI-QKD protocol is possible. \\
Zhang et al. \cite{Zhang2022} applied the CC attack to a two-way protocol by optimizing the non-local points in the CC decomposition using the NPA-hierarchy \ref{subsec:NPA}. They demonstrated that noise reduction can be achieved by employing the B-step procedure \cite{gottesman2003}.

\section{Semi-Device-independent Quantum Key Distribution}\label{chap5}
The implementation of fully DI-QKD schemes is hampered by the stringent hardware requirements that currently prohibit a reasonable key rate over practical distances.  One way to make DI methods more viable is to slightly relax the notion of \textit{device independence}, and establish a minimal set of reasonable assumptions. This approach, named semi-device-independent (SDI) QKD reduces hardware demands, so that a more reasonable key (or randomness) rate can be achieved with current technology \cite{PB11}.
Examples of these additional assumptions include (i) an upper bound on the system's dimension \cite{PB11}, (ii) shared randomness \cite{LBLLBMZB15}, or (iii) honest construction of part of the device \cite{CZM15}. Still other approaches within the general semi-DI philosophy include measurement-device independent (MDI) QKD \cite{Lo2012} and one-sided DI-QKD (1SDI-QKD) \cite{branciard2012one}. In the following sections, we present these methods.

\begin{table*}[ht]
\small
\centering
\caption{Overview of semi-device-independent QKD (SDI-QKD) protocols.}
\begin{tabular}{
@{} l >
{\raggedright\arraybackslash}p{2.3cm} >{\raggedright\arraybackslash}p{5.cm} >{\raggedright\arraybackslash}p{5.cm} @{}}
\toprule
\textbf{Protocol} & \textbf{Main Assumption(s)} & \textbf{Key Features} & \textbf{Security Notes} \\
\midrule
P\&M SDI-QKD & Known dimension of quantum states & One-way communication; black-box devices for Alice and Bob & Uses dimension witnesses; robust against individual attacks \\
\midrule
RDI-QKD & Trusted sender; bounded overlaps $\gamma_{ij} = \langle \psi_i | \psi_j \rangle$ & No assumptions on Bob's device & Resilient to side-channel attacks on Bob \\
\midrule
MDI-QKD & Untrusted measurement devices; trusted sources & Bell-state measurement by untrusted third party & Removes detector-side channel attacks \\
\midrule
DDI-QKD & Trusted preparation; black-box detection system & Uses single-photon Bell-state measurement & Simpler than MDI-QKD, but vulnerable to preparation attacks \\
\midrule
1SDI-QKD & One trusted party; entanglement-based & Demonstrates quantum steering; asymmetric trust & Security linked to steering inequalities \\
\bottomrule
\end{tabular}
\label{tab:sdi-qkd-summary}
\end{table*}

\subsection{Prepare-Measure semi-device-independence}
In DI-QKD protocols, the security is based on testing non-locality between two parties. One question that can arise is whether such a strong form of security could be established for prepare and measure scenarios. This question is especially important since many commercially available QKD systems operate in one-way configurations, in which a transmitter (Alice) prepares a quantum state and sends it to a receiver (Bob). This question was first addressed by Paw\l{}owski and Brunner in 2011 \cite{PB11}, in a scenario which they called semi-DI (SDI). In their approach, the Hilbert space dimension of the quantum system is known, but the quantum preparation and measurement devices are uncharacterized,  such that the devices of Alice and Bob can be seen as black boxes. 
The assumptions of the SDI protocols are the following:

\begin{procedurelist}{SDI-QKD scenario \cite{PB11}}
	\item Alice’s black box is a “state preparator” which
	has the freedom to choose among a certain set of preparations
	$\rho_a\in \mathcal{B}(\mathbb{C}^\mathrm{d})$ with $a\in \{1,\dots,N\}$ unentangled from
	Eve, but knows nothing about
	these quantum states apart from their dimensionality. She sends the prepared state to Bob.
	\item  Bob’s  measurement device is a black box. He can choose to perform an uncharacterized measurement $M_y$ with
	$y\in \{1,\dots,m\}$ and gets the outcome $b\in \{1,\dots,k\}$. 
	\item The boxes may feature shared classical variables $\lambda$, known to Eve, but uncorrelated from the choice of preparation (measurement) made by Alice (Bob).
	\item After repeating this procedure many times, Alice and Bob
	can estimate the probability distributions $P(b|a,y)=\mathrm{Tr}(\rho_a M_y^b)$
	which denotes the probability of Bob finding outcome $b$ when
	he performed measurement $M_y$ and Alice prepared $\rho_a$. 
	\item The protocol is restricted to individual attacks.
\end{procedurelist}
Notice that if Alice’s preparations were entangled with Eve’s system, then the communication capacity would be effectively
doubled using dense coding \cite{densecoding92}.\\
To prove the security of SDI-QKD from the table of probabilities $P(b|a,y)$,  a \textit{dimension witness} can be used to estimate the minimal dimension of the state sent from Alice to Bob.  Introduced by Gallego et al. \cite{GBHA10}, a dimension witness is defined as $W=\sum_{a,y,b}w_{aby}P(b|ay)$, where the real coefficients $w$ are chosen such that one can  derive lower bounds on the dimension of classical or quantum systems that is necessary to reproduce the measurement data. For example, in the simplest case of a two-dimensional system with three preparations and two binary measurements, for a classical system (i.e., a bit), they derived the witness $I_3$:
	\begin{equation}
		I_3=|E_{11}+E_{12}+E_{21}-E_{22}-E_{31}| \leq 3,
		\label{I3}
	\end{equation}
	where $E_{ij}=P(b=+1|ij)-P(b=-1|ij)$. To surpass the upper bound with a classical system, dimension of at least three (trits) is required, giving an algebraic maximum of $I_3=5$. Looking now to the quantum case, the same witness can be employed. After solving the maximization problem, one finds that $\mathrm{max}_{\ket{\psi}\in \mathcal{H}_2}I_3=1+2\sqrt{2}$, and the witness takes the form $I_3\leq{1+2\sqrt{2}}$.\\
 The first four terms in \eqref{I3} can be seen as a CHSH inequality (whose maximum is $2\sqrt{2}$), and the maximization does not involve the fifth term ($E_{31}$), which can be set to $-1$. Thus, the violation of \eqref{I3} corresponds to the violation of the CHSH inequality and can be seen as a device-independent protocol by estimating the CHSH inequality. Moreover, the witness $I_3$ can be used to distinguish between bits and qubits. If the dimension witness $W$ satisfy the following condition
\begin{equation}
	C_d < W \le Q_d,
\end{equation}
where $C_d$ and $Q_d$ are the classical and quantum bounds respectively, for $\dim\mathcal{H}_A=d$. Specifically, suppose Alice's device creates $d$--dimensional quantum systems, a value $W>C_d$  means that it becomes infeasible to replicate the quantum data table using $d$--dimensional classical systems, or equivalently quantum states emitted by Alice's device that are \textit{orthogonal} to Bob's measurements. The inability to replicate the data table using $d$--dimensional classical systems, witnessed by $W>C_d$, confirms that Eve cannot access the full information about the system. Relaxing the assumption on the dimension would enable Eve to use a classical system with sufficient higher dimensions to reproduce such a table.

To prove the security of SDI-QKD in ~\cite{PB11}, following the geometrical method in \cite{GBHA10}, the authors utilize a dimensional witness as the main tool to assess the security of SDI-QKD. Consider that Alice's device prepares qubits and is limited to four specific preparations ($N = 4$),  denoted by two bits, $a_0$ and $a_1$, while Bob's device can perform two binary measurements, and they can evaluate the correlators $E_{a_0a_1,y}$. They can evaluate a dimension witness of the form
	\begin{equation}
S=E_{00,0}+E_{00,1}+E_{01,0}-E_{01,1}-E_{10,0}+E_{10,1}-E_{11,0}-E_{11,1} \leq 2.
		\label{PBwitness}
	\end{equation}

Applying this dimension witness to the states and measurements of the BB84 protocol yields \( S = 2 \), demonstrating the insecurity of the BB84 scheme. However, this conclusion is not limited to BB84, and in fact any protocol that utilizes the same states and measurements as BB84, such as the SARG protocol \cite{scarani2004quantum}, is also insecure when viewed from this perspective. Rather, to obtain a positive key rate in the SDI scenario, Bob must perform measurements in a basis that is rotated with respect to the BB84 bases, as will be discussed below. 
\paragraph{Security of SDI-QKD} 
A secret key can be extracted if a positive value for the key rate \( r = I(A : B) - I(A : E) \) is obtained, where \( I(A:X) = \sum_j{1 - h(P_{X}(a_{y_j}))} \) represents the mutual information. Consequently, the sufficient condition for security is expressed as follows:
\[
I(A:B) > I(A:E) \Longrightarrow P_B > P_E,
\]
where \( P_X = \frac{1}{2}(p_X(a_0) + p_X(a_1)) \) denotes the average probability of party \( X \) correctly guessing  when Alice sends the state \( \rho_{a_0 a_1} \), based on the two random bits \( a_0 \) and \( a_1 \) that she generates. By evaluating Bob's success probability, it can be established that \( P_B \) is a function of \( S \) as given by:
\[
P_B = \frac{S + 4}{8}.
\]
Now, consider the following scenario: Alice receives an \( n \)-bit string as input, and Bob is tasked with guessing the value of a function from the set \( \{F_n\}_n \) (where \( \{F_n\}_n \) represents all Boolean functions on \( n \)-bit strings) after receiving \( s \) qubits from Alice. The average probability of Bob's success is bounded above by:
\[
P_n \leq \frac{1}{2} \left( 1 + \sqrt{\frac{2^s - 1}{2^n - 1}} \right).
\]
For \( n = 2 \), the optimal probability of guessing a function \( F_n \) or its negation is equivalent. Thus, when Alice sends a single qubit to Bob (\( s = 1 \)), we have:
\[
P_B(a_0) + P_B(a_1) + P_B(a_0 \oplus a_1) \leq \frac{3}{2} \left( 1 + \frac{1}{\sqrt{3}} \right),
\]
which also holds when Bob collaborates with Eve. By utilizing the relationships \( P_{BE}(a_i) \geq P_B(a_i) \) and \( P_{BE}(a_i) \geq P_E(a_i) \), along with the inequality:
\[
P_{BE}(a_0 \oplus a_1) \geq P_{BE}(a_0, a_1) \geq P_{BE}(a_0) + P_{BE}(a_1) - 1,
\]
one can derive the following equation:
\[
P_{BE}(a_0) + P_{BE}(a_1) + P_{BE}(a_0 \oplus a_1) \geq 2P_B(a_0) + 2P_E(a_1) - 1.
\]
Using the earlier inequality, this leads to:
\[
P_B(a_0) + P_E(a_1) \leq \frac{5 + \sqrt{3}}{4}.
\]
A similar inequality can be derived by interchanging \( a_0 \) and \( a_1 \). This illustrates that when Eve attempts to guess a different bit than Bob, she will inevitably disturb Bob's statistical outcomes. From inequality \( P_B(a_0) + P_E(a_1) \leq \frac{5 + \sqrt{3}}{4} \) and its symmetry with respect to \( a_0 \) and \( a_1 \), we conclude:
\[
P_B + P_E \leq \frac{5 + \sqrt{3}}{4}.
\]
This implies that \( P_B > P_E \) if:
\[
P_B > \frac{5 + \sqrt{3}}{8} \approx 0.8415.
\label{Pbideal}
\]
When Bob uses measurement operators $(\sigma_x \pm \sigma_z)/\sqrt{2}$, $P_B\sim0.8536$ and the  key rate is 
\[
r = I(A : B) - I(A : E) \approx 0.0581.
\]
An additional conclusion can be drawn from the discussion above. In DI-QKD, nonlocality is necessary but not sufficient \cite{farkas2021bell} (see \ref{subsec:upperbounds}). Similarly, we can deduce the analogous result: \textit{mere violation of a dimension witness ($S > 2 \Longrightarrow P_B>3/4$) is not sufficient to guarantee the security of SDI-QKD}. \par
In the ideal scenario where perfect detectors are assumed, meaning all systems leaving Alice's laboratory are detected by Bob, $P_B$ is the sole security parameter. However, in the presence of losses, the average detection efficiency of Bob's detectors, denoted as $\eta_B$, becomes an additional security parameter. It is crucial to define how the parties handle rounds where no particle is detected. 
Chaturvedi et al. \cite{Chaturvedi2018} chose the simplest case, where no-detection rounds are discarded from the statistics. This choice allows the parties to estimate the average success probability close to the optimal one in \cite{PB11}. If we split Bob's detection efficiency as $\eta_B = \eta + \eta^{\prime}$, where $\eta = P(\text{Click}|e \neq b)$ represents the detection efficiency when Eve's and Bob's inputs are different, and $\eta^{\prime}$ corresponds to the case where their inputs are the same, then Eve maximizes $\eta^{\prime}$ because she wants Bob's device to return outcomes as often as possible. Since Eve has no control over Bob's setting, this leads to $\eta_B = \frac{1 + \eta}{2}$. Therefore, the condition for establishing a secret key is translated to
\begin{equation}
    P_B(\eta)>P_E(\eta).
\end{equation}
Chaturvedi et al. \cite{Chaturvedi2018} studied security against two types of quantum eavesdroppers, those with and without access to quantum memory. They showed that, in the general case where Eve could control both Alice's and Bob's devices, the security condition for both cases (with and without memory) is 
\begin{equation}
P_B > \frac{1}{2}\left(1+\frac{1}{1+\eta}\right).
\end{equation}

Moreover, they considered a minimal characterization of the preparation device, with the restriction that, while Eve can choose the states that leave Alice’s laboratory, she cannot alter them during the protocol. This is a reasonable assumption, as manipulations inside Alice’s laboratory are significantly more difficult for Eve. They found that the optimal states are mutually unbiased bases, and the security condition in this case is 
\begin{equation}
P_B > \frac{1}{4}\left(2 + \cos{\alpha_{\eta}} + \frac{1-\eta}{1+\eta}\sin{\alpha_{\eta}}\right),
\end{equation}

where $\alpha_{\eta} = \arctan\left(\frac{1-\eta}{1+\eta}\right)$. The fact that these conditions are the same for both cases, with and without memory, proves that access to a small quantum memory (a qubit) does not help the eavesdropper in attacking the SDI-QKD protocol.
\par
Additionally, as a straightforward generalization of the original protocol \cite{PB11}, Chaturvedi et al. \cite{Chaturvedi2018} presented a modified SDI-QKD protocol based on the ($3 \rightarrow 1$) scenario. In this protocol, Alice is given three bits, and depending on them, she prepares a state and sends it to Bob, who has as input a classical trit $b \in \{0,1,2\}$ to select his measurement. Although the generalized protocol has lower key rates, the security requirements are significantly reduced.

\paragraph{Relation between SDI and DI} Just as the violation of a Bell inequality in the DI case tells us that
the measured system cannot have a classical description, the violation of a dimension witness in the SDI case tells us
that the communicated system cannot be a classical bit. In both cases, violation of the classical bound is a necessary (though
not always sufficient) condition, with the difference residing in the form of the inequalities. 
Finding the correspondence between these two objects is equivalent to
finding the correspondence between the scenarios. 
A typical bell inequality $I$ can be written as $I=\sum_{a,b,x,y}\alpha_{a,b,x,y}P(a,b|x,y)$. Using the relation $p(a,b|x,y)=p(a|x,y)p(b|a,x,y)$, then $I$ can be rewritten as $\sum_{a,b,x,y}\alpha_{a,b,x,y}p(a|x,y)p(b|a,x,y)$. By considering $a$ as an input of Alice  ($x^{\prime}=(x,a)$), $p(a|x,y)$ can be seen as the part of Alice's input is $a$. Since in
the parameter estimation phase of the protocol the inputs are chosen according to a uniform distribution, we set $P(a|x,y) =\frac{1}{A}$, where $A$ is the size of the alphabet of $a$.  Then, the Bell inequality $I$ takes the form of a dimension witness. Using this method, the method for going from SDI to DI and vice versa was introduced in \cite{Li2013}. For going from SDI to DI, Alice’s input $x$ must be divided into a pair comprising a setting and an outcome. Let us consider the SDI randomness generation (based on $n \rightarrow 1$ quantum random access code $q$) where Alice input $x^{\prime}$ is a collection of $n$ independent bits $a_0,\cdots,a_{n-1}$. For this case, Alice's input can be divided into pairs of outcome $a=a_0$ and setting 
$x=(a_1,\cdots,a_{n-1})$ and a family of Bell inequalities can be obtained which they were found useful to implement entanglement-assisted random access codes works in DI randomness and DI-QKD protocols. The other side is also possible, and one can show a DI protocol can be converted to an SDI protocol. The example of this conversion was also shown in \cite{Li2013}. \par
Using a similar approach, Woodhead et al. \cite{Woodhead2015} demonstrated that the fundamental bound on Alice’s min-entropy in the DI setting also applies to the semi-DI setting. They achieved this by utilizing the PM version of the CHSH correlator, defined as $S=\frac{1}{2}\sum_{abxy}(-1)^{a+b+xy}P(b|axy)$, instead of Eq. (\ref{eq:CHSH1}). This result helps bring the semi-DI setting, where security proofs are still lacking, more in line with the established security results for DI-QKD.

\paragraph{SDI protocols based on other assumptions}  
In addition to the previously mentioned protocols, it is possible to introduce other SDI protocols based on alternative constraints. A prerequisite for developing any DI or SDI protocol is to examine the set of available correlations under the given assumptions. In light of this, Himbeeck et al. \cite{HWCGP17} introduced a general framework for SDI prepare-and-measure scenarios and modified it to account for a physical constraint, namely, the mean value of an observable. This results in a restriction on the quantum messages $\rho_x$, which can be expressed as a constraint on the corresponding mean values $H_x = \textrm{tr}(H\rho_x)$ of the observable. \\
More specifically, Himbeeck et al. \cite{HWCGP17} considered two types of constraints on the mean values of $H$. The first, called the max-average assumption, assumes upper bounds on the mean values: 
\begin{equation}
   H_x = \textrm{tr}(H\rho_x) = \sum_{\lambda}p_{\lambda} \textrm{tr}(H\rho_{x}^{\lambda}) \leq \omega_x, \quad \forall x .
\end{equation}  

For example, if $H$ is the photon-number operator, one can trust that, for all states $\rho_x$ emitted by the source, the mean photon numbers $H_x$ are below a certain threshold. 
If the states emitted by the source (Alice) vary from run to run according to some random parameter $\lambda$, the max-average assumption only bounds the mean value averaged over all possible values of $H_{x|\lambda} = \textrm{tr}(H\rho_{x|\lambda})$. However, it does not constrain the maximum values of $H_{x|\lambda}$, which could, in principle, be arbitrarily high. To address this, a stronger assumption, known as the max-peak assumption, was introduced:  

\begin{equation}
   \max_{\lambda} H_{x|\lambda} = \max_{\lambda} \textrm{tr}(H\rho_{x}^{\lambda}) \leq \omega_x, \quad \forall x . 
\end{equation}  

Again, if $H$ is the photon-number operator, this second condition still allows fluctuations in photon numbers within each state. It does not imply truncation of the Fock space, as the constraint only imposes a bound on the mean values $\textrm{tr}(H\rho_{x}^{\lambda})$ of $H$ for each $\rho_{x}^{\lambda}$. In particular, the states may still have non-zero amplitudes in any number-basis states.
\par
The max-average assumption has the advantage of being verifiable externally by testing the average emitted states without requiring knowledge of the internal workings of the source. On the other hand, verifying the max-peak assumption typically depends on modeling the source. Its primary advantage is that it is more restrictive and can certify useful properties that would not be certified under the max-average assumption.

\subsection{Measurement-device-independent QKD}
The purpose of SDI-QKD is to eliminate the most serious possible hardware vulnerabilities at the expense of a minimum of security assumptions.  The measurement devices and detectors used in QKD present considerable opportunities for side-channel attacks by Eve, in particular, due to the fact that Eve can both probe and/or manipulate the measurement system using external light, allowing her to determine measurement settings, blind a detector, or force a detector to ``click". To combat this type of attack, Lo et al. proposed the idea of measurement-device-independent QKD (MDI-QKD) in 2012 \cite{Lo2012}. Since its introduction, many variants and improvements of MDI-QKD have been proposed, a summary of which is shown in Fig. \ref{fig:MDIQKDhier}. In the following section, we present a brief overview of MDI-QKD.

\begin{figure}[h!]
\centering    
\begin{tikzpicture}[
    box/.style={draw, rounded corners, minimum width=2.8cm, minimum height=.8cm, thick, fill=white},
    box1/.style={draw, rounded corners, text width=3cm, align=center, minimum height=.8cm, thick, fill=white},
    box2/.style={draw, rounded corners, text width=2.5cm, align=center, minimum height=.8cm, thick, fill=white},
    font=\small
  ]

\node[box] (s) at (0, 9.5) {\textbf{MDI-QKD}};

\node[box] (b1) at (-6, 8) {Encoding Variants};
\node[box] (b2) at (-2., 8) {Protocol Extensions};
\node[box2] (b3) at (1.8, 8) {Practical Implementations};
\node[box] (b4) at (5.5, 8) {Advanced Variants};

\node[box] (b11) at (-6, 6.4) {Phase Encoding};
\node[box] (b12) at (-6, 5.4) {Time Bin Encoding};

\node[box1] (b21) at (-2.1, 6.8) {Decoy-state MDI-QKD};
\node[box1] (b22) at (-2.2, 5.8) {Coherent-state MDI-QKD};
\node[box1] (b23) at (-2.3, 4.8) {Higher dimensional variants};

\node[box] (b31) at (1.8, 6.3) {Fiber-based};
\node[box] (b32) at (1.8, 5) {Free-space based};

\node[box] (b41) at (5.5, 7) {Twin-Field QKD};
\node[box1] (b42) at (5.6, 6) {Asymmetric MDI-QKD};
\node[box1] (b43) at (5.7, 5) {Mode-pairing MDI-QKD};
\node[box] (b44) at (5.9, 4) {Network MDI-QKD};

\draw[->, thick] (s) to[out=200, in=90] (b1.north);
\draw[->, thick] (s) to[out=220, in=90] (b2.north);
\draw[->, thick] (s) to[out=-40, in=90] (b3.north);
\draw[->, thick] (s) to[out=-20, in=90] (b4.north);

\draw[->, thick] (b1.west) to[out=180, in=90] (b11.north west);
\draw[->, thick] (b1.west) to[out=180, in=90] (b12.north west);

\draw[->, thick] (b2.west) to[out=180, in=90] (b21.north west);
\draw[->, thick] (b2.west) to[out=180, in=90] (b22.north west);
\draw[->, thick] (b2.west) to[out=180, in=90] (b23.north west);

\draw[->, thick] (b3.east) to[out=0, in=90] (b31.north east);
\draw[->, thick] (b3.east) to[out=0, in=90] (b32.north east);

\draw[->, thick] (b4.east) to[out=0, in=90] (b41.north east);
\draw[->, thick] (b4.east) to[out=0, in=90] (b42.north east);
\draw[->, thick] (b4.east) to[out=0, in=90] (b43.north east);
\draw[->, thick] (b4.east) to[out=0, in=90] (b44.north east);
\end{tikzpicture}
    \caption{\label{fig:MDIQKDhier}Hierarchical overview of MDI-QKD advancements.}
    \label{fig:enter-label}
\end{figure}
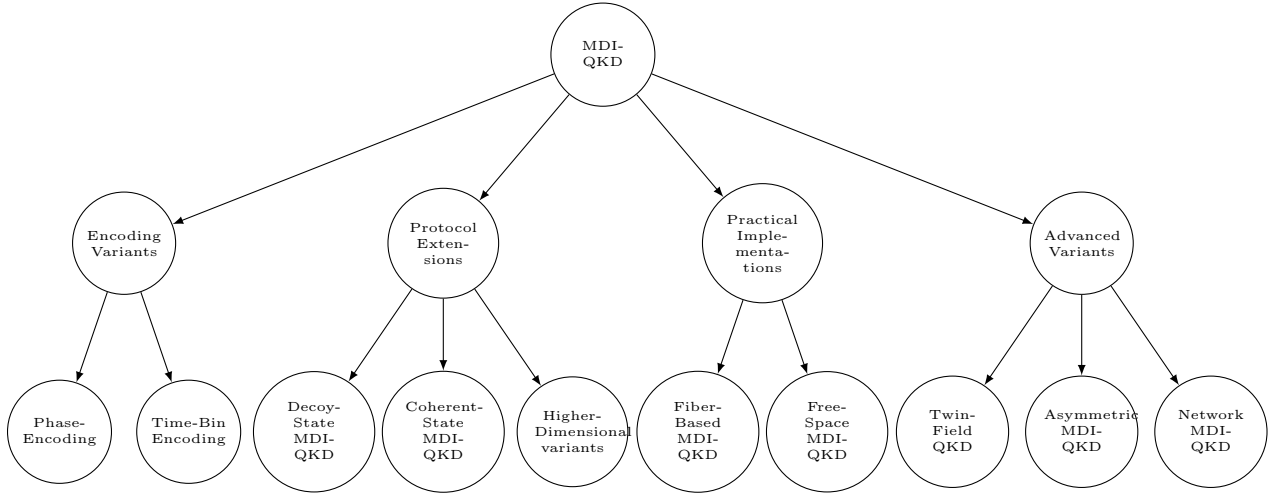

\subsubsection{Original MDI-QKD Protocol}
\label{sec:OrigMDIQKD}
The main idea of the original MDI-QKD scenario \cite{Lo2012} is sketched in Fig. \ref{fig:mdi}. For simplicity, consider first that Alice and Bob each send a single photon to Charlie, who implements a linear-optical Bell-state measurement (BSM), which can be described by measurement operators $\Pi_+,\Pi_-,\Pi_0$ (see also Fig. \ref{fig:eventready}). Here $\Pi_\pm$ is a rank-one projector onto one of the Bell states $\ket{\psi^\pm}=(\ket{H}_A\ket{V}_B\pm \ket{V}_A\ket{H}_B)/\sqrt{2}$, where $H$ and $V$ are the rectilinear polarization states. $\Pi_0$, on the other hand, is a rank-two projector onto the remaining two Bell states.   When Alice and Bob prepare their pulses in either the $H$ or $V$ polarization state, a successful BSM (``$\Pi_\pm$" results) indicates that their pulses were prepared in orthogonal states. In the ideal scenario, they have an error rate of zero. These events can be used to generate a shared key--say--by defining $H_A\equiv$ ``0" and $V_A\equiv$ ``1", where Bob will flip all of his bits so that their bit strings are correlated. 
When the photons are prepared with diagonal polarization (states A, D in Fig. \ref{fig:mdi}), a detection of the state $\psi^-$ ($\psi^+$) occurs only when orthogonally (parallel) polarized pulses were prepared. An error in this case corresponds to detection events $\Pi_+$ ($\Pi_+$) when orthogonal (the same) diagonal polarization states are sent.  With the error rates for both bases, post-processing steps similar to BB84 can be realized. 
\par
Remarkably, the same logic can be applied to the case when Alice and Bob use weak coherent pulses (WCP) \cite{Lo2012}. To fully realize MDI-QKD, the pulses are prepared in one of the four polarization states, determined by random bit strings held by Alice and Bob, and then sent to Charlie, who publicly announces the results when the BSM is successful.  Alice and Bob subsequently post-select their data, retaining only the outcomes with a successful BSM, and the same preparation bases were used. The data corresponding to the diagonal basis is used for assessing bit and phase error rates. If the error rates fall below a predetermined threshold, Alice and Bob proceed with classical error correction and privacy amplification processes to obtain a secure key; otherwise, they terminate the protocol.  Most notably, MDI-QKD does not require a trusted measurement apparatus.  To see this, notice that a successful BSM event $\psi^\pm$ does not provide any information about whether Alice and Bob prepared $H_A,V_B$ (= secret bit value 0) or $V_A,H_B$ (= secret bit value 1), and moreover there is no basis selection at the measurement device, so no information can be acquired if Eve somehow probes the BSM station.  In addition, forcing BSM detectors to click, or announcing false results will inevitably lead to errors when diagonal states are sent.       

Thus, MDI-QKD effectively mitigates all potential detector side channels, and in fact, the BSM station can be implemented by an untrusted third party and completely under Eve's control. However, one still needs to consider the imperfection of signal resources, such as the basis-dependent flaw that results in a decrease in achievable distance. Therefore, a generalization of \cite{Lo2012} is essential for practical purposes.
\par
Published back to back with Ref. \cite{Lo2012}, Braunstein et al. (2012) adopted a similar approach by devising a side-channel-free protocol to account for all potential side channels that might arise during a QKD implementation \cite{braunstein2012side}. In its simplest form, this protocol corresponds to an entanglement-swapping experiment, where the dual teleportation channels serve as ideal Hilbert space filters to eliminate the possibility of side-channel attacks.
 To prevent any attempts at probing side channels within Alice's and Bob's laboratories, they implemented state-generation via partial measurement of a bipartite entangled state. This strategic move effectively isolates any external probes from the state-generation device.
This approach ensures the complete protection of both Alice and Bob's private spaces against any potential side-channel attacks. In this protocol, the secret key rate can be lower bounded through the use of quantum memory and by calculating the entanglement distillation rate over the distributed state as follows:
	\begin{equation}
		r \geq I(A \rangle B | L') + \Delta. 
	\end{equation}
	Here, $I(A\rangle B | L^{\prime})$ represents the coherent information conditioned on Eve's fake variable $L^{\prime}$ which Eve sends to both parties to mislead them, instead of the true variable $L$ ($I(A\rangle B)=S(\rho_B)-S(\rho_{AB})=-S(A|B)$) and $\Delta$ denotes the amount of classical cheating.  \\ 
\begin{figure}
\centering
\begin{tikzpicture}[
    node distance=2.5cm, 
    every node/.style={font=\sffamily},
    align=center]

\node[draw, fill=blue!20, text width=3cm, text centered, rounded corners] (alice) {Alice\\Quantum State\\Source (A)};
\node[draw, fill=green!30, text width=3cm, text centered, rounded corners, right=of alice] (charlie) {Charlie/Eve\\Quantum\\Measurement\\Device\\ (in Bell basis)};
\node[draw, fill=blue!20, text width=3cm, text centered, rounded corners, right=of charlie] (bob) {Bob\\Quantum State\\Source (B)};

\draw[->, thick] (alice) -- (charlie);
\draw[<-, thick] (charlie) -- (bob);
\draw[->, dashed, thick] ([yshift=8pt]charlie.west) -- ([yshift=8pt]alice.east);
\draw[->, dashed, thick] ([yshift=8pt]charlie.east) -- ([yshift=8pt]bob.west);

\end{tikzpicture}

\vspace{1cm} 

\begin{tabular}{c|c|c|c|c}
  \textbf{A/ B}  & H & V & D & A \\
  \hline
   H  & $\Pi_0$ & $\Pi_\pm$ & X & X \\
    V  & $\Pi_\pm$ & $\Pi_0$ & X & X \\
     D  & X & X & $\Pi_+ ,\Pi_0$  & $\Pi_- ,\Pi_0$ \\
     A  & X & X & $\Pi_- ,\Pi_0$ & $\Pi_+ ,\Pi_0$ \\
\end{tabular}
\caption{Basic set-up of an MDI-QKD protocol. Alice and Bob each send a single photon to Charlie, who implements a linear-optical Bell-state measurement, which can be described by measurement operators $\Pi_+,\Pi_-,\Pi_0$.  The tables includes expected detection results at BSM station in MDI-QKD.  ``X" refers to discarded results in which Alice and Bob chose different bases. $H$, $V$, $D$, $A$ are horizontal, vertical, diagonal and anti-diagonal linear polarization states, respectively.}
\label{fig:mdi}
\end{figure}
By modeling MDI-QKD as a special case of LOCC-assisted communication over a multiplex quantum channel, Das et al. \cite{Das2021} derived a tight strong-converse upper bound on the achievable key rate. The bound is expressed in terms of entanglement measures, such as the max-relative entropy of entanglement and, for teleportation-simulable channels, the relative entropy of entanglement of the effective channel's Choi state. For practical photon-based MDI-QKD setups (e.g., with erasure or depolarizing channels), their results account for both channel losses and imperfections at the measurement node, offering a more precise benchmark than earlier network-cut-based bounds.
 
 \paragraph{Practical loopholes in MDI-QKD}
Practical loopholes in MDI-QKD shift the focus from measurement device imperfections to state preparation or source flaws, as these can be exploited to compromise security. Liu et al. \cite{Lu2023a} demonstrated a hacking strategy leveraging modulation errors to obtain all key bits. One significant imperfection is the basis-dependent flaw arising from discrepancies in density matrices in BB84 states, while the birefringence effect in optical fibers highlights the practicality of phase encoding over polarization encoding. Tamaki et al. \cite{Tamaki2012} addressed these issues with two MDI-QKD schemes: one using phase locking of separate lasers and a double BB84 protocol to control basis-dependent flaws, and another employing phase encoding for longer distances. Primaatmaja et al. \cite{Primaatmaja2019} introduced a numerical technique using semidefinite programming to analyze phase-error rates, showing phase-encoding MDI-QKD's potential to outperform decoy-state MDI-QKD at short distances. Zhu et al. \cite{Zhu2021a} improved analysis by modulating different intensities in key and test bases. Bourassa et al. \cite{Bourassa2022} identified a time-dependent side channel in sources employing Faraday mirrors, showing divergences between three-state and BB84 protocols. Alternative schemes \cite{ma2012alternative} simplified encoding and decoding with minimal performance compromise. Xu et al. \cite{xu2013practical} analyzed error sources like polarization misalignment and mode mismatch, showing that MDI-QKD tolerates up to $6.7\%$ polarization misalignment at $0$ km and $5\%$ at $120$ km, while mode mismatch tolerance decreases from $80\%$ to $50\%$ over the same range. Wang et al. \cite{wang2014} estimated gains, error rates, and key rates under arbitrary photon mixtures, while Li et al. \cite{Li2023d} proposed a polarization-alignment method using fewer devices to reduce photon loss. Phase-randomized weakly coherent pulses (PR-WCPs), commonly used due to a lack of mature single-photon sources, introduce errors in the $X$ basis. Li et al. \cite{Li2022} incorporated these errors into security analysis, showing equivalence between PR-WCPs and Poisson-distributed photon states, with a tighter key rate than \cite{Ma2012}. Lu et al. \cite{Lu2022unbalanced} introduced an MDI-QKD protocol addressing modulation errors, achieving high performance despite $X$ basis imbalances and asymmetric channel transmittances. Yin et al. \cite{Yin2013, Yin2014} removed encoding state characterization assumptions, demonstrating practicality and tolerance for high loss and errors over $160$ km. Hwang et al. \cite{Hwang2017} improved phase error estimation, and Zhou et al. \cite{Zhou2019} extended uncharacterized qubit protocols to weak coherent sources using decoy-state methods. Kang et al. \cite{Kang2019, Kang2020} developed protocols with uncharacterized coherent states under collective attacks. Li et al. \cite{Li2014a} proposed the CHSH-MDI-QKD protocol to mitigate state preparation assumptions, using the CHSH inequality and decoy states \cite{Zhang2014a} to enhance accuracy in single-photon yield estimation, though increased parameter estimation complexity limited its effectiveness.

\paragraph{Finite-key analysis}
Finite-key analysis of MDI-QKD was first conducted in \cite{Song2012} and \cite{Ma2012}, where they derived secure bounds under the influence of statistical fluctuations in relative frequency. This analysis applies to practical detectors with low efficiency and highly lossy channels. Their study demonstrates the possibility of achieving secure transmission over distances exceeding 10 kilometers with a success rate of $10^{10}$ outcomes, making it directly applicable in practical implementations. However, when the number of successful outcomes falls below $10^{8}$, achieving a nonzero key rate becomes impossible. \par
Both studies mentioned above focused on security assessments against specific types of attacks. The first study to explore security proofs within the finite-key regime against general attacks and to satisfy the composability definition was conducted by Curty et al. \cite{curty2014finite}. They utilized the principles of large deviation theory, specifically employing a multiplicative form of the Chernoff bound, for critical parameter estimation. This step was crucial in demonstrating the feasibility of implementing MDI-QKD over long distances and within a reasonable timeframe.
Their findings demonstrated that even with the technology available at the time, an MDI-QKD protocol could be realized without the need for high-efficiency detectors. Importantly, they showcased the potential for long-distance MDI-QKD protocols, extending up to approximately 150 kilometers, for finite-sized data sets ranging from $10^{12}$ to $10^{14}$ signals. This achievement was made possible using practical signal sources, such as WCPs.\\

\subsubsection{Decoy-state measurement-device-independent QKD} 

The sources used in MDI-QKD must be trusted, necessitating a complete characterization of the source. Commonly, weak coherent sources replace perfect single-photon sources, though they remain susceptible to photon number splitting (PNS) attacks due to multiphoton fractions \cite{Brassard2000}. To counteract this, the decoy-state method was originally proposed by Hwang \cite{Hwang2003}, with the first security proofs independently established by Wang \cite{Wang2005} and by Lo, Ma, and Chen \cite{Lo2005}. This method has since been adapted for MDI-QKD to estimate single-photon contributions efficiently \cite{ma2012alternative}. Wang et al. \cite{wang2013three} further optimized this by employing three-intensity decoy states, which addressed basis-dependent coding errors. Subsequent enhancements, such as vacuum and weak decoy states by Sun et al. \cite{Sun2013}, showed improved performance but highlighted the limitations of certain methods. Advances continued with modified coherent states introduced by Li et al. \cite{Li2014b}, reducing multiphoton distributions and enhancing key rates. Techniques to refine single-photon yield and phase error estimation, as demonstrated by Zhu et al. \cite{Zhu2016} and Ding et al. \cite{Ding2018}, increased the accuracy of MDI-QKD parameters and extended secure transmission distances. Other studies, such as those by Mao et al. \cite{Mao2019}, explored new decoy-state frameworks that surpassed prior methods, further enhancing both distance and key rates. \par

Incorporating heralded single-photon sources (HSPS) offers notable benefits, including reduced dark count rates and lower QBER \cite{Yu2013}. Wang et al. \cite{Wang2013} showed that combining triggered and non-triggered events in HSPS-based protocols enhances both key rates and transmission distances. Subsequent works \cite{Zhang2018} applied biased decoy-state schemes with HSPS, yielding superior results for small datasets. Similarly,  Zhou et al. \cite{Zhou2013}, introduced passive decoy methods to spontaneous parametric down-conversion (SPDC) sources, to minimize side-channel leaks and improve performance compared to weak coherent states. While SPDC sources offer advantages, challenges such as spectral entanglement were addressed by Zhan et al. \cite{Zhan2023}, underscoring the need for high-purity sources. \par

Optimization of decoy-state parameters has played a pivotal role in enhancing MDI-QKD protocols. Techniques such as local search algorithms \cite{Xu2014}, statistical fluctuation considerations \cite{Yu2015}, and advanced joint constraints \cite{Jiang2021a} have significantly improved key rates and extended transmission distances. Furthermore, protocols integrating memory-assisted techniques \cite{Abruzzo2014} and MDI-QKD mode-pairing \cite{Zeng2022,Xie2022,Liu2023a,Wang2023,Lu2024,Li2024a,Zhou2025a} further push the boundaries of MDI-QKD capabilities. Asymmetric protocols (where Alice and Bob have different distances from the source) \cite{Wang2019b} and reference-frame-independent methods \cite{Yin2014a,Zhou2025} address practical challenges like channel asymmetry and misalignment, making MDI-QKD more adaptable for real-world applications.

\subsubsection{High-dimensional measurement-device-independent QKD}
So far, all mentioned protocols were for two-dimensional encoding systems using $Z$ and $X$ bases. In this section, we review protocols developed for higher dimensions. Chau et al. \cite{Chau2016} introduced the protocol which they called the mother-of-all QKD protocol and its MDI variants for qudits, including the round-robin differential phase protocol \cite{sasaki2014practical} and the Chau15 protocol \cite{chau2015quantum}. However, these were experimentally infeasible due to challenges in realizing high-dimensional Bell states. Hwang et al. \cite{hwang2016n} proposed a d-dimensional MDI-QKD protocol, proven secure under the condition of zero QBER. Jo et al. \cite{Jo2016} proposed a three-dimensional MDI-QKD (3d-MDI-QKD) protocol with mutually unbiased bases (MUBs) comprising time and energy bases \cite{bechmann2000quantum}. Bell state measurements in 3d-MDI-QKD use nine maximally entangled states in a three-dimensional bipartite system, enabling a secret key rate of $\Tilde{r} \geq \log_2 3 - 2Q - 2h(Q)$, where $Q$ represents state error rate. This protocol achieves higher secret key rates than the original MDI-QKD for low transmission losses, suitable for short-distance communication, but faces feasibility challenges in realizing high-dimensional Bell state measurements with linear optics \cite{jo2019enhanced}. Sekga et al. \cite{Sekga2023} introduced a qutrit-based MDI-QKD protocol employing biphotons and Mach-Zehnder interferometers, achieving significant secret key rates for moderate distances. Dellantonio et al. \cite{dellantonio2018high} extended QKD to generalized $Z$ and $X$ bases in $d$ dimensions, demonstrating unconditional security with improved performance in low dark-count scenarios. Cui et al. \cite{cui2019measurement} proposed a high-dimensional MDI-QKD protocol utilizing hyper-encoded qudits with polarization and spatial-mode degrees of freedom, yielding a fivefold improvement in secret key rates. This was further extended by Yan et al. \cite{Yan2020} and Li et al. \cite{Li2023c} to multi-degree-of-freedom encoding. The limitations of long-distance QKD due to decoherence prompted solutions like quantum repeaters, as discussed by Erkilic et al. \cite{Erkilic2023}. Their MDI-QKD protocol surpasses the PLOB bound \cite{Pirandola2017} using high-dimensional states optimized for increased key rates at shorter distances, though these advantages diminish with greater transmission distances due to photon loss.

\subsubsection{Continuous-variable measurement-device-independent QKD} 
While two-dimensional discrete protocols can achieve long-distance communication, they often suffer from low key rates, making them unsuited for metropolitan network requirements. A solution to this challenge can be found in adopting continuous-variable (CV) systems. One significant advantage of a CV-QKD protocol is its compatibility with standard telecommunication technology, particularly because it does not rely on single-photon sources, which are the most vulnerable to attacks in discrete-variable QKD (DV-QKD) protocols. Another significant advantage is that, in a typical QKD protocol, users often need to allocate a portion of their raw data to estimate communication channel parameters, such as the error rate. This results in a trade-off between the secret key rate and the accuracy of parameter estimation in the finite-size regime. However, it has been demonstrated that this constraint does not apply to continuous variable QKD. In continuous variable QKD, the entire set of raw keys can be utilized for both parameter estimation and secret key generation without compromising security \cite{Lupo2018}.  In addition, CV-QKD systems might be more suitable to coexist with classical data transmission in optical fibers, since the local oscillator required for homodyne detection can act as a mode filter, reducing classical Raman noise from the quantum signal.
\par
As such, there is considerable interest in continuous-variable MDI-QKD (CV-MDI-QKD). The first CV-MDI-QKD protocols were originally introduced by Pirandola et al. \cite{pirandola2015high}, Li et al. \cite{Li2014} and Ma et al. \cite{ma2014gaussian}. The protocol operates as follows: Alice and Bob randomly prepare coherent states, denoted as $\ket{\alpha}$ and $\ket{\beta}$, respectively, where the amplitudes $\alpha$ and $\beta$ are modulated by Gaussian distributions with zero mean and sufficiently large variances. These prepared states are then sent to an intermediary party (Charlie) for measurement. To establish secret correlations, Charlie performs a CV Bell measurement and communicates the outcomes to Alice and Bob. This Bell measurement is executed by mixing the incoming modes using a balanced beamsplitter. The measurement corresponds to the quadrature operators $\hat{q_{-}}=(\hat{q}_A-\hat{q}_B)/{\sqrt{2}}$ and $\hat{p_{+}}=(\hat{p}_A+\hat{p}_B)/{\sqrt{2}}$, and the classical outcomes are combined into a complex variable denoted as $\gamma=(q_{-}+ip_{+})/{\sqrt{2}}$. 
The most general eavesdropping strategy  involves a joint attack encompassing both Charlie's measurement device and the two communication links, namely Alice-Charlie and Charlie-Bob. Since the protocol is based on Gaussian modulation and the detection of Gaussian states, the optimal eavesdropping technique employs a Gaussian unitary approach \cite{garcia2006unconditional}.
 By introducing a reconciliation efficiency factor denoted as $\epsilon\leq 1$, the secret key formula can be modified as follows:
\begin{equation}
r:=\epsilon I(A:B)-I_E.
\end{equation}
where $I_E$ is the upper bound on Eve's information.
\par
An investigation into the performance of the protocol under ideal reconciliation conditions ($\epsilon=1$) reveals the potential for achieving remarkably high secret key rates, approaching one bit per use. Notable, symmetric configurations, where the transmissivities are the same between Alice-Charlie and Bob-Charlie, are not the most secure option, particularly for longer distances. The optimal configuration is asymmetric, corresponding to minimal loss in Alice's link, which allows Bob's link to have a low transmissivity. Specifically, if Charlie's position can be situated close to Bob, the total transmission distance, i.e., the distance between Alice and Bob, can theoretically extend up to $ 80$ km. Taking into account realistic reconciliation performance, the experimental rates closely approach the maximum theoretical predictions. In particular, with $\epsilon\approx 0.97$, the experimental rates can achieve remarkably high values over typical connection lengths within a metropolitan network. 
\par
Zhang et al. \cite{Zhang2014} introduced a CV-MDI QKD protocol using squeezed states and demonstrated that its secret key rate consistently surpasses the coherent-state-based protocol, particularly under collective attacks, with a total maximum transmission distance increase of $6.1$ km under both perfect and imperfect detectors. The transmission distance further increases in asymmetric scenarios. In the extreme case where Charlie is on Bob’s side, such that the coherent-state-based protocol achieves zero transmission distance,  the squeezed-state protocol significantly outperforms it, especially with the introduction of optimal Gaussian noise levels on Bob’s side, as determined for maximizing key rate and transmission distance under reverse reconciliation. Chen et al. \cite{chen2018composable} extended this protocol against general attacks using entropic uncertainty relations, yielding a composable security analysis and demonstrating the system’s resilience to a maximum channel loss of $0.64$ dB. One key limitation of Gaussian-modulated protocols is their low reconciliation efficiency in long-distance transmissions, which has driven interest in discrete modulation. Ma et al. \cite{ma2019long} proposed a four-state discrete-modulated CV-MDI-QKD protocol, leveraging nonorthogonal coherent states for encoding bits, achieving longer transmission distances and simplified implementation compared to Gaussian modulation, with the eight-state protocol \cite{Zhao2020a} further improving key rates and modulation variances. Wilkinson et al. \cite{Wilkinson2020} introduced postselection in long-distance CV-MDI-QKD, extending the communication range to $14$ km over standard optical fiber, while protocols employing quantum catalysis \cite{ye2020continuous, ye2021enhancing}, quantum scissors \cite{kong2022improvement}, and multimode signals \cite{ding2021multi} further improved performance by improving transmission distance and reducing noise. Practical implementation challenges, such as independent light sources, phase reference calibration, and external disturbances, require mitigation to prevent overestimation of key rates. Ma et al. \cite{ma2019security} studied phase calibration imperfections and their thermal noise equivalence, proposing models for realistic security analysis, while Zhao et al. \cite{Zhao2020a} introduced Bayesian phase-noise estimation to eliminate local oscillator transmission. Simplified implementations, such as the plug-and-play scheme \cite{liao2018dual, zhou2023plug}, address synchronization issues by deriving local oscillators from a shared laser, reducing complexity and enhancing stability. However, imperfections in state preparation also introduce Gaussian noise, as modeled by Ma et al. \cite{ma2019security}, who explored intensity error impacts under various distributions and emphasized placing stable sources on the encoder’s side for optimization. Countermeasures like Huang et al.’s one-time calibration method \cite{Huang2023} and noise characterization approaches are critical for enhancing practical security. Addressing transmittance fluctuations, Zheng et al. \cite{zheng2022security} highlighted performance degradation under varying channel conditions, proposing Gaussian post-selection to mitigate risks of denial-of-service attacks, while Li et al. \cite{Li2023} studied the effects of non-ideal Bell detection due to angle errors, showing significant transmission distance reductions even with minor errors. Efforts to reduce CV-MDI-QKD complexity include self-referenced schemes \cite{wang2018self}, shared optical path methods \cite{yin2019phase}, and unidimensional modulation \cite{bai2020unidimensional}, achieving comparable performance with reduced system demands. Semi-Quantum Key Distribution (SQKD), introduced by Boyer et al. \cite{Boyer2007}, evolved into a continuous variable version \cite{Zhou2022}, enabling secure communication between classical and quantum users, leveraging Charlie’s full quantum capabilities to balance cost-effectiveness and security under various attack scenarios.

 \paragraph{Finite-size effects} The impact of finite-size effects on the key rate of CV-MDI-QKD was initially investigated by Papanastasiou et al. \cite{Papanastasiou2017}, considering two-mode Gaussian attacks, and by Zhang et al. \cite{Zhang2017}, examining collective attacks. To study the security of the protocol, a potent approach is to employ the entanglement-based representation, where the description of the dynamics occurs within an extended Hilbert space, allowing the use of pure states.
The protocol is outlined as follows: Alice and Bob employ sources of coherent states, which are purified, assuming they start from two-mode squeezed vacuum states $\rho_{aA}$ and $\rho_{bB}$, where modes $A$ and $B$ are transmitted over the communication links, while local modes $a$ and $b$ are heterodyned. The measurements project the traveling modes into pure coherent states. The attenuation of the channel on modes $A$ and $B$ is modeled using two beam splitters with transmissivities $\tau_A$ and $\tau_B$, where $0 \leq \tau_A, \tau_B \leq 1$. These processes manipulate Alice and Bob's signals with a pair of Eve's ancillary systems $E1$ and $E2$, which generally belong to a broader reservoir of modes controlled by the eavesdropper. 
\par
The key rate, accounting for finite-size effects, is expressed as:
\begin{equation}
r^{\text{finite}} = \frac{n}{N}\left(r - \Delta(n)\right),
\end{equation}
where $n$ represents the number of signals used for key preparation, $N$ is the total number of exchanged signals, and $r$ denotes the asymptotic key rate. The correction function $\Delta(n)$ is employed to compensate for the utilization of the Holevo function in the context of a finite number of signals. It is a function that relies on the number of signals used for key preparation ($n$) and the probability of error associated with the privacy amplification procedure $\epsilon_{PA}$ ($\Delta(n) \sim \sqrt{\frac{1}{n} \log_2(2/\epsilon_{PA})}$).
\par
Numerical results indicate that under realistic conditions and considering finite-size effects, CV-MDI-QKD is viable for metropolitan distances within experimental constraints. In particular, a total number of signals exchanged in the range of $N = 10^6$ to $10^9$ is sufficient to achieve a high key rate of $10^{-2}$ bits per use over metropolitan distances, even in the presence of excess noise of approximately $0.01$.
\par
For the protocol considering collective attacks, the CV-MDI-QKD protocol with an asymmetric structure and finite-size effects can securely transmit over approximately $86$ km under ideal reconciliation efficiency and optimal modulation variance conditions for $n = 10^{10}$ block size. When the reconciliation efficiency is $96.9\%$, the maximum transmission distance achievable is approximately $75$ km. 
\par
Lupo et al. \cite{Lupo2018} studied the security proof for coherent attacks. The advantage of their study compared to the previous ones is that the correlations between Alice and Bob are generated through the variable $Z$ announced by the relay which allows Alice and Bob to do parameter estimation with a negligible amount of public communication. Therefore, the whole raw key can be exploited for both parameter estimation and secret-key extraction. They first investigated the security against collective attacks by presenting an improved estimation of the conditional smooth min-entropy obtained by applying a new entropic inequality and found the following lower bound on the secret-key rate: 
\begin{eqnarray}
    r^{\text{finite}}\geq r- \frac{1}{\sqrt{n}}\Delta_{\text{AEP}}(\frac{2}{3}p\epsilon_s,d)
    +\frac{1}{n}\log(p-\frac{2}{3}p\epsilon_s)+\frac{1}{n}2\log(2\epsilon), 
\end{eqnarray}
where $p$ is the probability of successful error correction, $\epsilon_s$ is the smoothing parameter entering the smooth conditional min-entropy and $\Delta_{\text{AEP}}(\delta,d)$ is a function of dimensionality $d$ ($\Delta_{\text{AEP}}(\delta,d)\leq 4(d+1)\sqrt{\log(2/\delta^2)}$). 
\par
The secret key rate for coherent attacks can be modified by applying the results of \cite{leverrier2017security} as 
\begin{eqnarray}
    r^{\text{finite}}\geq \frac{n-k}{k} r^{\infty}- \frac{\sqrt{n-k}}{{n}}\Delta_{\text{AEP}}(\frac{2}{3}p\epsilon_s,d)
    +\frac{1}{n}\log(p-\frac{2}{3}p\epsilon_s)+\frac{2}{n}\log(2\epsilon)-\frac{2}{n}\log\begin{pmatrix}
    K+4 \\
    4
\end{pmatrix}, 
\end{eqnarray}
where $k$ is the number of signals used for the energy test and $K\sim n$. 
Based on numerical examples,  it is in principle possible to generate a secret key against the most general class of coherent attacks 
for block sizes of the order of $10^7$–$10^9$, depending on loss and noise. In particular, this composable security analysis confirms that CV-MDI protocols allow for high QKD rates on the metropolitan scale, supporting the results of the asymptotic analysis of Pirandola et al. \cite{pirandola2015high}. The viability of utilizing the entire raw key for both parameter estimation and key extraction was later demonstrated by Lupo et al. \cite{Lupo2018a}. Their work CV-MDI-QKD revealed that parameter estimation in this scheme can be achieved with minimal public communication, as correlations are postselected by the central relay. Consequently, the public variable announced by the relay encompasses all the information regarding the correlations between Alice and Bob, making it sufficient, along with the local data, to estimate the covariance matrix. This crucial discovery eliminates the trade-off between the secret key rate and the accuracy of parameter estimation in the finite-size regime of CV-QKD.  Similar results are presented in \cite{wu2019security,wu2020simultaneous}.  
\par
Non-Gaussian postselection, such as virtual photon subtraction from a coherent state source, improves CV-QKD protocols by enhancing secret key rates and tolerating excess noise over longer distances \cite{huang2013performance, guo2017performance}. Zhao et al. \cite{Zhao2018} and Ma et al. \cite{Ma2018} demonstrated its application in coherent-state CV-MDI-QKD, optimizing performance through Alice’s photon subtraction with carefully chosen parameters while maintaining protocol security. Kumar et al. \cite{kumar2019coherence} showed that photon subtraction on two-mode squeezed coherent states extends transmission distances up to $68$ km but reduces key rates compared to vacuum states. Practical applications, such as photon subtraction over fiber-to-water channels \cite{yu2022photon}, further validate this approach. Recently, Papanastasiou et al. \cite{Papanastasiou2023a} and Ghalaii et al. \cite{ghalaii2023continuous} analyzed composable finite key generation, demonstrating secure CV-MDI QKD over free-space optical links under realistic conditions.
 
\subsubsection{Measurement-device-independent Multiparty Quantum Communication}
 Multiparty quantum communication protocols strive to ensure information-theoretic security in the realm of highly sensitive and confidential multiuser communication. Using the principles of quantum mechanics, these protocols exhibit superior physical performance compared to their classical counterparts. Their versatile applications encompass a spectrum of scenarios such as secret multiparty conferences, remote voting, online auctions, management of payment system master keys, collaborative scrutiny of accounts containing quantum money, and the facilitation of secure distributed quantum computation.
 \par
Specifically, Quantum Cryptographic Conferencing (QCC) is a protocol designed for multiparty Quantum Key Distribution. QCC ensures the secure sharing of a key among legitimate users, even in the presence of potential eavesdroppers. Another notable protocol, Quantum Secret Sharing (QSS), involves the fragmentation of a message into multiple parts distributed among a group of participants. Each participant is allocated a share of the secret, and consequently, the complete set of shares is required to comprehensively decipher the message. For instance, QSS can be employed to guarantee that no single individual possesses the capability to launch a nuclear missile or access a bank vault independently. Instead, the collective participation of all legitimate users is essential for these critical actions. 

\paragraph{Discrete variable protocols}
The Greenberger-Horne-Zeilinger (GHZ) entanglement is an important resource for multiparty quantum communication tasks, especially for the measurement-device-independent versions of QCC (MDI-QCC) and QSS (MDI-QSS). However, the practical applications of GHZ states are quite limited due to the lack of two important factors—(i) high-intensity sources and (ii) reliable distribution of the GHZ states. To tackle these limitations, Fu et al. \cite{Fu2015} take advantage of postselected GHZ states among three legitimate users (typically called Alice, Bob, and Charlie) to perform secure multiparty quantum communication. As a typical MDI-QKD protocol, the postselecting measurement device here can be regarded as a black box that can be manipulated by anyone including the eavesdropper. Therefore, the scheme is naturally immune to all detection-side attacks and can be regarded as the combination of time-reversed GHZ state distribution and measurement. Moreover, by employing the decoy-state method, the scheme can defeat photon-number-splitting attacks. The protocol in \cite{Fu2015} is as follows: Alice, Bob, and Charlie independently and randomly prepare quantum states with phase-randomized weak coherent pulses in two complementary bases ($Z$ basis and $X$ basis). They send the pulses to the untrusted fourth party located in the middle node, David, to perform a GHZ-state measurement which projects the incoming signals onto a GHZ state. After performing the measurement, David announces the events through public channels whether he has obtained a GHZ state and which GHZ state he has received. Alice, Bob, and Charlie only keep the raw data of successful GHZ-state measurements and discard the rest. They post-select the events where they use the same basis in their transmission through an authenticated public channel. Notice that Alice performs a bit flip when Alice, Bob, and Charlie all choose $X$ basis and David obtains a GHZ state $\frac{1}{\sqrt{2}}(\ket{000}-\ket{111})$. The data of $Z$ basis is used to generate the key, while the data of $X$ basis are used to estimate errors. After classical error correction and privacy amplification, Alice, Bob, and Charlie extract secure cryptographic conferencing keys. In the asymptotic limit, the MDI-QCC key generation rate and the MDI-QSS key rate are given by
\begin{eqnarray}
    R_{QCC}=Q_{\nu}^Z+Q_{111}^Z[1-H(e_{111}^{BX})]-H(E_{\mu\nu\omega}^{Z*})f Q_{{\mu\nu\omega}}^Z, \\
    R_{QSS}=Q_{\nu}^X+Q_{111}^X[1-H(e_{111}^{BZ})]-H(E_{\mu\nu\omega}^X)f Q_{{\mu\nu\omega}}^X
\end{eqnarray}
where $Q_{{\mu\nu\omega}}^Z$ and $E_{\mu\nu\omega}^{Z*}$ ($Q_{{\mu\nu\omega}}^X$ and $E_{\mu\nu\omega}^{X})$) are
 the gain and quantum bit error rate of $Z$ ($X$) basis respectively. The subscripts $\mu$,$\nu$, and $\omega$ are the pulse intensities of Alice, Bob, and Charlie respectively. $Q_{111}^Z$ ($Q_{111}^X$) is the gain of of $Z$ ($X$) basis and $e_{111}^{BX}$ ($e_{111}^{BZ}$) is the bit error rate of $X$ ($Z$) basis. The parameter $f$ is the error correction efficiency. 
 \par
Simulation results for QCC show that the estimation using two decoy states gives a secure key rate nearly the same as the corresponding one using infinite decoy states. In the case of asymptotic data with two decoy states, the secure transmission distance between Alice and the middle node of MDI-QCC is about $190$ km for the detection efficiency of $40\%$ ($210$ km for the detection efficiency of $93\%$). For MDI-QSS, the secure transmission distance is about $130$ km for the detection efficiency of $40\%$ ($150$ km for the detection efficiency of $93\%$) between the middle node and any user. Hua et al. proposed a similar scheme based on a GHZ entangled state, which is different from the above protocol and uses a GHZ entangled
state and the polarization state prepared by users to execute BSM and realize multi-user sharing of a common secret key \cite{Hua2022}. They derived the secure key rate when users employ an ideal single photon source and a weak coherent source and showed that the secure distance between each user and the measurement device can reach more than $280$ km while reducing the complexity of the quantum network. 
Despite the efficiency of the protocol, its scalability diminishes exponentially with the number of users, and security issues in QSS protocols, such as in \cite{Fu2015}, remain underanalyzed, particularly for participant attacks \cite{Gao2008}. To address these challenges, Li et al. \cite{Li2023b} proposed an MDI-QSS protocol based on spatial multiplexing, achieving a transmission distance over $300$ km and a secret key rate two orders of magnitude higher than \cite{Gao2008}, while addressing security concerns like participant attacks. Ju et al. \cite{Ju2022} introduced a hyper-encoding MDI-QSS protocol using polarization and spatial-mode degrees of freedom for enhanced error resilience and achieved a key rate improvement of three orders of magnitude over the original MDI-QSS protocol under a $100$ km transmission distance. Zhang et al. \cite{Zhang2023} developed a secure protocol against Trojan horse attacks. Memory-assisted MDI-QKD, combined with HSPS and the decoy-state method, was introduced in \cite{Zhang2025a}.
\par
 Chen et al. \cite{Chen2016} demonstrated finite-key performance using a biased decoy-state approach, further extended by an asymmetric decoy-state method achieving secure communication over $43.6$ km \cite{Chen2017}. Protocols using W-states \cite{Zhu2015} and cluster states \cite{Liu2017} showed feasibility for distances over $150$ km and $280$ km, respectively.  MDI key agreement protocols are discussed in \cite{Cai2022,Yang2023,Liu2023}.
\paragraph{Continuous variable protocols}
A continuous-variable Measurement-Device-Independent multiparty quantum communication protocol was initially explored by Wu et al. \cite{Wu2016}, utilizing squeezed states of light and homodyne measurements to optimize the secret key rate. To execute QCC and QSS communication protocols, they employ a Continuous-Variable GHZ state, a multipartite entangled state with squeezed uncertainties in relative position and total momentum \cite{Loock2003}. In the case of the tripartite CV GHZ state, their positions and momenta satisfy the relations $\hat{X}_1-\hat{X}_2\rightarrow 0$, $\hat{X}_2-\hat{X}_3\rightarrow 0$, and $\hat{P}_1+\hat{P}_2+\hat{P}_3\rightarrow 0$.\\
The security analysis in \cite{Wu2016} addresses two types of attacks: entangling cloner and coherent attacks. Under the entangling cloner attack, the maximal transmission distances can be extended in scenarios of unbalanced distribution. In contrast, the coherent attack notably diminishes the maximum transmission distances. A coherent state-based MDI multiparty protocol was investigated in \cite{Zhou2017}, demonstrating superior performance compared to the squeezed state-based MDI protocol in terms of experimental realizations. \\
The three-party CV GHZ state in \cite{Wu2016} is not prepared and then distributed; instead, it is obtained through postprocessing using the concept of entanglement swapping. Conversely, Guo et al. \cite{Guo2017} employ a four-party GHZ state to execute the CV-MDI QSS protocol. Specifically, the four participants prepare and transmit modulated states to a relay for the generation of a four-party GHZ state. In this protocol, three participants collaborate to acquire the fourth person’s secret key by leveraging the GHZ state. Furthermore, given that the detection apparatus inherently possesses imperfections, which do not compromise security but can diminish the generation rate of the final secret key, optical amplifiers are deployed to enhance the signal and compensate for these inherent imperfections. This deployment results in an increased transmission distance. The same conclusion was found for the CV-MDI QCC \cite{Li2017}. The continuous variable measurement-device-independent quantum secret sharing and quantum conference based on a four-mode cluster state with different structures were conducted by Wang et al. \cite{Wang2019}. \\
Ottaviani et al. introduced an MDI-modular network in their work \cite{Ottaviani2019}, presenting a modular design for continuous-variable networks. In this architecture, each module functions as a MDI star network. Within each module, users transmit modulated coherent states to an untrusted relay, thereby establishing multipartite secret correlations through a generalized Bell detection mechanism. Their investigation revealed that under ideal conditions, up to 50 users can achieve private communication exceeding 0.1 bit per use within a radius of 40 m, comparable to the size of a large building. Fletcher et al. \cite{Fletcher2022} utilized the same generalized Bell detection technique to establish multipartite correlations between user variables. Their study demonstrates that postselection procedures based on performing reconciliation on the signs of prepared quadratures of coherent states can be effectively used to broaden the protocol's operational range.

\subsubsection{Experiments in Measurement Device Independent QKD}

MDI-QKD was an important advance in that it reduces vulnerability to detector attacks, while being feasible with current technology. Not long after the concept was introduced, several proof-of-principle realizations were achieved. Ref. \cite{rubenok13} reported a demonstration of MDI-QKD over more than 80 km of spooled fiber as well as in inter-city fiber links. A demonstration using polarization qubits over two optical fiber links of 8.5 km each employed a full-polarization control system to stabilize and control the polarization drift in the fibers \cite{FerreiradaSilva2013}. Moreover, the feasibility of  MDI-QKD with polarization encoding was demonstrated in 10km of telecom fiber using standard off-of-shelf devices \cite{tang14}.  Other sophisticated implementations using decoy-state MDI-QKD have been realized over tens and even hundreds of km of optical fibers~\cite{liu13,tangpan14,yin16}. MDI-QKD
Progress has advanced quite rapidly, a summary MDI-QKD in terms of bit rate/distance is shown in Fig. 
\ref{fig:mdiqkd}. 
\par
Regarding the actual establishment of metropolitan communication networks based on the security of the MDI-QKD protocol, many advances have already been made, including the construction of a star-type quantum network in a metropolitan area of 200-square-km, which in addition to providing a high transmission bit rate, also proven to be safe against detection attacks~\cite{tangpan16}. MDI-QKD has been implemented in quantum channels that coexist in the same fiber with classical data channels \cite{Valivarthi2019,Berrevoets2022}.
\subsection{Detector-device-independent quantum key distribution}
As discussed in the previous section, implementing the MDI-QKD protocol requires the interference of photons from two separate lasers, making its implementation more challenging than conventional QKD schemes. Another issue lies in the finite-key analysis, which demands a relatively large post-processing data block to achieve optimal performance.
\par
To address these challenges, an alternative approach called detector-device-independent QKD (DDI-QKD) has been proposed \cite{Lim2014, Gonzalez2015, Liang2015}. While DDI-QKD shares a similar conceptual framework with MDI-QKD, it differs in its use of a `black box' model. In this method, Alice and Bob ensure that their measurement systems do not leak any unwanted information to external sources. This is accomplished, in principle, by replacing the measurement apparatus in Bob's laboratory with a device built by them but not necessarily characterized. Additionally, DDI-QKD replaces the two-photon Bell state measurement (BSM) with a 2-qubit single-photon BSM, eliminating the need for two-photon interference from independent light sources.
\par
An example of such a protocol works as follows: Alice encodes BB84 polarization states in single photons, which she sends to Bob. Bob encodes his information into the spatial degree of freedom of the incoming photons (two modes). This is achieved using a 50:50 beam splitter along with a phase modulator that applies a random phase to each incoming signal. Bob then performs a BSM on the two qubits (polarization, spatial modes) to project each input photon into a Bell state. The remaining steps of the protocol are identical to MDI-QKD.
\par
Despite the anticipated strong performance and partial security proofs, Sajeed et al. \cite{Sajeed2016} demonstrated that the security of DDI-QKD cannot rely on the same principles as MDI-QKD. They demonstrated two key security vulnerabilities. First, DDI-QKD's security is not based on postselected entanglement, and a blinding attack renders its security. Second, Sajeed et al. \cite{Sajeed2016} revealed that DDI-QKD is vulnerable to detector side-channel attacks, as well as other side-channel attacks that exploit imperfections in Bob's receiver. The source of these vulnerabilities seems to stem from Bob’s preparation process, which, unlike MDI-QKD, can be influenced by Eve through the signals she sends to Bob.

\subsection{Receiver-device-independent QKD}
Ioannou et al. introduced another prepare-and-measure SDI-QKD protocol \cite{Ioannou2022a}, where the sender’s device is partially trusted, while the receiver’s device is treated as a black box. They called these protocols ``receiver-device-independent quantum key distribution (RDI-QKD).'' The main assumption in RDI protocols is to bound the pairwise (possibly complex) overlaps between the various states prepared by Alice, denoted as $\gamma_{ij} = \bra{\psi_i}\psi_j\rangle$. The states $\ket{\psi_x}$ represent the quantum systems prepared by Alice’s devices or, more generally, the states of all systems outside Alice’s lab, conditioned on her applying the preparation sequence labeled by $x$. If Alice’s states are mixed, their purifications must also satisfy the overlap bounds. These bounds prevent any side-channel from leaking additional information about $x$ to Eve. No characterization of the receiver’s (Bob's) device is required, and no fair-sampling assumption is made. As a result, these protocols are resilient to attacks where Eve controls Bob’s device. 
\\
\paragraph{Simplest protocol} 
In the simplest protocol, given a key bit $k$, Alice prepares one of two possible states by setting $x=k$. using a coherent state $\ket{\alpha}$ with two possible polarization states $\ket{\phi_x}=\cos{\frac{\theta}{2}}\ket{H}+e^{i\pi x} \sin{\frac{\theta}{2}}\ket{V}$, she prepares one of the following states ($x=0,1$)
\begin{equation}
    \ket{\psi_x}=\ket{\alpha\cos{\frac{\theta}{2}}}_H \ket{\alpha\sin{\left(\frac{\theta}{2}\right)e^{i\pi x}}}_V.
\end{equation}
The overlap between two preparation is given by $\bra{\psi_1}{\psi_0}\rangle=e^{-2|\alpha|^2}\sin^2{\theta}$ and the main assumption is then written as
\begin{equation}
    \gamma=\bra{\psi_1}\psi_0\rangle\geq C,
\end{equation}
where $C$ is a parameter chosen by the user. \\
Bob, then, performs a measurement of the polarization states. For $y=0$ ($y=1$), he projects the incoming signal to $\ket{\phi_0^{\perp}}$ ($\ket{\phi_1^{\perp}}$). If he gets a click, then the round is conclusive and he outputs $b=0$, otherwise, the round is inconclusive and he outputs $b=1$ and the round will be discarded during sifting.  In the case of an ideal channel without noise and loss, the following statistics will be observed by Alice and Bob
\begin{equation}
    p(b=0|x,y)=1-e^{-|\alpha|^2\sin{(\theta)}^2\sin{(\frac{\pi(x-y)}{2})}^2},
\end{equation}
which is nonzero only when $x\neq y$. Therefore, the raw key can be constructed after removing the inconclusive rounds, by Bob flipping all his bits.    
\paragraph{General case of $n>2$ different preparations \cite{Ioannou2022}} Consider a given ensemble of states $\ket{\psi_x}_{x=0}^{n-1}$ that Alice can prepare, and that Bob can perform binary measurements
$\{ B_{0|y}, B_{1|y} \}_{y=0}^{n-1}$ corresponds to projections onto the polarization states orthogonal to the states that Alice prepares.  Alice randomly chooses a pair of integers $\bold{r}=(r_0,r_1)$ where $0\leq r_0\leq r_1\leq n-1$ and a bit $k$ and sends the state $\ket{\psi}_{x=r_k}$ to Bob, who randomly chooses an integer $y$ ($0\leq y \leq n-1$) and performs the binary measurement $\{B_{0|y}, B_{1|y}\}$. If the outcome is $b=1$, the round will be discarded, otherwise $b=0$. Bob then asks Alice to reveal $\bm{r}$. If $y=r_0$ or $y=r_1$ Bob informs Alice that the round is conclusive otherwise it is aborted.  The main assumption concerns the complex pairwise overlaps between preparation states, which is encompassed in the Gram matrix $G$, whose entries are given by $G_{ij}=\bra{\psi_i}\psi_j\rangle$.
\subsubsection{Security analysis} 
Eve’s information about the secret bit $k$ is bounded by assuming that the Gram matrix $G$ of the set of encoding states is fully characterized and that the probabilities $p(b|x, y)$ are perfectly estimated by Alice and Bob. There is no other restriction on the protocol, no bound on the dimension, and neither any characterization on the prepared states, transmission channel, or measurement device. To attack the protocol, Eve can correlate herself to the state Alice sent and design Bob's measurement. Moreover, Eve can benefit from having a quantum memory. By denoting $p_{\text{succ}}$ as the probability that a round is not discarded, the asymptotic key rate is lower bounded by using the Devetak-Winter key rate formula and gives 
\begin{equation*}
    r^{\text{RDI}}\geq [H(k|\text{Eve},\text{succ})-H(k|\text{Bob},\text{succ})]p(\text{succ})\geq [-\log_2(p_g(e=k|\text{succ}))-h(Q)]p(\text{succ}).
\end{equation*}
Here the second inequality comes from the fact that Bob's entropy can be upper-bounded as $H(k|\mathrm{Bob},\text{succ})\leq h(Q)$ and  Eve's conditional entropy can be lower-bounded by conditional min-entropy $H_{\min}(k|\text{Eve},\text{succ})=-\log_2(p_g(e=k|\text{succ}))$.  $p_g(e=k|\text{succ})$ is the maximal probability that Eve guess the bit $k$ correctly, and is the only quantity that needs to be upper bounded to give a lower bound on $r^{\text{RDI}}$, since the QBER $Q$ and $p(\text{succ})$ can be extracted from the observed statistics $p(b|x,y)$.  
The guessing probability $p_g(e=k|\text{succ})=\frac{p(e=k,\text{succ})}{p(\text{succ})}$ then can be upper bound  by an upper bound on $p(e=k,\text{succ})$. In \cite{Ioannou2022}, it is shown that by using SDP an upper bound on $p(e=x,\text{succ})$ can be obtained. 
\paragraph{SDP method for upper bounding $p(e=x,\text{succ})$}
Let us define the set $\{S_i\}_{i=0}^{s-1}$ where its elements are monomials of the operators $B_{b|y}$ and $E_{e|\mu}$ (Eve's measurements). The $ns\times ns$ moment matrix $\Gamma$ then can be defined as
\begin{equation}
    \Gamma=\sum_{i,j=0}^{n-1}\Gamma_{x x^{\prime}}\otimes \ket{e_x}\bra{e_x}
\end{equation}
with the sub-blocks $\Gamma_{xx^{\prime}}$ defined as $\Gamma_{xx^{\prime}}=\sum_{i,j=0}^{s-1}\otimes  \ket{\hat{e_j}}\bra{\hat{{e_j}}}$ where $\{\ket{e_x}\}_{x=0}^{n-1}$ ($\{\ket{\hat{e_i}}\}_{i=0}^{s-1}$) is an orthonormal basis on $\mathbb{R}^n$ ($\mathbb{R}^s$). If we define $\Gamma_{xx^{\prime}}^{ST}:=\bra{\psi_x}S^{\dagger}T\ket{\psi_x^{\prime}}$ ($S,T\in{\mathbb{S}}$), then the SDP upper bounding $p(e=x,\text{succ})$ is given by  
\begin{eqnarray*}
    \max_{\Gamma} \frac{1}{(n-1)n^2}\sum_{r=0}^{\scriptsize{\begin{pmatrix} n \\ 2 \end{pmatrix}}}\sum_{k=0}^{1}\sum_{y=0}^{n-1}\Gamma_{r_k r_k}^{B_{0|y}{E_{r_k|r}}}(\delta_{y,r_0}+\delta_{y,r_1}),
\end{eqnarray*}
such that 
\begin{eqnarray}
    \Gamma_{xx^{\prime}}^{\mathbb{I} \mathbb{I}}&=&\bra{\psi_x}\psi_{x^{\prime}}\rangle=\gamma_{xx^{\prime}} \forall x,x^{\prime}, \\
    \Gamma_{xx}^{\mathbb{I} B_{b|y}}&=&p(b|x,y), \forall b,x,y \nonumber \\
    \mathrm{tr} (\Gamma_{xx^{\prime}}F_k)&=&f_k,  k=0,\cdots, \nonumber \\
    \Gamma &\geq& 0 \nonumber
\end{eqnarray} 
The first condition is the overlap constraint between the sets of states. The second equation ensures that the moment matrix $\Gamma$ is compatible with the observed correlation $p(b|x,y)$. The matrices $F_k$ and the coefficients $f_k$ are Hermitian and complex, respectively, and are defined to satisfy the constraints on the measurement operators for Bob and Eve. These constraints include positivity, completeness, commutativity $[M_{b|y}, E_{e|y}] = 0$, and the requirement that both $M_{b|y}$ and $E_{e|y}$ are projectors. 
\paragraph{Case study: Ideal qubit protocol} As an example \cite{Ioannou2022}, consider the case where Alice prepares states from a set of \( n \) single-qubit states \( \{\ket{\psi_x}\}_{x=0}^{n-1} \), where
\[
    \ket{\psi_x} = \cos\left(\frac{\theta}{2}\right)\ket{0} + e^{i\frac{\pi x}{n}} \sin\left(\frac{\theta}{2}\right)\ket{1},
\]
for a given \(\theta\). In the presence of loss and noise, the Gram matrix and the probability distribution for this set of states are given by:
\[
    G_{ij} = \cos^2\left(\frac{\theta}{2}\right) + e^{i\frac{2\pi (i-j)}{n}} \sin^2\left(\frac{\theta}{2}\right),
\]
and
\[
    p(b=0|x,y) = \zeta \left(\frac{\lambda}{2} + (1 - \lambda) \sin^2(\theta)\sin^2\left(\frac{\pi (x - y)}{n}\right)\right),
\]
where \(\lambda \in [0,1]\) is the noise parameter, modeled as a depolarizing channel, and \(\zeta \in (0,1]\) represents the loss, modeled by a binary erasure channel with erasure probability \(1 - \zeta\).
\\
By optimizing over $\theta$, for different QBERs ($Q$) and values of $n$, the raw key rate as a function of the transmission $\zeta$ can be derived. In their calculations, Ioannou et al. \cite{Ioannou2022} demonstrated that the lower bound of the key rate asymptotically approaches zero as $\zeta \rightarrow \frac{1}{n}$. This is considered optimal because at $\zeta \rightarrow \frac{1}{n}$, Eve can compromise security by intercepting the states sent by Alice and manipulating Bob’s detector based on her outcome and Bob’s input. Therefore, for any prepare-and-measure protocol, the key rate becomes zero for $\zeta \leq \frac{1}{n}$. Furthermore, the proposed protocol surpasses the B92 protocol (a specific case of the proposed protocol with $n=2$ and fixed $\theta=\frac{\pi}{4}$) in terms of both transmission efficiency and noise tolerance. Similarly, BB84 is also outperformed by a qubit-based RDI protocol using three states.

\subsection{One-sided device-independent quantum key distribution}
\label{sec:1SDIQKD}
\subsubsection{Standard 1SDI-QKD}

Another approach to relaxing technical requirements of DI-QKD systems is to consider an asymmetric scenario, known as one-sided DI-QKD (1SDI-QKD), in which one party trusts their device and the other does not.  Note that this situation might describe QKD between a user who is technologically sophisticated enough characterize and trust their equipment (such as a bank or government agency), with a client with untrusted devices. 
\par
The protocol first introduced by Branciard et al.  \cite{branciard2012one} is as follows: Alice and Bob each receive part of an entangled photon pair. Alice has two binary measurement settings $A_1$ and $A_2$. Since she does not trust her measurement apparatus, it is treated as as a block box with a single bit input to choose between settings. On the other side, Bob has two fully trustful projective measurements $B_1$ and $B_2$ in some qubit subspaces. Alice and Bob might not always detect their photons due to channel losses or inefficient detectors. Alice and Bob will to try to extract a secret key from the outcomes of $A_1$ and $B_1$ and can estimate the information of a possible Adversary (Eve) using the results of $A_2$ and $B_2$. 
Since Bob fully trusts his measurement device, he can safely discard the events where he does not detect a photon, since Eve cannot gain any information from these.  Alice, on the other hand, cannot, and must include no-click events in her analysis. 
For the security proof and key rate in this protocol, let us denote by $\boldsymbol{A_i}=(\boldsymbol{A_i}^{\textrm{ps}},\boldsymbol{A_i}^{\textrm{dis}})$ and $\boldsymbol{B_i}=(\boldsymbol{B_i}^{\textrm{ps}},\boldsymbol{B_i}^{\textrm{dis}})$ the strings of classical bits of Alice and Bob get from the recording of their measurements results. Here $\boldsymbol{A_i}^{\textrm{ps}}$ and $\boldsymbol{B_i}^{\textrm{ps}}$ applied for actual detection (ps for post selection) and $\boldsymbol{A_i}^{\textrm{dis}}$ and $\boldsymbol{B_i}^{\textrm{dis}}$ are for no detection and they will be discarded for the key extraction. Then from $n$-bit strings of $\boldsymbol{A_1}^{\textrm{ps}}$ and $\boldsymbol{B_1}^{\textrm{ps}}$ on which Eve can have some information, Alice and Bob can extract a secret key of length $l$ \cite{renes2012one} where  
\begin{equation}
    l\approx H_{\min}^{\epsilon}(\boldsymbol{B_1}|E)-nh(Q_{1}^{\textrm{ps}}),
\end{equation}
here  $Q_{1}^{\textrm{ps}}$ is the bit error rate between $\boldsymbol{A_1}^{\textrm{ps}}$ and $\boldsymbol{B_1}^{\textrm{ps}}$. By bounding $H_{\min}^{\epsilon}(\boldsymbol{B_1}|E)$ using the chain rule and the data-processing inequality for smooth min-entropy, the following bound on the secret key rate can be obtained 
\begin{equation}
    r\geq \eta_A[1-h(Q_{1}^{\textrm{ps}})]-h(Q_2)-(1-q),
\end{equation}
where $q$ is a measure of how distinct Bob’s two measurements are. $Q_2$ is the bit error rate between $A_2$ and $B_2$. 
\par
In analogy to the connection between DI-QKD and the violation of Bell inequalities, here the security of this one-sided DI-QKD is related to the demonstration of quantum steering. That is, $\eta_A[1-h(Q_{1}^{\textrm{ps}})]-h(Q_2)-(1-q)\leq 0$ can be understood as an EPR-steering inequality \cite{Wiseman07,Jones07,Schneeloch13}. Because closing the detection loophole in a steering experiment is  easier than in a Bell test, 1SDI-QKD is more feasible to realize experimentally. For example, consider a typical experimental setup, where a
source sends maximally entangled two-qubit states to Alice and Bob through a depolarizing channel with visibility $v$, with measurement settings $A_1=B_1=\sigma_z$ and $A_2=B_2=\sigma_x$.  Then, for perfect visibility $v=1$,
 a positive secret key can be obtained when Alice's detection efficiency  $\eta_A>65.9\%$, which is much lower compared to those in DI-QKD. As a comparison, for the cases where Alice and Bob have the same detection efficiency, to close the detection loophole in DI-QKD requires $\eta>94.6\%$ are needed ( $\eta>91.1\%$ for post-selected data).
 \par
As is the case for most conventional proofs, the security for the above 1SDI-QKD protocol provided for the asymptotic limit of infinitely long keys. In practical implementations, the number of signals used for establishing a secure key is finite. For the case of 1SDI-QKD, finite key analysis was addressed by Wang et.al \cite{wang2013finite} based on the asymptotic of 1SDI-QKD presented above. They present the secure key rate of 1SDI-QKD with finite resources by employing the smooth min-entropy and smooth max-entropy\cite{renner2008security,scarani2008quantum}:
\begin{equation}
    l\approx H_{\min}^{\epsilon^{\prime}}(Y_1^{\textrm{ps}}|E)- H_{\max}^{\epsilon^{\prime}}(Y_1^{\textrm{ps}}|X_1^{\textrm{ps}}).
\end{equation}
Using the uncertainty relation for smooth entropies \cite{tomamichel2011uncertainty} and the upper bound for smooth-max entropy \cite{tomamichel2012tight}, the following bound for the key rate will be obtained  
\begin{equation}
    r\geq \eta_A P_{Z}^2 [1-h(Q_{1}^{\textrm{ps}})]- P_{Z}^2[1-q+h(Q_2+\mu)]-\frac{1}{N}\log_2\frac{2}{\epsilon_{\textrm{cor}}},
\end{equation}
where $P_Z$ is the probability that Alice (and also Bob) chooses the measurement in $Z$ basis and $\mu=\sqrt{\frac{n+k}{nk}.\frac{k+1}{k}.\ln\frac{2}{\epsilon_{\textrm{sec}}}}$, with $n$ and $k$ being the length of the raw key and the length of the bit string used for parameter estimation respectively. $\epsilon_{\textrm{cor}}$ is the security parameter bounding the possibility that Alice and Bob have different outputs. 
\par
For comparison purposes the simulation results were done in \cite{wang2013finite} and show that the sifted key rate is consistently lower than that predicted by the asymptotic case, particularly when considering finite-key analysis. Furthermore, the outcomes reveal that the relative difference between the asymptotic and non-asymptotic cases ($\delta=\frac{r_{\infty}-r_{N}}{r_{\infty}}$) gradually diminishes as the detection efficiency $\eta_A$ increases.
Notably, the investigation also pinpoints the minimum number of exchanged quantum signals required for achieving efficient detection efficiencies. The results illustrate the potential for a non-zero final secret key rate, approaching $9\times10^6$, specifically when $\eta_A$ reaches 0.67. This underscores the viability of attaining substantial secret key rates even in scenarios involving moderate detection efficiency.  \par
The protocols mentioned above are the QKD schemes that encode a discrete variable (DV) key in a two-dimensional space, typically encoded into a pair of entangled photons. Considerable attention has also been devoted to schemes that instead utilize the quadratures of the optical field, in which one has access to deterministic, high-efficiency broadband source and detectors. In this case, the secret key is now a continuous variable (CV) that is encoded in states living in an infinite-dimensional Hilbert space. This kind of protocol has some advantages over the discrete variable counterpart. The very important ones are that in the CV case, detection-loophole-free tests have been experimentally feasible for over 30 years \cite{ou1992realization} and very strong violations of steering inequalities have been demonstrated. These benefits provide enough motivation for studying the possible one-sided device-independent CV QKD (1SDI-CV-QKD). This was done by Walk et.al \cite{walk2016experimental} where they studied Gaussian CV-QKD protocols from the perspective of 1SDI-QKD against collective attacks, and showed that 1SDI-CV-QKD is possible even with coherent states. The existence of non-zero key rates was connected to the steering parameters for Gaussian states. An experimental implementation achieved positive secret keys under a lossy channel for both entanglement based and coherent state protocols. A version of a 1SDI-CV-QKD protocol that generates a finite and composable key and is secured against coherent attacks was reported by Gehring et.al \cite{gehring2015implementation}. The experiment used two continuous wave optical light fields whose amplitude and phase quadrature amplitude modulations were mutually entangled, and CV equivalent of the BBM92 protocol for discrete variables was implemented. This scheme  is  secure against memory-free attacks performed on Bob’s untrusted detector, that is,
attacks that are independent of Bob’s previous measurement, and secure against Trojan-horse attacks on the source that usually threaten electro-optical modulators commonly used in Gaussian-modulation QKD protocols. A hybrid scheme where Alice uses a Gaussian-modulated coherent state while Bob uses a two-mode squeezed state was studied in \cite{xin2020one}.
\par
A 1SDI-QKD protocol using high-dimensional time-energy entanglement was proposed in Ref. \cite{bao2017time}. The security of this scheme was established by applying the entropic uncertainty relation introduced in \cite{niu2016finite} against coherent attacks. Their numerical results demonstrate that the protocol achieves higher bit rates per two-photon coincidence count while requiring lower detection efficiencies compared to the original 1SDI-QKD protocol (achieving a key rate with $\eta_A=50\%$). This improvement stems from the limitation imposed by photon information efficiency in the original 1SDI-QKD protocol, which restricted the key generation rate to no more than 1 bit per coincidence. Encoding information in high-dimensional photonic degrees of freedom proves to be an efficient approach for overcoming this limitation.
\subsubsection{Generalized 1SDI-QKD and Quantum Secret Sharing}
In 2017, Kogias et.al. \cite{kogias2017unconditional} tackled the problem by considering the protocol as a generalized 1SDI-QKD problem for a continuous-variable version of QSS \cite{hillery1999quantum}. They started with the simplest case involving three parties, Alice, Bob, and Charlie. Alice is fully trusted and shares the secret using a three-mode continuous-variable entangled-state. She keeps one mode and sends the other modes to the untrusted players, Bob and Charlie, through individual unknown quantum channels. In this way, the protocol can be seen as a generalized 1SDI-QKD protocol from Alice (trusted part) to Bob and Charlie as untrusted players. Alice is assumed to perform two homodyne measurements of two canonically conjugate quadratures $\hat{x}$ and $\hat{p}$, and her goal is to establish a unique secret
key, not with Bob’s or Charlie’s individual measurements (as in standard two-party QKD), but with a collective (nonlocal) degree of freedom for Bob and Charlie that strongly correlates with one of Alice’s quadratures ($X_A$ for example). Alice sends a sufficient number of states to the players and in each run, they randomly choose measurements and measure their parts, and collect the outcomes $X_i$ and $P_i$. In the next step, all parties announce their measurements and keep the data originating from correlated measurements, using it for extracting a secret key. 
\par
The final bound on the asymptotic key rate to provide unconditional security against general attacks of an eavesdropper, and against arbitrary (individual) cheating methods of both Bob and Charlie, which include the announcement of faked measurements and general attacks of Bob on Charlie’s system and of Charlie on Bob’s system, can be written as 
\begin{equation}
\label{qsskeyrate}
    r\geq -\log(e\sqrt{V_{X_A|\bar{X}}\max\{V_{P_A|P_C},V_{P_A|P_B}\}}), 
\end{equation}
where 
\begin{equation}
    V_{X_A|\bar{X}}=\int d{\bar{X}}p(\bar{X})(\langle X_{A}^2\rangle_{\bar{X}}-\langle X_{A}\rangle_{\bar{X}}^2),
\end{equation}
and $\bar{X}$ is Bob and Charlie's collective degree of freedom that strongly correlated with Alice's quadrature $X_A$. \par
While the resource behind the standard 1SDI-QKD is known to be (bipartite) steering, one could suspect a similar connection with the case of multi-player QSS, which is indeed the case for the case of Gaussian measurements \cite{xiang2017multipartite}. For a generic Gaussian $(n+m)$-mode state $\rho_{AB}$ of a bipartite system, composed of a subsystem $A$ (for Alice) of $n$ modes and a subsystem $B$ (for Bob) of $m$ modes, one can define a steering measure as \cite{kogias2015quantification} 
\begin{equation}
    \mathcal{G}^{A\rightarrow B}(\sigma_{AB})=\max\left\{0,\frac{1}{2}\ln{\frac{\det A}{\det \sigma_{AB}}}\right\}=\max \{0,S(A)-S(\sigma_{AB})\},
\end{equation}
where $\sigma_{AB}=\begin{bmatrix}
    A & C \\
    C^{\textrm{T}} & B \\
\end{bmatrix}$ is the covariant matrix of the state $\rho_{AB}$ \footnote{Any Gaussian state $\rho_{AB}$ is fully specified, up to local displacements, by its covariance matrix $\sigma_{AB}$ with the elements $\sigma_{ij}=\mathrm{Tr}[\{\hat{R_i},\hat{R_j}\}\rho_{AB}]$ and $\hat{R}=(\hat{x}_{1}^A,\hat{p}_{1}^A,\cdots,\hat{x}_{n}^A,\hat{p}_{n}^A,\hat{x}_{1}^B,\hat{p}_{1}^B,\cdots,\hat{x}_{m}^B,\hat{p}_{m}^B)^{\textrm{T}}$}. 
This measure has an operational meaning in 1SDI-QKD. For a two-mode entangled Gaussian state with covariance matrix $\sigma_{AB}$, the key rate  can be readily expressed in terms of the $B\rightarrow A$ Gaussian steerability of $\sigma_{AB}$ \cite{walk2016experimental}, yielding 
\begin{equation}
    r\geq \max \{0,\mathcal{G}^{B\rightarrow A}+\ln{2}-1\}.
\end{equation}
\par
The Gaussian steering measure $\mathcal{G}$ is monogamous and then satisfies a Coffman-Kundu-Wootters type monogamy inequality \cite{xiang2017multipartite},  in direct analogy with entanglement \cite{coffman2000distributed}. For an $m$-mode Gaussian state with covariance matrix $\sigma_{A_1,\cdots,A_m}$, the following inequalities hold for each party $A_j$ composed of a single mode ($n_j=1, 1\leq j\leq m$): 
\begin{equation}
\begin{array}{cc}
     \mathcal{G}^{(A_1,\cdots,A_{k-1},A_{k+1},...,A_m)\rightarrow A_k}(\sigma_{A_1,\cdots,A_m})&-\sum_{j\neq k}\mathcal{G}^{A_j\rightarrow A_k}(\sigma_{A_1,\cdots,A_m})\geq 0,  \\ 
     \mathcal{G}^{A_k\rightarrow (A_1,\cdots,A_{k-1},A_{k+1},...,A_m)}(\sigma_{A_1,\cdots,A_m})&-\sum_{j\neq k}\mathcal{G}^{A_k\rightarrow A_j}(\sigma_{A_1,\cdots,A_m})\geq 0.
\end{array}
\end{equation}
 For the tripartite case, this becomes 
\begin{eqnarray}
\label{monogamy}
    \mathcal{G}^{(AB)\rightarrow C}(\sigma_{ABC})- \mathcal{G}^{A\rightarrow C}(\sigma_{ABC})-\mathcal{G}^{B\rightarrow C}(\sigma_{ABC})\geq 0, \\
    \mathcal{G}^{C\rightarrow (AB)}(\sigma_{ABC})- \mathcal{G}^{C\rightarrow A}(\sigma_{ABC})-\mathcal{G}^{C\rightarrow B}(\sigma_{ABC})\geq 0. \nonumber
\end{eqnarray}
In analogy with the case of entanglement, residual Gaussian steering (RGS) can be defined by calculating the residuals from \eqref{monogamy} and minimization over all mode permutations. Therefore, in the case of a pure three-mode Gaussian state, the RGS can be defined as 
\begin{equation}
    \mathcal{G}^{A:B:C}(\sigma_{ABC}^{\textrm{pure}})=\min_{\langle i,j,k \rangle} \{\mathcal{G}^{(jk)\rightarrow i}-\mathcal{G}^{j\rightarrow i}-\mathcal{G}^{k\rightarrow i}\}.
    \label{eq:RGS}
\end{equation}
This quantity is a monotone under Gaussian local operations and classical communication, such that a nonzero value of the RGS certifies genuine tripartite steering \cite{he2013genuine}. Therefore, it can be regarded as a meaningful quantitative indicator of genuine tripartite steering for pure three-mode Gaussian states under Gaussian measurements. Returning to the key rate of the QSS protocol \eqref{qsskeyrate}, the mode-invariant QSS key rate bound $K_{\textrm{full}}^{A:B:C}$ 
 that takes into account eavesdropping and potential dishonesty of the players can be obtained by minimizing the right-hand side
of \eqref{qsskeyrate}  over permutations of $A$,$B$,and $C$. It was found that it admits the exact linear upper and lower bounds as a function of the RGS \eqref{eq:RGS}:
\begin{eqnarray}
\frac{\mathcal{G}^{A:B:C}(\sigma_{ABC}^{\textrm{pure}})}{2}-\ln{\frac{e}{2}}\leq K_{\textrm{full}}^{A:B:C}(\sigma_{ABC}^{\textrm{pure}}) \leq  \mathcal{G}^{A:B:C}(\sigma_{ABC}^{\textrm{pure}}) -\ln{\frac{e}{2}}.
\end{eqnarray}
Thus, partial DI QSS yields a direct operational interpretation for the RGS in terms of the guaranteed key rate of the protocol. \par
 EPR steering is a necessary requirement for non-zero key rates in all of the protocols mentioned above. Therefore, it is essential to have a procedure for generating EPR steering between two or more distant parties. Xiang et al. \cite{xiang2019multipartite} designed a protocol that allows the distribution of one-way Gaussian steering. This can be subsequently employed for 1SDI-QKD and also for three-user scenarios to distribute richer steerability properties, including one-to-multimode steering and collective steering, which can be utilized for 1SDI quantum secret sharing. Since all of their protocols can be implemented with squeezed states, beam splitters, and displacements, they can be readily realized experimentally. A related experiment was done by Wang et.al. \cite{wang2020deterministic} which experimentally demonstrate the deterministic distribution of Gaussian entanglement and steering with separable ancillary states both in two-user and multi-user scenarios by preparing independent squeezed states and applying classical displacements on them, which makes initial states fully separable. In a later development in 2023, Lv et al. \cite{lv2023sharing} demonstrated that a 2-qubit entangled state can consistently produce steering through sequential and independent pairs of observers, given that the initial pair shares either a pure entangled state or a specific category of mixed entangled states.

\subsubsection{Experiments: One-sided Device Independent QKD}
1SDI-QKD is rigorously based upon the loophole-free observation of EPR-steering (also known as quantum steering) \cite{Wiseman07,Jones07}.  As EPR-steering is below Bell nonlocality in the hierarchy of correlations, 1SDI-QKD provides a security paradigm that is less robust than that of full-DI QKD. However, it is much easier to close the detection loophole for EPR-steering than in Bell non-locality.  As such 1SDI-QKD can be realized with much lower detection efficiencies. In fact, EPR-steering can be observed for arbitrarily large losses in the DV context, provided that a sufficiently large number of measurements can be realized on a bipartite state with sufficient entanglement  \cite{Bennet12,Evans13}.  1SDI-QKD has also been implemented in continuous variable systems, with the advantage that Gaussian states and measurements can be used \cite{Gehring2015,Walk2016}.      
\blk 

\section{Towards future a DI-QKD network:  Requirements, Challenges and Solutions}\label{sec:experiments}
DI-QKD has the appeal that it can help resolve security risks associated to implementation issues, as it aims to provide information theoretic security with minimum physical assumptions and uncharacterized hardware, thus reducing or eliminating many of the side-channels and security concerns in real-world deployment. However, DI-QKD requires satisfactory demonstration of Bell non-locality over long distances, and as such introduces demanding technical requirements, in particular related to the distance limitations (signal loss requires advanced quantum technology such as quantum repeaters) and high efficiency (high-quality detectors, sources, devices) required to achieve reasonable key rates.        

In this section we discuss the current outlook towards real-world implementation of DI-QKD, presenting the first set of DI-QKD experiments and additional promising experimental platforms, the current technical challenges, and possible solutions. In Section \ref{sec:integration}, we discuss the efforts towards real-world deployment of QKD in general, including efforts towards standardization, interoperability, and integration into cybersecurity and network architecture, since future implementation of DI-QKD will most likely benefit from most of this groundwork.  When possible, we mention MDI-QKD benchmarks, and  highlight specific or unique challenges that DI-QKD will likely encounter on the road to real-world deployment. 
\subsection{The first DI-QKD Experiments}
In July 2022, three independent research groups (one based in the UK~\cite{Nadlinger2022}, one in Germany~\cite{Zhang2022a}, and one in China~\cite{Liu2022})
successfully implemented DI-QKD in three different platforms.
The first successful photonic-based experiment~\cite{Liu2022} demonstrated DI-QKD (against collective attack) over fiber lengths of $d=20,100$ and $220$ m and verified that the measured correlations between the entangled photons were strong enough to guarantee a positive secret key rate. Optimized high-efficiency entangled photon source and single-photon detectors resulted in a single photon heralding efficiency slightly greater than $87\%$. To address the locality loophole, the experiment employs a shielding assumption that prohibits unnecessary communications between untrusted devices and a potential adversary. Essentially, it ensures that the information about the input choices and output results of one party remains unknown to the other party and Eve.
\par
Proof-of-principle memory-based experiments reported in Oxford~\cite{Nadlinger2022} and Munich~\cite{Zhang2022a} used entangled strontium ions and entangled rubidium atoms, respectively. Each system has its advantages and disadvantages. However, when compared to all-photonic implementations, these experiments have the advantage that they can be ``event-ready", so that link efficiencies can be partially mitigated. 
The Oxford-based experiment was the only one of the three to complete the entire DI-QKD protocol, with a generated secret key of 95 kbits between Alice and Bob achieved over about 8 hours and with a distance of $d=2$ m between the two quantum memories. The Munich-based experiment created heralded entanglement between atoms separated by a line of sight distance of $d=400$m ($700$m of optical fiber). An asymptotic security analysis was performed, and a secret key rate of 0.07 bits per entanglement generation event was obtained from results accumulated over 75 hours.
\par
These three landmark experiments demonstrate that experimental DI-QKD is feasible.  In addition, they highlight the  technical challenges to be face in achieving real-world utility.  In the next two sections we discuss some of the fundamental requirements and technical challenges to be faced for the next generation of DI-QKD experiments, and present some of the techniques and proposals meant to address them.

\subsection{Bell loopholes in the DI-QKD scenario}
Chapter 2  discussed several of the loopholes in conclusive Bell tests.  Most of these loopholes are also present in DI-QKD, with several caveats. 
The locality loophole, that is, having space-like separation between measurement processes, as discussed in Section~\ref{sec:locloop}, is necessary to rule out LHV theories. However, to guarantee  security in DI-QKD we must ensure that the users stations do not leak any information to an adversary Eve, even at sub-luminal velocity. From Eve's point of view it is most likely much easier to install a backdoor that sends information from a user's devices to her station, rather than make them communicate with each other to fake a Bell inequality violation.

Thus, DI-QKD calls for complete isolation of the measurement stations, involving shielding, electromagnetic or otherwise, to avoid broadcasting of any type of signal related to measurement basis choices and outcomes.  In addition, whether the quantum systems are photonic or stationary, the users stations are connected by a photonic channel, which in principle opens a backdoor for side-channel attacks using external light sources, as has been exploited for fake Bell violations~\cite{Gerhardt2011}, in QKD~\cite{Qi2005,Lamas-Linares:07,zhao08,lydersen10,wiechers11,qin18} and QRNGs~\cite{garcia20,smith21}. To isolate the users stations, a switch or shutter mechanism should be used to block the optical channel after the relevant optical signal has passed, and before the measurements are performed. In a recent DI-QKD implementation with trapped ions, this was achieved by shifting the ions out of the focal point of the collection lens, thus decoupling them from the optical link, and also scrambling the quantum state after measurement, so that the state after measurement (and thus the measurement result) could not be determined by a third party probing the ions~\cite{Nadlinger2022}. 

Another difference between Bell and DI-QKD scenario lies in the memory loophole (see Section~\ref{sec:memoryloopholeDIQKD}.). The ability of the devices to remember the inputs and outcomes of the previous rounds to be used in the future has been proved to be of very little consequence for Bell inequalities already in Ref.~\cite{gill03}. For DI-QKD, on the other hand, memory attacks pose a very serious threat~\cite{barrett2013memory,Masanes2014}. While some countermeasures against them are possible, there is no known method of full protection. 

The experimental loophole which is of crucial importance for both Bell and DI-QKD scenarios is the detection efficiency loophole, which currently is the main problem in experimental realizations and implementations, as we address in the next section.

\subsection{Detection efficiency and channel losses}
\label{sec:losses}
The issues of detection efficiency and channel losses are intimately related in determining the performance characteristics of a DI-QKD link. In both all-photonic setups and those with stationary qubits, the efficiency in which photons can be detected at a distant measurement station is a critical metric in determining the overall performance characteristics of the system.   

The overall detection efficiency of a photon can be expressed as $\eta=\eta_c \eta_\ell \eta_m \eta_d$, where $\eta_c$ is the coupling efficiency  from the source to the optical link (ex: optical fiber), $\eta_\ell$ is the transmission efficiency of the optical link, $\eta_m$ is the efficiency of the measurement device, and  $\eta_d$ is the quantum efficiency of the detector.  

The efficiencies $\eta_c$, $\eta_m$ and $\eta_d$ depend upon specific characteristics of the source, the optical components of the measurement device and detectors. For example, coupling or ``collection" losses can range from near zero  to several tens of dB in the case of coupling from an optical source into an optical fiber  \footnote{Here we give efficiency in terms of probability, and losses in terms of dB.  For loss $L$, one has $\eta=10^{-L/10}$.}. While bulk optical components such as (polarizing or non-polarizing) beam splitters can present very low losses $\leq 0.5\%$ ($\sim$ 0.02dB), fiber-based components can have losses up to a few dB.
State-of-the-art commercial superconducting single-photon detectors typically have $\eta_d \leq 0.85$, but efficiencies reaching over 0.95 have been reported \cite{Reddy19,Chang2021}. These efficiencies do not typically depend upon the propagation distance within the optical channel.  

The link efficiency $\eta_\ell$, on the other hand, does depend upon the propagation distance, decaying exponentially with the length $\ell$ of the channel \cite{OFLAA,Pirandola2017,Takeoka2014}. In particular, $\eta_\ell=10^{-\gamma \ell/10}$, where $\gamma$ is the attenuation coefficient in dB/distance. Losses in an optical fiber link include contributions from attenuation (typically value $\sim$ 0.2dB/km in the telecom band) that accumulate over distance, as well as from fiber splices connecting different sections of fiber. Mechanical splices using barrel connectors typically have losses greater than 0.5dB/connection, while fusion splicing can give losses less than 0.01dB/splice in standard single-mode fiber, showing the necessity of dedicated high-quality optical fiber links for DI-QKD. Moreover, networking hardware, such as optical switches, can also present losses of several dB.  

Achieving the critical detection efficiency required for DI-QKD (typically $\eta > 80\%$, see Section \ref{sec:DI-QKD}) in an all-photonic setup (such as a single SPDC source) presents significant technical challenges.
Even considering  $\eta_c=\eta_m=\eta_d=1$ and  that all link loss is due to attenuation, $\eta_\ell=0.8$ (or about 1dB loss) corresponds to  $\sim 4.85$km of propagation in an optical fiber.  Fortunately, ``event ready" setups can be used to overcome the probabilistic nature of most sources of entangled particles, as well as low collection efficiency and losses between source and the detection stations. These are schemes in which the presence of the entangled state at the respective detection sites is heralded by a separate detection event~\cite{bell80,zukowski93,simon03}. While there have been proposals and experiments involving all-photonic event-ready setups, a considerable advantage arises when employing stationary quantum systems such as ions, atoms, quantum dots and NV centers, since these can be measured with efficiency close to unity, making event-ready setups involving entangled stationary qubits one of the most promising path towards useful implementation of DI-QKD. We discuss event-ready sections in further detail in Section \ref{sec:eventready}.  
In addition, we note here that there has been theoretical progress in reducing the CDE for DI-QKD, by including pre- and post-processing, as discussed in Section \ref{sec:DI-QKD}. 
\subsection{High-quality entanglement sources}
\begin{table}[]
     \centering
     \begin{tabular}{|c|c|c|c|c|}
     \hline
        Setup type & CHSH value & QBER & Raw bit rate & Distance  \\
        \hline
        point-to-point SPDC & $\eta$, $\mathcal{E}$ & $\mathcal{E}$ &  $\eta$, $B$ & $\eta$  \\
        \hline
       Event ready & $T_2$,$\mathcal{E}$ &$T_2$,$\mathcal{E}$ & $\eta$, $B$ & $T_2$ \\
       \hline
     \end{tabular}
     \caption{Summary of relevant DI-QKD parameters and technical characteristics affecting them for both point-to-point photonic setups with SPDC sources, and event-ready setups using stationary qubits. Here $\eta$ is the overall photonic detection efficiency, $\mathcal{E}$ is the decoherence of photonic quantum systems in the optical link, $T_2$ is the decoherence time of stationary node qubits, and $B$ is the overall ``brightness" (entangled pairs created/sec) of the source. 
\label{tab:setupsum}}
\end{table}
High-quality sources of entangled quantum systems are a necessary resource for DI-QKD. Quality refers not only to robust violation of detection loophole-free Bell-inequalities, but also a high brightness $B$ (or repetition rate $R$), as these two characteristics have a direct effect on the key rates obtainable. In addition to the overall detection efficiency discussed in Section~\ref{sec:losses}, decoherence in the channel (such as depolarization, dephasing, etc) will also degrade the quality of entanglement.  
In the next two subsections, we describe the two principle entanglement sources used for DI-QKD. A summary of the merits of these sources for DI-QKD in terms of the relevant experimental parameters is given in Table~\ref{tab:setupsum}.
\subsubsection{Spontaneous Parametric Down-Conversion sources}
\begin{figure}
    \centering
    \includegraphics[width=0.6\linewidth,trim=0cm 2.5cm 0cm 0cm,clip]{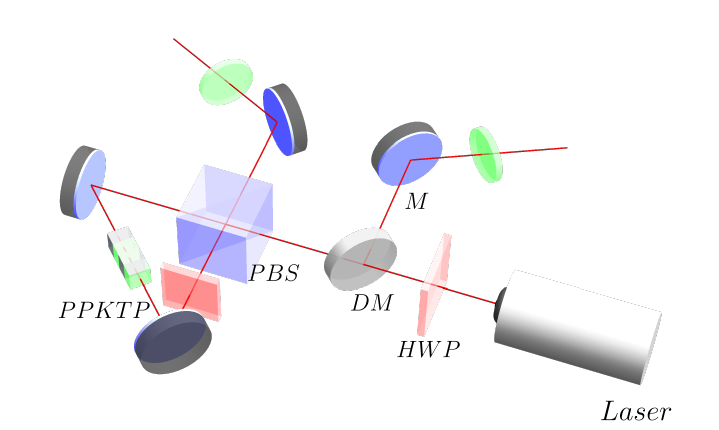}
    \caption{Sagnac source of entangled photons: A pump laser is directed through a half-wave plate (HWP) to control polarization, followed by a dichroic mirror (DM), which separates the pump beam by reflecting one wavelength while transmitting the other. The beam enters a polarizing beam splitter (PBS), splitting it into orthogonal polarization propagating paths. These paths pass through a periodically poled nonlinear crystal (PPKTP) inside the interferometer loop, generating photon pairs via spontaneous parametric down-conversion (SPDC). Mirrors (M) guide the beams, which recombine at the PBS. A HWP is placed in the reflecting path to adjust de polarization. The entangled photon pairs are separated using dichroic mirrors and sent to high pass filters (green) to be detected at single-photon detectors.}
    \label{fig:spdcsource}
\end{figure}
A major step in experimental Bell tests was the development of spontaneous parametric down-conversion sources (SPDC) as a source of entangled photon pairs in the 1990's~\cite{kwiat95,kwiat99}, which offered much higher count rates than the first generation of experiments based on atomic cascade~\cite{freedman72,aspect1982experimental}. The most efficient SPDC sources today are based on periodically poled nonlinear crystals in Sagnac interferometers, as shown in Fig.~\ref{fig:spdcsource}. An adequate choice of crystal length and optics produces highly pure entangled polarization states, reaching state fidelities over 99.5\%, where the transverse spatial mode of the photons is optimized for coupling into single-mode fibers~\cite{Gomez2019,Meraner2021,Liu2022}. Coupling efficiencies over 95\% have been achieved~\cite{Liu2022}. 
SPDC is a probabilistic source of photon pairs, and the state fidelities refer to the post-selected state obtained when two photons are registered. The full SPDC output is described by the two-mode squeezed vacuum state 
\begin{equation}
    |TMSV\rangle = \sum_{n=0}^\infty \frac{\bar{n}^{n/2}}{(1+\bar{n})^{(n+1)/2}}  |n\rangle_1 \otimes |n\rangle_2,
    \label{eq:TMSV}
\end{equation}
where $\bar{n}$ is mean photon number in each mode $1,2$. The ratio of multiple pair events to single pair events is $\bar{n}$, and the probability of a multi-pair event is $P_{n>1}=\bar{n}^2/(\bar{n}+1)^2$. Thus, there is a trade-off between the brightness achievable and the fidelity, since multiple-pair events become non-negligible at high pump beam intensity. Since photons from different pairs are uncorrelated, multiple pair events limit the quality of the two-photon state, especially for pulsed sources~\cite{cosme08PRA}.  Indeed, it has been shown that the absence of post-selection results in a limited violation of Bell inequalities~\cite{Tsujimoto2018}.  The full photon statistics of state \eqref{eq:TMSV}  have been considered in key rate analysis for DI-QKD \cite{Ho2020}, where it was shown that unity key rate is impossible, even with perfect detection efficiency.  In addition, the authors showed that, by adding artificial noise,   the critical detection efficiency can be lowered to $83.2\%$, even with the multi-photon events of SPDC sources.  
\par SPDC  sources have been used for point-to-point DI-QKD~\cite{Liu2022}, and can also be incorporated into event-ready setups using absorptive quantum memories (see next section). The main parameters affecting performance of SPDC sources for point-to-point DI-QKD are the detection efficiency $\eta$, the overall brightness $B$ (pairs emitted/time) and the quality of the entangled states reaching the measurement devices, which for simplicity we will describe in terms of a general decoherence channel $\mathcal{E}$, which may incorporate imperfections in the source as well as noise in the optical channel. The detection efficiency $\eta$ affects the obtainable bit rate of these sources, since not only do both photons need to be detected to establish a key bit, but also through the obtainable loophole-free CHSH violation (see Section \ref{sec:DI-QKD}). Decoherence $\mathcal{E}$ can affect both the QBER as well as the CHSH violation. In some cases, decoherence is due to random unitaries operations (such as phase fluctuations or polarization rotations), which can be monitored and corrected. The source brightness $B$ affects only the raw bit rate in principle, aside from multi-pair events that will result in uncorrelated events, increasing the QBER and lowering the CHSH value.        

\subsubsection{Heralded or Event Ready Setups}
\label{sec:eventready}

\begin{figure}
\centering
    \begin{subfigure}{0.3\textwidth}
    \resizebox{\textwidth}{!}{
    \includegraphics{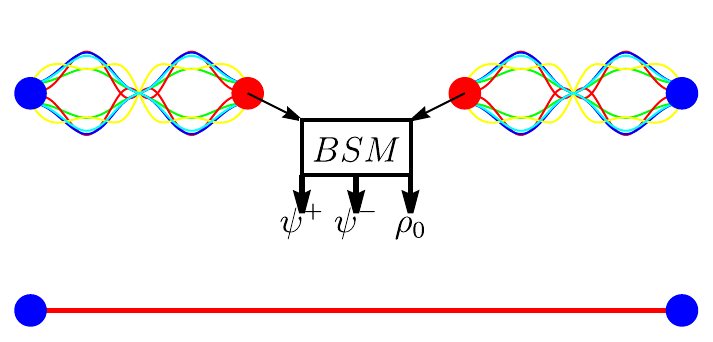}
    }
    \caption{\textit{Basic concept}}
    \label{fig:bsm1}
    \end{subfigure}
    \hfill
    \begin{subfigure}{0.33\textwidth}
    \resizebox{\textwidth}{!}{
    \includegraphics{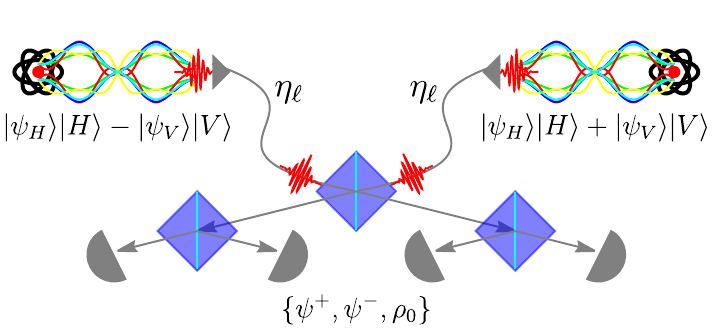}
    }
    \caption{\textit{Polarization mode}}
    \label{fig:bsm2}
    \end{subfigure}
    \hfill
    \begin{subfigure}{0.33\textwidth}
    \resizebox{\textwidth}{!}{
    \includegraphics{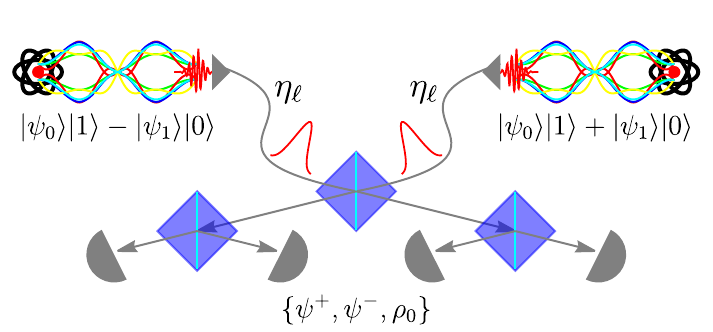}
    }
    \caption{\textit{Counting mode}}
    \label{fig:bsm3}
    \end{subfigure}
    \caption{\textit{Event--ready setups} -- a) Remote subsystems $A$ and $D$ of two entangled pairs are entangled via a Bell state measurement on their entangled partners ($B$ and $C$). \ref{fig:bsm2}) Event ready scheme with stationary qubits and two photon heralds. Stationary qubits are each entangled with the polarization state of a photon, which are are sent to a central station for the two-photon Bell-state measurement (BSM). Joint detection events at pairs of detectors signal preparation of an entangled state $\psi^\pm=\ket{\psi_H}_A\ket{\psi_H}_B\pm \ket{\psi_V}_A\ket{\psi_V}_B$. \ref{fig:bsm3}) Stationary qubits each emit a photon with probability $p$. The optical modes are coupled, so that detection of one and only one photon results in an entangled state of the stationary qubits. Here, $\psi^\pm=\ket{\psi_0}_A\ket{\psi_1}_B\pm \ket{\psi_1}_A\ket{\psi_0}_B$}
    \label{fig:eventready}
\end{figure}
Two recent DI-QKD experiments used entanglement sources that are ``event-ready" ~\cite{Nadlinger2022,Zhang2022a}.
These sources use entanglement swapping to herald the creation of a remote entangled pair \cite{bell80,zukowski93,simon03}, as illustrated in Fig.~\ref{fig:eventready} a).
 While entanglement swapping can be used to create all-photonic event-ready entanglement sources \cite{pan1998experimental,Takesue2009swapping, Jin2015swapping,Sun17, Samara2021swapping}, the near-unity detection efficiencies achievable with stationary quantum systems such as ions, atoms, quantum dots or NV centers, and the possibility to use them as quantum memory, make these systems most attractive for DI-QKD and quantum networks as a whole. For generality, let us refer to these stationary systems as ``nodes".  
 Through the application of external fields, the node qubit can be entangled with a flying qubit, in the form of emission of an optical pulse that can be coupled into an optical channel (fibers). When the pulses emitted from two nodes are combined at a beam splitter, a Bell state measurement (BSM) can be realized  resulting in an entangled state of the two nodes $A$ and $B$. A BSM with 50\% efficiency can be realized with linear optics (see section \ref{sec:OrigMDIQKD} and Fig. \ref{fig:mdi}), where classical communication from the BSM station to $A,B$ is required for heralding. Optical decoherence $\mathcal{E}$ in the link, such as phase or polarization fluctuations, can prohibit the creation of high-quality entanglement.  If these two nodes are already the end points of the quantum channel, they can be measured after successful node-photon entanglement, and the final results post-selected upon a successful BSM \cite{stolk2024b}. However, the critical advantage to the node systems is in storing entanglement in quantum memories to be used in repeater protocols \cite{azuma2023}.  Once created, the entangled state begins to deteriorate due to a number of possible decoherence processes. The quality of the memory can be quantified by the coherence time $T_2$, which is the time during which phase coherence of the quantum state can be maintained. When the goal is heralding an entangled pair of nodes, $T_2$ also determines the maximum separation distance $L$, since the coherence must be maintained long enough for the BSM station to communicate to the nodes, and subsequent measurements at $A$ and $B$ to occur. That is, $L << v T_2$, where $v$ is the velocity of light in the optical link.  
A coherence time of 10ms has been recently observed for Rubidium atoms, allowing for a link distance over 100km~\cite{Zhou2024Rb}.   

Thus, developing good quantum memories, resulting in increased $T_2$, is crucial for increasing separation distance between nodes, while maintaining the high CHSH violation and QBER required for DI-QKD. 
Entangled quantum memories will also play a crucial role in quantum repeaters, required for quantum networking and establishing long-distance entanglement (see Ref. \cite{azuma2023} for a comparison of decoherence times of candidate platforms and Ref.~\cite{Covey2023} for a review of quantum networks with neutral atoms).
 
Event-ready setups can be divided into two main categories: those that use two photons as heralds and those with a single-photon herald. A double-herald setup requires one photon from each node $A$ and $B$ to arrive at the BSM station (see Fig.~\ref{fig:bsm2}). The BSM relies on two-photon interference, which is inherently robust to phase instability~\cite{mattle96} (low optical decoherence), but is less efficient, as it  requires the emission, arrival and detection of two photons. Thus, if the overall efficiency to detect a single photon at the BSM is $\eta$, the double-herald setup has efficiency $\sim \eta^2$.   
Event-ready entangled states have been generated over tens of km of optical fibers with atomic ensembles~\cite{yu2020} and single atoms~\cite{vanleent2022}. 

To improve efficiency, single-photon event-ready setups can be employed. These again require entanglement between the node system and a photonic system, however in this case the state is of the form $\ket{\psi_0}\ket{1}\pm \ket{\psi_1}\ket{0}$, where $\ket{\psi_j}$ are states of the node system, and $\ket{n}$ are $n$-photon Fock states. This scheme has the advantage in that the relevant events are those where only one of the nodes emits a photon. When the two optical channels are combined at a beam splitter, so that one cannot determine which node emitted the photon, a detection in either output results in an entangled state at A and B (see Fig.~\ref{fig:bsm3}).  Using single-photon events increases the event rate (efficiency $\sim \eta$), but requires optical phase stability to attain high-quality entanglement~\cite{stockill17, humphreys18,pompili21,hermans2023,stolk2024}.

Event-ready schemes are probabilistic, and for heralding and storing entanglement, the generation attempts can only be retried after a time interval that permits two-way communication between the devices and the BSM setup, creating a balance between distance, node decoherence and generation rate. If the distance is too large, then decoherence sets in at the node qubits before successful heralding can be confirmed. Several important figures of merit can be considered. In addition to the fidelity of the state produced, there is the entanglement generation rate, given by $r_{ent}=p_{ent}\times R$, where $p_{ent}$ is the probability that entanglement attempt is successful, and $R$ is the attempt (repetition) rate of the protocol.   The probability $p_{ent}$ is determined by the emission probability of the emitter(s),  optical transmission and collection losses, and losses due to the probabilistic nature of the BSM.  The ratio between $r_{ent}$ and the decoherence rate $r_{dec}$ is known as the ``link efficiency" $\eta_{link}=r_{ent}/r_{dec}$ \cite{humphreys18}.  When $\eta_{link}$ is on the order of unity or better, then entanglement can be created faster than it is destroyed, which can be used to create entangled nodes deterministically. This was achieved in Ref.~\cite{humphreys18} by delivering a separable state in the cases when  entanglement generation fails. More specifically, entanglement creation was attempted repeatedly, and when entanglement was heralded successfully, it was stored for use at the end of the time window. If all attempts failed, then a separable state was delivered at the end of the time window. This results in a state of the form $\rho_{det}=p_{success} \rho_{success}+p_{fail} \rho_{fail}$. If $\rho_{det}$ has a fidelity greater than 1/2 with a maximally entangled state (for $2 \times 2$ systems), then the system delivers entanglement deterministically at the given time intervals.    

Event-ready schemes can be used to eliminate the importance of the optical link inefficiency in DI-QKD, but the entanglement generation rate diminishes exponentially with distance due to the attenuation during light propagation to the BSM. Moreover, most quantum memories emit photons in the visible or near-IR regime, where attenuation can be 1-2 orders of magnitude larger than the telecom bands. A solution to this problem is through  quantum frequency  conversion  to the telecom window using difference-frequency conversion with an intense pump pulse~\cite{mann2023,geus2024}, which has been used to establish entanglement across distances of tens of kilometers for spin systems~\cite{stolk2024b} and 
atomic ensembles~\cite{yu2020,vanleent2022,Luo2022}.

Nodes consisting of absorptive quantum memories~\cite{Lago-Rivera2021, Liu2021afc}, can be entangled in a similar scheme using entangled photons from SPDC. Here each node consists of a memory that is coupled to an SPDC source, so that a single photon can be absorbed, while the other photon is sent to the BSM station. Similar to the single-photon event-ready schemes, if one cannot determine which node produced the photon detected at the BSM, the result is a pair of entangled nodes. In comparison to single-qubit nodes discussed above, here the non-degenerate frequencies of the SPDC photons can be tuned such that one photon is produced at the required frequency for level transition in the memory, and the other at telecom wavelengths for optimal transmission in the optical link to the BSM. 

Finally, we note that several direct entanglement generation schemes of stationary nodes have been demonstrated. In Ref.~\cite{Luo2022}, entanglement between two Rb atomic memories separated by 12.5km was generated by sending a single photon emitted from one memory to be absorbed by the other.  Here, atom-photon entanglement was generated in the first memory node between an atom and a photon at 795nm, which was frequency converted to 1342nm for transmission in the optical fibers. It was sent to the second node, where it was  converted back to 795nm using sum-frequency generation, and stored in the second memory. A second experiment~\cite{kucera2024} produced entangled photons from SPDC, stored the state of one of them in a single-ion quantum memory, while the other was sent to a remote detection station via a 14km deployed urban fiber link.      

An all-optical approach for an event-ready setup is through qubit amplifiers, several of which have been proposed~\cite{Gisin2010,Curty2011,pitkanen2011,meyer2013}, and are based on a previous proposal for probabilistic noiseless amplification for quantum optical signals~\cite{ralph2009}. Based on quantum teleportation, these qubit amplifiers cannot only act as a herald of an incoming signal but also introduce an optical gain on the desired optical mode.
This can reduce or eliminate the effects of transmission losses, but also increases technical demands due to the need for ancilla photons or photon pairs on demand, which must be coupled and detected efficiently with the linear optical measurement device.  A recent finite-key analysis analysis shows that detection efficiencies greater than 96.5\% are required~\cite{zapatero2019} to achieve a positive key rate with 39dB of overall transmission loss (about 195km distance if only fiber attenuation is considered).

Many types of encoding and protocols can be used to produce photon-mediated entanglement for quantum networking (for details see a recent tutorial~\cite{beukers2024}).  To realize long-distance DI-QKD, it will be crucial to realize quantum protocols through quantum networks of different physical types.  In this direction, a quantum network stack has been defined~\cite{dahlberg2019} and realized~\cite{pompili2022}, in analogy with classical networking models such the Open Systems Interconnection (OSI). 
\par
Despite impressive progress in ``event-ready" entanglement generation using solid-state and atomic qubits, and the realization of proof-of-principle DI-QKD with these setups \cite{Nadlinger2022,Zhang2022a}, there still remains a substantial gap between the current state of the art of these systems and the realization of a practical (even short-haul) DI-QKD system. To date, the entanglement generation rates prohibit realization of DI-QKD over larger distances, where the losses in optical fiber become critical.  Moreover, there is a tradeoff between state fidelity and entanglement generation rate \cite{beukers2024}. Multi-photon events can be produced during the production of the node qubit/photonic qubit entanglement.  This requires lower excitation strength or filtering mechanisms \cite{Zhang2022a}, which reduces $p_{ent}$. The photonic interference in the BSM requires spatial, spectral and temporal mode matching, which also requires filtering and leads to losses, lowering $p_{ent}$.    Recent experiments have reported  $p_{ent}\sim 10^{-4}-10^{-6}$ \cite{Nadlinger2022,Zhang2022a,stolk2024b,beukers2024}. Thus, a repetition rate $R$ of a few MHz would lead to a few hundred events per second in the best case scenario.  Thus, considerable technical improvements are still necessary for DI-QKD to reach reasonable key rates over useful distances. One approach to improve these numbers might be through multiplexing of event-ready sources.  

\subsubsection{Link Relays} 
The transmission losses in optical fibers is a limiting factor for all quantum networking protocols, and limits point-to-point links to a few hundred kilometers in length.  
As is well known, the no-cloning theorem prevents quantum information from being copied deterministically, so classical optical amplification techniques cannot be used in the quantum regime. Several solutions exist to overcome this limitation. 

Current fiber-based QKD systems over several hundred kilometers use trusted classical relays to extend transmission length~\cite{Chen2021sat}.   
Conceptually, these trusted relays consist of hardware security modules where the keys from the two neighboring links are stored confidentially. These keys (at all connecting relay points) are then post-processed\footnote{A bitwise XOR, publication of result and correction at one side, is a simple example.}, resulting in a shared key between the two endpoints.  However, DI-QKD is not possible in a trusted relay infrastructure, since entanglement cannot be shared between the two end points.  

To overcome the need for trusted classical hardware devices, quantum relays consisting of quantum repeaters \cite{briegel98,azuma2023} are required.  Many of the same event ready setups described above can be used to build quantum repeaters, which employ multiple stages of entanglement swapping between intermediate nodes to construct a long-distance entangled state between edge nodes. Since efficient entanglement preparation and swapping are typically probabilistic processes, quantum memory devices are needed to store quantum information from one link while swapping is performed on others. The development of robust quantum memory is one of the principle challenges in creating large-scale quantum networks for DI-QKD and other applications. For a review of recent progress on quantum memories and repeaters, see Refs.~\cite{azuma2023,Covey2023}.   

\subsection{Integration of QKD into Cybersecurity Infrastructure} 
\label{sec:integration}
Significant advancements have been made worldwide in the proof-of-concept implementation of QKD networks in real-world scenarios, and their integration into cybersecurity infrastructure.  Of critical importance was to demonstrate how QKD, which establishes a shared key between users in a point-to-point configuration, can be employed within the network architectures used in modern communications. As early as 2002-2006 the DARPA network demonstrated a multi-node QKD network with optical switching, connecting fiber and free-space links using weak coherent pulses and also entangled photons~\cite{Elliot2007darpa}. The issues of routing, trusted relays, key management and integration into communication protocols such as IPSec were also addressed for the first time.  The SECOQC network operated from 2004-2008 in and around Vienna, Austria~\cite{Peev2009}, and included weak coherent pulse, entanglement-based, and continuous variable QKD systems developed by several different groups and institutions.  The SECOQC network demonstrated compatibility and interoperability of these different systems, and employed a ``hop-by-hop" relay scheme, in which a cypher key (to be used for classical encryption) is sent along the chain of trusted nodes using one-time pad encryption between each connected pair of nodes.  Routing and key consumption were also addressed.  From 2009 to 2011, the three-node SwissQuantum network was deployed in Geneva, Switzerland~\cite{Stucki2011}. The keys generated were tested for various applications, including high-speed commercial OS layer 2 encryptors (10 Gbit/s Ethernet), research platforms for encryption and authentication, and IPSec encryptors. In 2010, high-speed QKD systems running at GHz clock rates were developed and deployed in Tokyo, enabling encrypted video conferencing over 45km using a one-time pad (OTP)~\cite{Sasaki2011}.  In addition, a key management layer was included to control and coordinate key consumption.  To date, China has constructed the largest QKD network, spanning over 2,000 km and linking cities from Beijing to Shanghai using trusted relays~\cite{Chen2021sat}. Furthermore, in 2016 the Micius quantum science satellite was launched. Micius has facilitated quantum key distribution between various locations in China~\cite{Chen2021sat} and Europe and enabled real-time encrypted video calls between Beijing and Vienna~\cite{Lu2022}. The Cambridge quantum network achieved $\sim 2$Mbps key rates coexistent with 100Gbps data traffic over metropolitan distances, and used link redundancy to mitigate denial of service risk.  Another key advancement is the realization of a quantum network consisting of stationary nodes of solid-state spin qubits coupled by photons, and subsequent demonstration of post-selection free protocols in a scaleable architecture \cite{pompili21}, also leading to definition of a quantum network stack ~\cite{dahlberg2019,pompili2022}. A thorough overview of QKD networks implemented to date can be found in Ref.~\cite{Cao2022Review}.

In terms of DI approaches, MDI-QKD has been field tested in a  metropolitan network, where three users in a star configuration could communicate with each other through a central untrusted relay~\cite{tangpan16}.  MDI-QKD systems have been realized  in fiber optical links together with classical IP network signals~\cite{Valivarthi2019,Berrevoets2022}.  Recently, free-space satellite links and fiber optic channels have been integrated for MDI-QKD \cite{Li2023MDIQKD},  showing improvement to background noise when compared with BB84, and which could greatly increase transmission distances.  There has also been considerable progress in CV MDI-QKD \cite{Hajomer2023}.

Though these field-tested QKD networks and other important advances have propelled QKD towards real-world use, there are still many challenges to be faced before QKD (of any type) can be fully integrated into the existing cybersecurity infrastructure.   
Despite the progress mentioned above, many national security organizations and regulatory agencies worldwide still do not classify QKD as a viable replacement for key distribution based on public-key cryptosystems. To protect against the threat of quantum computation,  post-quantum cryptography is currently seen as a more cost effective and robust solution to key agreement~\cite{nsawebsite,anssiwebsite,ncscwebsite}.  In one form or another, the critical issues most often cited are:  
\begin{enumerate}
\item Implementation security - specialty hardware and implementation particularities (laser pulses are not single photons, detectors can be vulnerable to side-channel attacks) can introduce additional vulnerabilities that may not be considered into theoretical security proofs.
\item Authentication -  Unlike public key cryptosystems, QKD does not provide a method for authentication protocols, which are widely used for handshaking, signatures, etc.  Moreover, the security of QKD relies on an authenticated classical channel for post-processing.
\item Trusted Relays - without quantum repeaters, extending QKD to large distances requires intermediate trusted relays, where security depends upon a classical hardware device. 
\item Denial of service risk - If a private key cannot be established, the QKD protocol aborts, opening the door to denial-of-service attacks in which the communication channel is shut down.  
\item Cost - special-purpose equipment is required for QKD.  These devices, such as single-photon detectors, are typically expensive, raising costs of installation, operation and maintenance. 
\item Compatibility -  QKD needs to coexist and integrate with  classical encryption systems and networks, which is complex due to the different operational frameworks of quantum and classical cryptography.
\end{enumerate}
Here we provide a brief description of how these issues are currently being tackled by QKD community.
\paragraph{Implementation Security} 
Quantum cryptographic protocols can be shown to be information-theoretic secure in principle.  However, practical implementations can open the door to a wide range of vulnerabilities that might not be considered in security proofs~\cite{SCARANI2014,etsiwp27}. Thus, implementation security is of paramount importance in taking QKD into the real world.  Of course QKD is not special in this regard, the same is true for all encryption techniques, which are based on security claims or assumptions that might not be valid upon implementation.  It is essential that all components of any cybersecurity system be extensively vetted and routinely tested.    

The effort to achieve implementation security in QKD has been two-fold.  On the one hand is the effort to remedy the practical issues with specific solutions, either by including additional techniques to QKD protocols (as in the case of decoy-state QKD, for example) and/or characterization of the devices, or by adapting security proofs to include these practical details.  In addition, there has been effort to develop certification procedures for QKD equipment that can be carried out by third parties~\cite{Makarov2024}, as is done for conventional IT and security equipment.

On the other hand, the development of device-independent protocols can provide a more broad solution with its goal to achieve information-theoretic security with as few assumptions as possible. DI-QKD, MDI-QKD and SDI-QKD can solve many of the most important implementation issues.  While DI techniques allow for the main concepts behind security to be rooted to fundamental laws of physics, practical implementation will inevitably introduce new issues that may not have been considered.   These  need to be identified and scrutinized in order for QKD (DI or otherwise) to have widespread use.  This is one of the objectives of the ongoing standardization process of QKD systems (see next section).   

\paragraph{Authentication}
To prevent man-in-the-middle attacks, QKD requires two-way authentication of the classical channel between users for the classical post-processing stage of the QKD protocol (basis sifting, error correction, privacy amplification).  In small networks, pre-shared keys can be used. However, this severely limits the network, as not only do the keys need to be stored securely, but new users should be able to join without having previously established a key.  In classical communications,  the public-key infrastructure  (PKI) provides methods for authentication, which will soon include post-quantum cryptography (PQC).  Though it is not information-theoretic secure, PQC and crypto-agility is the current next step to be adopted in protecting public-key cryptosystems from quantum computing~\cite{NIST-Crypto-Agility}.  PQC has already been used to authenticate classical communications in QKD sessions in several network topologies~\cite{Wang2021PQCQKD,Yang2021PQCQKD}.  Importantly, since PQC is used only for authentication (key exchange for data encoding is realized with QKD), only short-term security is required.  That is, if the PQC method used is broken in the future, the encoded data is still safe. Thus, PQC+QKD can offer a more robust security paradigm.  
\paragraph{Trusted Relays}
As discussed above, currently trusted relays are required to construct QKD links over several hundred kilometers.  As the development of quantum repeaters evolves~\cite{azuma2023}, these classical relays can be exchanged for quantum relays, which will solve this issue.   In a quantum network architecture, distributing several keys over multiple paths incorporating different sets of nodes will improve security, should one or more trusted nodes become compromised.  Post-processing of the keys by the end users can reduce any leaked information.  In addition, MDI-QKD can be used to transform some of the trusted relay stations into untrusted ones.    
For DI-QKD at large-scale distances, however, quantum repeaters are indeed a requirement.

\paragraph{Denial of service} Since QKD involves sending a single quantum state over a channel, any interruption in transmission, such as simply cutting the optical fiber, or introducing high amounts of noise, will prevent key exchange.  This risk, known as denial of service, has been reduced in several proof-of-concept implementations by using quantum link redundancy, which takes advantage of the quantum network architecture to distribute keys over multiple paths~\cite{Dynes2019}. In addition,  hybrid approaches using QKD and PQC can also mitigate denial of service~\cite{Chaqra2024,zeng2024pract}.  
\paragraph{Cost} 
When evaluating the cost of cybersecurity, it should be compared to the cost of cybercrime, which worldwide is the equivalent third largest economy in the world ($\sim$ 9.5 trillion USD in 2023 and growing)~\cite{Bloomberg_cybercrime}.  In this regard, massive investment in cybersecurity is warranted, as exemplified by the US governments migration to post-quantum encryption, which is estimated to be 7.1 billion USD over ten years~\cite{usgovpqc}. 
Second, the last century has shown that the evolution of technology typically leads to cheaper and better devices, as is the case of the microchip, for example. In this regard, integrated photonics will inevitably bring not only miniaturization and improvement but also the cost-reduction of quantum photonic devices, as it has done for classical  equipment  (see Ref.~\cite{Pelucchi2022} for a recent review). It should also be noted that research and investment in quantum technology in general will accelerate development in quantum communications, since quantum photonic devices are widely used in most applications. In regards to QKD, manufacture of on-chip transmitters and receivers should facilitate the standardization and deployment of QKD systems.  While chip integration of sources and optical circuits is quite advanced, the current technological roadblock is the integration of on-chip single-photon detectors, which are currently at a proof-of-principle or development stage~\cite{Pelucchi2022}. 

Much progress has been made in integrating QKD into existing telecommunications infrastructure (see below), which will help decrease costs and requirement of special purpose equipment. In addition, QKD network architecture can be designed for cost reduction. For example, Hub and spoke~\cite{Tang2016MDIhub}  or multi-user~\cite{Zhong2022}  architectures with MDI-QKD or standard QKD~\cite{Frohlich2013}, incorporating a central detection station, reduce the need for multiple detectors, which should minimize infrastructure costs. 
Finally, we note that that quantum communication systems might also find use as dual-purpose devices. For example, as large-scale quantum sensors capable of vibrational sensing~\cite{Chen2022-658km}, which could also motivate investment, development and deployment. 

\paragraph{Compatibility} 
   For over thirty years, the data capacity of fiber-optical communications has increased by a factor of ten every four years.  The demand for increased capacity has not subsided, leading to even more optical intensity within the fibers as channel density increases.  Though many of the QKD links demonstrated in field-deployed systems employed dark fiber (no copropagating classical data channels) \cite{Elliot2005,Sasaki2011,Wang2014Imp}, QKD will most likely need to coexist with classical data transmission in the same telecommunications network infrastructure. Moreover, to avoid loss of capacity, a QKD channel should not occupy much more bandwidth than a classical one.  Multiplexing quantum and classical communication channels together is a considerable challenge, since noise from Raman scattering of light from the classical data channel, in which photons in the optical fiber are scattered inelastically, can easily contaminate a quantum signal.  We note that Raman noise is not such an issue for CV-QKD, since the local oscillator used in homodyne detection acts as a mode filter, thus eliminating a large part of the Raman background~\cite{Kumar2015,Eriksson2019}.  Several methods have been proposed to mitigate this problem for DV-QKD. One method to minimize Raman noise, demonstrated as early as 1997~\cite{Townsend1997}, is to employ a quantum signal with shorter wavelength, such as the telecommunications O-band ($\sim 1260-1360$nm),  with the classical channels in the C-band ($\sim 1530-1565$nm) or L-band ($\sim 1565-1625$nm).  In this way, the Raman noise is less prevalent.
This approach has allowed QKD with keyrates of 4.5 kbps and 5.1 kbps for O-band quantum signals co-propagating and counter-propagating with 3.6 Tbps C-band classical signal  ($\sim$21 dBm), over a 66km   commercial backbone network~\cite{Mao2018Imp}. An MDI-QKD session was realized in a deployed link of about 25km, resulting in a positive key rate with up to about 45dB of link loss, when the quantum signal (@$1310$nm) was multiplexed with classical telecommunications signals at 10 Gb/s (@ $1550$ nm) and 10Mb/s ($\sim 1510$ nm)~\cite{Berrevoets2022}. Recently, a quantum link sending one O-band photon of an entangled pair through 47km of fiber with 18dBm of classical signal (C-band) was demonstrated~\cite{Thomas2023}, also showing improved performance for wavelengths less than 1300nm. A similar setup was recently used for quantum teleportation coexistent with classical communications~\cite{Thomas2024}.
Despite the difficulties due to Raman noise,  wavelength-division multiplexing (WDM) has been used to implement  QKD in C-band channels coexisting with 100Gb/s the encrypted classical channel (C-band) in a metropolitan network~\cite{Dynes2019}.   
A possible way to minimize Raman noise is to use hollow-core fiber, which also reduces noise arising from nonlinear effects.  Noise reduction of roughly 35dBm compared to standard SMF28 fiber has been observed in QKD trials~\cite{Honz2023}. 

Another route for coexistence of classical and quantum signals is space division multiplexing (SDM), where multiple spatial modes are used as communication channels. SDM, involving new types of optical fiber, is currently seen as a necessary step to solve the current capacity crunch in optical fiber communications~\cite{richardson2013}. Multi-core optical fibers contain several single-mode cores within the same cladding material, and can be used to transmit independent classical and quantum signals~\cite{Dynes2016,Cai2019,Xavier2020}. Other types of specialty fibers, such as few-mode fibers and ring-core fibers, can support multiple transverse modes, which can each function as an independent channel. Propagation of quantum and classical signals in these fibers is currently being investigated for future communications infrastructure~\cite{Wang2020,Cao2022Review}.    
 
Concerning DI-QKD specifically, Raman noise presents a considerable obstacle for deployment in commercial fibers along with classical data channels. Quantum process tomography of the effect of Raman noise on DV QKD protocols has shown that it can be accurately described by a depolarizing channel for both co-propagating and counter-propagating signals~\cite{Chapman2023}, where the degree of depolarization is a function of fiber link length.  Since depolarization reduces and can destroy entanglement, it is most likely that DI-QKD will require a dedicated standard fiber, or one of the more advanced noise-reduction solutions involving specialty fiber mentioned above.
\par

The second observation that is typically made about the compatibility concerns integration and interoperability with existing cybersecurity hardware and protocols.  In this regard, QKD has already been integrated with various cybersecurity protocols, including IPSec, TLS, Kerberos, AES, etc, as briefly discussed above.  We will further discuss interoperability in the next section.

\subsubsection{Standardization and Interoperability}
In addition to the scientific and technical challenges of realizing QKD in real-world conditions, there is also a need for coordinated effort towards standardization and interoperability to enable the integration of QKD into practical security services~\cite{Deventer2022}.  As QKD is an evolving technology, there are additional challenges ranging from immediate concerns, such as ensuring the security and interoperability of trusted relay-based QKD networks to medium- and long-term considerations like the large-scale integration of quantum and classical telecommunications networks, expanding the applications of QKD, and scaling up the network using quantum repeaters.  Moreover, the global deployment of QKD may employ multiple types of links (fiber, free space) depending on the type of network and application~\cite{wang2022deploy}. 

Cryptographic hardware and software in use today has been developed under a set of industry standards that help maintain a consistent and high level of security across different systems and networks.  This involves defining industry-wide guidelines, best practices, compliance and regulatory criteria, as well as interoperability parameters.  
In the US, standards for IT equipment are produced by National Institute of Standards and Technology (NIST) as Federal Information Processing Standards (FIPS) and approved by the Secretary of Commerce.  In Europe, the International Organization for Standardization (ISO) developed the  Common Criteria standard (ISO/EN 15408) (http://www.commoncriteriaportal.org/index.cfm). These standards provide a mechanism for certification of IT and security devices. 

QKD equipment, protocols and methods need to be standardized, so that they can be certified for use by government agencies and/or third parties.    Standardization should be realized with QKD protocols and security proofs that closely match the real-world conditions of the QKD implementation.  While almost all QKD systems in operation today implement some form of device-dependent prepare and measure QKD, these efforts are equally important to the future deployment of DI-QKD in that they will accelerate its adoption as the relevant technology comes to maturity, since many of the technical issues related to integration, interoperability and standardization will have already been solved at least partially.   Government security agencies have noted the need for standardization of QKD before the technology can considered for adoption on a broad scale~\cite{nsawebsite,anssiwebsite,ncscwebsite}.   This includes developing protocols for connectivity and interoperability, so that QKD systems can be linked with cryptographic key management systems and the application layer.  These standards not only ensure quality and security, but ensure that equipment from different future vendors can interoperate together, and are important to establish an industry supply chain by defining interfaces and technical specifications for components and modules in various systems or distributed networks.  

The successful deployment of QKD testbeds and proof-of-concept integration with cybersecurity hardware  demonstrated that QKD technology and networking was sufficiently advanced  for the standardization process to begin. 
In 2008, the European Telecommunications Standards Institute (ETSI) created the industry specification group for QKD (ISG-QKD)~\cite{Langer2009}, which has produced recommendations regarding QKD architecture, use cases, certification, security proofs and assurances, integration into standard optical networks, interoperability and interfacing, among other topics~\cite{ETSI-QKD}.  Notably,  in 2023 a Common Criteria Protection Profile for QKD was recently published (GS QKD 016),  which ``will help manufacturers to submit pairs of ‘prepare and measure’ QKD modules for evaluation under a security certification process. Such modules can be used by telecom operators and enterprises in securing their networks with the knowledge that certified products have been subjected to the scrutiny of a formal security evaluation process"~\cite{ETSI-QKD-ActivityReport-2023}. The International Organization for Standardization (ISO) and the International Electrotechnical Commission (IEC) have developed the ISO/IEC 23837 series,  which specifies security requirements, testing procedures, and evaluation methods for quantum key distribution (QKD) modules~\cite{ISO-Standard-23837, ISO-Standard-23838}. This series is structured under the framework of the ISO/IEC 15408 series, commonly referred to as the Common Criteria for Information Technology Security Evaluation. By establishing rigorous standards and assessment methods, the objectives of the ISO/IEC 23837 series are the security and reliability of QKD technologies for practical and secure implementations. These standards were developed under subcommittee 27 of joint technical committee (JTC) 1  (Information security, cybersecurity and privacy protection)~\cite{ISO-JTC1}. In 2024, a ISO/IEC JTC on quantum technologies was established~\cite{ISO-JT3}.
The 
International Telecommunications Union (ITU) 
have also published documents containing definitions and recommendations in the ITU-T Y.3800 series (quantum communication) and ITU-T X.1700 series (QKD networks). An overview of standardization processes and documents can be found in Ref.~\cite{Cao2022Review}, and on the organization websites~\cite{ETSI-QKD,ISO-JTC1,ISO-JT3,ITU-Y-website}.  The certification of MDI-QKD devices has been studied in Ref.~\cite{Garban2023}, in which it is noted that similarities between these findings and ETSI GS QKD 016 suggest that a generic framework could be created to permit certification of various implementations and protocols, including MDI-QKD.  

\subsubsection{Quantum key distribution network architecture}
\begin{figure}
    \centering
=\includegraphics[width=0.8\linewidth]{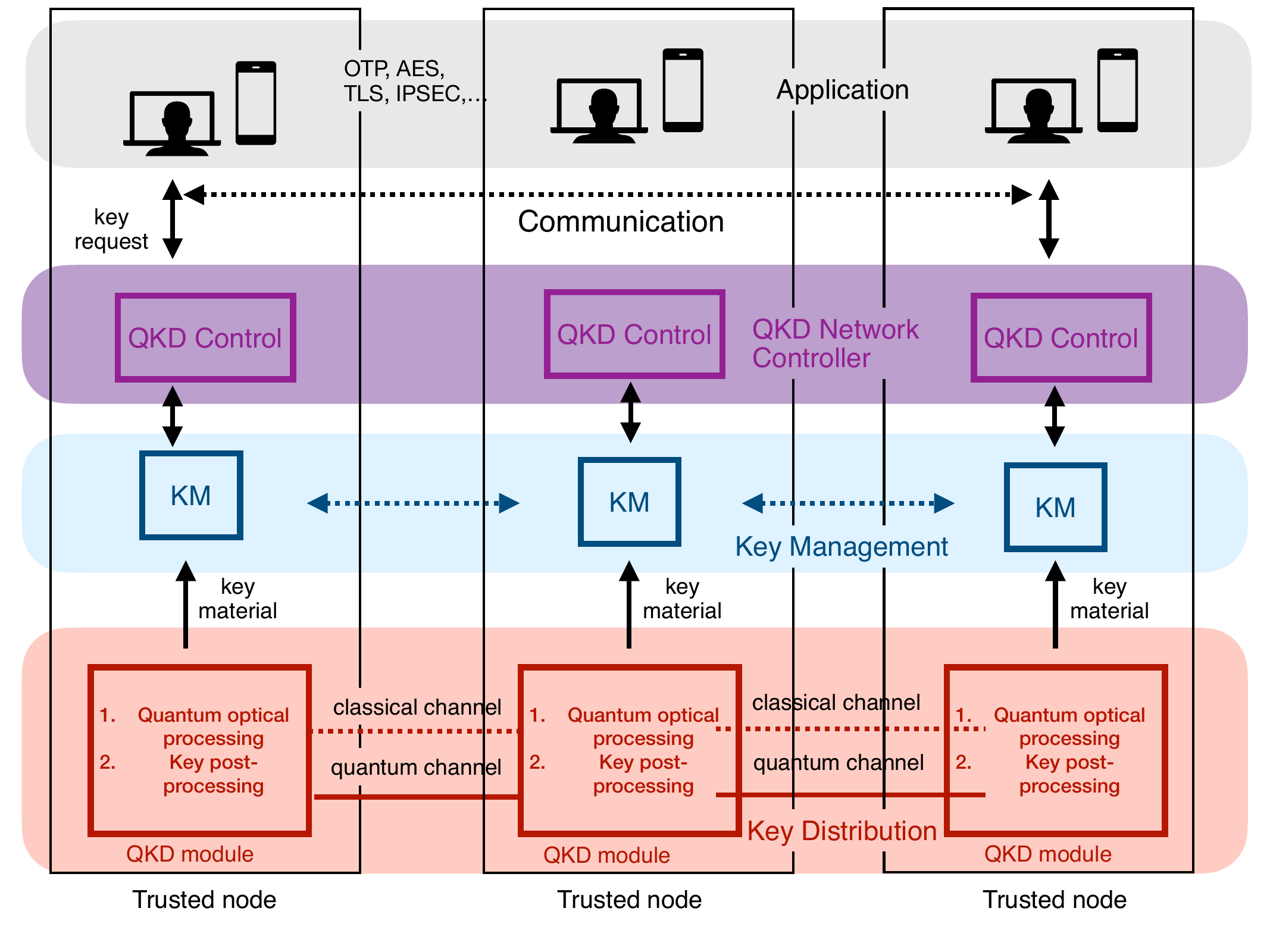}
\caption{\label{fig:qkdplat} Illustration of integrated QKD network}
\end{figure}
Part of the challenge of implementing real-world QKD is determining how this new quantum layer will integrate into the existing cybersecurity infrastructure.  The field tests realized over the last two decades have been important in accelerating this integration.   
 A number of authors have discussed network layer architectures for QKD systems~\cite{Sasaki2011,Sasaki2017,Tysowski2018,Broberg2022,Mehic2020Review,Cao2022Review}, and similar models have appeared in technical recommendations by international agencies, such as the ITU (documents Y.3800-Y.3805)~\cite{ITU-Y-website}. 
 
Fig.~\ref{fig:qkdplat} shows a simplified illustration of a QKD network architecture containing only three users, similar to the model put forth in Recommendation ITU-T Y.3800~\cite{ITU-Y3800}.  The network consists of a QKD layer, a key management layer, a QKD control layer and the application layer.  The users reside in the application layer, which contains all hardware devices and software that will consume cryptographic keys, for use in protocols such as TLS for secure connection to web servers, IPSec for VPN connections, AES for data encryption, etc.   

In this simplified example, each node is responsible for generating and managing keys between users in their local network and users in the local networks of other nodes.  Depending on the network architecture, each node could serve as an end point as well as a trusted node used for linking other end points. The raw key material is generated by the QKD modules residing in two connected nodes, which are linked by quantum and classical channels used for point-to-point QKD sessions.  In the case of DI-QKD, the quantum channel would consist of entangled states shared across the link, and the trusted node would  employ a repeater station to connect the two neighboring links.  In the near term, the trusted nodes are the classical trusted relay nodes described above.  We note that this architecture permits the key distribution layer to be constructed from different types of QKD systems or protocols, or to employ parallel QKD links between nodes to increase the  key rate and reduce denial of service risk.  In addition, in a network architecture, two users might be linked through different sets of intermediary nodes to the same effect. 

 Through the QKD protocol(s) used, cryptographic key material in the form of shared random bit strings is produced between linked QKD modules and uploaded to the key managers, which store it for future use.  
 When end users need to be connected, the key managers at the intermediate trusted nodes perform the necessary post processing to produce shared keys between the users. Upon request, the key managers at the endpoints can format the keys and deliver them to the security application that will use them.  Key managers at different trusted nodes must communicate to synchronize the key request and delivery, to assure that two end users can communicate with the same key.  
 
 The QKD control layer manages the end-to-end connectivity from one user to the another through the required trusted nodes, so that the middle nodes perform the appropriate processing to enable the link between end users. 
The QKD controllers are responsible for routing control for key relays, management of QKD and KM links, session control for QKD services, authentication and authorization, as well as ensuring quality of service.  The QKD control layer might also employ a centralized architecture. In addition to the layers shown in Fig.~\ref{fig:qkdplat}, management layers (not shown) monitor the entire stack and ensure quality of service and that the required functionality is met.  

As technology progresses, the QKD layer can be upgraded from device-dependent QKD to semi-DI and eventually full-DI.  A roadmap for development of QKD architecture and rollout in the EU is shown in Fig. \ref{fig:euroadmap}, where we note that device independence is included as a key benchmark.   As quantum repeaters come on line, quantum connectivity between end users would be managed by the QKD control layer, which would allow for the realization of full DI-QKD in principle.  Conceivably, the network could employ several types of QKD protocols, depending on the security profile of each user group, and the types of available hardware in each section of the network. 
\begin{figure}
    \centering
    \includegraphics[width=\linewidth]{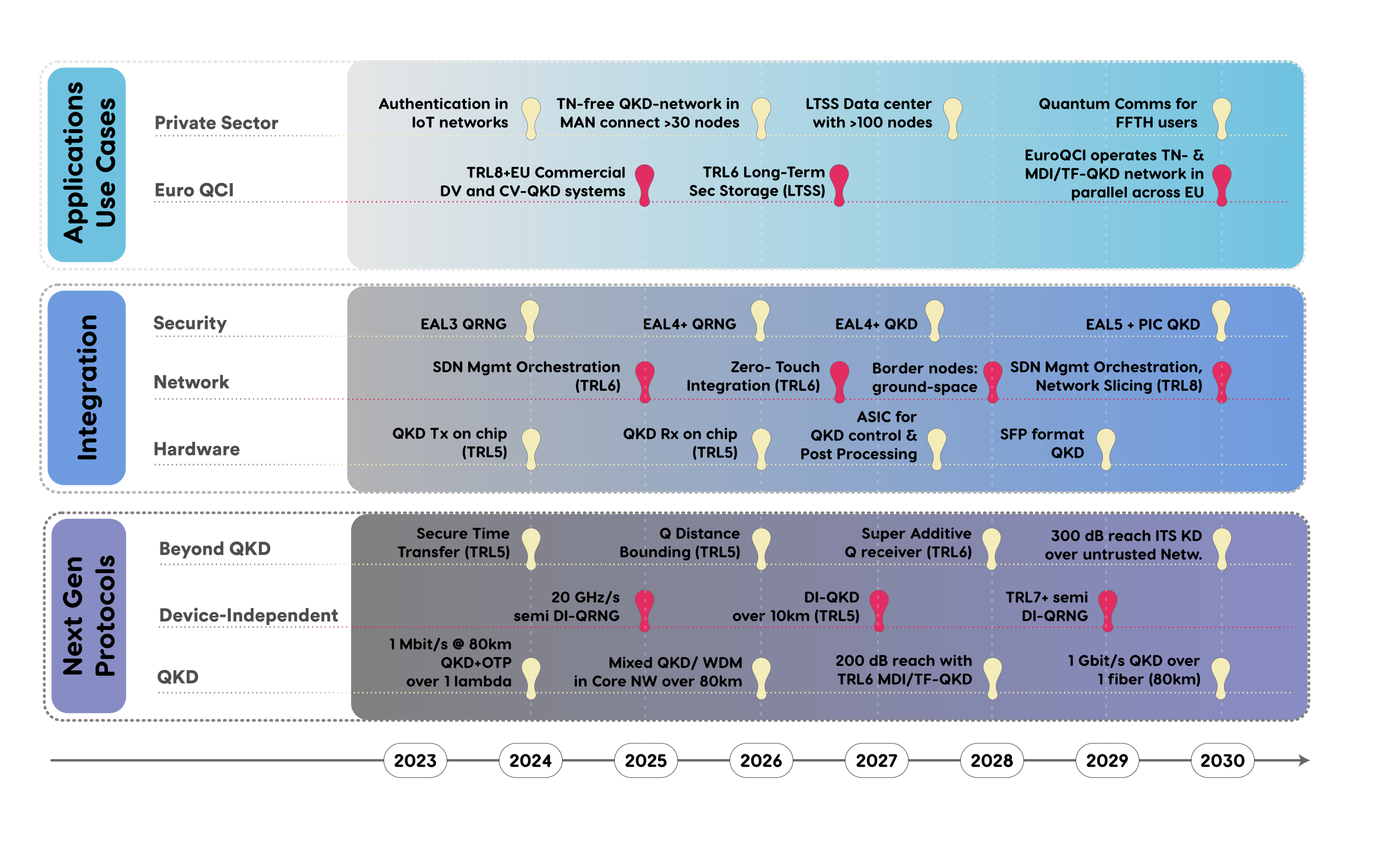}
    \caption{European roadmap for the QKD architecture deployment (from \href{qnsp}{https://qsnp.eu/}).}
    \label{fig:euroadmap}
\end{figure}
\section*{Conclusion}
DI-QKD represents a transformative advancement in QKD, addressing fundamental security challenges by using nonlocal correlations rather than relying on the trustworthiness of quantum devices. In this review, we have highlighted both the fundamental theoretical aspects and the progress in implementing experimental setups. Moreover, the growing exploration of semi-device-independent protocols such as  MDI-QKD,  RDI-QKD, and 1SDI-QKD have been presented. 

DI-QKD achieves its robust security through the violation of Bell inequalities, ensuring that any eavesdropping attempts disturb the nonlocal correlations, thereby making such third parties detectable. While the first successful implementations of DI-QKD marked a milestone by addressing all Bell test loopholes, practical challenges related to scalability and technology readiness persist. Current experimental realizations have achieved limited distances of a few hundred meters with low key rates, far short of the scales required for widespread commercial deployment.

Despite the remaining challenges in practical deployment, DI-QKD is poised to redefine the future of cryptographic security. The ongoing researches on DI-QKD protocols together with relaxed versions of semi-device-independent frameworks are paving the way for this groundbreaking technology to transition from the laboratory to real-world applications, ensuring unconditional security for the next generation of quantum communication networks.  A view as to how the rollout of DI technique might unfold is provided in the European roadmap shown in Fig. \ref{fig:euroadmap}.

\section*{Acknowledgments}
We thank C. Lupo, H. Weinfurter, M. \.Zukowski, G. Lima, E. G\'omez, N. Gigena, R. Wolf, M. Ziman, T. Vértesi, Yu-Bo Sheng, Q. Zhang, and A. Ac\' in for the fruitful discussions.
This project has received funding from QuantERA/2/2020, an ERA-Net co-fund in Quantum Technologies, under the eDICT project, the European Union’s Horizon Europe research and innovation program under the project "Quantum Secure Networks Partnership" (QSNP, grant agreement No 101114043), and INFN through the project "QUANTUM". SAG was supported by the Stefan Schwarz Support Fund and project QENTAPP (09103-03-V04-00777). SPW and LLT were supported by Fondo Nacional de Desarrollo Científico y Tecnológico (ANID) (Grants 1200266, 1240746), ANID – Millennium Science Initiative Program – ICN17\_012 and project
 UCO 1866.



%

\end{document}